\documentclass[11pt,a4paper]{article}
\usepackage[T1]{fontenc}
\usepackage{lmodern}
\usepackage[textwidth=165mm,textheight=245mm,centering]{geometry}
\usepackage{amsmath,amssymb,amsthm}
\usepackage{graphicx,booktabs,array,tabularx}
\usepackage{enumitem,needspace,placeins}
\usepackage[authoryear,round]{natbib}
\usepackage[font=small,labelfont=bf]{caption}
\usepackage[hidelinks]{hyperref}

\graphicspath{{figures/}}
\setlist[enumerate]{nosep,leftmargin=1.7em}
\theoremstyle{plain}
\newtheorem{theorem}{Theorem}
\newtheorem{lemma}{Lemma}
\newtheorem{corollary}{Corollary}
\newtheorem{proposition}{Proposition}
\theoremstyle{remark}
\newtheorem{remark}{Remark}

\newenvironment{keywords}{\par\smallskip\noindent\textbf{Keywords:}\ }
  {\par\medskip}
\renewcommand{\paragraph}[1]{\par\addvspace{6.5pt}\noindent
  \textit{#1}\quad\ignorespaces}
\makeatletter
\newcommand{\proofanchor}[1]{\phantomsection\begingroup
  \protected@edef\@currentlabel{\thesection}\label{#1}\endgroup}
\makeatother

\hypersetup{pdftitle={Adaptive Cauchy inference from invariant information},
  pdfauthor={Yiming Liu and Shaochen Wang}}

\newcommand{\E}{\mathbb E}
\newcommand{\PP}{\mathbb P}

\newcommand{\normal}{\mathsf N}
\newcommand{\Cauchy}{\operatorname{Cauchy}}

\title{Adaptive Cauchy inference from invariant information}
\author{YIMING LIU\\[2pt]
  {\small School of Economics, Jinan University,}\\
  {\small Guangzhou 510632, China}\\
  {\small \href{mailto:liuyiming@jnu.edu.cn}{\texttt{liuyiming@jnu.edu.cn}}}
  \and
  SHAOCHEN WANG\\[2pt]
  {\small School of Mathematics,}\\
  {\small South China University of Technology,}\\
  {\small Guangzhou 510640, China}\\
  {\small \href{mailto:mascwang@scut.edu.cn}{\texttt{mascwang@scut.edu.cn}}}}
\date{}
\begin{document}
\maketitle
\begin{abstract}
Selecting weights and projection directions from the observations used
for testing can invalidate a reference distribution. We develop a
framework for exact conditional inference under rotation symmetry.
A maximal invariant permits adaptation while leaving a uniform common
phase for calibration. This yields explicit Cauchy references for
invariantly selected weights and projections. Under separate block
rotations, the same principle gives exact normal-score combinations
with coefficients selected from all blockwise invariants. The calibration
results require neither densities nor moments. To illustrate the framework, we study direction learning and power
in spherical Gaussian location models, identifying a regime in
which consistent detection is possible without consistent direction
recovery. For heterogeneous functional data, a matched mean--residual
construction produces the required null symmetry under a conditional
Gaussian model. Simulations compare Radius and Blend, two
representative procedures, with sample splitting, and an electrocardiogram
analysis illustrates their application.\end{abstract}

\begin{keywords}
Adaptive inference; Cauchy distribution; Conditional inference; Direction learning; Group invariance; Heterogeneous data.
\end{keywords}

\section{Introduction}

\citet{PillaiMeng2016} proved that, if $\mathbf X=(X_j)_{j=1}^m,\mathbf Y=(Y_j)_{j=1}^m$ are
independent copies of
$\normal_m(\mathbf 0,\boldsymbol\Sigma)$, where
$\boldsymbol\Sigma$ is positive semidefinite with strictly positive
diagonal entries, then
\begin{equation}\label{eq:PM-intro}
  \sum_{j=1}^m w_j\frac{X_j}{Y_j}\sim\Cauchy(0,1),
  \qquad
  w_j\geq0,\quad \sum_{j=1}^m w_j=1,
\end{equation}
provided that the weights $w_j$ are independent of
$(\mathbf X,\mathbf Y)$. Notably, no independence assumption is imposed
on the coordinate ratios $X_j/Y_j$.
\citet{Pillai2016} obtained related non-Gaussian identities for a class
of distributions invariant under a common rotation. The Cauchy
distribution has also been used to aggregate dependent evidence.
\citet{LiuXie2020}, for example, proposed a Cauchy transformation for
combining dependent $p$-values, with a tractable tail approximation,
while \citet{LiuMengPillai2025} considered confidence regions based on
heavy-tailed combining rules.

The independence requirement on the weights in \eqref{eq:PM-intro},
however, separates selection from inference by construction. This can be
costly when the informative coordinates or projection directions are
unknown before the data are observed. One could use an independent
sample for selection, but doing so sacrifices observations that could
otherwise contribute to inference. Using the same observations is more
attractive, yet arbitrary data-dependent weighting need not preserve the
Cauchy law. The question is therefore not whether the weights must be
fixed, but what information in the observations may be used for
adaptation while leaving an exact reference distribution for inference.

There are several ways to protect inference from data-dependent
selection. External covariates can guide hypothesis weights
\citep{IgnatiadisEtAl2016}, and sample splitting can reserve independent
observations for selection and testing \citep{LiuYuLi2022}. Invariance
provides a route that can instead reuse the same observations.
\citet{Lauter1998,LauterGlimmKropf1999}, for example,
constructed exact multivariate tests with data-dependent linear scores
under left-spherical symmetry. These results establish that same-observation selection can be
compatible with exact inference under suitable symmetry. Here we
exploit the common-rotation structure of paired observations to
obtain an explicit conditional Cauchy reference for data-dependent
weights and projections.

The key is to distinguish dependence on the observations from
dependence on the phase used for calibration. Writing
$\mathbf G=\mathbf Y+i\mathbf X$, the matrix
$\mathbf M=\mathbf G\mathbf G^*$ records the paired energies and
relative coordinate angles. Under common-rotation symmetry, a
uniform phase remains after conditioning on $\mathbf M$.
Selection based on $\mathbf M$ can therefore use the observations
without selecting a favorable realization of that phase.

Our contributions are threefold.
\begin{enumerate}
\item \emph{Exact adaptation using the same observations.}
Under rotation symmetry, supports, projections and admissible weights
can be selected from invariant information in the observations used
for testing. We obtain explicit conditional Cauchy references and,
under separate block rotations, exact normal-score combinations with
jointly selected coefficients. Neither densities nor moments are required.

\item \emph{From invariant information to test power.}
Spherical Gaussian location models provide a tractable illustration.
Explicit population gaps, the complete-invariant likelihood and power
formulas connect direction learning to the final test. The analysis
separates weighting efficiency from direction alignment and identifies
a regime of consistent detection without consistent direction recovery,
complementing the general orbit-recovery perspective \citep{Bandeira2023}.

\item \emph{Rotation symmetry by construction.}
For heterogeneous functional data, matched mean--residual summaries
produce the required null symmetry under a conditional Gaussian model.
This provides a route from the original observations to exact adaptive
inference while allowing different covariance structures across batches.
\end{enumerate}

Radius and Blend serve as representative implementations. Comparisons
with sample splitting examine the tradeoff between learning from all
invariant summaries and using unrestricted training observations while
retaining fewer blocks for testing. Simulations explore this tradeoff
across dimensions, covariance structures and noise laws, and an ECG
analysis illustrates the procedures.

\textit{Notations}:
For $\mathbf g=(g_j)_{j=1}^m\ne\mathbf0$, define its active support and the corresponding simplex by
\begin{equation*}
  \mathcal A(\mathbf g)=\{j:|g_j|>0\},\qquad
  \Delta(\mathbf g)=\left\{\mathbf w\in[0,1]^m:
    \sum_{j=1}^m w_j=1,\quad w_j=0\text{ for }j\notin\mathcal A(\mathbf g)
  \right\},
\end{equation*}
which is used to handle structural zeros.
Moreover, define $q(z)=\operatorname{Im}(z)/\operatorname{Re}(z)$ when
$\operatorname{Re}(z)\ne0$, and set $q(z)=0$ otherwise.

\section{Exact adaptation under rotation symmetry}
\label{sec:calibration}

\subsection{Conditional Cauchy calibration}
\label{subsec:singlecalibration}

We first identify the information that can be used to select weights
and projections within a single pair. We then extend the construction
to joint selection and evidence combination across pairs.

Let $\mathbf X,\mathbf Y\in\mathbb R^m$ and write
$\mathbf G=\mathbf Y+i\mathbf X$. For $\phi\in[0,2\pi)$, define
\[
  R_\phi(\mathbf X,\mathbf Y)
  =
  \bigl(
    \mathbf X\cos\phi+\mathbf Y\sin\phi,\,
    \mathbf Y\cos\phi-\mathbf X\sin\phi
  \bigr).
\]
This transformation is equivalent to
$\mathbf G\mapsto e^{i\phi}\mathbf G$ and leaves
\begin{equation}\label{eq:gram}
  \mathbf M=\mathbf G\mathbf G^*
  =\mathbf X\mathbf X^\top+\mathbf Y\mathbf Y^\top
   +i(\mathbf X\mathbf Y^\top-\mathbf Y\mathbf X^\top)
\end{equation}
unchanged, where $*$ denotes conjugate transpose. For
$\mathbf G\ne\mathbf0$, \(\mathbf M\) is a maximal invariant in the sense that two complex vectors have the same matrix \(\mathbf M\) if and only if they differ only by a common phase factor.
Consequently, $\mathbf M$ determines the coordinate magnitudes and
relative phases, but not the common phase.
We assume that the joint distribution of $(\mathbf X,\mathbf Y)$
is invariant under these rotations, i.e.,
\begin{equation}\label{eq:commonrotation}
  R_\phi(\mathbf X,\mathbf Y)
  \overset d=(\mathbf X,\mathbf Y)
  \quad\text{for every }\phi\in[0,2\pi).
\end{equation}
We refer to \eqref{eq:commonrotation} as
\emph{common-rotation symmetry}.
Table~\ref{tab:rotation-examples} below lists several classes
of distributions satisfying common-rotation symmetry,
including joint elliptical pairs with a common scatter
structure, shared covariance mixtures, and circular
coordinate constructions.
Common-rotation symmetry therefore requires neither
Gaussianity nor a spherical covariance structure.
\begin{table}[!htbp]
\centering
\caption{Examples satisfying common-rotation symmetry.}
\label{tab:rotation-examples}
\small
\begin{tabularx}{\textwidth}{@{}p{0.24\textwidth}X@{}}
\toprule
Model & Construction \\
\midrule

Joint elliptical pairs
&
$\mathbf X=R\boldsymbol\Sigma^{1/2}\mathbf U_1$ and
$\mathbf Y=R\boldsymbol\Sigma^{1/2}\mathbf U_2$,
where
$\mathbf U=(\mathbf U_1^\top,\mathbf U_2^\top)^\top$
is uniform on the unit sphere in $\mathbb R^{2m}$,
$R\geq0$ is independent of $\mathbf U$, and
$\boldsymbol\Sigma$ is positive semidefinite.

\\[5pt]

Shared covariance mixtures
&
Conditional on a random positive semidefinite matrix
$\boldsymbol\Omega$, the vectors $\mathbf X$ and $\mathbf Y$
are independent $\mathcal N_m(\mathbf0,\boldsymbol\Omega)$.
Taking $\boldsymbol\Omega=S^2\boldsymbol\Sigma$ gives
shared scale mixtures.
\\[5pt]

Circular coordinate pairs
&
$X_j=R_j\sin\Theta_j$ and $Y_j=R_j\cos\Theta_j$,
$j=1,\ldots,m$, where the angles are independent
$\operatorname{Unif}[0,2\pi)$ variables and are independent
of the nonnegative radius vector
$\mathbf R=(R_1,\ldots,R_m)^\top$.
The radii may be dependent, random, or fixed.
\\

\bottomrule
\end{tabularx}
\end{table}

Given the common-rotation symmetry,  the following theorem does not require the weights to be independent of the observations.
\begin{theorem}
\label{thm:cauchy}
Suppose $\mathbb P(\mathbf G=\mathbf0)=0$,
$\mathbf W\in\Delta(\mathbf G)$ almost surely, and
\begin{equation}\label{eq:jointrotation}
  (e^{i\phi}\mathbf G,V,\mathbf W)
  \overset d=(\mathbf G,V,\mathbf W)
  \quad\text{for every }\phi\in[0,2\pi),
\end{equation}where $V$ is an auxiliary random variable.
Then, for $T=\sum_{j=1}^m W_jq(G_j)$,
\begin{equation}\label{eq:conditionalcauchy}
  \mathcal L(T\mid\mathbf M,V,\mathbf W)
  =\operatorname{Cauchy}(0,1)
  \quad\text{almost surely}.
\end{equation}
Consequently, $T$ is independent of $(\mathbf M,V,\mathbf W)$.
In particular, suppose
$(e^{i\phi}\mathbf G,V)\overset d=(\mathbf G,V)$ for every $\phi$.
If the weights $\mathbf W$ are any measurable function of
$(\mathbf M,V)$ taking values in $\Delta(\mathbf G)$, then
\[
  \mathcal L(T\mid\mathbf M,V)
  =\operatorname{Cauchy}(0,1)
  \quad\text{almost surely}.
\]
\end{theorem}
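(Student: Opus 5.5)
The plan is to randomize the phase explicitly and then reduce to a deterministic identity. Let $\Phi\sim\operatorname{Unif}[0,2\pi)$ be independent of $(\mathbf G,V,\mathbf W)$. By \eqref{eq:jointrotation} and Fubini, $(e^{i\Phi}\mathbf G,V,\mathbf W)\overset d=(\mathbf G,V,\mathbf W)$. Since $\mathbf M$ is unchanged when $\mathbf G$ is multiplied by a unit scalar, this gives
\[
\bigl(T,\mathbf M,V,\mathbf W\bigr)\overset d=\Bigl(\textstyle\sum_j W_jq(e^{i\Phi}G_j),\ \mathbf M,V,\mathbf W\Bigr).
\]
It therefore suffices to show the following: for each fixed $\mathbf g\neq\mathbf 0$ and fixed $\mathbf w\in\Delta(\mathbf g)$, the variable $\sum_j w_j q(e^{i\Phi}g_j)$ is exactly $\operatorname{Cauchy}(0,1)$.

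Granting that, the conditional law of the right-hand side given $(\mathbf G,V,\mathbf W)$ is $\operatorname{Cauchy}(0,1)$ almost surely, because $\mathbf W\in\Delta(\mathbf G)$ a.s. and $\mathbf G\neq\mathbf 0$ a.s. The tower property then gives the same conditional law given $(\mathbf M,V,\mathbf W)$, which is a function of $(\mathbf G,V,\mathbf W)$. Transferring through the equality in distribution yields \eqref{eq:conditionalcauchy}. A conditional law that does not depend on the conditioning variables gives independence at once.

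For the special case, assume $(e^{i\phi}\mathbf G,V)\overset d=(\mathbf G,V)$ and $\mathbf W=f(\mathbf M,V)$. Then $(e^{i\phi}\mathbf G,V,\mathbf W)$ is the image of $(e^{i\phi}\mathbf G,V)$ under the map $(\mathbf g,v)\mapsto(\mathbf g,v,f(\mathbf g\mathbf g^*,v))$, and this map commutes with phase multiplication because $\mathbf M$ is invariant. So \eqref{eq:jointrotation} holds. Since $\mathbf W$ is $\sigma(\mathbf M,V)$-measurable, conditioning on $(\mathbf M,V,\mathbf W)$ is the same as conditioning on $(\mathbf M,V)$.

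The key deterministic step, and the main obstacle, is the Pillai--Meng type identity for a single uniform phase. Write $g_j=r_je^{i\theta_j}$ with $r_j>0$ for $j\in\mathcal A(\mathbf g)$; coordinates with $g_j=0$ carry zero weight. Then $q(e^{i\Phi}g_j)=\tan(\Phi+\theta_j)$, except on the null event where $\cos(\Phi+\theta_j)=0$, so the convention $q=0$ there is harmless. I would compute the characteristic function, or more cleanly use the complex-analytic argument. Put $z=e^{2i\Phi}$, which is uniform on the unit circle, and use the representation
\[
\tan(\Phi+\theta)=-i\,\frac{z e^{2i\theta}-1}{z e^{2i\theta}+1}.
\]
For $|z|<1$ this Möbius map sends the disc into the lower half-plane (or upper, depending on sign convention). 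A convex combination $h(z)=\sum_j w_j\tan(\cdot)$ is then analytic in the disc, maps it into a half-plane, and has boundary values equal to $T$ evaluated at $z=e^{2i\Phi}$. Moreover $h(0)=-i\sum_j w_j(-1)=i$, independent of the $\theta_j$.

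A Cauchy-distribution characterization finishes the argument. For bounded $t$, the map $\zeta\mapsto e^{it\zeta}$ (with the appropriate sign of $t$) is bounded and analytic on the relevant half-plane. Hence $e^{ith(z)}$ is bounded analytic in the disc, and by the mean value property, i.e. Poisson representation of the nontangential boundary values,
\[
\E e^{itT}=e^{ith(0)}=e^{-|t|}.
\]
This is the $\operatorname{Cauchy}(0,1)$ characteristic function. The delicate points are the sign bookkeeping for $t>0$ versus $t<0$, which may require using the conjugate map, and justifying the boundary passage, which Fatou's theorem for bounded analytic functions covers.
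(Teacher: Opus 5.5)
Your argument is correct. Its analytic core is the paper's Lemma~\ref{proof:cal-tangent}, but you handle the conditioning differently.

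\textbf{Analytic step.} Your $h$ is exactly the paper's $F(z)=i\sum_\ell w_\ell(1-c_\ell z)/(1+c_\ell z)$. It maps the disc into the upper half-plane, which $h(0)=i$ settles, so the ``lower or upper'' hedge is unnecessary. Then $|e^{ith}|\le 1$ for $t\ge 0$, and conjugation handles $t<0$. Two small simplifications apply. Since $h$ is rational with at most $m$ poles, all on the unit circle, its radial limits exist off a finite set. Dominated convergence on circles of radius $r\uparrow 1$ therefore replaces Fatou's theorem. Your parametrization $z=e^{2i\Phi}$ also absorbs the $\pi$-periodicity step that the paper carries out separately.

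\textbf{Conditioning.} The paper constructs a measurable orbit representative $\mathbf h(\mathbf M)$. It writes $\mathbf G=e^{i\Theta}\mathbf h(\mathbf M)$, with $\Theta$ the phase of the first nonzero coordinate. Averaging the invariance over $\phi$ shows that $\Theta$ is uniform conditional on $(\mathbf M,V,\mathbf W)$. Then $T=\sum_j W_j\tan(\Theta+u_j)$ with $\mathbf M$-measurable shifts, and the lemma is applied conditionally.

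You instead adjoin an exogenous uniform phase and use invariance only to obtain an equality of the joint laws of $(T,\mathbf M,V,\mathbf W)$. You then compute the conditional law on the randomized side given the larger $\sigma(\mathbf G,V,\mathbf W)$ by freezing. The tower property and the shared conditioning variable pull that law back to $T$.

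\textbf{What each route buys.} Your route is shorter and needs no measurable cross-section. It uses only that $\mathbf M$ is invariant, not that it is a maximal invariant. The paper's route yields the explicit conditional kernel \eqref{proof:cal-orbit-kernel} for arbitrary functionals of $\mathbf G$. The paper reuses that kernel for selected projections, the product-rotation result, the invariant likelihood ratio, and the tilted phase densities under alternatives.
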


\begin{remark}
    A weight-selection rule may also use an auxiliary input, such as an
independent random seed. We denote this input by $V$. When no
additional input is used, $V$ can be taken as constant. To see why the theorem holds, write the nonzero coordinates as
\(
  G_j=r_je^{i(\Theta+u_j)},
\)
where the radii $r_j$ and relative phases $u_j$ are determined by
$\mathbf M$. Conditional on $(\mathbf M,V,\mathbf W)$, only the common phase
$\Theta$ remains random, and it is uniform on $[0,2\pi)$. Hence
\[
  T=\sum_{j:r_j>0}W_j\tan(\Theta+u_j)
  \quad\text{almost surely}.
\]
The classical shifted-tangent identity  \citep{PitmanWilliams1967,Williams1969} gives
\eqref{eq:conditionalcauchy}. The complete conditioning argument is
provided in Supplementary Section~\ref{proof:cal-single}.

\end{remark}
\begin{remark}
    A simple invariant weighting rule uses paired coordinate energies. Define
\[
  E_j=X_j^2+Y_j^2=M_{jj},\qquad
  W_j=\frac{E_j}{\sum_{\ell=1}^m E_\ell},
  \qquad j=1,\ldots,m.
\]
The denominator is positive whenever $\mathbf G\ne\mathbf0$.
This rule assigns more weight to coordinates with larger paired
squared magnitudes. The weights are computed from the same
observations as the ratios, but depend only on the diagonal entries
of $\mathbf M$. Theorem~\ref{thm:cauchy} therefore gives
\[
  \mathcal L\left(
    \sum_{j=1}^m
    \frac{E_j}{\sum_{\ell=1}^m E_\ell}q(G_j)
    \,\middle|\,\mathbf M
  \right)
  =\operatorname{Cauchy}(0,1),
\]
Invariant supports may likewise be selected before assigning the active weights.
\end{remark}

The same principle permits direction selection from the invariant information. In particular, projecting both members of the pair onto an invariantly selected direction preserves the conditional Cauchy law.
\begin{corollary}
\label{cor:projections}
Suppose $\mathbb P(\mathbf G\ne\mathbf0)=1$ and
\[
  (e^{i\phi}\mathbf G,V)
  \overset d=(\mathbf G,V)
  \quad\text{for every }\phi\in[0,2\pi).
\]
Let $\widehat{\mathbf a}$ be a real unit vector that is a
measurable function of $(\mathbf M,V)$, and define
$\widetilde X=\widehat{\mathbf a}^{\top}\mathbf X$ and
$\widetilde Y=\widehat{\mathbf a}^{\top}\mathbf Y$.
On the event
\[
  \widetilde X^2+\widetilde Y^2
  =
  \widehat{\mathbf a}^{\top}
  \operatorname{Re}(\mathbf M)
  \widehat{\mathbf a}>0,
\]
we have
\begin{equation}\label{eq:selectedprojection}
  \mathcal L\{q(\widetilde Y+i\widetilde X)\mid\mathbf M,V\}
  =\operatorname{Cauchy}(0,1)
  \quad\text{almost surely}.
\end{equation}
\end{corollary}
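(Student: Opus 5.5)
The plan is to reduce Corollary~\ref{cor:projections} to Theorem~\ref{thm:cauchy} with $m=1$, applied to the projected scalar $\widetilde G=\widetilde Y+i\widetilde X=\widehat{\mathbf a}^{\top}\mathbf G$. Since $\widehat{\mathbf a}$ is real, $\widehat{\mathbf a}^{\top}(e^{i\phi}\mathbf G)=e^{i\phi}\widetilde G$. Because $\widehat{\mathbf a}$ depends only on $(\mathbf M,V)$ and $\mathbf M$ is rotation invariant, the direction computed from the rotated data is the same $\widehat{\mathbf a}$. Also $|\widetilde G|^2=\widehat{\mathbf a}^{\top}\mathbf M\widehat{\mathbf a}=\widehat{\mathbf a}^{\top}\operatorname{Re}(\mathbf M)\widehat{\mathbf a}$, since $\operatorname{Im}(\mathbf M)$ is antisymmetric. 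So the event $\{\widetilde G\ne0\}$ is $(\mathbf M,V)$-measurable.

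Rather than literally verifying the hypotheses of the theorem for a new $\mathbf G$, whose $\mathbf M$ would then be only $|\widetilde G|^2$, I would rerun its conditioning argument directly.

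\begin{enumerate}
\item Condition on $(\mathbf M,V)$. By the argument behind Theorem~\ref{thm:cauchy}, with constant weights, $\mathbf G=e^{i\Theta}\mathbf G_0$. Here $\mathbf G_0$ is an $(\mathbf M,V)$-measurable representative of the orbit, for instance the one normalised so that its first nonzero coordinate is real positive. Given $(\mathbf M,V)$, the common phase $\Theta$ is $\operatorname{Unif}[0,2\pi)$.

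\item Write $\widetilde G=e^{i\Theta}\,\widehat{\mathbf a}^{\top}\mathbf G_0=e^{i\Theta}\rho e^{iu}$, where $\rho=|\widehat{\mathbf a}^{\top}\mathbf G_0|$ and $u$ are $(\mathbf M,V)$-measurable.

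\item On the event $\rho>0$, we get $q(\widetilde G)=\tan(\Theta+u)$ almost surely. The set $\operatorname{Re}(\widetilde G)=0$ is a null set for uniform $\Theta$, so the convention $q=0$ there is irrelevant.

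\item Since $\Theta+u$ is uniform modulo $2\pi$ given $(\mathbf M,V)$, $\tan(\Theta+u)\sim\operatorname{Cauchy}(0,1)$. This is the single-term case of the shifted-tangent identity.
\end{enumerate}

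Alternatively, one can apply Theorem~\ref{thm:cauchy} to the augmented vector $\mathbf G'=(\mathbf G,\widetilde G)\in\mathbb C^{m+1}$ with $V'=(V,\widehat{\mathbf a})$ and weight $\mathbf W=\mathbf e_{m+1}$. On the event $\widetilde G\ne0$ this $\mathbf W$ lies in $\Delta(\mathbf G')$, and $(e^{i\phi}\mathbf G',V')\overset d=(\mathbf G',V')$ holds because $\widehat{\mathbf a}$ is a function of $(\mathbf M,V)$. The matrix $\mathbf M'=\mathbf G'\mathbf G'^*$ is a function of $(\mathbf M,\widehat{\mathbf a})$ and also determines $\mathbf M$, so conditioning on $(\mathbf M',V')$ is the same as conditioning on $(\mathbf M,V)$. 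Off the event, I would substitute any admissible invariant weight, such as the energy weight, so that the hypotheses hold globally. I would then restrict the resulting conditional statement to the measurable event $\{\widehat{\mathbf a}^{\top}\operatorname{Re}(\mathbf M)\widehat{\mathbf a}>0\}$.

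The main obstacle is making the conditioning rigorous. Specifically, one must show that $\Theta$ is conditionally uniform and independent of $(\mathbf M,V)$ given an invariant choice of representative. This requires a measurable orbit selection and the standard averaging argument: for bounded $f,h$, $\E[f(\Theta)h(\mathbf M,V)]$ is unchanged by $\Theta\mapsto\Theta+\phi$, so averaging over $\phi$ gives uniformity. Since this step is exactly what Theorem~\ref{thm:cauchy} already packages, the augmented-vector reduction is the route I would write up. It lets the theorem absorb the obstacle.
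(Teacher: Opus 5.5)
Your proposal is correct. Your direct sketch in steps 1--4 is the paper's proof, but the augmented-vector route you choose to write up is a genuinely different packaging.

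The paper proceeds as follows. It writes $\mathbf G=e^{i\Theta}\mathbf h(\mathbf M)$, using the measurable orbit representative whose first nonzero coordinate is real and positive. Conditional uniformity of $\Theta$ given $(\mathbf M,V)$ comes from the averaging identity, which the supplement proves for general side information $U$ before Theorem~\ref{thm:cauchy}. On $\{d>0\}$ it sets $\psi=\arg\{\widehat{\mathbf a}^{\top}\mathbf h(\mathbf M)\}$ and applies Lemma~\ref{proof:cal-tangent} with $L=1$.

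Your preferred route uses $\mathbf G'=(\mathbf G,\widetilde G)$, $V'=(V,\widehat{\mathbf a})$, and $\mathbf W=\mathbf e_{m+1}$, patched with the energy weight on $\{d=0\}$. It checks out:
\begin{itemize}
\item $(e^{i\phi}\mathbf G',V',\mathbf W)$ is the image of $(e^{i\phi}\mathbf G,V)$ under one fixed measurable map, because $\widehat{\mathbf a}$ and $\mathbf W$ depend only on $(\mathbf M,V)$. So the joint invariance in \eqref{eq:jointrotation} holds.
\item $\mathbf G'\ne\mathbf 0$ is inherited from $\mathbf G$. This is what appending buys you over applying the theorem to $\widetilde G$ alone, where $\widetilde G=0$ can have positive probability.
\item The augmented Gram matrix is
\[
\mathbf M'=\begin{pmatrix}\mathbf M&\mathbf M\widehat{\mathbf a}\\ \widehat{\mathbf a}^{\top}\mathbf M&\widehat{\mathbf a}^{\top}\mathbf M\widehat{\mathbf a}\end{pmatrix}.
\]
Hence $\sigma(\mathbf M',V',\mathbf W)=\sigma(\mathbf M,V)$. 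State explicitly that $\mathbf W$ belongs in this identity, which is harmless since it is $(\mathbf M,V)$-measurable.
\end{itemize}

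Your route uses Theorem~\ref{thm:cauchy} as a black box, so no orbit selection or Haar averaging has to be reopened. Appending several projected coordinates also gives Proposition~\ref{proof:cal-finite-projections} in the same line. The paper's route is shorter in context. The representative $\mathbf h(\mathbf M)$ and the conditional-uniformity identity are stand-alone tools there, reused for Theorem~\ref{thm:joint}, the Hilbert-space extension and the invariant likelihood. So the ``main obstacle'' you identify is already handled before the corollary, which then needs neither weight patching on $\{d=0\}$ nor $\sigma$-field bookkeeping.
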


This corollary provides the exact calibration for the adaptive projections studied in Section~\ref{sec:learning}.
Extensions to weighted combinations of finitely many selected projections and to Hilbert spaces are given in Supplementary Propositions~\ref{proof:cal-finite-projections} and~\ref{proof:cal-hilbert}.

\subsection{Joint adaptation and evidence combination}
\label{subsec:jointadaptation}

With several paired observations, a common direction may be learned
from all pairs rather than from each pair separately. We therefore
need a joint reference distribution that remains valid after such
selection. Let $\mathbf X_b,\mathbf Y_b\in\mathbb R^{m_b}$ denote
the observations in block $b$, and write
$\mathbf G_b=\mathbf Y_b+i\mathbf X_b$, $b=1,\ldots,B$.
We assume that each pair can be rotated by its own angle without
changing the joint distribution
\begin{equation}\label{eq:product}
  (e^{i\phi_1}\mathbf G_1,\ldots,e^{i\phi_B}\mathbf G_B,V)
  \overset d=(\mathbf G_1,\ldots,\mathbf G_B,V)
  \quad\text{for every }(\phi_1,\ldots,\phi_B)\in[0,2\pi)^B,
\end{equation}
where $V$ is the auxiliary random variable introduced in Section~\ref{subsec:singlecalibration} and may
be taken as constant. The angles need not be equal.
For each block, set $\mathbf M_b=\mathbf G_b\mathbf G_b^*$ and let
\[
  \mathcal I_B=\sigma(\mathbf M_1,\ldots,\mathbf M_B,V).
\]
This notation collects all information in the blockwise matrices
and the auxiliary input. A rule measurable with respect to
$\mathcal I_B$ may use invariant information from every block.
Under \eqref{eq:product}, the common phases of the nonzero pairs are independent and uniform conditional on $\mathcal I_B$, which supplies the joint reference
needed below. Specifically, write $\mathcal B_* = \{b:\mathbf G_b\ne\mathbf0\}$
for the blocks that contain a nonzero pair, and let $\Phi$ denote
the standard normal distribution function.

\begin{theorem}\label{thm:joint}
Suppose \eqref{eq:product} holds. For each $b\in\mathcal B_*$,
let $\mathbf W_b=(W_{b1},\ldots,W_{b,m_b})$ be $\mathcal I_B$-measurable and belong to
$\Delta(\mathbf G_b)$. Define
\begin{equation*}
  T_b=\sum_{j=1}^{m_b}W_{bj}q(G_{bj}),\qquad
  p_b=1-\frac2\pi\arctan|T_b|.
\end{equation*}
Conditional on $\mathcal I_B$, the $T_b$ are independent standard
Cauchy variables, and  consequently the $p_b$ are independent
$\operatorname{Unif}(0,1)$ variables across blocks.

Let $\widehat{\mathbf a}_b\in\mathbb R^{m_b}$ be
$\mathcal I_B$-measurable and satisfy
\[
  \widehat{\mathbf a}_b^{\top}
  \operatorname{Re}(\mathbf M_b)\widehat{\mathbf a}_b>0,
  \qquad b\in\mathcal B_*.
\]
Then the signed scores
\begin{equation*}
  Z_b^{\mathrm{loc}}
  =\operatorname{sgn}(\widehat{\mathbf a}_b^{\top}\mathbf X_b)
    \Phi^{-1}(1-p_b/2)
\end{equation*}
are conditionally independent standard normal variables. Consequently,
for real $\mathcal I_B$-measurable coefficients $\lambda_b$, on
the event $\sum_{b\in\mathcal B_*}\lambda_b^2>0$,
\begin{equation}\label{eq:combination}
  N_B=
  \frac{\sum_{b\in\mathcal B_*}\lambda_bZ_b^{\mathrm{loc}}}
       {\{\sum_{b\in\mathcal B_*}\lambda_b^2\}^{1/2}},
  \qquad
  \mathcal L(N_B\mid\mathcal I_B)=\mathcal N(0,1).
\end{equation}
The same conclusions apply after forming finitely many linear
projections in each block using directions selected from
$\mathcal I_B$, with zero projected pairs omitted and conditioning
still on the original $\mathcal I_B$.
\end{theorem}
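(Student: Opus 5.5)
The core step is the one flagged in the statement: conditional on $\mathcal I_B$, the common phases of the nonzero blocks are independent and uniform. Everything else reduces blockwise to the shifted-tangent identity used for Theorem~\ref{thm:cauchy}.

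\textbf{Step 1: polar decomposition.} For $b\in\mathcal B_*$, fix a measurable reference coordinate $j_b$, say the first $j$ with $M_{b,jj}>0$. Write $G_{bj}=r_{bj}e^{i(\Theta_b+u_{bj})}$, where $r_{bj}=M_{b,jj}^{1/2}$ and $u_{bj}=\arg M_{b,jj_b}$ (with $u_{bj_b}=0$). These are $\mathcal I_B$-measurable, and $\Theta_b=\arg G_{bj_b}$. The map
\[
(\mathbf G_1,\ldots,\mathbf G_B,V)\mapsto(\mathbf M_1,\ldots,\mathbf M_B,V,\Theta_{\mathcal B_*})
\]
is a measurable bijection onto its image. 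Rotating block $b$ by $\phi_b$ fixes the $\mathcal I_B$ component and sends $\Theta_b\mapsto\Theta_b+\phi_b \pmod{2\pi}$.

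\textbf{Step 2: conditional uniformity.} Take $(\phi_b)$ i.i.d.\ uniform and independent of the data, and apply \eqref{eq:product}. Then
\[
(\mathcal I_B\text{-data},\Theta)\overset d=(\mathcal I_B\text{-data},\Theta+\phi).
\]
On the torus, $\Theta+\phi$ is uniform on $[0,2\pi)^{|\mathcal B_*|}$ and independent of $(\mathcal I_B,\Theta)$ by the usual Haar-convolution argument. Hence, given $\mathcal I_B$, the phases $(\Theta_b)_{b\in\mathcal B_*}$ are i.i.d.\ uniform. To handle the random set $\mathcal B_*$, condition on each of its finitely many values, which are $\mathcal I_B$-measurable because $\mathcal B_*=\{b:\operatorname{tr}\mathbf M_b>0\}$.

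I expect this to be the main obstacle. The delicate points are doing the disintegration rigorously and avoiding circularity: $\mathcal I_B$ already contains $V$, and $V$ is jointly rotated in \eqref{eq:product}, so the argument must use that $V$ is fixed by the rotation rather than treating it as independent of the phases.

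\textbf{Step 3: Cauchy and uniform conclusions.} $T_b$ is a function of $\Theta_b$ and $\mathcal I_B$-measurable quantities, with $\mathbf W_b$ also $\mathcal I_B$-measurable. So conditionally the $T_b$ are independent. Each equals $\sum_j W_{bj}\tan(\Theta_b+u_{bj})$ almost surely, and this is standard Cauchy by the shifted-tangent identity \citep{PitmanWilliams1967,Williams1969}, exactly as in the proof of Theorem~\ref{thm:cauchy}. Then $p_b=\PP(|C|>|T_b|)$ is uniform by the probability integral transform.

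\textbf{Step 4: signed normal scores.} Write $\widehat{\mathbf a}_b^\top\mathbf G_b=\rho_be^{i(\Theta_b+v_b)}$ with $\rho_b>0$ and $v_b$ both $\mathcal I_B$-measurable. Then
\[
\operatorname{sgn}(\widehat{\mathbf a}_b^\top\mathbf X_b)=\operatorname{sgn}\sin(\Theta_b+v_b).
\]
The shift $\Theta_b\mapsto\Theta_b+\pi$ preserves the conditional uniform law, leaves every $\tan(\Theta_b+u_{bj})$ and hence $T_b$ unchanged, and flips this sign almost surely. So the sign is a symmetric $\pm1$ variable independent of $|T_b|$ given $\mathcal I_B$. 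Therefore $Z_b^{\mathrm{loc}}$ has a symmetric law with $|Z_b^{\mathrm{loc}}|=\Phi^{-1}(1-p_b/2)$ distributed as $|\mathcal N(0,1)|$, so $Z_b^{\mathrm{loc}}$ is standard normal. Independence across blocks follows from Step 2.

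Given $\mathcal I_B$, the $\lambda_b$ are constants, so $N_B$ is a normalized sum of independent normals and is $\mathcal N(0,1)$.

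\textbf{Step 5: projected version.} For the final claim, replace $\mathbf G_b$ by the vector of projections $\widehat{\mathbf A}_b^\top\mathbf G_b$, dropping zero entries. Its entries have the form $\rho e^{i(\Theta_b+v)}$ with $\mathcal I_B$-measurable $\rho,v$. The same argument then goes through verbatim, with the common phase still $\Theta_b$ and conditioning still on the original $\mathcal I_B$.
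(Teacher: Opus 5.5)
Your proposal is correct and follows essentially the same route as the paper's proof. The paper also restricts to the invariant events $\{\mathcal B_*=J\}$ and averages \eqref{eq:product} over the rotation angles (Fubini plus circle translation invariance) to get conditionally independent uniform phases given $\mathcal I_B$; it then applies the shifted-tangent identity blockwise, uses the half-turn $\Theta_b\mapsto\Theta_b+\pi$ (which fixes $T_b$ and flips the sign) to make the sign symmetric and independent of $p_b$, and treats each projected coordinate as $e^{i\Theta_b}$ times an $\mathcal I_B$-measurable number.
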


\begin{remark}
    The weights
$W_{bj}$ combine ratios within block $b$, whereas $\lambda_b$ combine evidence across blocks. Both may depend on invariant information from all blocks.
For location alternatives, retaining the projected sign allows evidence in a common direction to accumulate across blocks.
Under the null hypothesis, Theorem~\ref{thm:joint} gives
\(
  N_B\mid\mathcal I_B\sim\mathcal N(0,1)
\)
whenever at least one nonzero projected pair has a nonzero coefficient. Hence the two-sided $p$-value is
\[
  p_{\mathrm{comb}}
  =2\{1-\Phi(|N_B|)\},
\]
where $\Phi$ denotes the standard normal distribution function. If no such contribution remains, we set
$p_{\mathrm{comb}}=1$, so that the test does not reject. The proof of Theorem~\ref{thm:joint}, including the
conditional sign symmetry used in the signed-normal construction, is given in Supplementary Section~\ref{proof:cal-joint}.

\end{remark}

Theorem~\ref{thm:joint} establishes exact calibration for invariantly selected directions and weights.
The next section studies how these choices affect power through a representative construction using
one projection per block.

\section{Representative procedures and power}
\label{sec:learning}

\subsection{Representative adaptive procedures}
\label{subsec:procedures}

Section~\ref{sec:calibration} permits both projection directions and block coefficients to be selected from invariant information. We now study how these two choices affect power.
For illustration, we use one common projection direction across blocks and compare two coefficient rules, named {\it Radius} and {\it Blend}.  The Gaussian models introduced in Section~\ref{subsec:learnedpower} provide explicit power formulas but are not required for exact null calibration.

Suppose that all $B$ pairs lie in the same space $\mathbb R^m$, where $m\ge2$, and let $\mathcal I_B=\sigma(\mathbf M_1,\ldots,\mathbf M_B,V)$, with $V$ constant or independent parameter-free randomization. We use $\widehat{\mathbf a}_B\in\mathbb R^m$ for a single unit direction learned jointly from the $B$ blocks.  In contrast, $\lambda_b\in\mathbb R$ is a scalar coefficient assigned to block $b$. Both procedures below use the same direction and differ only in these coefficients.

We first construct the common direction. Define the total paired
energy by
\begin{equation}\label{eq:pairedenergy}
  E_b=\|\mathbf X_b\|^2+\|\mathbf Y_b\|^2
     =\operatorname{tr}(\mathbf M_b),
\end{equation}
and write $\mathcal B_*=\{b:E_b>0\}$ for the nonzero blocks, as in Section~\ref{subsec:jointadaptation}. If this set is empty, set the final $p$-value to one. Otherwise, form
\begin{equation}\label{eq:direction}
  \mathbf C_b
  =\mathbf X_b\mathbf X_b^\top+\mathbf Y_b\mathbf Y_b^\top
  =\operatorname{Re}(\mathbf M_b),
  \qquad
  \mathbf S_b=\frac{\mathbf C_b}{E_b},
  \quad b\in\mathcal B_*,
\end{equation}
with $\mathbf S_b=\mathbf0$ for $b\notin\mathcal B_*$, and let
\begin{equation}\label{eq:directionaverage}
  \widehat{\mathbf A}_B
  =\frac1B\sum_{b=1}^B\mathbf S_b.
\end{equation}
Choose $\widehat{\mathbf a}_B$ to be a leading unit eigenvector of $\widehat{\mathbf A}_B$, with fixed measurable sign and tie conventions.
Equivalently,
\begin{equation*}
  \widehat{\mathbf a}_B
  \in\arg\max_{\|\mathbf a\|=1}
  \frac1B\sum_{b\in\mathcal B_*}
  \frac{(\mathbf a^\top\mathbf X_b)^2+
        (\mathbf a^\top\mathbf Y_b)^2}{E_b}.
\end{equation*}
The direction therefore maximizes the average fraction of paired
energy captured across blocks. Dividing by $E_b$ prevents overall
block scale alone from determining its contribution to this
criterion. The selected direction is $\mathcal I_B$-measurable.

Next, project every block onto this same direction and define
\begin{equation}\label{projected_XY}
  \xi_b=\widehat{\mathbf a}_B^\top\mathbf X_b,
  \qquad
  \eta_b=\widehat{\mathbf a}_B^\top\mathbf Y_b,
  \qquad
  r_b=(\xi_b^2+\eta_b^2)^{1/2}.
\end{equation}
For $r_b>0$, use the ratio $T_b=\xi_b/\eta_b$ and the sign
$\operatorname{sgn}(\xi_b)$ in Theorem~\ref{thm:joint}.
The signed-normal score is then
\begin{equation}\label{eq:angular}
  Z_b^{\mathrm{loc}}(\widehat{\mathbf a}_B)
  =\Phi^{-1}\left\{
    \frac12+\frac1\pi\arcsin\left(\frac{\xi_b}{r_b}\right)
  \right\}.
\end{equation}
The direction determines which component of each vector is tested.
The coefficients determine how the resulting block scores are
combined. In the present notation, \eqref{eq:combination} becomes
\[
  N_B
  =\frac{\sum_{b:r_b>0}\lambda_b
                 Z_b^{\mathrm{loc}}(\widehat{\mathbf a}_B)}
         {\{\sum_{b:r_b>0}\lambda_b^2\}^{1/2}}.
\]
Here the $\lambda_b$ are $\mathcal I_B$-measurable coefficients
and need not sum to one. The denominator normalizes the statistic,
and multiplying all coefficients by a common positive constant
leaves $N_B$ unchanged. A zero denominator gives a final $p$-value
of one.

{\it Radius} assigns the projected radius to each block,
\begin{equation}\label{eq:radstat}
  \lambda_b^{\mathrm{rad}}=r_b,
  \qquad
  N_B^{\mathrm{rad}}
  =\frac{\sum_{b:r_b>0}r_b
                 Z_b^{\mathrm{loc}}(\widehat{\mathbf a}_B)}
         {\{\sum_{b:r_b>0}r_b^2\}^{1/2}}.
\end{equation}
For a fixed projection with homogeneous Gaussian noise, this rule maximizes local sensitivity among nonnegative functions of the projected radius.  Proposition~\ref{proof:fixedpower} in Supplementary states this restricted comparison, which motivates the coefficient rule but does not establish optimality over adaptive tests.

On the other hand, {\it Blend} adjusts the radius coefficients for differences
in noise variance across blocks. Specifically, under a spherical Gaussian null, block $b$ has noise covariance
$s_b^2\mathbf I_m$, whose inverse noise variance
$s_b^{-2}$, referred to as block precision, is estimated by
\[
  \widehat\pi_b=\frac{2m-2}{E_b}.
\]
Under a spherical Gaussian null with covariance $s_b^2\mathbf I_m$,
$E_b/s_b^2\sim\chi^2_{2m}$ and
$\mathbb E_0\widehat\pi_b=s_b^{-2}$.

The estimate $\widehat\pi_b$ uses only block $b$ and
can be noisy when $m$ is small.
We therefore combine it with a common reference obtained
by averaging the estimated noise variances across the
nonzero blocks and then taking the reciprocal.
Specifically, define
\begin{equation}\label{eq:blendestimates}
  \widehat J_{+,B}
  =\frac1{|\mathcal B_*|}\sum_{b\in\mathcal B_*}\frac{E_b}{2m},
  \qquad
  \widehat J_{-,B}
  =\frac1{|\mathcal B_*|}\sum_{b\in\mathcal B_*}\widehat\pi_b,
  \qquad
  \widehat P_B=\widehat J_{+,B}^{-1}.
\end{equation}
Here $\widehat P_B$ is the reciprocal of the estimated average noise variance, whereas $\widehat J_{-,B}$ is the average of the individual inverse-variance estimates. Blend combines the common reference $\widehat P_B$ with the block-specific estimate $\widehat\pi_b$, using the estimated scale heterogeneity to determine their relative contributions.
The population product of average variance and average inverse variance is at least one, with equality exactly when all block scales agree. This motivates the heterogeneity measure
\[
  \widehat{\mathcal H}_B
  =\max\{1,\widehat J_{+,B}\widehat J_{-,B}\}.
\]
The truncation enforces the population lower bound, which the empirical product need not satisfy.
The remaining choice is how much to rely on the individual inverse-variance estimates.
To make this choice explicit, consider the family
\[
  \lambda_b(t)
  =r_b\{(1-t)\widehat P_B+t\widehat\pi_b\},
  \qquad 0\le t\le1.
\]
At $t=0$, every radius receives the same multiplier, so the normalized statistic coincides with Radius.
At $t=1$, each radius is multiplied by its individual inverse-variance estimate. Intermediate values combine the common reference with the observed differences between blocks.

We select the mixing proportion using a Gaussian local power criterion, keeping the common direction
rule unchanged.
The population mixing proportion in Supplementary equation
\eqref{proof:blend-mixing-optimum} maximizes the leading local power
coefficient within this family.
Replacing the population heterogeneity by
$\widehat{\mathcal H}_B$ gives
\begin{equation}\label{eq:blendmixing}
  \widehat t_B
  =\frac{\widehat{\mathcal H}_B-1}
         {(2-\tau)\widehat{\mathcal H}_B-1},
  \qquad
  \tau=\frac{m-1}{m},
\end{equation}
where the factor $\tau$ arises from the Gaussian energy moments in this calculation.
The selected proportion is zero when $\widehat{\mathcal H}_B=1$ and increases as the
estimated heterogeneity increases, while remaining
strictly below one. Hence,  the resulting Blend coefficients are
\begin{equation}\label{eq:blendweights}
  \lambda_b^{\mathrm{blend}}
  =\lambda_b(\widehat t_B)
  =r_b\{(1-\widehat t_B)\widehat P_B
       +\widehat t_B\widehat\pi_b\},
  \qquad b\in\mathcal B_*.
\end{equation}
Substituting these coefficients into \eqref{eq:combination} gives $N_B^{\mathrm{blend}}$.

\begin{corollary}\label{cor:representative-calibration}
Under the product-rotation null symmetry in \eqref{eq:product},
both Radius and Blend have the conditional $\mathcal N(0,1)$
reference in \eqref{eq:combination} whenever their denominator
is positive. Zero projected pairs are omitted. If the denominator
is zero, the final $p$-value is one.
\end{corollary}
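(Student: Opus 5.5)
The plan is to reduce the corollary to Theorem~\ref{thm:joint}, applied with one projection per block. Its "finitely many linear projections" clause already permits projecting each block onto an $\mathcal I_B$-measurable direction. So it suffices to check three things. First, the common direction $\widehat{\mathbf a}_B$ and the coefficients $\lambda_b^{\mathrm{rad}}$ and $\lambda_b^{\mathrm{blend}}$ are $\mathcal I_B$-measurable. Second, the score in \eqref{eq:angular} coincides with the signed score $Z_b^{\mathrm{loc}}$ of Theorem~\ref{thm:joint}. Third, the conventions for zero projected pairs and a zero denominator match those in the theorem.

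For measurability, note that $E_b=\operatorname{tr}(\mathbf M_b)$ and $\mathbf C_b=\operatorname{Re}(\mathbf M_b)$ are functions of $\mathbf M_b$, and so is $\mathcal B_*$. Hence $\widehat{\mathbf A}_B$ is $\mathcal I_B$-measurable. A leading unit eigenvector chosen with fixed measurable sign and tie conventions is then a measurable function of $\widehat{\mathbf A}_B$.

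Next, $r_b^2=\widehat{\mathbf a}_B^\top\operatorname{Re}(\mathbf M_b)\widehat{\mathbf a}_B$, so $r_b$ and the event $\{r_b>0\}$ are $\mathcal I_B$-measurable. The event $\{r_b>0\}$ is exactly the positivity condition required in Theorem~\ref{thm:joint} for the projected pair. Blocks with $r_b=0$ are the omitted zero projected pairs.

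The Blend ingredients $\widehat\pi_b$, $\widehat J_{\pm,B}$, $\widehat P_B$, $\widehat{\mathcal H}_B$ and $\widehat t_B$ are all built from the $E_b$ and from $|\mathcal B_*|$, so they are $\mathcal I_B$-measurable. It follows that $\lambda_b^{\mathrm{rad}}=r_b$ and $\lambda_b^{\mathrm{blend}}$ are real $\mathcal I_B$-measurable coefficients. For Blend one should also check that $\widehat t_B$ is well defined. Since $\widehat{\mathcal H}_B\ge1$ and $2-\tau>1$, the denominator $(2-\tau)\widehat{\mathcal H}_B-1>0$.

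The identification of scores is the only computational step. I would expect it to be the main (though mild) obstacle, because of sign and branch conventions. For $r_b>0$, write $\xi_b=r_b\sin\psi$ and $\eta_b=r_b\cos\psi$. Then $T_b=\xi_b/\eta_b=q(\eta_b+i\xi_b)$, with $q$ defined as $0$ when $\eta_b=0$, a null event conditionally. We get $|T_b|=|\tan\psi|$, and
\[
  p_b=1-\tfrac2\pi\arctan|T_b|=1-\tfrac2\pi\arcsin(|\xi_b|/r_b).
\]
Therefore
\[
  \Phi^{-1}(1-p_b/2)=\Phi^{-1}\bigl\{\tfrac12+\tfrac1\pi\arcsin(|\xi_b|/r_b)\bigr\}.
\]
Multiplying by $\operatorname{sgn}(\xi_b)$ and using the oddness of $\arcsin$ and the symmetry $\Phi^{-1}(1-u)=-\Phi^{-1}(u)$ gives exactly \eqref{eq:angular}. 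When $\xi_b=0$, both expressions equal $0$.

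Theorem~\ref{thm:joint} then gives that, conditional on $\mathcal I_B$, the $Z_b^{\mathrm{loc}}(\widehat{\mathbf a}_B)$ for $r_b>0$ are independent standard normal variables. Since the coefficients are $\mathcal I_B$-measurable, $N_B$ is conditionally $\mathcal N(0,1)$ on the $\mathcal I_B$-measurable event that the denominator is positive. Off that event, and when $\mathcal B_*=\emptyset$, the procedure sets the $p$-value to one by construction. This proves the corollary for both Radius and Blend.
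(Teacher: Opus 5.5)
Your proposal is correct and follows essentially the paper's route: you check that $E_b$, $r_b$, $\widehat\pi_b$, $\widehat P_B$ and $\widehat t_B$, and hence $\lambda_b^{\mathrm{rad}}$ and $\lambda_b^{\mathrm{blend}}$, are $\mathcal I_B$-measurable, then invoke Theorem~\ref{thm:joint} to get conditionally independent standard normal scores, where the paper finishes with a one-line characteristic-function computation. The score identification you do inline is the content of the paper's separate Corollary~\ref{proof:cal-angular}, and your check that $(2-\tau)\widehat{\mathcal H}_B-1>0$ is a detail the paper leaves implicit.
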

Both procedures select the direction and block coefficients using only $\mathcal I_B$, so their exact null calibration follows from Theorem~\ref{thm:joint}. This guarantee holds under the stated rotation symmetry, without requiring Gaussian observations, spherical covariance, or accurate variance estimates. Other choices based only on $\mathcal I_B$ have the same guarantee, although their power may differ. The next section studies power under Gaussian location models.

\subsection{Direction information and power}
\label{subsec:learnedpower}

We study how the common direction and the block coefficients defined
in Section~\ref{subsec:procedures} affect power. We first quantify
recovery of the normalized-scatter direction, then connect direction
error to the power of Radius. We keep the unit-scale analysis together
before examining the effect of unequal scales through Blend.

Suppose the pairs follow the Gaussian location model
\begin{equation}\label{eq:spherical}
  \mathbf X_b=\rho\mathbf v+s_b\boldsymbol\varepsilon_b,
  \qquad
  \mathbf Y_b=s_b\boldsymbol\zeta_b,
  \qquad b=1,\ldots,B,
\end{equation}
where $\|\mathbf v\|=1$, $m\ge2$, and all noise vectors
$\boldsymbol\varepsilon_b,\boldsymbol\zeta_b$ are mutually independent
$\mathcal N_m(\mathbf0,\mathbf I_m)$ variables.
The deterministic scales $s_b>0$ specify the noise standard deviations,
$\mathbf v$ is the unknown signal direction, and $\rho\ge0$ is the
signal magnitude. Under $\rho=0$, the model satisfies the
product-rotation symmetry in \eqref{eq:product}.
The Gaussian specification is used for the power calculations below,
while Corollary~\ref{cor:representative-calibration} continues to apply
under the broader rotation assumptions.
Write
\begin{equation*}
  L_B=\ell(\widehat{\mathbf a}_B,\mathbf v)
     =1-(\widehat{\mathbf a}_B^\top\mathbf v)^2
\end{equation*}
to measure the loss for estimating the signal line $\operatorname{span}(\mathbf v)$.
The recovery and detection results below concern the normalized-scatter
direction in \eqref{eq:directionaverage}. The power expansions also
allow this direction to be replaced by any common
$\mathcal I_B$-measurable unit direction, as specified at those results. The next theorem gives the population eigenvalue gap and bounds the expected loss of the normalized-scatter
estimator.

\begin{theorem}
\label{thm:recovery}
Under \eqref{eq:spherical}, let $\rho>0$ and use the
normalized-scatter direction from Section~\ref{subsec:procedures}. Define
\[
  g_m(t)=\mathbb E\left\{\frac{N_t}{m+N_t}\right\},
  \qquad N_t\sim\operatorname{Poisson}(t^2/2),
\]
where $N_t$ is the mixing variable for the noncentral chi-square
energy along $\mathbf v$. Set
\[
  t_b=\frac{\rho}{s_b},
  \qquad \bar g_B=\frac1B\sum_{b=1}^B g_m(t_b).
\]
Then
\begin{equation}\label{eq:populationdirection}
  \mathbb E\mathbf S_b
  =
  \frac{1-g_m(t_b)}{m}\mathbf I_m
  +
  g_m(t_b)\mathbf v\mathbf v^\top,
\end{equation}
and
\[
  \mathbb E\widehat{\mathbf A}_B
  =
  \frac{1-\bar g_B}{m}\mathbf I_m
  +
  \bar g_B\mathbf v\mathbf v^\top.
\]
The leading population line is $\operatorname{span}(\mathbf v)$,
with eigenvalue gap $\bar g_B$.
Moreover, if
\[
  \mathcal V_B
  =
  \frac1{B^2}
  \sum_{b=1}^B
  \mathbb E
  \left\|
    \mathbf S_b-\mathbb E\mathbf S_b
  \right\|_F^2,
\]
then
\begin{equation}\label{eq:directionrisk}
  \sup_{\|\mathbf v\|=1}
  \mathbb E_{\rho,\mathbf v}
  \ell(\widehat{\mathbf a}_B,\mathbf v)
  \le
  \min\left\{
    1,\,
    \frac{4\mathcal V_B}{\bar g_B^2}
  \right\}
  \le
  \min\left\{
    1,\,
    \frac{4}{B\bar g_B^2}
  \right\}.
\end{equation}
\end{theorem}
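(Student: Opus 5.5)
Two parts: compute $\mathbb E\mathbf S_b$ exactly, then apply a Davis--Kahan perturbation bound to the averaged matrix.

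\emph{Population matrix.} Fix $b$ and rescale by $s_b$; this does not change $\mathbf S_b$, so take $s_b=1$ and $t=t_b$. Choose an orthogonal $\mathbf Q$ with $\mathbf Q\mathbf v=\mathbf e_1$. Since $\mathbf S_b$ is equivariant, $\mathbf S_b\mapsto\mathbf Q\mathbf S_b\mathbf Q^\top$, it suffices to take $\mathbf v=\mathbf e_1$. The reflections $\operatorname{diag}(\pm1,\dots)$ acting on coordinates $2,\dots,m$, and the permutations of those coordinates, leave the law of $(\mathbf X_b,\mathbf Y_b)$ unchanged. Under a reflection of coordinate $k$, the entry $S_{jk}$ with $j\ne k$ changes sign, so every off-diagonal expectation is zero; this includes entries in the first row. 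Permutation invariance gives $\mathbb E S_{22}=\dots=\mathbb E S_{mm}$. Because $\operatorname{tr}\mathbf S_b=1$, it remains to show $\mathbb E S_{11}=g_m(t)+(1-g_m(t))/m$.

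Write $U=X_1^2+Y_1^2\sim\chi^2_2(t^2)$ and $R=\sum_{j\ge2}(X_j^2+Y_j^2)\sim\chi^2_{2m-2}$, independent, so that $S_{11}=U/(U+R)$. Use the Poisson mixture $U\mid N\sim\chi^2_{2+2N}$ with $N\sim\operatorname{Poisson}(t^2/2)$. Conditional on $N$, the ratio $U/(U+R)$ is $\mathrm{Beta}(1+N,m-1)$, with mean $(1+N)/(m+N)$. The identity
\[
\frac{1+N}{m+N}=\frac{N}{m+N}+\frac1m\Bigl(1-\frac{N}{m+N}\Bigr)
\]
then gives the claim after taking expectations. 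Averaging over $b$ is linear and yields the formula for $\mathbb E\widehat{\mathbf A}_B$. Its eigenvalues are $(1-\bar g_B)/m+\bar g_B$ on $\mathbf v$ and $(1-\bar g_B)/m$ on $\mathbf v^\perp$, so the gap is $\bar g_B$. The gap is positive because $\rho>0$ makes $g_m(t_b)>0$.

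\emph{Risk bound.} Set $\mathbf A=\mathbb E\widehat{\mathbf A}_B$ and $\mathbf E=\widehat{\mathbf A}_B-\mathbf A$. The estimator $\widehat{\mathbf a}_B$ is a leading eigenvector of $\widehat{\mathbf A}_B$. I would use a direct Davis--Kahan argument rather than cite a version with unverified constants. Set $c^2=(\widehat{\mathbf a}_B^\top\mathbf v)^2$, so that $1-c^2=L_B$. Maximality of $\widehat{\mathbf a}_B$ gives $\widehat{\mathbf a}_B^\top\widehat{\mathbf A}_B\widehat{\mathbf a}_B\ge\mathbf v^\top\widehat{\mathbf A}_B\mathbf v$. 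Expanding $\mathbf A$ gives $\widehat{\mathbf a}_B^\top\mathbf A\widehat{\mathbf a}_B-\mathbf v^\top\mathbf A\mathbf v=-\bar g_B L_B$. Hence
\[
\bar g_B L_B\le \widehat{\mathbf a}_B^\top\mathbf E\widehat{\mathbf a}_B-\mathbf v^\top\mathbf E\mathbf v=\langle\mathbf E,\widehat{\mathbf a}_B\widehat{\mathbf a}_B^\top-\mathbf v\mathbf v^\top\rangle_F .
\]
Bound the right side by $\|\mathbf E\|_F\|\widehat{\mathbf a}_B\widehat{\mathbf a}_B^\top-\mathbf v\mathbf v^\top\|_F$. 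The second factor equals $\sqrt{2L_B}$, so
\[
L_B\le 2\|\mathbf E\|_F^2/\bar g_B^2 .
\]
This is better than the needed constant $4$. Taking expectations, independence of the blocks gives $\mathbb E\|\mathbf E\|_F^2=\mathcal V_B$, where the blocks with $E_b=0$ form a null event. This yields $4\mathcal V_B/\bar g_B^2$, and combining with $L_B\le1$ gives the first minimum. The bound is uniform in $\mathbf v$ because $\bar g_B$ does not depend on $\mathbf v$. The argument does not depend on the tie or sign convention, since $L_B$ is the same for either sign of $\widehat{\mathbf a}_B$.

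\emph{Variance bound and main obstacle.} The second inequality needs $\mathbb E\|\mathbf S_b-\mathbb E\mathbf S_b\|_F^2\le1$. Since $\mathbf S_b\succeq0$ with trace one, $\|\mathbf S_b\|_F^2\le(\operatorname{tr}\mathbf S_b)^2=1$. The variance is then at most $\mathbb E\|\mathbf S_b\|_F^2\le1$, so $\mathcal V_B\le B/B^2=1/B$. The step needing the most care is the exact population formula, in particular that the off-diagonal entries involving $\mathbf v$ have zero mean; the reflection argument on the $\mathbf v^\perp$ coordinates handles this. The Beta mixture identification is routine but must be checked, including that $\chi^2_{2+2N}$ has $1+N$ as its half degrees of freedom.
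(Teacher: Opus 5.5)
Your proof is correct. The population computation is the same as the paper's: sign flips and permutations of the coordinates orthogonal to $\mathbf v$, trace one, and the Poisson--Beta mixture giving conditional mean $(N+1)/(N+m)$. The difference lies in the perturbation step.

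\textbf{The paper's route.} It lower-bounds the top eigenvalue of $\widehat{\mathbf A}_B$ by $\lambda_\perp+\bar g_B-\|\mathbf E\|_{\mathrm{op}}$. It then projects the eigenvector equation onto $\mathbf v^\perp$ to get $(\widehat\lambda-\lambda_\perp)\mathbf P_\perp\widehat{\mathbf a}_B=\mathbf P_\perp\mathbf E\widehat{\mathbf a}_B$. From this it concludes $L_B\le4\|\mathbf E\|_{\mathrm{op}}^2/\bar g_B^2$, but only on the event $\|\mathbf E\|_{\mathrm{op}}<\bar g_B/2$. The trivial bound $L_B\le1$ covers the complement, and it finally passes to $\|\mathbf E\|_F$.

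\textbf{Your route.} Your variational inequality $\bar g_BL_B\le\langle\mathbf E,\widehat{\mathbf a}_B\widehat{\mathbf a}_B^\top-\mathbf v\mathbf v^\top\rangle_F$ uses only Rayleigh-quotient maximality and holds pointwise, with no case split. It lands directly on the Frobenius quantity defining $\mathcal V_B$, with the better constant $2$.

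\textbf{What each buys.} The paper's argument retains the operator norm, which can be much smaller than the Frobenius norm when $m$ is large. Your argument recovers that too at no cost. Bound the inner product by $\|\mathbf E\|_{\mathrm{op}}$ times the nuclear norm of $\widehat{\mathbf a}_B\widehat{\mathbf a}_B^\top-\mathbf v\mathbf v^\top$. That matrix has eigenvalues $\pm\sqrt{L_B}$, so its nuclear norm is $2\sqrt{L_B}$. This gives $L_B\le4\|\mathbf E\|_{\mathrm{op}}^2/\bar g_B^2$ unconditionally, which is the paper's pointwise bound without the event split.

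\textbf{One small omission.} For the supremum over $\mathbf v$ of the middle expression in the risk bound, you also need $\mathcal V_B$ to be free of $\mathbf v$, not just $\bar g_B$. This follows from the equivariance $\mathbf S_b\mapsto\mathbf Q\mathbf S_b\mathbf Q^\top$ you already noted, together with orthogonal invariance of the Frobenius norm.
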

\begin{remark}\label{rem:gapexp}
Theorem~\ref{thm:recovery} shows that trace normalization preserves
the population signal line under spherical noise. For weak signals (i.e., $t_b=\rho/s_b\to0$),
$\mathbb P(N_t=1)=t^2/2+O(t^4)$ and
$\mathbb P(N_t\ge2)=O(t^4)$, so the Poisson representation gives
\begin{equation}\label{eq:gapexp}
  g_m(t)=\frac{t^2}{2(m+1)}+O(t^4),
  \qquad t\to0.
\end{equation}
For fixed $m$, unit scales, and $\rho=\rho_B\to0$,
$\bar g_B=g_m(\rho_B)\asymp\rho_B^2$. Hence
\eqref{eq:directionrisk} implies
\[
  \mathbb E_{\rho_B,\mathbf v}
  \ell(\widehat{\mathbf a}_B,\mathbf v)
  =O\!\left(\frac1{B\rho_B^4}\right),
\]
  so $B\rho_B^4\to\infty$ is sufficient for consistent recovery.
Supplementary Theorem~\ref{proof:info-risk} establishes its necessity for uniformly consistent recovery from the blockwise invariant information in this fixed-dimensional, unit-scale model.
The fourth-order information scale agrees with the invariant-moment perspective of \citet{Bandeira2023}.
\end{remark}

We next retain the Radius coefficients $\lambda_b=r_b$ and study how the selected direction affects power.
For illustration, we take $s_b=1$, fixed $m\ge2$, and $\rho=\rho_B\to0$.
The common direction $\widehat{\mathbf a}_B$ may now be any
$\mathcal I_B$-measurable unit vector, including the normalized-scatter
choice. In particular, it need not be independent of the observations
used to form the statistic.
For fixed $0<\alpha<1$, write
$z_\alpha=\Phi^{-1}(1-\alpha/2)$.
For a normal variable with mean $t$ and variance one, the two-sided rejection probability is
\[
  H_\alpha(t)=1-\Phi(z_\alpha-t)+\Phi(-z_\alpha-t).
\]
The next theorem connects this normal power function to the realized
alignment $1-L_B$ of a direction learned from the same observations.

\begin{theorem}
\label{thm:bridge}
Uniformly over the preceding $\mathcal I_B$-measurable unit-direction
rules,
\begin{align}
 & \mu_B
  =
  \mathbb E_\rho
  \left(
    N_B^{\mathrm{rad}}
    \mid\mathcal I_B
  \right)=
  \kappa_R
  \sqrt B\,\rho_B\,
  \widehat{\mathbf a}_B^\top\mathbf v
  +
  O_P(\sqrt B\,\rho_B^3+\rho_B),
  \label{eq:bridgemean}\\
  &\operatorname{Var}_\rho
  (N_B^{\mathrm{rad}}\mid\mathcal I_B)
  =
  1+O_P(\rho_B^2),
  \label{eq:bridgevar}\\
  &\sup_{x\in\mathbb R}
  \left|
    \mathbb P_\rho
    (N_B^{\mathrm{rad}}\le x\mid\mathcal I_B)
    -
    \Phi(x-\mu_B)
  \right|
  =
  O_P(B^{-1/2}+\rho_B^2).
  \label{eq:bridgecdf}
\end{align}
If, in addition,
$\sqrt B\,\rho_B^3\to0$, then
\begin{equation}\label{eq:powerbridge}
  \mathbb P_\rho
  \left(
    |N_B^{\mathrm{rad}}|>z_\alpha
  \right)
  =
  \mathbb E_\rho
  H_\alpha
  \left(
    \kappa_R
    \sqrt B\,\rho_B
    \sqrt{1-L_B}
  \right)
  +o(1),
\end{equation}
where
\begin{equation*}
  \beta=-\int_0^1\Phi^{-1}(u)\cos(\pi u)\,du\simeq0.671,
  \qquad \kappa_R=\sqrt2\,\beta.
\end{equation*}
\end{theorem}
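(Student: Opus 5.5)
The plan is to condition on $\mathcal I_B$ and identify the exact conditional law of the common phases under the alternative. Under \eqref{eq:spherical} with $s_b=1$, the density of $\mathbf G_b$ relative to the null is $\exp(\rho_B\mathbf v^\top\mathbf X_b-\rho_B^2/2)$. Write $\mathbf G_b=e^{i\Theta_b}\mathbf g_b$, where $\mathbf g_b$ is a measurable representative determined by $\mathbf M_b$. Then $\mathbf v^\top\mathbf X_b=\operatorname{Im}(e^{i\Theta_b}c_b)$ with $c_b=\mathbf v^\top\mathbf g_b$, and $|c_b|$ is a function of $\mathbf M_b$. By the conditioning argument behind Theorem~\ref{thm:joint}, the $\Theta_b$ are conditionally independent given $\mathcal I_B$. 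Each has the tilted density
\[
\propto\exp\{\rho_B|c_b|\sin(\Theta_b+\psi_b)\},
\]
where $\psi_b$ is $\mathbf M_b$-measurable. Because $\widehat{\mathbf a}_B$ is $\mathcal I_B$-measurable, the projected pair satisfies $(\xi_b,\eta_b)=r_b(\sin\theta_b,\cos\theta_b)$ with $\theta_b=\Theta_b+u_b$. Here $r_b$ and $u_b$ are $\mathcal I_B$-measurable, and $Z_b^{\mathrm{loc}}=h(\theta_b)$ with $h(\theta)=\Phi^{-1}\{\tfrac12+\pi^{-1}\arcsin(\sin\theta)\}$. So $N_B^{\mathrm{rad}}$ is, conditionally, a normalized sum of independent bounded-moment summands with explicit exponential-tilt laws. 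This holds for every $\mathcal I_B$-measurable direction rule simultaneously, which is the source of the uniformity.

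For the mean, I will expand the tilt to first order: $\mathbb E(h(\theta_b)\mid\mathcal I_B)=\rho_B|c_b|\,\mathbb E_0\{h(\theta)\sin(\theta+\delta_b)\}+O(\rho_B^3|c_b|^3)$, with $\delta_b=\psi_b-u_b$. The even-order terms vanish because $h$ is odd under $\theta\mapsto-\theta$ and under the uniform law. Since $h$ depends on $\theta$ only through $\sin\theta$ and is symmetric under $\theta\mapsto\pi-\theta$, we have $\mathbb E_0\{h\cos\theta\}=0$. The substitution $u=\tfrac12+\pi^{-1}\arcsin(\sin\theta)$, which is uniform, gives $\mathbb E_0\{h\sin\theta\}=\beta$. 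Next,
\[
r_b|c_b|\cos\delta_b=\operatorname{Re}\{(\widehat{\mathbf a}_B^\top\mathbf g_b)\overline{\mathbf v^\top\mathbf g_b}\}=\widehat{\mathbf a}_B^\top\mathbf C_b\mathbf v .
\]
Hence
\[
\mu_B=\rho_B\beta\,\frac{\widehat{\mathbf a}_B^\top(\sum_b\mathbf C_b)\mathbf v}{\{\widehat{\mathbf a}_B^\top(\sum_b\mathbf C_b)\widehat{\mathbf a}_B\}^{1/2}}+O\Bigl(\rho_B^3\,\frac{\sum_b r_b|c_b|^3}{(\sum_b r_b^2)^{1/2}}\Bigr).
\]
The key uniform ingredient is the operator-norm bound $\|B^{-1}\sum_b\mathbf C_b-2\mathbf I_m-\rho_B^2\mathbf v\mathbf v^\top\|=O_P(B^{-1/2})$, which holds for fixed $m$ by a CLT for the entries. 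It controls numerator and denominator simultaneously over all unit $\widehat{\mathbf a}_B$, and gives the leading term $\sqrt2\,\beta\sqrt B\rho_B\widehat{\mathbf a}_B^\top\mathbf v$ with error $O_P(\rho_B+\sqrt B\rho_B^3)$. For the remainder I use $\sum_b r_b|c_b|^3\le\sum_b\|\mathbf G_b\|^4=O_P(B)$ and $\sum_b r_b^2\ge c B$ uniformly with probability tending to one, which follows from the same operator bound. Both bounds hold whatever direction is chosen. This yields \eqref{eq:bridgemean}.

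The conditional variance is handled the same way. Under the null, $\operatorname{Var}_0 h=1$, and the tilt perturbs each summand's second moment by $O(\rho_B^2|c_b|^2)$. After weighting by $r_b^2/\sum r_b^2$ and using $\sum_b r_b^2|c_b|^2\le\sum\|\mathbf G_b\|^4$, this gives \eqref{eq:bridgevar}. For \eqref{eq:bridgecdf} I apply Berry--Esseen for independent non-identical summands conditionally on $\mathcal I_B$. The Lyapunov ratio is $\sum r_b^3/(\sum r_b^2)^{3/2}=O_P(B^{-1/2})$, because the $r_b\le\|\mathbf G_b\|$ have uniformly bounded moments. The variance discrepancy contributes $O_P(\rho_B^2)$ through the Lipschitz dependence of $\Phi(\cdot/\sigma)$ on $\sigma$.

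For \eqref{eq:powerbridge}, the CDF approximation gives $\mathbb P_\rho(|N_B^{\mathrm{rad}}|>z_\alpha\mid\mathcal I_B)=H_\alpha(\mu_B)+o_P(1)$. When $\sqrt B\rho_B^3\to0$, the mean expansion shows that $\mu_B$ differs from $\kappa_R\sqrt B\rho_B\widehat{\mathbf a}_B^\top\mathbf v$ by $o_P(1)$. Since $H_\alpha$ is even and bounded-Lipschitz, $H_\alpha(\kappa_R\sqrt B\rho_B\widehat{\mathbf a}_B^\top\mathbf v)=H_\alpha(\kappa_R\sqrt B\rho_B\sqrt{1-L_B})$. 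Bounded convergence then lets me take expectations. The main obstacle I anticipate is making the Taylor remainders and the Berry--Esseen constants genuinely uniform over arbitrary $\mathcal I_B$-measurable directions. My plan is to bound every remainder by direction-free quantities ($\sum\|\mathbf G_b\|^k$ and the operator-norm deviation of $\sum\mathbf C_b$), so that no property of $\widehat{\mathbf a}_B$ beyond measurability and unit norm is ever used.
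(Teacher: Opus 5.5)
Your proposal is correct and follows essentially the paper's route. Bayes' formula with the uniform null phases gives conditionally independent, exponentially tilted phase laws given $\mathcal I_B$; a first-order expansion of the tilt yields the block mean $\beta\rho_B\widehat{\mathbf a}_B^\top\mathbf C_b\mathbf v/r_b$; operator-norm control of $B^{-1}\sum_b\mathbf C_b$ handles all invariant directions at once; and conditional Berry--Esseen, together with the evenness and Lipschitz property of $H_\alpha$, gives the power formula.

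Two small repairs are needed. First, the even-order tilt terms vanish because of the half-turn $\theta\mapsto\theta+\pi$, which reverses $h$ and preserves $\sin^2(\theta+\delta_b)$; the reflection $\theta\mapsto-\theta$ does not preserve $\sin^2(\theta+\delta_b)$ when $\sin 2\delta_b\ne0$. Second, $\omega_b=\rho_B|c_b|$ is not uniformly small across blocks, so your Taylor remainders and tilted third moments carry a factor $e^{\omega_b}$ unless you supply a separate large-$\omega_b$ bound. The direction-free envelopes should therefore be $B^{-1}\sum_b\|\mathbf G_b\|^{k}e^{C\rho_B\|\mathbf G_b\|}=O_P(1)$, exactly as in the paper.
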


\begin{remark}
The leading conditional mean is proportional to
$\widehat{\mathbf a}_B^\top\mathbf v$. Since $H_\alpha$ is even,
the two-sided power depends on its magnitude,
$|\widehat{\mathbf a}_B^\top\mathbf v|=\sqrt{1-L_B}$.
Equation~\eqref{eq:powerbridge} therefore averages the normal power
at shift magnitude $\kappa_R\sqrt B\rho_B\sqrt{1-L_B}$ over the
random alignment. Better alignment increases this leading power,
but perfect direction recovery is not required for detection.
\end{remark}

For $B\ge m$, the normalized-scatter direction has a Beta
null squared-alignment distribution, as shown in Supplementary
Lemma~\ref{proof:det-null-line}.
The next corollary uses this law to obtain explicit local power
and separates the detection and direction-recovery regimes.

\begin{corollary}
\label{cor:localpower}
\label{cor:detection}
Under \eqref{eq:spherical}, let $m\ge2$ be fixed,
$s_b=1$, and $\rho_B\to0$.
Use the normalized-scatter direction
$\widehat{\mathbf a}_B$ and the Radius coefficients.
For fixed $0<\alpha<1$, the following conclusions hold.

\begin{enumerate}
\item If $B\rho_B^2\to0$, then
\[
  \mathbb P_{\rho_B,\mathbf v}
  \bigl(|N_B^{\mathrm{rad}}|>z_\alpha\bigr)
  \longrightarrow\alpha.
\]

\item If $\rho_B=h/\sqrt B$ for fixed $h>0$, then
\begin{equation*}
  \mathbb P_{\rho_B,\mathbf v}
  \bigl(|N_B^{\mathrm{rad}}|>z_\alpha\bigr)
  \longrightarrow
  \mathbb E H_\alpha(\kappa_R h\sqrt U),
  \qquad
  U\sim\operatorname{Beta}
  \left(\frac12,\frac{m-1}{2}\right).
\end{equation*}

\item If $B\rho_B^2\to\infty$, then
\[
  \mathbb P_{\rho_B,\mathbf v}
  \bigl(|N_B^{\mathrm{rad}}|>z_\alpha\bigr)
  \longrightarrow1.
\]
If additionally $B\rho_B^4\to0$, then
\[
  \mathbb E_{\rho_B,\mathbf v}L_B
  \longrightarrow1-\frac1m.
\]
\end{enumerate}
\end{corollary}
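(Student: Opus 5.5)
All three parts follow from the power bridge \eqref{eq:powerbridge} of Theorem~\ref{thm:bridge}, combined with the null law of the direction and a contiguity argument. In every regime considered we have $\rho_B\to0$. The side condition $\sqrt B\rho_B^3\to0$ of Theorem~\ref{thm:bridge} holds automatically in parts~1 and~2, since there $\sqrt B\rho_B=O(1)$. So in those parts
\[
\mathbb P\bigl(|N_B^{\mathrm{rad}}|>z_\alpha\bigr)=\mathbb E H_\alpha\bigl(\kappa_R\sqrt B\rho_B\sqrt{1-L_B}\bigr)+o(1).
\]

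\textbf{Part~1.} If $B\rho_B^2\to0$, the argument of $H_\alpha$ is at most $\kappa_R\sqrt B\rho_B\to0$ uniformly in $L_B$. Since $H_\alpha$ is continuous with $H_\alpha(0)=\alpha$, bounded convergence gives the limit $\alpha$.

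\textbf{Part~2.} Here the shift is $\kappa_R h\sqrt{1-L_B}$, so it suffices to prove $1-L_B=(\widehat{\mathbf a}_B^\top\mathbf v)^2\Rightarrow U\sim\operatorname{Beta}(1/2,(m-1)/2)$ under $\rho_B=h/\sqrt B$.
\begin{itemize}
\item Under the null, $\widehat{\mathbf A}_B$ is orthogonally equivariant in law. Since $B\ge m$ eventually, Supplementary Lemma~\ref{proof:det-null-line} gives that $\widehat{\mathbf a}_B$ spans a uniformly distributed line. Hence $(\widehat{\mathbf a}_B^\top\mathbf v)^2$ is exactly $\operatorname{Beta}(1/2,(m-1)/2)$.
\item To transfer this to the alternative, I would argue that $\widehat{\mathbf a}_B$ is only $\mathcal I_B$-measurable. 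The invariant information has log-likelihood ratio of order $B\rho_B^4=h^4/B\to0$, by the fourth-order information scale in Remark~\ref{rem:gapexp}; equivalently, each block's invariant law differs from the null by $O(\rho_B^2)$.
\item The Hellinger or $\chi^2$ distance between product invariant laws is therefore $O(B\rho_B^4)\to0$. Total variation between null and alternative laws of $\mathcal I_B$ vanishes, so the law of $L_B$ is asymptotically the same as under the null.
\end{itemize}
Continuity and boundedness of $H_\alpha$ then give the limit. The main obstacle is making the $O(\rho^2)$ per-block invariant density perturbation rigorous, which needs a uniform chi-square bound. I would try to take this from the proof of Supplementary Theorem~\ref{proof:info-risk}, which establishes exactly such an information bound.

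\textbf{Part~3, power.} Here the bridge may not apply if $\sqrt B\rho_B^3\not\to0$, so I would use monotonicity instead.
\begin{itemize}
\item The rejection probability should be nondecreasing in the signal along any fixed ray. A cleaner route avoids this: apply \eqref{eq:bridgemean}--\eqref{eq:bridgecdf} directly.
\item Conditionally on $\mathcal I_B$, $N_B^{\mathrm{rad}}$ has mean $\mu_B$ and variance $1+o_P(1)$, and its law is close to $\Phi(\cdot-\mu_B)$. So power tends to one once $|\mu_B|\to\infty$ in probability.
\item We have $\kappa_R\sqrt B\rho_B|\widehat{\mathbf a}_B^\top\mathbf v|\to\infty$ provided $|\widehat{\mathbf a}_B^\top\mathbf v|$ is bounded away from zero with high probability. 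The error term $O_P(\sqrt B\rho_B^3)$ is $o(\sqrt B\rho_B)$ because $\rho_B\to0$.
\item When $B\rho_B^4\to\infty$, alignment tends to one by Theorem~\ref{thm:recovery}.
\item When $B\rho_B^4$ stays bounded, the alignment is at least as large in law as its null Beta law, which puts no mass near zero asymptotically, because learning can only help. Under such boundedness the contiguity argument shows $L_B$ stays tight away from one, at least the Beta lower bound. This gives $|\mu_B|\to\infty$ in probability.
\item To cover all sequences, I would pass to subsequences.
\end{itemize}

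\textbf{Part~3, recovery.} If $B\rho_B^4\to0$, the same total-variation argument shows that $L_B$ has asymptotically the null law. Hence $\mathbb E L_B\to1-\mathbb E U=1-1/m$, using boundedness of $L_B$.
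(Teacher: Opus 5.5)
Your proposal is correct, and parts~2 and~3 follow the paper's proof closely. The null Beta law from Lemma~\ref{proof:det-null-line} is transferred to the alternative through the invariant total-variation bound \eqref{proof:info-tv}; the uniform chi-square bound you were looking for is exactly Proposition~\ref{proof:info-likelihood}. Power in part~3 comes from a subsequence split into the cases $B\rho_B^4\to\infty$ (alignment tends to one by Theorem~\ref{thm:recovery}) and $B\rho_B^4$ bounded. The recovery claim uses $\mathbb E_0L_B=1-1/m$ together with $\delta_B\to0$.

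The genuine difference is part~1. You obtain it from the power bridge \eqref{eq:powerbridge}, which is valid because $\sqrt B\rho_B\to0$ forces the shift to zero uniformly in $L_B$. The paper bypasses Theorem~\ref{thm:bridge} there. It combines the exact null level $\mathbb P_0(|N_B^{\mathrm{rad}}|>z_\alpha)=\alpha$ from Theorem~\ref{thm:joint} (the denominator is almost surely positive for large $B$) with the full-data likelihood ratio, whose null second moment is $e^{B\rho_B^2}$. This gives $|\mathbb P_{\rho_B,\mathbf v}(|N_B^{\mathrm{rad}}|>z_\alpha)-\alpha|\le\frac12\{e^{B\rho_B^2}-1\}^{1/2}$. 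Your route is shorter once the bridge is available. The paper's is elementary, gives an explicit rate, and avoids the Berry--Esseen machinery. In part~3 you use the conditional cdf bound \eqref{eq:bridgecdf}, whereas the paper uses conditional Chebyshev with \eqref{eq:bridgevar}; either works.

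One correction in part~3. Drop the claim that the alignment stochastically dominates its null Beta law ``because learning can only help.'' Nothing in the paper establishes it, and you do not need it. In the bounded case $\sup_B B\rho_B^4=M<\infty$, what you need is the second-moment bound
\[
\mathbb E_0\mathcal L_{B,\rho_B,\mathbf v}^2=I_0(\rho_B^2)^B\le e^{M/4},
\]
from \eqref{proof:info-cross-moment}--\eqref{proof:info-bessel-bound}. Apply Cauchy--Schwarz to the $\mathcal I_B$-measurable event $E_{B,K}=\{\sqrt B\rho_B|\widehat{\mathbf a}_B^\top\mathbf v|\le K\}$ to get
\[
\mathbb P_{\rho_B,\mathbf v}(E_{B,K})\le e^{M/8}\{\mathbb P(U\le K^2/(B\rho_B^2))\}^{1/2}\to0.
\]
This is the contiguity you invoke, made explicit. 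Base it on the $\chi^2$ bound rather than a Hellinger bound, since a Hellinger affinity bounded away from zero does not by itself imply contiguity.
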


\begin{remark}
The last conclusion shows that consistent detection is possible even when the selected direction retains the same limiting expected loss as under the null.
\end{remark}

We now compare Radius and Blend under unequal noise scales, using the same common $\mathcal I_B$-measurable unit direction. Define the population counterparts of the scale summaries in Section~\ref{subsec:procedures} by
\begin{equation}\label{eq:blendpopulation}
  J_{+,B}=\frac1B\sum_b s_b^2,
  \qquad
  J_{-,B}=\frac1B\sum_b s_b^{-2},
  \qquad
  \mathcal H_B=J_{+,B}J_{-,B}.
\end{equation}
With $\tau=(m-1)/m$, the contribution of the Blend
coefficients to local power is described by
\begin{equation*}
  J_B^{\mathrm{blend}}
  =J_{-,B}\left\{
    \tau+\frac{(1-\tau)^2}{\mathcal H_B-\tau}
  \right\}.
\end{equation*}
The next theorem separates this contribution from the effect of direction alignment.

\begin{theorem}\label{thm:blendpower}
Under \eqref{eq:spherical}, suppose $m\ge3$ is fixed,
$0<s_-\le s_b\le s_+<\infty$, $\rho_B\to0$ and
$\sqrt B\rho_B^3\to0$. For any $\mathcal I_B$-measurable unit-direction
rule, the feasible Blend statistic satisfies
\begin{equation}\label{eq:blendpower}
 \mathbb P_{\rho_B}(|N_B^{\mathrm{blend}}|>z_\alpha)
 =\mathbb E_{\rho_B}H_\alpha\left(
   \kappa_R\sqrt B\rho_B\sqrt{J_B^{\mathrm{blend}}}
                  \sqrt{1-L_B}\right)+o(1).
\end{equation}
With the same direction rule, the Radius information coefficient is
$J_B^{\mathrm{rad}}=J_{+,B}^{-1}$, and
\begin{equation}\label{eq:blendcomparison}
 J_B^{\mathrm{rad}}\le J_B^{\mathrm{blend}}\le J_{-,B},
 \qquad
 J_B^{\mathrm{blend}}-J_B^{\mathrm{rad}}
 =J_{-,B}\frac{\tau(\mathcal H_B-1)^2}
                  {\mathcal H_B(\mathcal H_B-\tau)}.
\end{equation}
The upper coefficient $J_{-,B}$ is attained by radius weights divided by the known block variances.
\end{theorem}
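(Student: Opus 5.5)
I would mirror the proof of Theorem~\ref{thm:bridge}, but carry general block coefficients $\lambda_b=r_b c_b$, where $c_b$ is an $\mathcal I_B$-measurable multiplier. The argument then reduces to two separate tasks: a conditional normal approximation with an explicit mean, and the limit of an information ratio built from the $c_b$.

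\textbf{Step 1: conditional mean and variance for general coefficients.} Conditional on $\mathcal I_B$, the radius $r_b$ and the projected energy are fixed. Only the phase of $(\eta_b,\xi_b)$ carries the signal. Expanding the conditional law of that phase to first order in $t_b=\rho_B/s_b$, as done for Radius, should give
\[
  \mathbb E(Z_b^{\mathrm{loc}}\mid\mathcal I_B)
  =\sqrt2\,\beta\,\frac{\rho_B}{s_b^2}\,r_b\,
   \widehat{\mathbf a}_B^\top\mathbf v\,(1+o_P(1))
   \cdot(\text{normalization}),
\]
with conditional variance $1+O_P(\rho_B^2)$. The factor $r_b/s_b^2$ comes from the von Mises-type tilt: the phase density is proportional to $\exp(\rho r_b\sin\theta\,\widehat{\mathbf a}^\top\mathbf v/s_b^2)$ times the null density, up to conditioning corrections on the orthogonal part. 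The Berry--Esseen step of Theorem~\ref{thm:bridge} carries over because $s_-\le s_b\le s_+$ and all coefficients are bounded in probability after normalization. This gives
\[
  \mu_B=\kappa_R\rho_B\,\widehat{\mathbf a}_B^\top\mathbf v\,
  \frac{\sum_b c_b r_b^2 s_b^{-2}}{(\sum_b c_b^2 r_b^2)^{1/2}}
  +o_P(1).
\]

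\textbf{Step 2: law of large numbers for the ratio.} Under local alternatives, $r_b^2/s_b^2$ is approximately $\chi^2_2$ and $E_b/s_b^2$ is approximately $\chi^2_{2m}$. These are not independent: $r_b^2\le E_b$, and by the Beta split $r_b^2/E_b\sim\mathrm{Beta}(1,m-1)$ independently of $E_b$. For the feasible Blend rule, $\widehat P_B\to J_{+,B}^{-1}$ and $\widehat{\mathcal H}_B\to\mathcal H_B$ in probability, so $\widehat t_B$ may be replaced by its limit $t_B$.

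With $c_b=(1-t)P+t\widehat\pi_b$ and $\widehat\pi_b=(2m-2)/E_b$, one computes:
\begin{itemize}[nosep]
\item $\mathbb E[r_b^2\widehat\pi_b]/s_b^2$ via $\mathbb E[r^2/E]=2/(2m)$ times $(2m-2)$;
\item $\mathbb E[r_b^2\widehat\pi_b^2]$, which produces the factor $\tau$.
\end{itemize}
The ratio squared, divided by $B$, then becomes a quadratic-over-quadratic function of $t$ in $J_{\pm,B}$.

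The direction $\widehat{\mathbf a}_B$ depends on the blocks, so the $r_b$ are not independent of it. I would handle this as in Theorem~\ref{thm:bridge}, via uniformity over directions: the null law of $(r_b,E_b)$ does not depend on the direction by sphericity, plus a union or uniform LLN over the sphere.

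\textbf{Step 3: optimization and the algebra in \eqref{eq:blendcomparison}.} Maximizing the Rayleigh-type ratio in $t$ should recover $t^*=(\mathcal H_B-1)/\{(2-\tau)\mathcal H_B-1\}$. Substituting $t^*$ yields $J_B^{\mathrm{blend}}$.

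The bounds are then checks:
\begin{itemize}[nosep]
\item $t=0$ gives $J_{+,B}^{-1}$, and optimality over $t$ gives the lower bound;
\item the upper bound follows from Cauchy--Schwarz with oracle weights $r_b/s_b^2$, which attain $\sum_b s_b^{-2}/B$;
\item the difference formula is direct algebra using $\mathcal H_B=J_{+,B}J_{-,B}$.
\end{itemize}
Finally, integrate $H_\alpha$ over $\mathcal I_B$ by dominated convergence, as in \eqref{eq:powerbridge}.

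\textbf{Main obstacle.} I expect the main obstacle to be Step 2, for two reasons. First, $\widehat t_B$ and $\widehat P_B$ are data-driven and correlated with the $r_b$. Second, I need a uniform error of order $o_P(1)$ in the ratio after multiplying by $\sqrt B\rho_B$. I would first try to show that the ratio is a smooth function of a few block averages, namely $B^{-1}\sum_b r_b^2E_b^{-k}s_b^{\cdot}$ for $k=0,1,2$ together with $\widehat J_{\pm,B}$. Each of these averages concentrates at rate $B^{-1/2}$ under bounded scales, with $m\ge3$ ensuring that $\mathbb E E_b^{-2}<\infty$. The delta method then finishes the argument.
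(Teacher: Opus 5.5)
\textbf{Main gap: the blockwise conditional mean in Step~1.} Your overall architecture matches the paper's: the family $\lambda_b=r_b(p+q\widehat\pi_b)$, the conditional phase moments from the Radius analysis, block averages involving powers of $\widehat\pi_b$ (with $m\ge3$ giving $\mathbb E_0E_b^{-2}<\infty$), a Cauchy--Schwarz optimization producing $t_B^*$ and $J_B^{\mathrm{blend}}$, and a plug-in step for $(\widehat P_B,\widehat t_B)$. The step that fails as written is the blockwise mean. Given $\mathcal I_B$, the phase of block $b$ is tilted by $\exp\{\gamma_b\mathbf v^\top\mathbf X_b(\theta)\}$ with $\gamma_b=\rho_B/s_b^2$. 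The $\sin\theta$ coefficient of $\mathbf v^\top\mathbf X_b(\theta)$ is $\mathbf v^\top\mathbf C_b\mathbf a/r_b$, not $r_b\,\mathbf a^\top\mathbf v$. The $\cos\theta$ part drops out at first order because $\mathbb E\{Z(\Theta)\cos\Theta\}=0$. So the first-order mean is $\beta\gamma_b\,\mathbf a^\top\mathbf C_b\mathbf v/r_b$.

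Write $\mathbf v=(\mathbf a^\top\mathbf v)\mathbf a+\mathbf v_\perp$ with $\mathbf v_\perp\perp\mathbf a$. The extra piece $\beta\gamma_b\,\mathbf v_\perp^\top\mathbf C_b\mathbf a/r_b$ is of the same order as the term you keep, and it is nonzero even when $\mathbf a\perp\mathbf v$. It is not a conditioning correction, and no blockwise $(1+o_P(1))$ statement holds. It disappears only after summation. The reason is that $B^{-1}\sum_b(p+q\widehat\pi_b)\mathbf C_b/s_b^2$ lies within $O_P(B^{-1/2}+\rho_B^2)$ of $2(p+q\tau J_{-,B})\mathbf I_m$ in operator norm. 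This makes the aggregated cross term $O_P\{\sqrt B\rho_B(B^{-1/2}+\rho_B^2)\}=o_P(1)$.

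The numerator is therefore a bilinear form $\mathbf a^\top(\cdot)\mathbf v$. What you need is operator-norm control of the matrix averages: the paper's $\mathbf A_{0B},\mathbf A_{1B}$ for the numerator and $\mathbf D_{0B},\mathbf D_{1B},\mathbf D_{2B}$ for the denominator. The scalar averages $B^{-1}\sum_br_b^2E_b^{-k}$ at the selected $\mathbf a$ are not enough. Operator-norm control also settles your uniformity concern at once, for every $\mathcal I_B$-measurable direction, without a net or union argument.

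\textbf{Smaller corrections.}
\begin{enumerate}
\item \emph{Constant.} Your aggregated mean double counts $\sqrt2$. The ratio $\sum_bc_br_b^2s_b^{-2}/(\sum_bc_b^2r_b^2)^{1/2}$ is about $\sqrt{2B}\,F_B(p,q)$, so its prefactor must be $\beta$ rather than $\kappa_R$. The $\sqrt2$ in $\kappa_R$ comes from $\mathbb E_0r_b^2/s_b^2=2$.
\item \emph{Centring under the alternative.} In your own framing, the ratio error is multiplied by $\sqrt B\rho_B$, which may diverge. Concentration at rate $B^{-1/2}$ is then not enough. You also need the means of the block averages and of $\widehat J_{\pm,B}$ under $\mathbb P_{\rho_B}$ to be within $O(\rho_B^2)$ of their null values, so that the total error is $\rho_B+\sqrt B\rho_B^3\to0$. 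The paper gets this from evenness of every summand under $(\mathbf X_b,\mathbf Y_b)\mapsto(-\mathbf X_b,-\mathbf Y_b)$. This turns the likelihood tilt into $e^{-t_b^2/2}\cosh(t_b\mathbf v^\top\mathbf X_b/s_b)-1=O(t_b^2)$ under the null.
\item \emph{Berry--Esseen.} The coefficients $r_b\widehat\pi_b\le(2m-2)E_b^{-1/2}$ are not bounded in probability uniformly in $b$; their maximum grows with $B$. What is needed is $B^{-1}\sum_bE_b^{-3/2}e^{C\rho_B\sqrt{E_b}}=O_P(1)$, which holds for $m\ge2$.
\end{enumerate}

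\textbf{Upper bound.} Your Cauchy--Schwarz argument against the oracle multipliers $s_b^{-2}$ is a valid alternative to the paper's direct subtraction. At the population level, $(B^{-1}\sum_b\mathbb E_0[c_br_b^2/s_b^2])^2\le(B^{-1}\sum_b\mathbb E_0[c_b^2r_b^2])(B^{-1}\sum_b\mathbb E_0[r_b^2/s_b^4])$ gives $F_B(p,q)^2\le J_{-,B}$. This also explains why $J_{-,B}$ is the ceiling. You still need the Step~1 analysis with $c_b=s_b^{-2}$ to show that it is attained.
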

\begin{remark}\label{rem:blendlocalpower}
Under the conditions of Theorem~\ref{thm:blendpower}, use the normalized-scatter direction and let
$\rho_B=h/\sqrt B$ for fixed $h\ge0$. Suppose
$J_B^{\mathrm{rad}}\to j_{\mathrm{rad}}$ and
$J_B^{\mathrm{blend}}\to j_{\mathrm{blend}}$.
Then
\[
\begin{aligned}
  \mathbb P_{\rho_B,\mathbf v}
  (|N_B^{\mathrm{rad}}|>z_\alpha)
  &\longrightarrow
  \mathbb E H_\alpha
  (\kappa_Rh\sqrt{j_{\mathrm{rad}}U}),\\
  \mathbb P_{\rho_B,\mathbf v}
  (|N_B^{\mathrm{blend}}|>z_\alpha)
  &\longrightarrow
  \mathbb E H_\alpha
  (\kappa_Rh\sqrt{j_{\mathrm{blend}}U}),
\end{aligned}
\qquad
U\sim\operatorname{Beta}
\left(\frac12,\frac{m-1}{2}\right).
\]
Both methods incur the same random alignment loss, represented by $U$.
Their block coefficients affect power through
$j_{\mathrm{rad}}$ and $j_{\mathrm{blend}}$.
Since $j_{\mathrm{blend}}\ge j_{\mathrm{rad}}$, Blend has at least the same limiting power.
The inequality is strict when $h>0$ and $j_{\mathrm{blend}}>j_{\mathrm{rad}}$.

\end{remark}

For nonspherical covariance, total paired energy need not estimate
noise variance along the selected direction. The population direction
and power formulas above are therefore specific to the stated
Gaussian models. Exact null calibration continues to follow from
the rotation symmetry, and other invariant direction and coefficient
rules remain admissible.

\section{Application to heterogeneous functional data}
\label{sec:functional}

We now construct the paired observations of
Section~\ref{sec:calibration} from functional data collected in
heterogeneous batches. A sample mean retains the location contrast
of interest, while a residual contrast from the same observations
provides a matched noise summary. Under a conditional Gaussian model,
these summaries yield the required null rotation symmetry and permit
the adaptive procedures of Section~\ref{sec:learning}.

Let $\mathcal H$ be a real separable Hilbert space. Suppose observations
are indexed by site $s=1,\ldots,S$, stratum $h=1,\ldots,H_s$,
batch $b=1,\ldots,B$, and replicate
$i=1,\ldots,n_{shb}$. Each batch will form one block for inference.
We assume
\begin{equation}\label{eq:functionalmodel}
  \mathbf Z_{shbi}
  =
  \boldsymbol\mu_{sh}
  +
  \boldsymbol\varepsilon_{shbi},
\end{equation}
where, conditional on a latent random element $\Lambda$, the errors
are independent across all indices and
\begin{equation}\label{eq:functionalerror}
  \boldsymbol\varepsilon_{shbi}\mid\Lambda
  \sim
  \mathcal N_{\mathcal H}
  \left\{
    \mathbf0,\mathbf K_{shb}(\Lambda)
  \right\}.
\end{equation}
The covariance operators $\mathbf K_{shb}(\Lambda)$ are almost surely trace class and may be singular. Within each $(s,h,b)$, all replicates have the same conditional covariance. Covariances may otherwise differ across sites, strata and batches. Dependence on the common $\Lambda$ allows shared random scales or covariance variation and hence marginal dependence across batches. No unconditional moments are required, so $\boldsymbol\mu_{sh}$ denotes
a location parameter even when the marginal mean does not exist.

Fix coefficients $c_s$ and $q_{sh}$ and define the target
functional location
\[
  \boldsymbol\Delta
  =
  \sum_{s=1}^S
  c_s
  \sum_{h=1}^{H_s}
  q_{sh}\boldsymbol\mu_{sh}.
\]
Let
\[
  \boldsymbol\delta
  =
  \boldsymbol\Phi
  (\boldsymbol\Delta-\boldsymbol\Delta_0),
\]
where
$\boldsymbol\Phi:\mathcal H\to\mathbb R^m$
is a fixed bounded linear map specifying the functional features used
for inference. For example, for square-integrable curves, one may take
$\boldsymbol\Phi\mathbf z
=(\langle\mathbf z,\boldsymbol\psi_1\rangle,\ldots,
  \langle\mathbf z,\boldsymbol\psi_m\rangle)^\top$
for prespecified functions
$\boldsymbol\psi_1,\ldots,\boldsymbol\psi_m\in\mathcal H$.
Our target hypothesis is
\[
  H_0^{(\Phi)}:\boldsymbol\delta=\mathbf0.
\]
This tests the selected features of the functional contrast, which is equivalent to $\boldsymbol\Delta=\boldsymbol\Delta_0$ when $\boldsymbol\Phi$ is injective on a prespecified parameter class containing both $\boldsymbol\Delta$ and $\boldsymbol\Delta_0$.

Terms with $q_{sh}=0$ are omitted from all sums below.
For each $(s,h,b)$ with $q_{sh}\ne0$, assume $n_{shb}\ge2$ and choose a fixed vector
\(
  \mathbf u_{shb}
  =
  (u_{shb1},\ldots,u_{shbn_{shb}})^\top
\)
satisfying
\begin{equation}\label{eq:residualcoeff}
  \sum_{i=1}^{n_{shb}}u_{shbi}=0,
  \qquad
  \sum_{i=1}^{n_{shb}}u_{shbi}^2=1.
\end{equation}
Write
\[
  \overline{\mathbf Z}_{shb}
  =
  \frac1{n_{shb}}
  \sum_{i=1}^{n_{shb}}
  \mathbf Z_{shbi},
\]
and define
\begin{equation}\label{eq:functionalsummaries}
  \mathbf D_{sb}
  =
  \sum_h q_{sh}\overline{\mathbf Z}_{shb},
  \qquad
  \mathbf R_{sb}
  =
  \sum_h
  \frac{q_{sh}}{\sqrt{n_{shb}}}
  \sum_{i=1}^{n_{shb}}
  u_{shbi}\mathbf Z_{shbi}.
\end{equation}
The mean summary $\mathbf D_{sb}$ retains the location information.
The zero-sum condition removes location from $\mathbf R_{sb}$ and
makes its conditional cross-covariance with $\mathbf D_{sb}$ zero.
The unit norm of $\mathbf u_{shb}$, together with the factor
$n_{shb}^{-1/2}$, gives both summaries the covariance contribution
$q_{sh}^2\mathbf K_{shb}(\Lambda)/n_{shb}$ from stratum $h$.
Their joint conditional Gaussian law then gives conditional
independence.
For each batch, define
\begin{equation}\label{eq:functionalpairs}
  \mathbf X_b
  =
  \sum_{s=1}^S
  c_s\boldsymbol\Phi\mathbf D_{sb}
  -
  \boldsymbol\Phi\boldsymbol\Delta_0,
  \qquad
  \mathbf Y_b
  =
  \sum_{s=1}^S
  c_s\boldsymbol\Phi\mathbf R_{sb},
\end{equation}
and, as in Section~\ref{sec:calibration},
\(
\mathbf G_b=\mathbf Y_b+i\mathbf X_b,~
  \mathbf M_b=\mathbf G_b\mathbf G_b^*.
\)
Conditional on $\Lambda$,
\begin{equation}\label{eq:functionalmatching}
  \mathbf X_b\mid\Lambda
  \sim
  \mathcal N_m
  \{\boldsymbol\delta,\boldsymbol\Sigma_b(\Lambda)\},\qquad
  \mathbf Y_b\mid\Lambda
  \sim
  \mathcal N_m
  \{\mathbf0,\boldsymbol\Sigma_b(\Lambda)\},\qquad
  \mathbf X_b\perp\mathbf Y_b\mid\Lambda,
\end{equation}
where
\begin{equation}\label{eq:functionalcov}
  \boldsymbol\Sigma_b(\Lambda)
  =
  \boldsymbol\Phi
  \left\{
    \sum_{s,h}
    \frac{c_s^2q_{sh}^2}{n_{shb}}
    \mathbf K_{shb}(\Lambda)
  \right\}
  \boldsymbol\Phi^*.
\end{equation}
Here $\boldsymbol\Phi^*$ is the adjoint of $\boldsymbol\Phi$.
The covariance matrices need not be estimated to form the pairs.

Under $H_0^{(\Phi)}$ and conditional on $\Lambda$, each constructed pair therefore consists of independent, centered Gaussian vectors with the same covariance. Rotation symmetry follows from this matching, rather than being inferred from the observed pairs. Conditional independence across batches gives the product-rotation symmetry required in Section~\ref{subsec:jointadaptation}, which remains valid after integrating over $\Lambda$. Thus the construction supplies the required symmetry under the stated model without imposing it directly on the original, uncentered functional observations.

\begin{proposition}
\label{prop:functionalcalibration}
Under \eqref{eq:functionalmodel}--\eqref{eq:functionalcov} and
$H_0^{(\Phi)}$, take $V$ to be constant or independent of the joint
collection of observations and $\Lambda$. Then
$(\mathbf G_1,\ldots,\mathbf G_B,V)$ satisfies
\eqref{eq:product}, both conditional on $\Lambda$ and after
integrating over it. For any $\mathcal I_B$-measurable projection
directions and block coefficients satisfying Theorem~\ref{thm:joint},
the combined statistic has the exact conditional
$\mathcal N(0,1)$ reference given $\mathcal I_B$ whenever its
denominator is positive. Zero projected pairs are omitted. Setting
the final $p$-value to one when no nonzero weighted contribution
remains gives a conditionally valid test.
\end{proposition}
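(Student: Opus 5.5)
The plan is to verify the Gaussian structure in \eqref{eq:functionalmatching} conditional on $\Lambda$, deduce product-rotation symmetry conditionally, integrate over $\Lambda$, and then invoke Theorem~\ref{thm:joint}.

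\emph{Step 1: the conditional Gaussian structure.} Fix $\Lambda$. All $\mathbf Z_{shbi}$ are independent Gaussian elements of $\mathcal H$. Since $\boldsymbol\Phi$ is bounded linear, the vector $(\mathbf X_b,\mathbf Y_b)_{b=1}^B$ is a bounded linear image of this family plus constants, so it is jointly Gaussian in $\mathbb R^{2mB}$.

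\emph{Step 2: the first two moments.} For the means, \eqref{eq:residualcoeff} gives $\sum_i u_{shbi}\boldsymbol\mu_{sh}=\mathbf0$, so $\mathbb E(\mathbf Y_b\mid\Lambda)=\mathbf0$. Also $\mathbb E(\mathbf X_b\mid\Lambda)=\boldsymbol\Phi\sum_s c_s\sum_h q_{sh}\boldsymbol\mu_{sh}-\boldsymbol\Phi\boldsymbol\Delta_0=\boldsymbol\delta$, which equals $\mathbf0$ under $H_0^{(\Phi)}$.

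For the covariances within a fixed $(s,h,b)$, the mean uses coefficients $1/n_{shb}$ and the residual uses $u_{shbi}/\sqrt{n_{shb}}$.
\begin{itemize}
\item Their inner product is $\sum_i u_{shbi}/n_{shb}^{3/2}=0$, so the cross-covariance vanishes.
\item Their squared norms are both $1/n_{shb}$, so each summary receives covariance contribution $q_{sh}^2\mathbf K_{shb}(\Lambda)/n_{shb}$.
\end{itemize}
Independence across $(s,h,b)$ makes the contributions add. Pushing forward through $\boldsymbol\Phi$ yields the common covariance $\boldsymbol\Sigma_b(\Lambda)$ in \eqref{eq:functionalcov} and zero cross-covariance. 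Blocks are built from disjoint sets of conditionally independent observations, so they are conditionally independent. Joint Gaussianity with zero cross-covariances then gives full conditional independence of $\mathbf X_1,\mathbf Y_1,\ldots,\mathbf X_B,\mathbf Y_B$, with $\mathbf X_b,\mathbf Y_b$ i.i.d.\ $\mathcal N_m(\mathbf0,\boldsymbol\Sigma_b(\Lambda))$ under the null.

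\emph{Step 3: conditional rotation symmetry.} Take independent copies $\mathbf X,\mathbf Y\sim\mathcal N_m(\mathbf0,\boldsymbol\Sigma)$. The pair $(\mathbf X\cos\phi+\mathbf Y\sin\phi,\ \mathbf Y\cos\phi-\mathbf X\sin\phi)$ is jointly Gaussian and centered.
\begin{itemize}
\item Each component has covariance $(\cos^2\phi+\sin^2\phi)\boldsymbol\Sigma=\boldsymbol\Sigma$.
\item The cross-covariance is $-\cos\phi\sin\phi\,\boldsymbol\Sigma+\sin\phi\cos\phi\,\boldsymbol\Sigma=\mathbf0$.
\end{itemize}
Hence $e^{i\phi}\mathbf G_b\overset d=\mathbf G_b$ given $\Lambda$, and this holds for singular $\boldsymbol\Sigma$ too. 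Conditional independence across blocks then gives invariance under independent angles $(\phi_1,\ldots,\phi_B)$. Because $V$ is constant or independent of the observations and $\Lambda$, it can be appended, which establishes \eqref{eq:product} conditional on $\Lambda$.

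\emph{Step 4: integrating over $\Lambda$.} For every bounded measurable $f$, the identity $\mathbb E\{f(e^{i\phi_1}\mathbf G_1,\ldots,V)\mid\Lambda\}=\mathbb E\{f(\mathbf G_1,\ldots,V)\mid\Lambda\}$ holds almost surely. Taking expectations gives the unconditional version of \eqref{eq:product}.

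\emph{Step 5: calibration.} The hypotheses of Theorem~\ref{thm:joint} now hold, including its final clause on finitely many $\mathcal I_B$-measurable projections. So for any admissible $\mathcal I_B$-measurable directions and coefficients, $\mathcal L(N_B\mid\mathcal I_B)=\mathcal N(0,1)$ on the event that the denominator is positive. That event is itself $\mathcal I_B$-measurable, since projected radii satisfy $r_b^2=\widehat{\mathbf a}^\top\operatorname{Re}(\mathbf M_b)\widehat{\mathbf a}$ and the coefficients are $\mathcal I_B$-measurable.

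It follows that $\mathbb P(p_{\mathrm{comb}}\le\alpha\mid\mathcal I_B)=\alpha$ on that event. On its complement, setting the $p$-value to one gives rejection probability $0\le\alpha$. Together these give conditional validity.

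The main obstacle is Step 2 in infinite dimensions with possibly singular trace-class $\mathbf K_{shb}$. I would handle it by working entirely with the finite-dimensional images under $\boldsymbol\Phi$. Since $\boldsymbol\Phi\boldsymbol\varepsilon$ is Gaussian on $\mathbb R^m$ with covariance $\boldsymbol\Phi\mathbf K\boldsymbol\Phi^*$, characteristic functions suffice, and no density is needed. The conditioning on $\Lambda$ also requires a regular conditional distribution. This exists because all variables take values in Polish spaces.
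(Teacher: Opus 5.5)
Your proposal is correct and follows essentially the same route as the paper. Both proofs verify the matched conditional Gaussian law through the zero-sum and unit-norm conditions, obtain product-rotation invariance conditional on $\Lambda$, integrate over $\Lambda$, and invoke Theorem~\ref{thm:joint}. The paper checks rotation invariance through the quadratic form $\mathbf t_b^\top\boldsymbol\Sigma_b\mathbf t_b+\mathbf s_b^\top\boldsymbol\Sigma_b\mathbf s_b$ in the characteristic function, which is equivalent to your covariance calculation. It also applies Theorem~\ref{thm:joint} with $\Lambda$ added to the conditioning and then uses the tower property, whereas you apply it directly to the integrated symmetry; both yield the stated conditional validity.
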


\begin{remark}
For $m\ge2$, Radius and Blend give two direct implementations. Construct $\widehat{\mathbf a}_B$ from all invariant matrices using \eqref{eq:direction}--\eqref{eq:directionaverage}, resolving signs and tied eigenvalues by fixed measurable rules based on these matrices alone. Then form $\xi_b,\eta_b,r_b$ in \eqref{projected_XY} and combine the scores in \eqref{eq:angular} using the coefficients in \eqref{eq:radstat} or \eqref{eq:blendweights}.
The feature map $\boldsymbol\Phi$ and residual contrasts are fixed in advance, whereas the direction $\widehat{\mathbf a}_B$ and coefficients $\lambda_b$ are learned from the paired invariants.
Proposition~\ref{prop:functionalcalibration} gives exact null calibration for both procedures even with nonspherical or singular batch covariances. The same guarantee applies to other invariant direction and coefficient rules satisfying Theorem~\ref{thm:joint}.

When the constructed pairs satisfy the spherical Gaussian model in \eqref{eq:spherical}, the power results in Section~\ref{subsec:learnedpower} also apply under their stated conditions.
\end{remark}
\section{Simulation and real data analysis}
\label{sec:simulation-real-data}
\subsection{Simulation studies}
\label{subsec:functionalsimulation}

We compare Radius and Blend with their corresponding sample-splitting procedures across different dimensions, numbers of blocks, noise distributions, covariance structures and signal strengths. Each setting is repeated independently 2,000 times, and all tests use the nominal level $\alpha=0.05$.

We first generate pairs that are independent across blocks
\[
  \mathbf X_b
  =\frac{h}{\sqrt B}\mathbf v
   +s_bW_b^{1/2}\boldsymbol\Sigma^{1/2}
      \boldsymbol\varepsilon_b,
  \qquad
  \mathbf Y_b
  =s_bW_b^{1/2}\boldsymbol\Sigma^{1/2}
      \boldsymbol\zeta_b,
  \qquad b=1,\ldots,B,
\]
where $\mathbf v=\mathbf1_m/\sqrt m$ and all
$\boldsymbol\varepsilon_b,\boldsymbol\zeta_b$ are independent
$\mathcal N_m(\mathbf0,\mathbf I_m)$ vectors.
The nonnegative variables $W_b$ are independent across blocks
and independent of the Gaussian vectors. The two observations
within a block share the same $W_b$. Consequently, under $h=0$,
the pairs satisfy the product-rotation symmetry required for
exact calibration.

We consider
\[
  m\in\{4,8,16,32\},
  \qquad B\in\{4,16,32\},
  \qquad h\in\{0,2,4,6,8\}.
\]
The value $h=0$ is used to assess null rejection probabilities,
whereas positive values assess power. The normalization
$h/\sqrt B$ gives
$\sum_{b=1}^B\|\mathbb E\mathbf X_b\|^2=h^2$.
Thus comparisons across $B$ keep the total squared signal
magnitude fixed while varying the number of blocks.
Three noise distributions are considered. Gaussian noise uses
$W_b=1$. Standardized Student-$t_5$ noise uses
$W_b=3/Q_b$, where $Q_b\sim\chi^2_5$.
For contaminated Gaussian noise, $W_b$ equals $1/2$ with probability
$0.9$ and $11/2$ with probability $0.1$.
All three choices satisfy $\mathbb EW_b=1$, so their marginal
noise covariance matrices agree for fixed $s_b$ and
$\boldsymbol\Sigma$.

We use three covariance structures
\[
  \boldsymbol\Sigma\in
  \left\{
    \mathbf I_m,\,
    \operatorname{diag}
      \left(\tfrac12\mathbf1_{m/2},
            \tfrac32\mathbf1_{m/2}\right),\,
    \mathbf T_m
  \right\}.
\]
Here $\mathbf T_m$ has unit diagonal, entries $.3$ immediately
above and below the diagonal, and zeros elsewhere.
Homogeneous blocks use $s_b^2=1$. Heterogeneous blocks alternate between $s_b^2=1/4$ for odd $b$ and $s_b^2=7/4$ for even $b$, also giving average block variance one. The covariance structure controls variation across coordinates, whereas $s_b$ controls overall noise differences across blocks.

 Radius and Blend use all $B$ blocks to learn the common normalized-scatter direction in \eqref{eq:direction}--\eqref{eq:directionaverage} and differ only in their block coefficients. Their split counterparts use the first $B/2$ blocks to estimate
\[
  \widehat{\mathbf a}_{\mathrm{split}}
  =\frac{\sum_{b=1}^{B/2}\mathbf X_b}
         {\left\|\sum_{b=1}^{B/2}\mathbf X_b\right\|},
\]
with a fixed unit vector when the numerator vanishes. They apply the corresponding Radius or Blend rule to the remaining blocks, estimating all coefficients from those testing blocks alone. Both halves have the same block-scale composition. All four methods use the same generated data and signal magnitude $h/\sqrt B$, so the comparison reflects both direction learning and the number of blocks available for testing.

Null rejection frequencies remain close to the nominal level across the prescribed models. Table~\ref{tab:factorial-null} summarizes the 216 null settings in the main grid.
Figure~\ref{fig:factorial-baseline} shows the homogeneous Gaussian baseline with identity covariance. Radius and Blend have similar power and substantially outperform their splits when $B=4$.
At $B=32$, splitting can be more powerful, especially at larger dimensions, consistent with more accurate direction learning by the training mean.
Figure~\ref{fig:factorial-heterogeneous} shows Student-$t$ noise with band covariance and heterogeneous block scales. Blend improves on Radius at intermediate signal strengths, and both full-data methods often outperform their splits. The complete grid suggests that Blend benefits mainly from block-scale variation rather than coordinate covariance alone. Marginal noise variances are matched across noise laws, but their block-precision distributions differ.

\begin{table}[!htbp]
\centering
\caption{Null rejection frequencies across the 216 settings in the
main factorial grid. The range retains
the smallest and largest observed frequencies.}
\label{tab:factorial-null}
\small
\begin{tabular}{lcccc}
\toprule
 & Radius & Blend & Split-Radius & Split-Blend\\
\midrule
Mean & 0.0507 & 0.0505 & 0.0503 & 0.0499\\
Range & 0.0380--0.0635 & 0.0355--0.0620 & 0.0380--0.0625 & 0.0365--0.0650\\
\bottomrule
\end{tabular}
\end{table}

\Needspace{10\baselineskip}
We also examine the functional construction in
Section~\ref{sec:functional}, using $B=16$ blocks and $n=20$
curves per group in each block. For $g=1,2$ and $t\in[0,1]$, we generate
\[
  Z_{gbi}(t)=\mu_g(t)+s_b\sqrt{W_b}
    \sum_{j=1}^{16}e_{gbi,j}\psi_j(t),
  \qquad \psi_j(t)=\sqrt2\sin(\pi jt),
\]
where the coefficient vectors are independent, centred Gaussian vectors,
independent of the block scales, with covariance $\mathbf I_{16}$
for group 1 and $\mathbf T_{16}$ for group 2. Here $\mathbf T_{16}$
is the band matrix defined above, and $W_b$ is shared by all curves
in block $b$.
With $m=8$ and $\mathbf v=\mathbf1_m/\sqrt m$, set
\[
  \mu_1(t)=-\mu_2(t)
  =\frac{h}{\sqrt{2nB}}\sum_{j=1}^{m}v_j\psi_j(t).
\]
Let $\mathbf C_{gbi}$ contain the first $m$ sine coefficients
of $Z_{gbi}$, and write
$\overline{\mathbf C}_{gb}=n^{-1}\sum_{i=1}^n\mathbf C_{gbi}$.
We construct
\[
  \mathbf X_b=\sqrt{\frac n2}
    (\overline{\mathbf C}_{1b}-\overline{\mathbf C}_{2b}),
  \qquad
  \mathbf Y_b=\frac1{\sqrt2}\sum_{i=1}^n
    u_i(\mathbf C_{1bi}-\mathbf C_{2bi}),
\]
where $u_i=1/\sqrt n$ for the first half of the observations
and $u_i=-1/\sqrt n$ otherwise.
The resulting $\mathbf X_b$ and $\mathbf Y_b$ have means
$h\mathbf v/\sqrt B$ and $\mathbf0$,
respectively, and are independent conditional on $W_b$, with common covariance
$s_b^2W_b(\mathbf I_m+\mathbf T_m)/2$.
We cross Gaussian and Student-$t$ scales with the homogeneous
and heterogeneous block scales defined above and
$h\in\{0,2,4,6,8\}$, applying all four procedures to each setting.

Figure~\ref{fig:factorial-functional} shows that Radius and Blend
outperform their corresponding splits at intermediate signal
strengths. Their power is similar under homogeneous Gaussian noise,
whereas Blend benefits from block-scale variation, including the
shared Student-$t$ scales. The differences narrow as power approaches
one. These results illustrate the benefits of data reuse and adaptive
weighting when the pairs are constructed from individual curves.
Supplementary Section~\ref{supp:factorial-study} provides the remaining
power comparisons and an additional signal-location experiment.

\begin{figure}[p]
\centering
\includegraphics[width=\linewidth,height=.70\textheight,keepaspectratio]{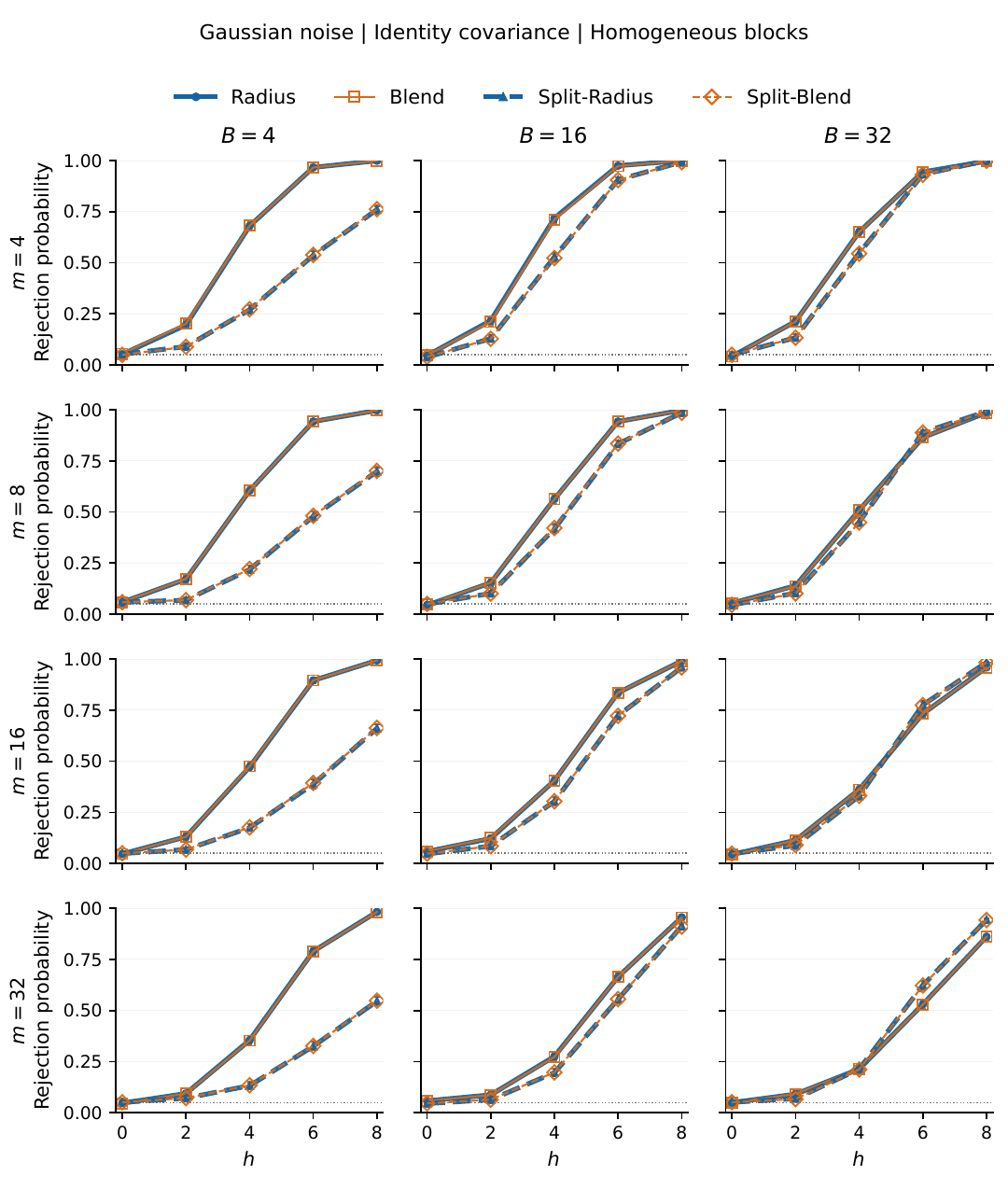}
\caption{Gaussian noise with identity covariance and homogeneous block
scales. Rows correspond to $m=4,8,16,32$ and columns to $B=4,16,32$.
Each point uses 2,000 independent datasets. Colours distinguish Radius
and Blend, and solid and dashed lines distinguish full-data and split
procedures. Bars are pointwise 95\% Wilson intervals. The dotted
horizontal line is $.05$. At $h=0$ the curves show null rejection
frequencies, and at $h>0$ they show power. Radius and Blend nearly coincide here. Wider blue lines
and filled markers distinguish Radius from the thinner orange lines
and open markers of Blend.}
\label{fig:factorial-baseline}
\end{figure}
\begin{figure}[p]
\centering
\includegraphics[width=\linewidth,height=.70\textheight,keepaspectratio]{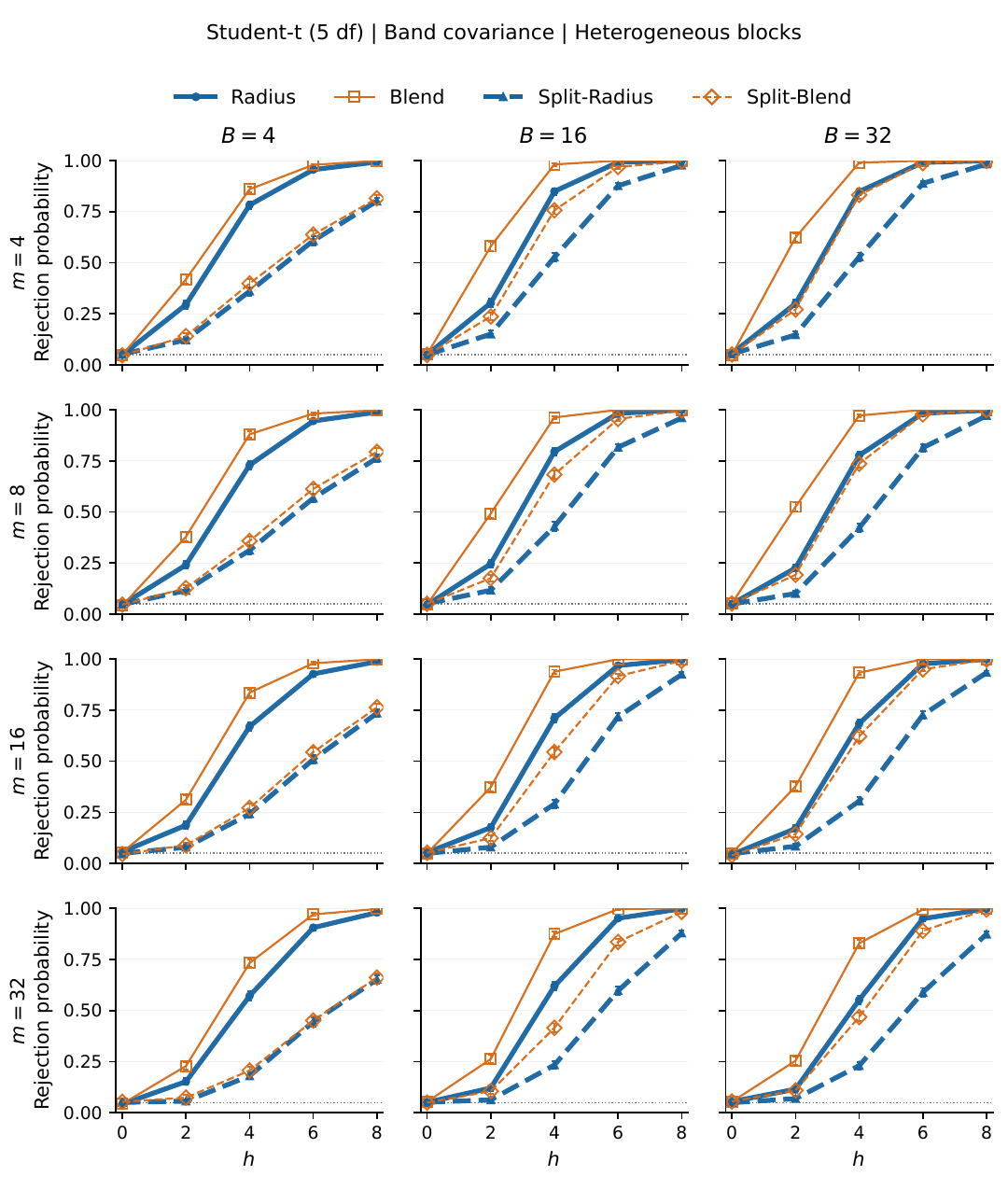}
\caption{Student-$t$ noise with band covariance and heterogeneous block scales.
The panel arrangement, four procedures, signal grid, replication count
and interval convention are the same as in
Figure~\ref{fig:factorial-baseline}. Each panel shows absolute rejection
probabilities for all four methods.}
\label{fig:factorial-heterogeneous}
\end{figure}

\begin{figure}[!htbp]
\centering
\includegraphics[width=.93\linewidth,height=.70\textheight,keepaspectratio]{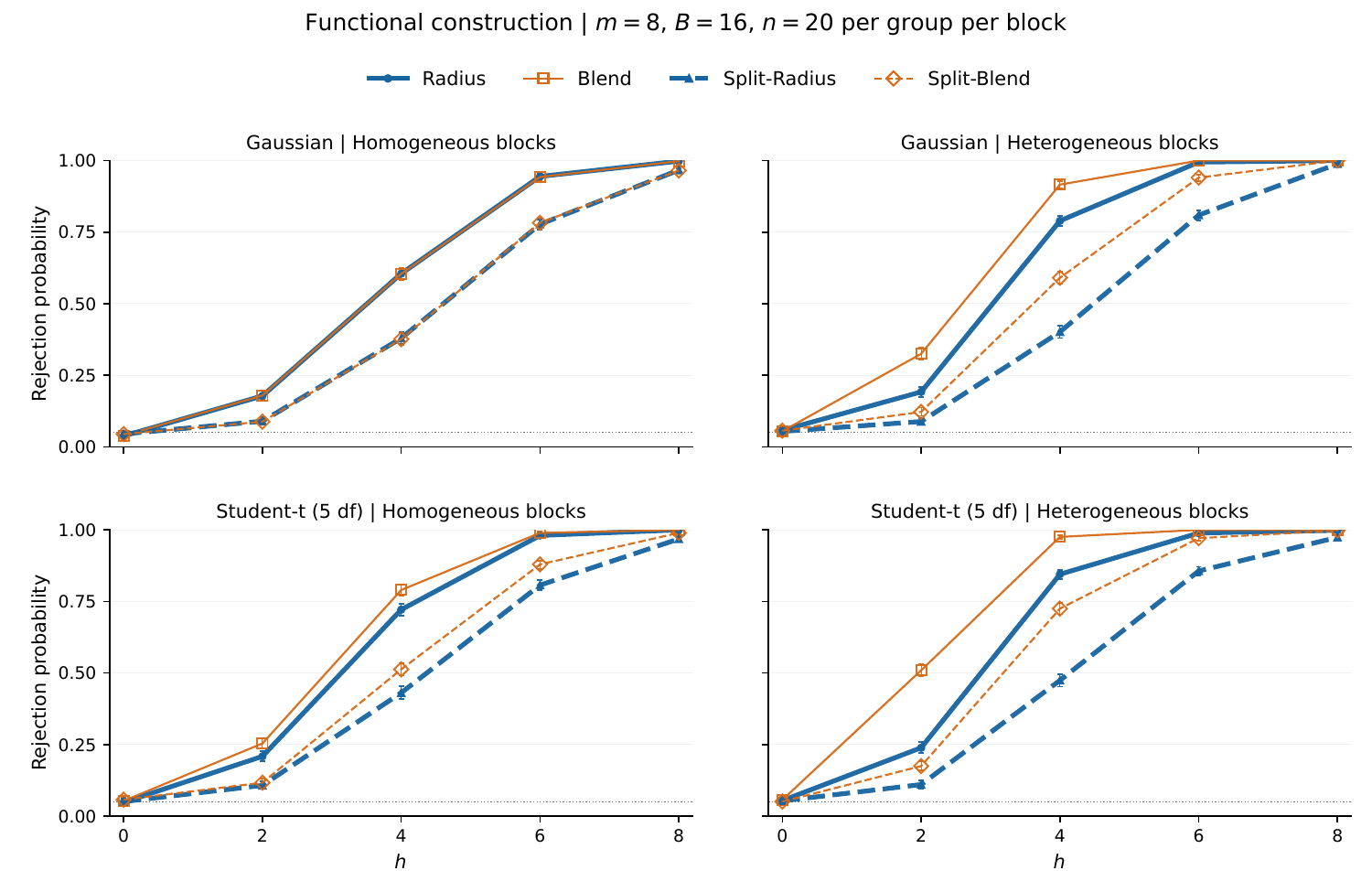}
\caption{Functional-data experiment with $m=8$, $B=16$ and 20 curves per group
per block. Rows distinguish Gaussian and Student-$t$ block scales, and
columns distinguish homogeneous and heterogeneous deterministic scales.
The pairs are formed from the generated individual curves using the
mean--residual construction. Each point uses 2,000 datasets, with
pointwise 95\% Wilson intervals.}
\label{fig:factorial-functional}
\end{figure}

\subsection{Real data analysis}
\label{subsec:ecg}

We analyse PTB-XL \citep{Wagner2020} and Chapman--Shaoxing
\citep{Zheng2020} ECG records from PhysioNet Challenge 2021
\citep{Reyna2021}, comparing patients labelled with atrial
fibrillation (AF) but not normal sinus rhythm (NSR) with NSR-only
controls. Each patient contributes one 10-second recording,
represented by four RR-interval summaries and four spectral power
ratios from leads II and V1. The feature transformations and
PTB-XL calibration-based scaling are fixed across sources.
Supplementary Section~\ref{supp:ecg} gives the preprocessing details.

We use $n=160$ patients per group and $B=8$ blocks,
with 20 patients per group in each block.
Mean differences and zero-sum residual contrasts form the pairs
in Section~\ref{sec:functional}. Radius, Blend and their splits
follow Section~\ref{subsec:functionalsimulation}, with all eight
blocks used by the full-data procedures and four training blocks
and four testing blocks used by the splits.

To examine the effect of signal dilution, we compare NSR patients with
mixtures having AF fractions $q\in\{.10,.20,.40,1\}$.
All methods share the same assignments, with no patient appearing
twice within an assignment. Figure~\ref{fig:ecg-primary} shows
null rejection frequencies close to $.05$ and higher dilution
rejection frequencies for the full-data procedures before saturation.
At $q=.10$, Radius rejects in about $44\%$ and $32\%$ of assignments
in the two sources, compared with $19\%$ and $18\%$ for Split-Radius.
Blend performs similarly to Radius. For the nonlinear ECG features, the conditional Gaussian model
underlying the matching construction is not guaranteed, so
calibration is assessed empirically through the null checks.
\begin{figure}[!htbp]
\centering
\includegraphics[width=.93\linewidth,height=.70\textheight,keepaspectratio]{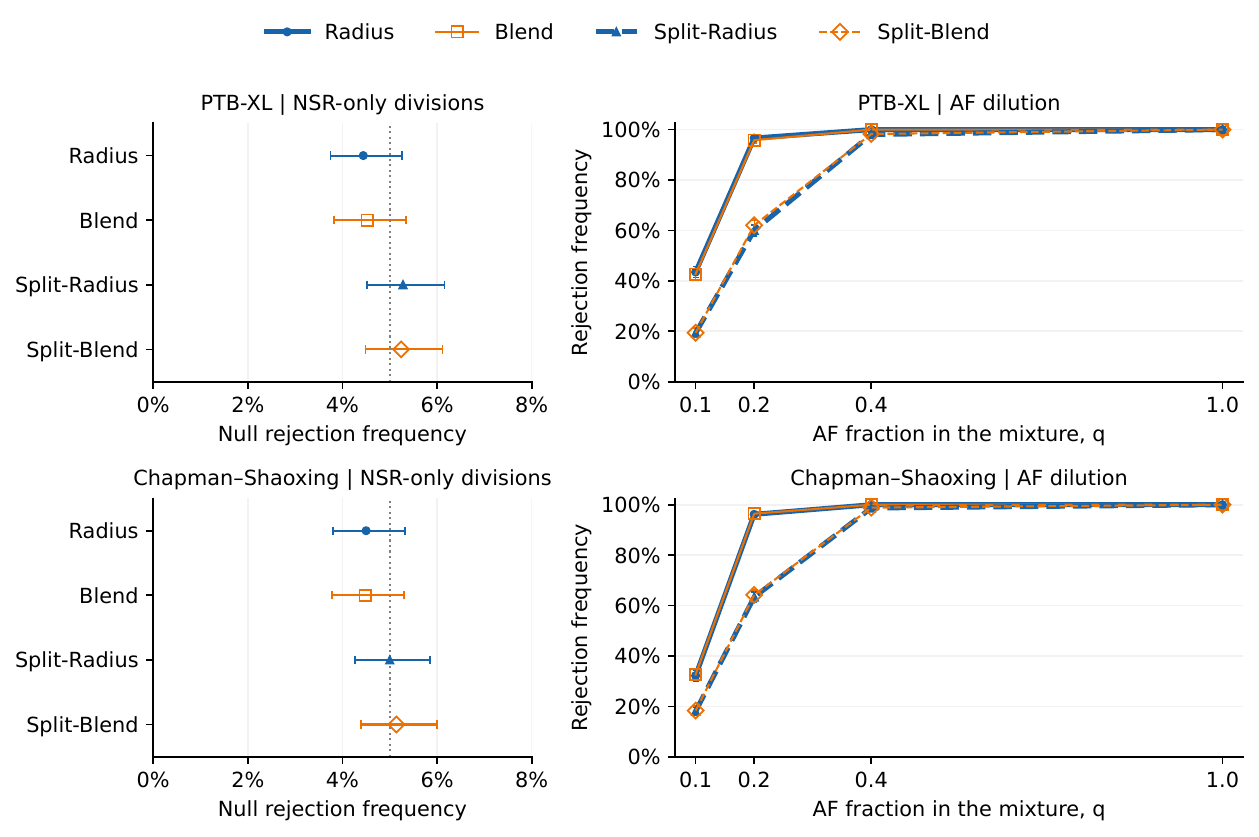}
\caption{ECG comparisons at $n=160$, $m=8$ and $B=8$, using $p<.05$.
Rows correspond to the two sources. Left, null rejection frequencies
and 99\% Wilson intervals from 5,000 NSR-only assignments.
Right, dilution rejection frequencies and 95\% Wilson intervals
from 2,000 assignments per AF fraction. Intervals condition on
the fixed patient pools.}
\label{fig:ecg-primary}
\end{figure}

\section*{Declaration of the use of generative AI and AI-assisted technologies}
During the preparation of this manuscript, the authors used ChatGPT (OpenAI)
for LaTeX formatting and assistance with submission materials. The authors
take full responsibility for the content of the manuscript.

\section*{Supplementary material}
The Supplementary Material includes proofs of the calibration, projection,
direction-recovery and power results, additional simulation results, and
preprocessing details for the electrocardiogram analysis.

\clearpage
\setcounter{section}{0}
\renewcommand{\thesection}{S\arabic{section}}
\renewcommand{\theHsection}{S.\arabic{section}}
\setcounter{equation}{0}
\renewcommand{\theequation}{S\arabic{equation}}
\renewcommand{\theHequation}{S.\arabic{equation}}
\setcounter{figure}{0}
\renewcommand{\thefigure}{S\arabic{figure}}
\renewcommand{\theHfigure}{S.\arabic{figure}}
\setcounter{table}{0}
\renewcommand{\thetable}{S\arabic{table}}
\renewcommand{\theHtable}{S.\arabic{table}}
\setcounter{theorem}{0}
\renewcommand{\thetheorem}{S\arabic{theorem}}
\renewcommand{\theHtheorem}{S.\arabic{theorem}}
\setcounter{lemma}{0}
\renewcommand{\thelemma}{S\arabic{lemma}}
\renewcommand{\theHlemma}{S.\arabic{lemma}}
\setcounter{corollary}{0}
\renewcommand{\thecorollary}{S\arabic{corollary}}
\renewcommand{\theHcorollary}{S.\arabic{corollary}}
\setcounter{proposition}{0}
\renewcommand{\theproposition}{S\arabic{proposition}}
\renewcommand{\theHproposition}{S.\arabic{proposition}}
\setcounter{remark}{0}
\renewcommand{\theremark}{S\arabic{remark}}
\renewcommand{\theHremark}{S.\arabic{remark}}
\renewcommand{\thesubsection}{\thesection.\arabic{subsection}}
\pdfbookmark[0]{Supplementary material}{supplementary-material}
\begin{center}
  {\LARGE\bfseries Supplementary material\par}
  \vspace{8pt}
  {\Large Adaptive Cauchy inference from invariant information\par}
  \vspace{8pt}
  \begin{minipage}[t]{.47\textwidth}
    \centering
    YIMING LIU\\[2pt]
    {\small School of Economics, Jinan University,}\\
    {\small Guangzhou 510632, China}\\
    {\small \href{mailto:liuyiming@jnu.edu.cn}{\texttt{liuyiming@jnu.edu.cn}}}
  \end{minipage}\hfill
  \begin{minipage}[t]{.47\textwidth}
    \centering
    SHAOCHEN WANG\\[2pt]
    {\small School of Mathematics,}\\
    {\small South China University of Technology,}\\
    {\small Guangzhou 510640, China}\\
    {\small \href{mailto:mascwang@scut.edu.cn}{\texttt{mascwang@scut.edu.cn}}}
  \end{minipage}
\end{center}
\medskip
The supplement follows the order of the main paper.
Section~\ref{proof:cal-section} proves the calibration results
and their projection extensions.
Section~\ref{proof:learning-section} gives the proofs for
representative procedures, direction recovery and power.
Section~\ref{proof:cal-functional} verifies the functional-data
construction.
Sections~\ref{supp:factorial-study} and~\ref{supp:ecg} provide
additional simulation results and ECG analysis details.
References prefixed by S refer to this supplement.

\section{Exact adaptation under rotation symmetry}
\label{proof:cal-section}

\subsection{Conditional Cauchy calibration}
\label{proof:cal-single}

We use the notation of Section~\ref{sec:calibration}. Here $q(z)=\operatorname{Im}(z)/\operatorname{Re}(z)$
when $\operatorname{Re}(z)\ne0$, and $q(z)=0$ otherwise, as in the
main paper. For $\mathbf g\ne\mathbf0$, the admissible simplex is
\[
 \Delta(\mathbf g)=
 \left\{\mathbf w\in[0,1]^m\ \middle|\
  \sum_{j=1}^m w_j=1,\quad w_j=0\text{ whenever }g_j=0\right\}.
\]
Conditional identities below hold for bounded measurable test functions.

The calibration argument separates the information in
$\mathbf M=\mathbf G\mathbf G^*$ from the common phase.
For a nonzero vector $\mathbf g$, its orbit is the set
$\{e^{i\phi}\mathbf g:\phi\in[0,2\pi)\}$ of all its common
rotations.
We choose a representative $\mathbf h(\mathbf M)$ whose
first nonzero coordinate is positive and real, so that
$\mathbf G=e^{i\Theta}\mathbf h(\mathbf M)$.
Under the rotation assumption, we then show that $\Theta$
is conditionally uniform given the invariant information.
This reduces the conditional
ratio statistic to a weighted sum of shifted tangents, to which
Lemma~\ref{proof:cal-tangent} applies.

To fix the representative, rotate the first nonzero coordinate
to the positive real axis.
For a nonzero $\mathbf g\in\mathbb C^m$, let
\[
 j(\mathbf g)=\min\{j\mid |g_j|>0\},\qquad
 \theta(\mathbf g)=\arg g_{j(\mathbf g)}\in[0,2\pi).
\]
Writing $\mathbf m=\mathbf g\mathbf g^*$, define
\begin{equation}\label{proof:cal-section-map}
 j(\mathbf m)=\min\{j\mid m_{jj}>0\},\qquad
 \mathbf h(\mathbf m)=
 \frac{\mathbf m\mathbf e_{j(\mathbf m)}}
      {\sqrt{m_{j(\mathbf m)j(\mathbf m)}}}.
\end{equation}
Here $\mathbf e_j$ is the $j$th coordinate unit vector.
The entries of $\mathbf m$ are continuous functions of $\mathbf g$.
The index is measurable since
$\{j(\mathbf m)=r\}=\{m_{rr}>0\}\cap\bigcap_{j<r}\{m_{jj}=0\}$.
Consequently, both maps in \eqref{proof:cal-section-map} are measurable.
Direct multiplication gives
\begin{equation}\label{proof:cal-orbit-factorization}
 \mathbf h(\mathbf g\mathbf g^*)=
 \frac{\overline{g_{j(\mathbf g)}}}{|g_{j(\mathbf g)}|}\mathbf g,
 \qquad
 \mathbf g=e^{i\theta(\mathbf g)}\mathbf h(\mathbf g\mathbf g^*),
 \qquad
 h_{j(\mathbf g)}=|g_{j(\mathbf g)}|>0.
\end{equation}
Moreover, $\mathbf h(\mathbf m)\mathbf h(\mathbf m)^*=\mathbf m$.
Thus two nonzero vectors have the same matrix in
\eqref{eq:gram} exactly when they differ by a common unit phase.
The displayed map supplies a measurable representative of each orbit.
At zero set $\mathbf h(\mathbf0)=\mathbf0$ and $\theta(\mathbf0)=0$.

We next establish conditional uniformity of the remaining phase.
Let $U$ collect any additional information on which we condition.
Assume $\mathbf G\ne\mathbf0$ almost surely and
$(e^{i\phi}\mathbf G,U)\overset d=(\mathbf G,U)$ for every fixed
$\phi$, so that this additional information preserves the joint
rotation symmetry. Set $\Theta=\theta(\mathbf G)$.
For bounded measurable functions $a$ and $d$, joint invariance gives
\[
 \mathbb E\{d(\mathbf M,U)a(\Theta)\}
 =\mathbb E\{d(\mathbf M,U)a((\Theta+\phi)\bmod 2\pi)\}.
\]
Integrate over $\phi$ and apply Fubini's theorem, which is applicable
because the integrand is bounded. Translation invariance of circle
measure yields
\begin{equation}\label{proof:cal-haar-single}
 \mathbb E\{d(\mathbf M,U)a(\Theta)\}
 =\mathbb E\{d(\mathbf M,U)\}
   \frac1{2\pi}\int_0^{2\pi}a(t)\,dt.
\end{equation}
This is precisely conditional uniformity of $\Theta$ given
$\sigma(\mathbf M,U)$. In particular, for every bounded measurable
$f$ on $\mathbb C^m$,
\begin{equation}\label{proof:cal-orbit-kernel}
 \mathbb E\{f(\mathbf G)\mid\mathbf M,U\}
 =\frac1{2\pi}\int_0^{2\pi}
   f\{e^{it}\mathbf h(\mathbf M)\}\,dt
 \quad\text{almost surely}.
\end{equation}
The same identity with an integrand depending measurably on
$(\mathbf M,U)$ follows first for products of bounded functions and
then by a monotone-class argument. This justifies substituting
invariantly selected directions or weights into the integral.
For nonnegative $f$, apply \eqref{proof:cal-orbit-kernel} to
$f\wedge n$ and let $n\to\infty$ by conditional monotone convergence.

The following identity converts the uniform phase into the
standard Cauchy reference distribution.

\begin{lemma}\label{proof:cal-tangent}
Let $\Theta$ be uniform on $[0,2\pi)$, let
$u_1,\ldots,u_L\in\mathbb R$, and let
$w_1,\ldots,w_L\geq0$ satisfy $\sum_{\ell=1}^Lw_\ell=1$.
Then
\[
  T=\sum_{\ell=1}^Lw_\ell\tan(\Theta+u_\ell)
\]
is well defined almost surely and has the standard Cauchy distribution.
\end{lemma}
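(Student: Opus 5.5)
The plan is to use the characteristic-function or complex-analytic route: the classical shifted-tangent identity of \citet{PitmanWilliams1967}. Well-definedness comes first. Each $\tan(\Theta+u_\ell)$ is undefined only when $\Theta+u_\ell\equiv\pi/2 \pmod\pi$. That is a finite set of angles, which has probability zero under the uniform law, so $T$ is finite almost surely.

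For the distributional claim, set $\zeta=e^{2i\Theta}$, which is uniform on the unit circle. Then write
\[
  \tan(\Theta+u_\ell)=\frac{1}{i}\,\frac{e^{2i(\Theta+u_\ell)}-1}{e^{2i(\Theta+u_\ell)}+1}
  =-i\,\frac{c_\ell\zeta-1}{c_\ell\zeta+1},\qquad c_\ell=e^{2iu_\ell}.
\]
Each map $\zeta\mapsto -i(c_\ell\zeta-1)/(c_\ell\zeta+1)$ is a Möbius transformation. It sends the open unit disc into the upper half-plane, because the Cayley-type map $z\mapsto i(1-z)/(1+z)$ sends the disc to the upper half-plane. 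It also sends the circle to the real line.

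Define $F(z)=\sum_\ell w_\ell\{-i(c_\ell z-1)/(c_\ell z+1)\}$ for $|z|<1$. Then $F$ is holomorphic on the disc, continuous up to the circle except at the finitely many poles $z=-\bar c_\ell$, and satisfies $\operatorname{Im}F>0$ on the disc. The key evaluation is $F(0)=\sum_\ell w_\ell\cdot i=i$, using $\sum_\ell w_\ell=1$.

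Next I would compute the characteristic function $\mathbb E\,e^{isT}$ for $s>0$. The function $z\mapsto e^{isF(z)}$ is holomorphic and bounded by $1$ on the disc, since $\operatorname{Im}F>0$. It has radial boundary values $e^{isT(\zeta)}$ almost everywhere on the circle. Since a bounded holomorphic function is the Poisson integral of its boundary values, the mean-value property gives
\[
  \mathbb E\,e^{isT}=\frac1{2\pi}\int_0^{2\pi}e^{isF(e^{it})}\,dt=e^{isF(0)}=e^{-s}.
\]
For $s<0$, I would take complex conjugates, or equivalently use $\zeta\mapsto\bar\zeta$, which preserves the uniform law; this gives $e^{-|s|}$. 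That is the standard Cauchy characteristic function, so $T\sim\Cauchy(0,1)$.

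The main obstacle is justifying the mean-value step at the boundary, because of the poles on the circle. I would handle it by applying the mean-value property on the circle of radius $r<1$ and letting $r\uparrow1$. The integrand is bounded by $1$, and for each $\zeta$ off the finite pole set it converges pointwise to $e^{isT(\zeta)}$, by continuity of $F$ there. Dominated convergence then gives the limit, with no appeal to Hardy-space theory needed.

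Finally, the change of variables from $\Theta$ uniform on $[0,2\pi)$ to $\zeta=e^{2i\Theta}$ uniform on the circle is immediate, because doubling the angle pushes the uniform measure forward to the uniform measure.
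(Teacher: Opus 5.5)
Your proof is correct and follows the paper's argument essentially step for step. You use the same Cayley-type function $F(z)=i\sum_\ell w_\ell(1-c_\ell z)/(1+c_\ell z)$ with $F(0)=i$ and $\operatorname{Im}F>0$, the mean-value identity for the bounded function $e^{isF}$ on circles of radius $r<1$, dominated convergence as $r\uparrow1$ off the finite pole set, angle doubling (the paper phrases this via $\pi$-periodicity), and conjugation for $s<0$.

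The differences are cosmetic. You conclude by uniqueness of characteristic functions, whereas the paper inverts the Fourier transform explicitly. Also, your aside that $\zeta\mapsto\bar\zeta$ is ``equivalent'' to conjugation is not literally so: it sends $T$ to $-T$ with the shifts replaced by $-u_\ell$, so it needs the $s>0$ case applied to those reversed shifts. Plain conjugation, $\varphi_T(s)=\overline{\varphi_T(-s)}$, works directly.
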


\begin{proof}
Set $c_\ell=e^{2iu_\ell}$ and, for $|z|<1$, define
\[
 F(z)=i\sum_{\ell=1}^Lw_\ell\frac{1-c_\ell z}{1+c_\ell z}.
\]
Since $|c_\ell|=1$,
\[
  |1+c_\ell z|\geq 1-|z|>0,
  \qquad |z|<1,\quad \ell=1,\ldots,L.
\]
Hence $F$ is analytic on $\{z\in\mathbb C:|z|<1\}$, $F(0)=i$, and
\begin{equation*}
 \operatorname{Im}F(z)=
 \sum_{\ell=1}^Lw_\ell
 \operatorname{Re}\frac{1-c_\ell z}{1+c_\ell z}
 =\sum_{\ell=1}^Lw_\ell
   \frac{1-|z|^2}{|1+c_\ell z|^2}>0.
\end{equation*}
For $s\ge0$, the analytic function $A_s(z)=\exp\{isF(z)\}$
satisfies $|A_s(z)|\le1$. The analytic mean-value formula on each
circle of radius $0<r<1$ therefore gives
\[
 \frac1{2\pi}\int_0^{2\pi}A_s(re^{it})\,dt=A_s(0)=e^{-s}.
\]
Let
\[
  \mathcal E
  =\bigcup_{\ell=1}^L
    \{t\in[0,2\pi):1+c_\ell e^{it}=0\},
  \qquad \#\mathcal E\leq L.
\]
For every $t\in[0,2\pi)\setminus\mathcal E$,
\[
  \lim_{r\uparrow1}F(re^{it})
  =i\sum_{\ell=1}^L w_\ell
    \frac{1-e^{i(t+2u_\ell)}}{1+e^{i(t+2u_\ell)}}
  =\sum_{\ell=1}^L w_\ell\tan(t/2+u_\ell).
\]
Dominated convergence using $|A_s|\le1$ consequently gives
\begin{equation}\label{proof:cal-tangent-cf}
 \frac1{2\pi}\int_0^{2\pi}
 \exp\left\{is\sum_{\ell=1}^Lw_\ell\tan(t/2+u_\ell)\right\}\,dt
 =e^{-s},\qquad s\ge0.
\end{equation}
Write
$\varphi_T(s)=\mathbb E(e^{isT})$.
By the $\pi$-periodicity of the tangent function,
\[
\begin{aligned}
  \varphi_T(s)
  &=\frac1{2\pi}\int_0^{2\pi}
    \exp\left\{is\sum_{\ell=1}^L
      w_\ell\tan(\theta+u_\ell)\right\}\,d\theta\\
  &=\frac1\pi\int_0^\pi
    \exp\left\{is\sum_{\ell=1}^L
      w_\ell\tan(\theta+u_\ell)\right\}\,d\theta\\
  &=\frac1{2\pi}\int_0^{2\pi}
    \exp\left\{is\sum_{\ell=1}^L
      w_\ell\tan(t/2+u_\ell)\right\}\,dt.
\end{aligned}
\]
Equation~\eqref{proof:cal-tangent-cf} therefore gives
$\varphi_T(s)=e^{-s}$ for $s\geq0$.
Since $T$ is real-valued, for $s<0$,
\[
  \varphi_T(s)
  =\overline{\varphi_T(-s)}
  =\overline{e^s}
  =e^s.
\]
Consequently,
\[
  \varphi_T(s)=e^{-|s|},
  \qquad s\in\mathbb R.
\]
Since $\int_{\mathbb R}|\varphi_T(s)|\,ds=2<\infty$,
the Fourier inversion theorem gives a density for $T$,
\[
\begin{aligned}
  f_T(x)
  &=\frac1{2\pi}\int_{\mathbb R}
      e^{-isx}\varphi_T(s)\,ds\\
  &=\frac1{2\pi}\left\{
      \int_0^\infty e^{-(1+ix)s}\,ds
      +\int_0^\infty e^{-(1-ix)s}\,ds
    \right\}\\
  &=\frac1{2\pi}
    \left(\frac1{1+ix}+\frac1{1-ix}\right)
   =\frac1{\pi(1+x^2)},
  \qquad x\in\mathbb R.
\end{aligned}
\]
Thus $T$ has the standard Cauchy distribution.

\end{proof}

\begin{proof}[Proof of Theorem~\ref{thm:cauchy}]

Apply \eqref{proof:cal-haar-single} with $U=(V,\mathbf W)$.
The joint invariance in \eqref{eq:jointrotation} gives
\[
  \mathcal L(\Theta\mid\mathbf M,V,\mathbf W)
  =\operatorname{Unif}[0,2\pi)
  \quad\text{almost surely}.
\]
Let $\mathcal A=\{j\mid M_{jj}>0\}$. For $j\in\mathcal A$, set
\[
  r_j=\sqrt{M_{jj}},
  \qquad
  u_j=\arg h_j(\mathbf M).
\]
These quantities are $\mathbf M$-measurable, and
\eqref{proof:cal-orbit-factorization} yields
\[
  G_j=e^{i\Theta}h_j(\mathbf M)
     =r_je^{i(\Theta+u_j)},
  \qquad
  Y_j=r_j\cos(\Theta+u_j),
  \qquad
  X_j=r_j\sin(\Theta+u_j).
\]
Conditional uniformity therefore implies
\[
\begin{aligned}
  &\mathbb P\left(
    Y_j=0\text{ for some }j\in\mathcal A
    \,\middle|\,\mathbf M,V,\mathbf W
  \right)\\
  &\qquad=
  \frac1{2\pi}\int_0^{2\pi}
    \mathbf1\left\{
      \cos(t+u_j)=0\text{ for some }j\in\mathcal A
    \right\}\,dt
  =0
  \quad\text{almost surely}.
\end{aligned}
\]
Hence, simultaneously for all $j\in\mathcal A$,
\[
  q(G_j)=\frac{X_j}{Y_j}
        =\tan(\Theta+u_j)
  \quad\text{almost surely}.
\]
By the support condition,
\[
  W_j=0\quad(j\notin\mathcal A),
  \qquad W_j\geq0,
  \qquad \sum_{j\in\mathcal A}W_j=1.
\]
Consequently,
\[
  T=\sum_{j\in\mathcal A}W_j\tan(\Theta+u_j)
  \quad\text{almost surely}.
\]
Conditional on $(\mathbf M,V,\mathbf W)$, the weights and
phase shifts are fixed. Lemma~\ref{proof:cal-tangent} gives
\[
  \mathcal L(T\mid\mathbf M,V,\mathbf W)
  =\operatorname{Cauchy}(0,1)
  \quad\text{almost surely},
\]
which proves \eqref{eq:conditionalcauchy}.

For every Borel set $E\subseteq\mathbb R$ and every
$D\in\sigma(\mathbf M,V,\mathbf W)$,
\[
\begin{aligned}
  \mathbb P(T\in E,D)
  &=\mathbb E\left\{
      \mathbf1_D\mathbb P(T\in E\mid\mathbf M,V,\mathbf W)
    \right\}\\
  &=\mathbb P(D)\int_E\frac{dx}{\pi(1+x^2)}\\
  &=\mathbb P(D)\mathbb P(T\in E).
\end{aligned}
\]
Thus $T$ is independent of $(\mathbf M,V,\mathbf W)$.

Finally, suppose $\mathbf W=\mathbf w(\mathbf M,V)$ for a
measurable function $\mathbf w$. Since
\[
  (e^{i\phi}\mathbf G)(e^{i\phi}\mathbf G)^*
  =\mathbf M,
\]
the assumed invariance of $(\mathbf G,V)$ implies
\[
\begin{aligned}
  (e^{i\phi}\mathbf G,V,\mathbf W)
  &=
  \left(e^{i\phi}\mathbf G,V,
    \mathbf w\{(e^{i\phi}\mathbf G)
              (e^{i\phi}\mathbf G)^*,V\}\right)\\
  &\overset d=
  \left(\mathbf G,V,\mathbf w(\mathbf G\mathbf G^*,V)\right)
  =(\mathbf G,V,\mathbf W).
\end{aligned}
\]
Moreover,
\[
  \sigma(\mathbf M,V,\mathbf W)=\sigma(\mathbf M,V).
\]
The preceding conditional distribution therefore becomes
\[
  \mathcal L(T\mid\mathbf M,V)
  =\operatorname{Cauchy}(0,1)
  \quad\text{almost surely},
\]
proving the final assertion.
\end{proof}

\begin{proof}[Proof of Corollary~\ref{cor:projections}]
Put $\widetilde G=\widehat{\mathbf a}^{\mathsf T}\mathbf G$ and
$d=\widehat{\mathbf a}^{\mathsf T}\operatorname{Re}(\mathbf M)
\widehat{\mathbf a}$. Both $d$ and the direction are measurable in
$(\mathbf M,V)$, and $|\widetilde G|^2=d$. On the invariant event
$\{d>0\}$, \eqref{proof:cal-orbit-factorization} gives
\[
 \widetilde G=e^{i\Theta}
   \widehat{\mathbf a}^{\mathsf T}\mathbf h(\mathbf M),
 \qquad
 \left|\widehat{\mathbf a}^{\mathsf T}\mathbf h(\mathbf M)\right|^2=d>0.
\]
On $\{d>0\}$, let
$\psi=\arg\{\widehat{\mathbf a}^{\top}\mathbf h(\mathbf M)\}$.
Then
\[
  \widetilde Y+i\widetilde X
  =\sqrt d\,e^{i(\Theta+\psi)},
  \qquad
  q(\widetilde Y+i\widetilde X)
  =\tan(\Theta+\psi)
  \quad\text{almost surely}.
\]
Conditional on $(\mathbf M,V)$, the phase $\psi$ is fixed
and $\Theta$ is uniform on $[0,2\pi)$.
Lemma~\ref{proof:cal-tangent} with $L=1$ and $w_1=1$
therefore proves \eqref{eq:selectedprojection}
on $\{d>0\}$.
\end{proof}

\begin{proposition}[Finite invariantly selected projections]
\label{proof:cal-finite-projections}
Under the assumptions of Corollary~\ref{cor:projections}, let
$\widehat{\mathbf a}_1,\ldots,\widehat{\mathbf a}_L\in\mathbb R^m$
be $(\mathbf M,V)$-measurable, where $L$ is finite. Set
\[
 \widetilde G_\ell=\widehat{\mathbf a}_\ell^{\mathsf T}\mathbf G,
 \qquad
 d_\ell=\widehat{\mathbf a}_\ell^{\mathsf T}
       \operatorname{Re}(\mathbf M)\widehat{\mathbf a}_\ell,
 \qquad
 \mathcal A=\{\ell\mid d_\ell>0\}.
\]
On $\{\mathcal A\ne\varnothing\}$, choose nonnegative weights
$w_\ell$ measurable with respect to $\sigma(\mathbf M,V)$
that vanish off $\mathcal A$ and sum to one.
Then
\[
 \mathcal L\left\{\sum_{\ell=1}^Lw_\ell q(\widetilde G_\ell)
          \,\middle|\,\mathbf M,V\right\}
 =\operatorname{Cauchy}(0,1).
\]
The directions need not be orthogonal or distinct. The result also
allows an almost surely finite, invariantly selected number $L$.
\end{proposition}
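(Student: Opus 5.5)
The plan is to reduce the statement to Lemma~\ref{proof:cal-tangent}, following the proof of Corollary~\ref{cor:projections} but with $L$ projections sharing one common phase. Write $\Theta=\theta(\mathbf G)$ and use the measurable orbit representative $\mathbf h(\mathbf M)$ from \eqref{proof:cal-section-map}. Applying \eqref{proof:cal-haar-single} with $U=V$ shows that $\Theta$ is uniform on $[0,2\pi)$ conditional on $(\mathbf M,V)$. By \eqref{proof:cal-orbit-factorization}, each projection factors as
\[
 \widetilde G_\ell=e^{i\Theta}\,\widehat{\mathbf a}_\ell^{\top}\mathbf h(\mathbf M).
\]
Here $\widehat{\mathbf a}_\ell$ is real, so $\widehat{\mathbf a}_\ell^{\top}\mathbf G=\widehat{\mathbf a}_\ell^{\top}\mathbf Y+i\,\widehat{\mathbf a}_\ell^{\top}\mathbf X$, and this is indeed the paired projection. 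In addition,
\[
 |\widetilde G_\ell|^2=\widehat{\mathbf a}_\ell^{\top}\operatorname{Re}(\mathbf M)\widehat{\mathbf a}_\ell=d_\ell .
\]
For $\ell\in\mathcal A$ set $\psi_\ell=\arg\{\widehat{\mathbf a}_\ell^{\top}\mathbf h(\mathbf M)\}$. This is a measurable function of $(\mathbf M,V)$, and $\widetilde G_\ell=\sqrt{d_\ell}\,e^{i(\Theta+\psi_\ell)}$.

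Next I identify the statistic. By \eqref{proof:cal-orbit-kernel}, the conditional probability that $\cos(\Theta+\psi_\ell)=0$ for some $\ell\in\mathcal A$ equals the Lebesgue measure of a finite set of $t$, which is zero. Hence, almost surely and simultaneously for all $\ell\in\mathcal A$, $q(\widetilde G_\ell)=\tan(\Theta+\psi_\ell)$. The weights vanish off $\mathcal A$, so
\[
 \sum_\ell w_\ell q(\widetilde G_\ell)=\sum_{\ell\in\mathcal A}w_\ell\tan(\Theta+\psi_\ell).
\]
Directions that coincide or are non-orthogonal simply give repeated or arbitrary shifts $\psi_\ell$, which Lemma~\ref{proof:cal-tangent} allows.

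To conclude, take a bounded measurable $f$ and apply the kernel identity \eqref{proof:cal-orbit-kernel}. Its integrand depends measurably on $(\mathbf M,V)$, which is justified by the monotone-class extension noted after that display. This gives
\[
 \mathbb E\{f(T)\mid\mathbf M,V\}=\frac1{2\pi}\int_0^{2\pi}f\Bigl(\sum_{\ell\in\mathcal A}w_\ell\tan(t+\psi_\ell)\Bigr)dt .
\]
For fixed $(\mathbf M,V)$ with $\mathcal A\neq\varnothing$, Lemma~\ref{proof:cal-tangent} shows that this equals the Cauchy expectation of $f$. This is the only delicate step: substituting the invariant-measurable weights and shifts inside the orbit integral. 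I would handle it by proving the identity first for product integrands $d(\mathbf M,V)\,a(t)$ and then extending by monotone class.

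For random $L$, partition on the $\sigma(\mathbf M,V)$-measurable events $\{L=k\}$, apply the fixed-$L$ result on each event, and sum.
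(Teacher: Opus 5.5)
Your proposal is correct and is essentially the paper's own argument. The paper likewise writes each nonzero projection as $e^{i\Theta}$ times the $(\mathbf M,V)$-measurable number $\widehat{\mathbf a}_\ell^{\top}\mathbf h(\mathbf M)$, applies Lemma~\ref{proof:cal-tangent} to the weighted shifted tangents, and handles random $L$ by partitioning on the events $\{L=l\}$. Your version spells out details the paper leaves implicit: the identity $|\widetilde G_\ell|^2=d_\ell$, the conditional null set where some $\cos(\Theta+\psi_\ell)=0$, and the monotone-class justification for substituting invariant quantities into the orbit integral.
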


\begin{proof}
The energies $d_\ell=|\widetilde G_\ell|^2$ make $\mathcal A$
invariant. Conditional on $(\mathbf M,V)$, every nonzero projected
coordinate equals $e^{i\Theta}$ times the fixed nonzero number
$\widehat{\mathbf a}_\ell^{\mathsf T}\mathbf h(\mathbf M)$.
Lemma~\ref{proof:cal-tangent} applies to their weighted tangents.
For random finite $L$, apply this argument on each measurable event
$\{L=l\}$ and combine the countably many conclusions. In particular,
conditioning remains on the original invariant, which may contain
more information than the projected Gram matrix.
\end{proof}

\subsection{Joint adaptation and evidence combination}
\label{proof:cal-joint}

\begin{proof}[Proof of Theorem~\ref{thm:joint}]
For each nonzero block use the canonical representation
$\mathbf G_b=e^{i\Theta_b}\mathbf h(\mathbf M_b)$.
The active block set is invariant since
\[
 \mathcal B_*=\{b\mid\operatorname{tr}(\mathbf M_b)>0\}.
\]
For a fixed subset $J\subseteq\{1,\ldots,B\}$, write
$E_J=\{\mathcal B_*=J\}$. For bounded $\mathcal I_B$-measurable
$D$ and bounded measurable functions $a_b$, average
\eqref{eq:product} over the $B$ rotation angles. Fubini and
circle translation invariance give
\begin{equation}\label{proof:cal-product-haar}
 \mathbb E\left\{\mathbf1_{E_J}D\prod_{b\in J}a_b(\Theta_b)\right\}
 =\mathbb E(\mathbf1_{E_J}D)
   \prod_{b\in J}\left\{\frac1{2\pi}\int_0^{2\pi}a_b(t)\,dt\right\}.
\end{equation}
Taking $a_b=\mathbf1_{[0,t_b]}$ in
\eqref{proof:cal-product-haar} gives, on $E_J$,
\[
  \mathbb P\left(
    \Theta_b\leq t_b\text{ for all }b\in J
    \,\middle|\,\mathcal I_B
  \right)
  =\prod_{b\in J}\frac{t_b}{2\pi},
  \qquad 0\leq t_b\leq2\pi,
\]
almost surely. Thus, conditional on $\mathcal I_B$, the phases
$(\Theta_b)_{b\in J}$ are independent and uniform on $[0,2\pi)$.
For $b\notin J$, we have $\mathbf G_b=\mathbf0$ on $E_J$,
and these blocks are excluded from the phase statement.

On $E_J$, conditioning on $\mathcal I_B$ fixes all weights
and phase shifts, so each $T_b$ depends only on $\Theta_b$.
Lemma~\ref{proof:cal-tangent} and the conditional independence
of $(\Theta_b)_{b\in J}$ therefore give
\[
  \mathcal L\bigl((T_b)_{b\in J}\mid\mathcal I_B\bigr)
  =\bigotimes_{b\in J}\operatorname{Cauchy}(0,1)
  \quad\text{almost surely on }E_J.
\]
For $b\in J$, the two-sided $p$-value is
\[
  p_b=1-\frac2\pi\arctan|T_b|.
\]
For every $0<u<1$, on $E_J$,
\[
\begin{aligned}
  \mathbb P(p_b\leq u\mid\mathcal I_B)
  &=\mathbb P\left(
      |T_b|\geq\tan\left\{\frac{\pi(1-u)}2\right\}
      \,\middle|\,\mathcal I_B
    \right)\\
  &=1-\frac2\pi
    \arctan\left[\tan\left\{\frac{\pi(1-u)}2\right\}\right]
   =u.
\end{aligned}
\]
Since each $p_b$ is a function of $T_b$ alone,
\[
  \mathcal L\bigl((p_b)_{b\in J}\mid\mathcal I_B\bigr)
  =\bigotimes_{b\in J}\operatorname{Unif}(0,1)
  \quad\text{almost surely on }E_J.
\]
Let $S_b=\operatorname{sgn}(\widehat{\mathbf a}_b^{\mathsf T}
\mathbf X_b)$. The assumed positive projected energy means
$\widehat{\mathbf a}_b^{\mathsf T}\mathbf h(\mathbf M_b)\ne0$.
Thus $S_b\in\{-1,1\}$ except at two conditional probability-zero
angles. A half-turn sends $\mathbf G_b$ to $-\mathbf G_b$ and
changes $S_b$ to $-S_b$, while $q(-z)=q(z)$ leaves $T_b$ and $p_b$
unchanged. Consequently, for every Borel set $A\subseteq(0,1)$,
\begin{equation}\label{proof:cal-half-turn}
 \begin{split}
 \mathbb P(p_b\in A,S_b=1\mid\mathcal I_B)
 &=\mathbb P(p_b\in A,S_b=-1\mid\mathcal I_B)\\
 &=\tfrac12\mathbb P(p_b\in A\mid\mathcal I_B)
 =\tfrac12\operatorname{Leb}(A).
 \end{split}
\end{equation}
Hence the sign is conditionally symmetric and independent of $p_b$.
For $x\ge0$, the magnitude $A_b=\Phi^{-1}(1-p_b/2)$ obeys
\[
 \mathbb P(A_b\le x\mid\mathcal I_B)=2\Phi(x)-1.
\]
Combining this half-normal magnitude with the independent symmetric
sign gives a standard normal variable. Product phase independence
also makes the pairs $(p_b,S_b)$ independent across $b$, proving
the joint assertion for $Z_b^{\mathrm{loc}}=S_bA_b$.

Let $L^2=\sum_{b\in J}\lambda_b^2$. On $\{L>0\}$, the coefficients
are fixed under conditioning, so for every $t\in\mathbb R$,
\begin{equation}\label{proof:cal-normal-cf}
 \mathbb E(e^{itN_B}\mid\mathcal I_B)
 =\prod_{b\in J}\exp\left(-\frac{t^2\lambda_b^2}{2L^2}\right)
 =e^{-t^2/2}.
\end{equation}
This proves \eqref{eq:combination}, including negative real
coefficients. If $J$ is empty or $L=0$, there is no normalized score
and the prescribed combined probability is one. In particular, for
$0<\alpha<1$ its conditional rejection probability is
$\alpha\mathbf1_{\{L>0\}}$, with $L=0$ on an empty active set.

For the projection extension, form
$\widetilde G_{b\ell}=\widehat{\mathbf a}_{b\ell}^{\mathsf T}
\mathbf G_b$ from finitely many $\mathcal I_B$-measurable directions.
Their Gram entries are
\[
 \widetilde G_{b\ell}\overline{\widetilde G_{br}}
 =\widehat{\mathbf a}_{b\ell}^{\mathsf T}
   \mathbf M_b\widehat{\mathbf a}_{br},
\]
and hence are already $\mathcal I_B$-measurable. Omit zero projected
coordinates, then omit a block if all its projected coordinates
vanish. On each resulting active set, the proof of
Proposition~\ref{proof:cal-finite-projections} applies to the original
block phase. Equations~\eqref{proof:cal-product-haar}--\eqref{proof:cal-normal-cf}
remain valid. A sign may be obtained from any invariantly selected
real projection with positive energy in the block, for example the
first active selected direction. No conditioning on the data outside
the original $\mathcal I_B$ is introduced.
\end{proof}

\begin{corollary}[Explicit score for a single projection]
\label{proof:cal-angular}
Under the assumptions of Theorem~\ref{thm:joint},
let $\widehat{\mathbf a}_b\in\mathbb R^{m_b}$ be an
$\mathcal I_B$-measurable unit vector, and define
\[
  \xi_b=\widehat{\mathbf a}_b^\top\mathbf X_b,
  \qquad
  \eta_b=\widehat{\mathbf a}_b^\top\mathbf Y_b,
  \qquad
  r_b=(\xi_b^2+\eta_b^2)^{1/2}.
\]
The signed-normal score for this projection satisfies
\begin{equation}\label{proof:cal-angular-expression}
  Z_b^{\mathrm{loc}}(\widehat{\mathbf a}_b)
  =\Phi^{-1}\left\{
      \frac12+\frac1\pi\arcsin(\xi_b/r_b)
    \right\}
  \quad\text{almost surely on }\{r_b>0\}.
\end{equation}
On the same event, its conditional distribution given
$\mathcal I_B$ is $\mathcal N(0,1)$.
\end{corollary}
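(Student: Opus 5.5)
The plan has two parts: verify the algebraic identity on $\{r_b>0\}$, then read off the conditional law from Theorem~\ref{thm:joint} specialised to one projection per block. On $\{r_b>0\}$ the scalar projection $\widetilde G_b=\eta_b+i\xi_b$ is nonzero. Its energy $r_b^2=\widehat{\mathbf a}_b^\top\operatorname{Re}(\mathbf M_b)\widehat{\mathbf a}_b$ is $\mathcal I_B$-measurable, so the hypothesis of Theorem~\ref{thm:joint} on this invariant event is the positive-energy condition. The statement is stated on $\{r_b>0\}$, so no extra restriction is needed beyond intersecting with it.

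For the identity, write $\xi_b=r_b\sin\omega$ and $\eta_b=r_b\cos\omega$ with $\omega\in(-\pi,\pi]$. By the proof of Theorem~\ref{thm:joint}, conditionally on $\mathcal I_B$ the angle $\omega=\Theta_b+\psi_b$ is uniform with a fixed shift, so almost surely $\eta_b\neq0$ and $\xi_b\ne0$.

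Then $T_b=\xi_b/\eta_b=\tan\omega$ and $|T_b|=|\tan\omega|$. Set $\gamma=\arcsin(|\xi_b|/r_b)\in[0,\pi/2)$, so that $|\tan\omega|=\tan\gamma$. Hence $\arctan|T_b|=\gamma$ and $p_b=1-2\gamma/\pi$, which gives $1-p_b/2=\tfrac12+\gamma/\pi$.

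Multiply by the sign $\operatorname{sgn}(\xi_b)$ and use the oddness of $\Phi^{-1}(\tfrac12+\cdot)$ and of $\arcsin$. This gives
\[
\operatorname{sgn}(\xi_b)\Phi^{-1}(1-p_b/2)=\Phi^{-1}\{\tfrac12+\tfrac1\pi\arcsin(\xi_b/r_b)\},
\]
which is \eqref{proof:cal-angular-expression}. The only care needed is the null set $\{\xi_b\eta_b=0\}$, where $q$ and $\operatorname{sgn}$ conventions could disagree. It has conditional probability zero by the same uniform-phase argument.

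The distributional claim then follows from the signed-normal part of Theorem~\ref{thm:joint}, using the half-turn symmetry \eqref{proof:cal-half-turn}. As a direct check, conditional on $\mathcal I_B$ the angle $\omega$ is uniform. Hence $\xi_b/r_b=\sin\omega$ has the arcsine law, so $\tfrac12+\tfrac1\pi\arcsin(\sin\omega)$ is $\operatorname{Unif}(0,1)$. Applying $\Phi^{-1}$ gives $\mathcal N(0,1)$.

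The main (mild) obstacle is measurability bookkeeping: the event $\{r_b>0\}$ and the shift $\psi_b$ must be invariant, which follows from Corollary~\ref{cor:projections}.
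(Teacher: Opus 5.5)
Your proposal is correct and follows essentially the same route as the paper's proof. Both arguments use the conditionally uniform phase to get $\eta_b\neq0$ almost surely, then the identity $\arctan|\xi_b/\eta_b|=\arcsin(|\xi_b|/r_b)$ together with $\Phi^{-1}(1-u)=-\Phi^{-1}(u)$ and the oddness of $\arcsin$, and finally take the conditional $\mathcal N(0,1)$ law from Theorem~\ref{thm:joint}; your extra direct check through the arcsine law of $\sin\omega$ is also valid and matches the paper's later computation that $\tfrac12+\pi^{-1}\arcsin(\sin\Theta)$ is uniform.
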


\begin{proof}
On $\{r_b>0\}$, we have $\eta_b\ne0$ almost surely under
the conditional null law. The one-projection construction gives
\[
  p_b=1-\frac2\pi\arctan|\xi_b/\eta_b|,
  \qquad
  Z_b^{\mathrm{loc}}(\widehat{\mathbf a}_b)
  =\operatorname{sgn}(\xi_b)\Phi^{-1}(1-p_b/2).
\]
Since
\[
  \arctan|\xi_b/\eta_b|=\arcsin(|\xi_b|/r_b),
  \qquad
  1-p_b/2=\frac12+\frac1\pi\arcsin(|\xi_b|/r_b),
\]
the identity $\Phi^{-1}(1-u)=-\Phi^{-1}(u)$ and the
oddness of $\arcsin$ yield
\eqref{proof:cal-angular-expression}.
Equation~\eqref{proof:cal-angular-expression} is
\eqref{eq:angular}. The conditional normal law follows from
Theorem~\ref{thm:joint}.
\end{proof}

\paragraph{Hilbert-space extension.}

Let $\mathcal H$ and
$\mathcal H_{\mathbb C}$ be a nontrivial real separable Hilbert space and its complexification, respectively.  We use the complex inner product on $\mathcal H_{\mathbb C}$,
linear in its first argument and conjugate-linear in its second.
For $\mathbf G=\mathbf Y+i\mathbf X$, with
$\mathbf X,\mathbf Y\in\mathcal H$, define
\[
  \mathbf M\mathbf z
  =\langle\mathbf z,\mathbf G\rangle\mathbf G,
  \qquad \mathbf z\in\mathcal H_{\mathbb C}.
\]
Then $\mathbf M$ is a positive trace-class operator of rank
at most one, with
\[
  \operatorname{tr}(\mathbf M)
  =\|\mathbf M\|_1
  =\|\mathbf G\|_{\mathcal H_{\mathbb C}}^2,
\]
where $\|\cdot\|_1$ denotes the trace norm.
All measurability statements for trace-class operators below
refer to the Borel $\sigma$-field induced by this norm. The map from $\mathbf G$ to $\mathbf M$ is continuous since
\[
 \|\mathbf g\otimes\mathbf g-\mathbf h\otimes\mathbf h\|_1
 \le(\|\mathbf g\|+\|\mathbf h\|)\|\mathbf g-\mathbf h\|,
\]
where $(\mathbf g\otimes\mathbf h)\mathbf z=
\langle\mathbf z,\mathbf h\rangle\mathbf g$.

Fix an orthonormal basis $(\mathbf e_j)$ of $\mathcal H$,
finite or countably infinite. For $\mathbf G\ne\mathbf0$,
Parseval's identity gives
\[
  \sum_j\langle\mathbf M\mathbf e_j,\mathbf e_j\rangle
  =\sum_j|\langle\mathbf G,\mathbf e_j\rangle|^2
  =\|\mathbf G\|^2>0,
\]
which implies at least one term is positive. Define
\[
  j(\mathbf M)
  =\min\{j\mid
    \langle\mathbf M\mathbf e_j,\mathbf e_j\rangle>0\},
  \qquad
  \mathbf h(\mathbf M)
  =\frac{\mathbf M\mathbf e_{j(\mathbf M)}}
    {\sqrt{\langle\mathbf M\mathbf e_{j(\mathbf M)},
                         \mathbf e_{j(\mathbf M)}\rangle}}.
\]
For each fixed $r$, the maps
$\mathbf M\mapsto\mathbf M\mathbf e_r$ and
$\mathbf M\mapsto\langle\mathbf M\mathbf e_r,\mathbf e_r\rangle$
are continuous in the trace norm, since
\[
  |\langle(\mathbf M-\mathbf N)\mathbf e_r,\mathbf e_r\rangle|
  \leq\|(\mathbf M-\mathbf N)\mathbf e_r\|
  \leq\|\mathbf M-\mathbf N\|_1.
\]
Consequently, each set
\[
  E_r=\{j(\mathbf M)=r\}
  =\{\langle\mathbf M\mathbf e_r,\mathbf e_r\rangle>0\}
   \cap\bigcap_{j<r}
   \{\langle\mathbf M\mathbf e_j,\mathbf e_j\rangle=0\}
\]
is Borel measurable. On $E_r$,
\[
  \mathbf h(\mathbf M)
  =\frac{\mathbf M\mathbf e_r}
    {\sqrt{\langle\mathbf M\mathbf e_r,\mathbf e_r\rangle}},
\]
which is continuous on that set. Since the sets $E_r$ form
a countable partition of the nonzero operators under consideration,
both $j(\mathbf M)$ and $\mathbf h(\mathbf M)$ are measurable.
At zero set $\mathbf h(\mathbf0)=\mathbf0$ and $\Theta=0$.
For $\mathbf G\ne\mathbf0$, put $j=j(\mathbf M)$,
$g_j=\langle\mathbf G,\mathbf e_j\rangle$ and
$\Theta=\arg g_j\in[0,2\pi)$. Then
\[
 \mathbf h(\mathbf M)=\frac{\overline{g_j}}{|g_j|}\mathbf G,
 \qquad \mathbf G=e^{i\Theta}\mathbf h(\mathbf M).
\]
The phase is measurable, so
\eqref{proof:cal-haar-single}--\eqref{proof:cal-orbit-kernel}
hold unchanged for $\mathcal H_{\mathbb C}$-valued observations.

\begin{proposition}[Hilbert-space projections]
\label{proof:cal-hilbert}
Suppose $\mathbf X_b,\mathbf Y_b\in\mathcal H_b$, where each
$\mathcal H_b$ is a real separable Hilbert space, and assume joint
product-rotation invariance together with $V$. Let
$\mathcal I_B=\sigma(\mathbf M_1,\ldots,\mathbf M_B,V)$, with
$\mathbf M_b$ defined as above. In each block choose finitely many
$\mathcal I_B$-measurable vectors $\widehat{\mathbf a}_{b\ell}
\in\mathcal H_b$. Form the complex projected coordinates
\[
 \widetilde G_{b\ell}=
 \langle\mathbf Y_b,\widehat{\mathbf a}_{b\ell}\rangle
 +i\langle\mathbf X_b,\widehat{\mathbf a}_{b\ell}\rangle.
\]
All conclusions of Proposition~\ref{proof:cal-finite-projections}
and Theorem~\ref{thm:joint} hold for these projected coordinates,
with zero coordinates and zero projected blocks omitted, conditional
on the original $\mathcal I_B$.
\end{proposition}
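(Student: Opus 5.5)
The plan is to reduce the Hilbert-space statement to the finite-dimensional arguments already given, using the measurable orbit representative $\mathbf h(\mathbf M)$ constructed just above. For each block $b$ with $\mathbf G_b\ne\mathbf0$, that construction gives $\mathbf G_b=e^{i\Theta_b}\mathbf h(\mathbf M_b)$. Here $\mathbf h(\mathbf M_b)$ is $\mathcal I_B$-measurable and $\Theta_b$ is a measurable phase. The active block set is invariant, since $\mathcal B_*=\{b:\operatorname{tr}(\mathbf M_b)>0\}$ and the trace is trace-norm continuous.

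\emph{Joint phases.} On each event $E_J=\{\mathcal B_*=J\}$, I would repeat the product-Haar averaging in \eqref{proof:cal-product-haar} verbatim, using only joint product-rotation invariance, Fubini and circle translation invariance. This shows that, conditional on $\mathcal I_B$, the phases $(\Theta_b)_{b\in J}$ are independent and uniform on $[0,2\pi)$. Nothing in that computation used finite dimension; only measurability of $\Theta_b$ and of $\mathbf M_b$ is needed, and the preceding paragraph supplies both.

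\emph{Projected coordinates.} By conjugate-linearity of the inner product in its second argument, and since $\widehat{\mathbf a}_{b\ell}$ is real,
\[
\widetilde G_{b\ell}=\langle\mathbf G_b,\widehat{\mathbf a}_{b\ell}\rangle=e^{i\Theta_b}\langle\mathbf h(\mathbf M_b),\widehat{\mathbf a}_{b\ell}\rangle .
\]
The factor $c_{b\ell}=\langle\mathbf h(\mathbf M_b),\widehat{\mathbf a}_{b\ell}\rangle$ is $\mathcal I_B$-measurable because it is a continuous function of a measurable map and a measurable vector. Its modulus satisfies $|c_{b\ell}|^2=\langle\mathbf M_b\widehat{\mathbf a}_{b\ell},\widehat{\mathbf a}_{b\ell}\rangle$, so the zero-coordinate and zero-block exclusions are invariant events. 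On $\{|c_{b\ell}|>0\}$, write $\psi_{b\ell}=\arg c_{b\ell}$. Then $q(\widetilde G_{b\ell})=\tan(\Theta_b+\psi_{b\ell})$, except on the finitely many angles where $\cos(\Theta_b+\psi_{b\ell})=0$, which have conditional probability zero.

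\emph{Conclusions.} Conditioning on $\mathcal I_B$ fixes all weights and shifts. Lemma~\ref{proof:cal-tangent} then gives a conditional Cauchy law for each weighted block statistic, as in Proposition~\ref{proof:cal-finite-projections}, with independence across blocks inherited from the phases. The $p_b$, the half-turn sign symmetry \eqref{proof:cal-half-turn}, the signed normal scores and the characteristic-function computation \eqref{proof:cal-normal-cf} transfer unchanged. The sign symmetry holds because $\Theta_b\mapsto\Theta_b+\pi$ negates $\widetilde G_{b\ell}$. A random finite number of directions is handled on each event $\{L=l\}$ separately.

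The main obstacle is measurability: showing that $\Theta_b$ is a genuine random variable and that composing with $\mathcal I_B$-measurable vectors stays measurable in the separable setting. Since the countable partition $\{E_r\}$ above settles this, the remaining steps are formal transfers.
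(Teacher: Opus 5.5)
Your proposal is correct and follows essentially the same route as the paper. The measurable orbit representative gives $\widetilde G_{b\ell}=e^{i\Theta_b}\langle\mathbf h(\mathbf M_b),\widehat{\mathbf a}_{b\ell}\rangle$, with an $\mathcal I_B$-measurable factor whose squared modulus is $\langle\mathbf M_b\widehat{\mathbf a}_{b\ell},\widehat{\mathbf a}_{b\ell}\rangle$; product-Haar averaging gives conditionally independent uniform phases; and Lemma~\ref{proof:cal-tangent}, the half-turn sign argument (with the sign taken from an $\mathcal I_B$-measurably chosen active direction, as the paper does) and the characteristic-function step then carry over unchanged. One small slip: both the identification $\widetilde G_{b\ell}=\langle\mathbf G_b,\widehat{\mathbf a}_{b\ell}\rangle$ and pulling out $e^{i\Theta_b}$ use linearity in the first argument, not conjugate-linearity in the second, although your displayed identity is itself correct.
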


\begin{proof}
For each selected direction, define the
$\mathcal I_B$-measurable complex number
\[
  h_{b\ell}
  =\langle\mathbf h(\mathbf M_b),
           \widehat{\mathbf a}_{b\ell}\rangle.
\]
By the canonical representation and linearity of the complex
inner product in its first argument, on
$\{\mathbf G_b\ne\mathbf0\}$,
\[
  \widetilde G_{b\ell}
  =\langle\mathbf G_b,\widehat{\mathbf a}_{b\ell}\rangle
  =e^{i\Theta_b}h_{b\ell}.
\]
In particular,
\[
  |\widetilde G_{b\ell}|^2
  =|h_{b\ell}|^2
  =\langle\mathbf M_b\widehat{\mathbf a}_{b\ell},
                      \widehat{\mathbf a}_{b\ell}\rangle,
\]
which gives zero projected energy when
$\mathbf G_b=\mathbf0$. Thus the sets
\[
  \mathcal A_b=\{\ell\mid |h_{b\ell}|>0\},
  \qquad
  J=\{b\mid\mathcal A_b\ne\varnothing\}
\]
are $\mathcal I_B$-measurable and identify the nonzero
projected coordinates and blocks.

The averaging argument in \eqref{proof:cal-product-haar}
uses only product-rotation invariance and a measurable
phase representation, both of which hold here.
Consequently, conditional on $\mathcal I_B$, the phases
$(\Theta_b)_{b\in J}$ are independent and uniform on
$[0,2\pi)$, while all $h_{b\ell}$ are fixed.

For $b\in J$, let $w_{b\ell}$ be the permitted
$\mathcal I_B$-measurable nonnegative weights, supported
on $\mathcal A_b$ and summing to one. Then
\[
  T_b
  =\sum_{\ell\in\mathcal A_b}
     w_{b\ell}q(\widetilde G_{b\ell})
  =\sum_{\ell\in\mathcal A_b}
     w_{b\ell}\tan(\Theta_b+\arg h_{b\ell})
  \quad\text{almost surely}.
\]
Lemma~\ref{proof:cal-tangent} and conditional phase
independence therefore give independent standard Cauchy
laws for $(T_b)_{b\in J}$, conditional on $\mathcal I_B$.
Choose an $\mathcal I_B$-measurable index
$\ell_b\in\mathcal A_b$, and define
\[
  S_b=\operatorname{sgn}
      \langle\mathbf X_b,\widehat{\mathbf a}_{b\ell_b}\rangle,
  \qquad
  p_b=1-\frac2\pi\arctan|T_b|,
  \qquad
  Z_b^{\mathrm{loc}}=S_b\Phi^{-1}(1-p_b/2).
\]
The positive projected energy ensures that $S_b\in\{-1,1\}$
almost surely. Replacing $\Theta_b$ by $\Theta_b+\pi$
leaves $T_b$ unchanged and reverses $S_b$.
Equation~\eqref{proof:cal-half-turn} therefore applies,
so $S_b$ is conditionally symmetric and independent of
$p_b$. Hence the scores $(Z_b^{\mathrm{loc}})_{b\in J}$
are conditionally independent standard normal variables.

By \eqref{proof:cal-normal-cf}, their combination in
\eqref{eq:combination} is conditionally standard normal
whenever $\sum_{b\in J}\lambda_b^2>0$.
Zero projected coordinates and blocks are omitted, and
$p_{\mathrm{comb}}=1$ when this sum is zero.
All conditioning remains on the original $\mathcal I_B$.
If the numbers of selected projections are random and
$\mathcal I_B$-measurable, the argument applies on each
event specifying their finite values. These events form
a countable partition, completing the proof.
\end{proof}

\section{Representative procedures and power}
\label{proof:learning-section}

The direction in \eqref{eq:directionaverage} admits a measurable
choice even at tied eigenvalues. For a nonzero positive semidefinite
matrix $\mathbf A$, write
\[
 \lambda_1=\|\mathbf A\|_{\mathrm{op}},\qquad
 \mathbf P=\lim_{n\to\infty}(\mathbf A/\lambda_1)^n,\qquad
 j_* =\min\{j\mid\|\mathbf P\mathbf e_j\|>0\},\qquad
 \mathbf a(\mathbf A)=
 \frac{\mathbf P\mathbf e_{j_*}}{\|\mathbf P\mathbf e_{j_*}\|}.
\]
The spectral decomposition gives the displayed operator-norm limit.
Each coordinate map is measurable and
$\mathbf e_{j_*}^{\top}\mathbf a(\mathbf A)>0$.
Set $\mathbf a(\mathbf0)=\mathbf e_1$.
Applying this rule to $\widehat{\mathbf A}_B$ supplies the required
invariant sign and tie convention.

\proofanchor{proof:procedures}
\begin{proof}[Proof of Corollary~\ref{cor:representative-calibration}]
Each $E_b=\operatorname{tr}(\mathbf M_b)$ is
$\mathcal I_B$-measurable. For a selected direction $\mathbf a$,
\[
 r_b^2=\mathbf a^\top\operatorname{Re}(\mathbf M_b)\mathbf a
\]
is also $\mathcal I_B$-measurable.
Thus $\{b\mid r_b>0\}$ and the estimates
$\widehat\pi_b=(2m-2)/E_b$, $b\in\mathcal B_*$, are
$\mathcal I_B$-measurable. The pooled estimate
$\widehat P_B$ in \eqref{eq:blendestimates} and
the mixing coefficient $\widehat t_B$ in
\eqref{eq:blendmixing} are also
$\mathcal I_B$-measurable. Consequently, both
$\lambda_b^{\mathrm{rad}}$ in \eqref{eq:radstat}
and $\lambda_b^{\mathrm{blend}}$ in
\eqref{eq:blendweights} are fixed conditional
on $\mathcal I_B$.

By Theorem~\ref{thm:joint}, conditional on
$\mathcal I_B$, the scores $Z_b^{\mathrm{loc}}$
with $r_b>0$ are independent standard normal variables.
All sums and products below are over $\{b\mid r_b>0\}$.
If $D=\sum_b\lambda_b^2>0$, then for every real $t$,
\[
 \mathbb E_0\left\{
   \exp\left(it\frac{\sum_b\lambda_bZ_b^{\mathrm{loc}}}{\sqrt D}\right)
   \,\middle|\,\mathcal I_B\right\}
 =\prod_b\exp\left(-\frac{t^2\lambda_b^2}{2D}\right)
 =e^{-t^2/2},
\]
which is the conditional standard normal characteristic function.
The zero-denominator convention never rejects and is invariant.
\end{proof}

To prepare for the local power calculation in
Proposition~\ref{proof:fixedpower}, we compute two angular
moments of the score in \eqref{eq:angular}.
Writing a nonzero projected pair as
$(\xi,\eta)=(r\sin\theta,r\cos\theta)$ gives
\[
  Z(\theta)
  =\Phi^{-1}\left\{
      \frac12+\frac{\arcsin(\sin\theta)}{\pi}
    \right\},
  \qquad |\sin\theta|<1.
\]
Under the uniform null phase
$\Theta\sim\operatorname{Unif}(0,2\pi)$, this score is
defined almost surely. Since
\[
  \mathbb P(\sin\Theta\le x)
  =\frac12+\frac{\arcsin x}{\pi},
  \qquad -1\le x\le1,
\]
the variable
\[
  U=\frac12+\frac{\arcsin(\sin\Theta)}{\pi}
\]
is uniform on $(0,1)$, and $Z(\Theta)=\Phi^{-1}(U)$
is standard normal.
The covariance between $Z(\Theta)$ and $\sin\Theta$
produces the constant $\beta$ in the power formulas.
Indeed, $\sin\Theta=-\cos(\pi U)$ gives
\begin{equation}\label{proof:pow-beta}
  \mathbb E\{Z(\Theta)\sin\Theta\}
  =-\int_0^1\Phi^{-1}(u)\cos(\pi u)\,du
  =\beta>0.
\end{equation}
The integral is absolutely convergent since
\[
  \mathbb E|Z(\Theta)\sin\Theta|
  \le
  \{\mathbb E Z(\Theta)^2\,
     \mathbb E\sin^2\Theta\}^{1/2}
  =\frac1{\sqrt2}.
\]
The positivity of $\beta$ follows from
$-\Phi^{-1}(u)\cos(\pi u)>0$ for
$u\in(0,1)\setminus\{1/2\}$.
Finally, $Z(\pi-\theta)=Z(\theta)$, whereas
$\cos(\pi-\theta)=-\cos\theta$.
The uniform phase law is unchanged by this reflection.
Together with $\Phi^{-1}(1-u)=-\Phi^{-1}(u)$, these
identities give
\begin{equation}\label{proof:pow-angular-moments}
  \mathbb E\{Z(\Theta)\cos\Theta\}=0,
  \qquad
  Z(\Theta+\pi)=-Z(\Theta)
  \quad\text{almost surely}.
\end{equation}
For a fixed projection under a Gaussian location model,
these identities yield the following local power comparison
of weighting rules based on the projected radius.
\begin{proposition}\label{proof:fixedpower}
Let $(\xi_b,\eta_b)$ be independent pairs with independent
$\xi_b\sim\mathcal N(\theta_B,1)$ and $\eta_b\sim\mathcal N(0,1)$,
and put $r_b=(\xi_b^2+\eta_b^2)^{1/2}$. For a nonnegative measurable
function $w$ satisfying $0<\mathbb E_0w(R)^2<\infty$, where
$R^2\sim\chi^2_2$, define
\[
 N_{B,w}=\frac{\sum_b w(r_b)Z_b^{\mathrm{loc}}}
                  {\{\sum_b w(r_b)^2\}^{1/2}}.
\]
On a zero denominator, set $N_{B,w}=0$ and the final $p$-value to one. Under the null this
statistic is conditionally standard normal given the radii on a positive
denominator. Under $\theta_B=h/\sqrt B$ for fixed $h\in\mathbb R$,
\begin{equation}\label{proof:fixedlocalpower}
 N_{B,w}\Rightarrow\mathcal N(h\kappa_w,1),
 \qquad
 \kappa_w=\beta
 \frac{\mathbb E_0\{Rw(R)\}}{\{\mathbb E_0w(R)^2\}^{1/2}}.
\end{equation}
This coefficient is maximized by $w(r)=cr$, $c>0$, up to null sets, with maximum being $\kappa_R=\sqrt2\,\beta$.
\end{proposition}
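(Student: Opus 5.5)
The plan has three parts: the null calibration, a Lindeberg central limit theorem for the local alternative, and a Cauchy--Schwarz optimization of $\kappa_w$.

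\emph{Null calibration.} Under $\theta_B=0$, each pair $(\xi_b,\eta_b)$ is a standard bivariate normal. In polar form $(\xi_b,\eta_b)=(r_b\sin\Theta_b,r_b\cos\Theta_b)$, the phases $\Theta_b$ are i.i.d.\ uniform and independent of the radii. By \eqref{proof:pow-angular-moments} and the preceding computation, $Z_b^{\mathrm{loc}}=Z(\Theta_b)$ are i.i.d.\ $\mathcal N(0,1)$ and independent of $(r_b)$. The weights $w(r_b)$ are fixed given the radii, so the characteristic-function computation \eqref{proof:cal-normal-cf} gives the conditional $\mathcal N(0,1)$ law on a positive denominator.

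\emph{Local alternative.} Write the statistic as
\[
  N_{B,w}=\frac{B^{-1/2}\sum_b w(r_b)Z_b}{\{B^{-1}\sum_b w(r_b)^2\}^{1/2}},
\]
with $Z_b=Z_b^{\mathrm{loc}}$.
\begin{enumerate}
\item \emph{Denominator.} Since $\theta_B\to0$, the law of $r_b$ converges in total variation to the null $\chi_2$ law. Hence $B^{-1}\sum_b w(r_b)^2\to\mathbb E_0w(R)^2$ in probability. This holds by a weak law of large numbers for triangular arrays. The needed uniform integrability of $w(r_b)^2$ follows from the likelihood ratio $\exp(\theta_B\xi-\theta_B^2/2)$, which is bounded in $L^p$ uniformly for small $\theta_B$. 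A truncation of $w$ handles the fact that only second moments of $w$ are assumed.
\item \emph{Numerator mean.} Let $Y_b=w(r_b)Z_b$. Change measure: $\mathbb E_\theta Y_b=\mathbb E_0\{Y_b e^{\theta\xi_b-\theta^2/2}\}$. Expand to first order: $\mathbb E_\theta Y_b=\theta\,\mathbb E_0\{w(R)Z\xi\}+o(\theta)$, using $\mathbb E_0 Y_b=0$. The independence of radius and phase gives $\mathbb E_0\{w(R)Z\,R\sin\Theta\}=\mathbb E_0\{Rw(R)\}\beta$ by \eqref{proof:pow-beta}. Therefore $\sqrt B\,\mathbb E_\theta Y_b\to h\beta\,\mathbb E_0\{Rw(R)\}$.
\item \emph{Numerator variance.} The variance of $Y_b$ tends to $\mathbb E_0w(R)^2$.
\item \emph{Lindeberg and conclusion.} The Lindeberg condition for the i.i.d.\ triangular array $Y_b$ follows from the same uniform integrability. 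Slutsky then yields $N_{B,w}\Rightarrow\mathcal N(h\kappa_w,1)$.
\end{enumerate}

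\emph{Main obstacle.} The hardest step is justifying the $o(\theta)$ remainder in the mean expansion with only $\mathbb E_0w^2<\infty$. I would first write $e^{\theta\xi-\theta^2/2}-1-\theta\xi$ as $\theta\xi\{\text{bounded factor}\}$ plus a term that is $O(\theta^2\xi^2e^{|\theta\xi|})$. Cauchy--Schwarz then bounds it by $\{\mathbb E_0w^2\}^{1/2}\{\mathbb E_0 Z^2\,\theta^4\xi^4e^{2|\theta\xi|}\}^{1/2}=O(\theta^2)$, which suffices. Equivalently, this step can be viewed as differentiability in quadratic mean of the Gaussian location family, combined with Le Cam's third lemma.

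\emph{Optimization.} By Cauchy--Schwarz,
\[
  \mathbb E_0\{Rw(R)\}\le\{\mathbb E_0R^2\}^{1/2}\{\mathbb E_0w(R)^2\}^{1/2}=\sqrt2\,\{\mathbb E_0w(R)^2\}^{1/2},
\]
since $\mathbb E_0R^2=2$. Equality holds iff $w(R)=cR$ almost surely. Nonnegativity of $w$ forces $c>0$; $c=0$ is excluded by $\mathbb E_0w(R)^2>0$. Hence $\kappa_w\le\sqrt2\,\beta=\kappa_R$, with the maximum attained exactly by $w(r)=cr$ up to null sets.
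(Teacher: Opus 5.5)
\textbf{Verdict: there is a genuine gap.} Your null calibration and your Cauchy--Schwarz optimization are correct and match the paper's. The gap is in steps 1, 3 and 4 of the local-alternative argument. There you work with second moments of $w(r_b)$ directly under $\mathbb P_{\theta_B}$, and justify uniform integrability by the $L^p$-boundedness of $L_\theta=e^{\theta\xi-\theta^2/2}$.

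Hölder only gives $\mathbb E_\theta g=\mathbb E_0(gL_\theta)\le\{\mathbb E_0g^{q}\}^{1/q}\{\mathbb E_0L_\theta^{p}\}^{1/p}$ with $q>1$. For $g=w(r_b)^2$ this would need $\mathbb E_0w(R)^{2q}<\infty$, which is not assumed. Under the stated hypothesis the claim is in fact false:
\begin{itemize}
\item Take $w=1$ on $[0,1)$ and $w(r)^2=e^{r^2/2}r^{-3}$ for $r\ge1$.
\item The null density of $R$ is $re^{-r^2/2}$, so $\mathbb E_0w(R)^2\le1+\int_1^\infty r^{-2}\,dr<\infty$.
\item However, $\mathbb E_0(L_\theta\mid R=r)=e^{-\theta^2/2}I_0(\theta r)$ grows exponentially in $r$. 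Hence $\mathbb E_\theta w(r_b)^2=\infty$ for every $\theta\ne0$.
\item Restrict the phase to the arc where $\theta\sin\Theta\ge|\theta|/\sqrt2$; there $|Z(\Theta)|\ge\Phi^{-1}(3/4)$. This shows $\operatorname{Var}_\theta(Y_b)=\infty$ as well.
\end{itemize}
So step 3 is false, and Lindeberg--Feller cannot be applied to $Y_b$. Truncating $w$ does not repair this on its own. The discarded part $B^{-1/2}\sum_b(w-w\wedge K)(r_b)Z_b$ still has infinite variance under the alternative, so you cannot show its negligibility there.

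The step you single out as the main obstacle is actually harmless. In the mean, $w$ enters only linearly, so Cauchy--Schwarz against $\mathbb E_0w^2$ suffices. The real difficulty arises wherever $w^2$ is integrated against the unbounded likelihood ratio.

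\textbf{Missing idea: do all the limit theory under the null and transfer it by contiguity.} This is what the paper does:
\begin{itemize}
\item It shows that $A_B=(\sigma_w\sqrt B)^{-1}\sum_bw(r_b)Z_b$ and $\Delta_B=B^{-1/2}\sum_b\xi_b$ are jointly asymptotically normal under $\mathbb P_0$ with correlation $\kappa_w$. This needs only $\mathbb E_0w^2<\infty$.
\item It notes that the full likelihood ratio is $\Lambda_B=\exp(h\Delta_B-h^2/2)$, with $\mathbb E_0\Lambda_B^2=e^{h^2}$.
\item It moves characteristic functions to the alternative by truncating $\Lambda_B$ at level $K$. This is a hand-made Le Cam third lemma.
\item It handles the denominator with $\mathbb P_{h/\sqrt B}(E)\le e^{h^2/2}\mathbb P_0(E)^{1/2}$.
\end{itemize}
Your parenthetical remark about Le Cam's third lemma points to exactly this route. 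But it has to replace steps 1, 3 and 4, not just the mean expansion.

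Alternatively, you can keep your truncation. For the bounded weights $w\wedge K$, your direct Lindeberg argument is valid. You would then control the remainder and the denominator under $\mathbb P_0$, by Chebyshev and the weak law respectively, and transfer both to the alternative with the same Cauchy--Schwarz bound.
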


\begin{proof}[Proof of Proposition~\ref{proof:fixedpower}]
Under the null, write $(\xi_b,\eta_b)=(r_b\sin\Theta_b,
r_b\cos\Theta_b)$. The phases are independent and uniform,
and independent of the radii. By Corollary~\ref{proof:cal-angular} and independence
of the phases, conditional on the radii, the scores
$Z_b^{\mathrm{loc}}=Z(\Theta_b)$ are independent
standard normal variables. Normalizing the
coefficients $w(r_b)$ proves the conditional null assertion
on a positive denominator.
Set $\sigma_w^2=\mathbb E_0w(R)^2\in(0,\infty)$ and define
\[
\begin{gathered}
 A_B=\frac{1}{\sigma_w\sqrt B}
       \sum_{b=1}^B w(r_b)Z_b^{\mathrm{loc}},
 \qquad
 \Delta_B=\frac{1}{\sqrt B}\sum_{b=1}^B\xi_b,\\
 Q_B=\frac{1}{B\sigma_w^2}\sum_{b=1}^B w(r_b)^2.
\end{gathered}
\]
Independence of $R$ and $\Theta$ gives
\[
 \mathbb E_0\{w(R)Z(\Theta)\}=0,
 \qquad
 \mathbb E_0\{w(R)^2Z(\Theta)^2\}=\sigma_w^2.
\]
Also, Cauchy--Schwarz and $\mathbb E_0R^2=2$ imply
$\mathbb E_0|Rw(R)|\le\sqrt2\,\sigma_w<\infty$.
Thus the normalized summands in $(A_B,\Delta_B)$ have zero
means, unit variances and, by \eqref{proof:pow-beta}, covariance
\begin{equation*}
 \frac{\mathbb E_0\{w(R)Z(\Theta)R\sin\Theta\}}{\sigma_w}
 =\beta\frac{\mathbb E_0\{Rw(R)\}}{\sigma_w}
 =\kappa_w.
\end{equation*}
The multivariate central limit theorem and the law of large
numbers give, under the null,
\begin{equation}\label{proof:pow-null-joint-limit}
 (A_B,\Delta_B)\Rightarrow(A,\Delta),
 \qquad
 \begin{pmatrix}A\\\Delta\end{pmatrix}
 \sim\mathcal N_2\left(
 \mathbf0,
 \begin{pmatrix}1&\kappa_w\\\kappa_w&1\end{pmatrix}
 \right),
 \qquad Q_B\xrightarrow{\mathbb P_0}1.
\end{equation}
For $\theta_B=h/\sqrt B$, the likelihood ratio relative to
the null is
\begin{equation*}
 \Lambda_B
 =\exp\left\{\theta_B\sum_{b=1}^B\xi_b
                    -\frac{B\theta_B^2}{2}\right\}
 =\exp\{h\Delta_B-h^2/2\}.
\end{equation*}
Since $\Delta_B\sim\mathcal N(0,1)$ under the null,
$\mathbb E_0\Lambda_B=1$ and
$\mathbb E_0\Lambda_B^2=e^{h^2}$.
Write $\Lambda=\exp(h\Delta-h^2/2)$.
For fixed $t\in\mathbb R$ and $K>0$, the integrand below is
a bounded continuous function of $(A_B,\Delta_B)$, so
\[
 \mathbb E_0\{e^{itA_B}\min(\Lambda_B,K)\}
 \longrightarrow
 \mathbb E\{e^{itA}\min(\Lambda,K)\}.
\]
Removing the truncation changes the left-hand side by at most
\[
 \mathbb E_0\{\Lambda_B\mathbf1_{\{\Lambda_B>K\}}\}
 \le\frac{\mathbb E_0\Lambda_B^2}{K}
 =\frac{e^{h^2}}{K}.
\]
The same bound holds for the right-hand side because
$\mathbb E\Lambda^2=e^{h^2}$. Letting $B\to\infty$ and
then $K\to\infty$ yields
\[
 \mathbb E_{h/\sqrt B}e^{itA_B}
 =\mathbb E_0(e^{itA_B}\Lambda_B)
 \longrightarrow
 \mathbb E e^{itA+h\Delta-h^2/2}
 =\exp\{it h\kappa_w-t^2/2\},
\]
where the last equality follows from the joint Gaussian law
in \eqref{proof:pow-null-joint-limit}.
Hence $A_B\Rightarrow\mathcal N(h\kappa_w,1)$ under the
local alternative.
For every $\varepsilon>0$, Cauchy--Schwarz also gives
\[
 \mathbb P_{h/\sqrt B}(|Q_B-1|>\varepsilon)
 \le e^{h^2/2}
      \mathbb P_0(|Q_B-1|>\varepsilon)^{1/2}
 \longrightarrow0.
\]
Thus $Q_B\to1$ under the local alternative and
$\mathbb P_{h/\sqrt B}(Q_B=0)\to0$.
Applying Slutsky's theorem to $N_{B,w}=A_B/\sqrt{Q_B}$
on $\{Q_B>0\}$ proves \eqref{proof:fixedlocalpower}.

Finally, the bound $\mathbb E_0\{Rw(R)\}\le\sqrt2\,\sigma_w$
gives $\kappa_w\le\sqrt2\,\beta=\kappa_R$.
Since $\beta>0$, equality holds precisely when equality holds
in Cauchy--Schwarz, namely when $w(R)=cR$ almost surely
for some $c>0$.
\end{proof}

We next consider recovery of the signal line from the
normalized paired scatter under \eqref{eq:spherical},
with $m\ge2$ and deterministic positive scales $s_b$.
Recall that $\mathbf C_b$, $\mathbf S_b$ and
$\widehat{\mathbf A}_B$ are defined in
\eqref{eq:direction} and
\eqref{eq:directionaverage}.

\proofanchor{proof:rec-theorem}
\begin{proof}[Proof of Theorem~\ref{thm:recovery}]
Fix a block and write $t=\rho/s_b$. An orthogonal change of coordinates
sends $\mathbf v$ to $\mathbf e_1$. Dividing both vectors by $s_b$
gives unit noise variance and signal magnitude $t=\rho/s_b$.
This scaling leaves $\mathbf S_b$ unchanged because the factor
$s_b^{-2}$ cancels from its numerator and denominator. Write $\varepsilon_j$ and
$\zeta_j$ for the standard Gaussian noise coordinates in this block
and define
\[
 A=(\varepsilon_1+t)^2+\zeta_1^2,
 \qquad
 D=\sum_{j=2}^m(\varepsilon_j^2+\zeta_j^2).
\]
The variables in the two sums are independent, so $A$ and $D$ are
independent. For $u\ge0$, completing the square in the Gaussian
integral gives
\begin{align*}
 \E e^{-u(\varepsilon_1+t)^2}
 &=\frac1{\sqrt{2\pi}}
   \int_{\mathbb R}e^{-u(x+t)^2-x^2/2}\,dx\\
 &=(1+2u)^{-1/2}
   \exp\left\{-\frac{ut^2}{1+2u}\right\}.
\end{align*}
Multiplying by $\E e^{-u\zeta_1^2}$ and expanding the exponential yields
\begin{align*}
 \E e^{-uA}
 &=(1+2u)^{-1}\exp\left\{-\frac{ut^2}{1+2u}\right\}\\
 &=e^{-t^2/2}\sum_{n=0}^\infty
   \frac{(t^2/2)^n}{n!}(1+2u)^{-(n+1)},
\end{align*}
where the last factor is the Laplace transform of $\chi^2_{2n+2}$.
Uniqueness of Laplace transforms therefore permits the representation
\[
 N\sim\operatorname{Poisson}(t^2/2),
 \qquad A\mid N=n\sim\chi^2_{2n+2},
 \qquad D\sim\chi^2_{2m-2},
\]
with $D$ independent of $(A,N)$.
Let $a=n+1$ and $d=m-1$. Conditional on $N=n$,
the change of variables $x=A/(A+D)$ and $z=A+D$ has Jacobian $z$, with
their joint density being
\[
 \frac{x^{a-1}(1-x)^{d-1}z^{a+d-1}e^{-z/2}}
      {2^{a+d}\Gamma(a)\Gamma(d)},
 \qquad 0<x<1,\quad z>0.
\]
Integrating over $z$ and then over $x$ gives
\begin{equation}\label{proof:rec-beta-ratio}
 \E\left(\frac A{A+D}\,\middle|\,N=n\right)
 =\frac{\Gamma(a+d)}{\Gamma(a)\Gamma(d)}
   \int_0^1x^a(1-x)^{d-1}\,dx
 =\frac{n+1}{n+m}.
\end{equation}
For $i\ne j$, choose $k\in\{i,j\}$ with $k\ge2$.
Replacing $(\varepsilon_k,\zeta_k)$ by
$(-\varepsilon_k,-\zeta_k)$ preserves the joint noise
distribution and the denominator of $\mathbf S_b$,
but reverses the numerator of $(\mathbf S_b)_{ij}$.
Hence
\[
  \mathbb E\{(\mathbf S_b)_{ij}\}
  =-\mathbb E\{(\mathbf S_b)_{ij}\}=0.
\]
Similarly, swapping any two coordinates among $2,\ldots,m$
preserves their joint distribution and exchanges the
corresponding diagonal entries. Therefore
\[
  \mathbb E\{(\mathbf S_b)_{22}\}
  =\cdots=
  \mathbb E\{(\mathbf S_b)_{mm}\}.
\]
Since $\operatorname{tr}(\mathbf S_b)=1$ almost surely,
\begin{align*}
 \E S_{b,11}
 &=\E\frac{N+1}{N+m}
   =\frac1m+\frac{m-1}{m}\E\frac N{N+m},\\
 \E S_{b,jj}
 &=\frac{1-\E S_{b,11}}{m-1}
   =\frac1m\left(1-\E\frac N{N+m}\right),
 \qquad j\ge2.
\end{align*}
Rotating back proves \eqref{eq:populationdirection}.
Averaging over blocks gives the stated expression for
$\mathbb E\widehat{\mathbf A}_B$. Its eigenvalues along
$\mathbf v$ and on its orthogonal complement are, respectively,
\[
  \lambda_{\parallel}
  =\frac{1+(m-1)\bar g_B}{m},
  \qquad
  \lambda_{\perp}
  =\frac{1-\bar g_B}{m}.
\]
If $\rho>0$, then $t_b=\rho/s_b>0$, so
$\mathbb P(N_{t_b}\ge1)>0$ and $g_m(t_b)>0$ for every block.
Thus $\bar g_B>0$, and $\operatorname{span}(\mathbf v)$
is the unique leading eigenspace, with eigenvalue gap
$\lambda_{\parallel}-\lambda_{\perp}=\bar g_B$.
Write
\[
 \mathbf A_0=\E\widehat{\mathbf A}_B,
 \qquad \mathbf E_B=\widehat{\mathbf A}_B-\mathbf A_0,
 \qquad \mathbf P_\perp=\mathbf I_m-\mathbf v\mathbf v^{\mathsf T}.
\]
Let $\widehat\lambda$ be the largest eigenvalue of
$\widehat{\mathbf A}_B$, and set
$\lambda_\perp=(1-\bar g_B)/m$.
The variational definition of $\widehat\lambda$ gives
\[
 \widehat\lambda
 \ge \mathbf v^{\mathsf T}\widehat{\mathbf A}_B\mathbf v
 \ge \lambda_\perp+\bar g_B-\|\mathbf E_B\|_{\mathrm{op}}.
\]
Projecting its eigenvector equation onto $\mathbf v^\perp$ gives
\begin{equation*}
 (\widehat\lambda-\lambda_\perp)
     \mathbf P_\perp\widehat{\mathbf a}_B
 =\mathbf P_\perp\mathbf E_B\widehat{\mathbf a}_B.
\end{equation*}
On $\|\mathbf E_B\|_{\mathrm{op}}<\bar g_B/2$, it follows that
\[
 \ell(\widehat{\mathbf a}_B,\mathbf v)
 =\|\mathbf P_\perp\widehat{\mathbf a}_B\|^2
 \le \frac{4\|\mathbf E_B\|_{\mathrm{op}}^2}{\bar g_B^2}.
\]
On the complementary event the right-hand side is at least one,
so the same inequality holds there. Independence across blocks and
zero means of $\mathbf S_b-\E\mathbf S_b$ give
\begin{align*}
 \E\|\mathbf E_B\|_F^2
 &=\frac1{B^2}\sum_{b,c}
   \E\operatorname{tr}\{(\mathbf S_b-\E\mathbf S_b)
                         (\mathbf S_c-\E\mathbf S_c)\}\\
 &=\frac1{B^2}\sum_b\E\|\mathbf S_b-\E\mathbf S_b\|_F^2
 =\mathcal V_B.
\end{align*}
Moreover,
\[
 \E\|\mathbf S_b-\E\mathbf S_b\|_F^2
 =\E\operatorname{tr}(\mathbf S_b^2)
       -\operatorname{tr}\{(\E\mathbf S_b)^2\}
 \le\E\{\operatorname{tr}(\mathbf S_b)\}^2=1.
\]
Use $\|\mathbf E_B\|_{\mathrm{op}}\le\|\mathbf E_B\|_F$ and
$0\le\ell\le1$. Orthogonal invariance makes $\mathcal V_B$
independent of the unit vector $\mathbf v$. Taking expectations and
the supremum over $\mathbf v$ proves \eqref{eq:directionrisk}.
\end{proof}

\proofanchor{proof:rec-expansion}
\begin{proof}[Proof of Remark~\ref{rem:gapexp}]

Write  $\mu=t^2/2$. The Poisson series gives
\begin{equation*}
 g_m(t)=\frac{\mu e^{-\mu}}{m+1}
       +e^{-\mu}\sum_{n=2}^\infty\frac{\mu^n}{n!}\frac n{m+n}.
\end{equation*}
For $0\le\mu\le1$,
\[
 |\mu e^{-\mu}-\mu|\le\mu^2,
 \qquad
 \PP(N\ge2)\le\frac12\E\{N(N-1)\}=\frac{\mu^2}{2}.
\]
Consequently,
\begin{equation*}
 \left|g_m(t)-\frac{t^2}{2(m+1)}\right|
 \le\left(\frac1{m+1}+\frac12\right)\frac{t^4}{4}.
\end{equation*}
This proves \eqref{eq:gapexp}. For fixed $m$ and unit scales,
$\bar g_B/\rho_B^2\longrightarrow1/\{2(m+1)\}$ as
$\rho_B\to0$. Substitution into \eqref{eq:directionrisk} yields
\begin{equation}\label{proof:rec-weak-risk}
 \sup_{\|\mathbf v\|=1}\E_{\rho_B,\mathbf v}
 \ell(\widehat{\mathbf a}_B,\mathbf v)
 \le \min\left\{1,\frac{C_m}{B\rho_B^4}\right\}
\end{equation}
for all sufficiently large $B$. Thus $B\rho_B^4\to\infty$ gives
uniformly consistent line recovery.
\end{proof}

\paragraph{Information in the complete invariant.}
\label{proof:info-section}

To establish the necessity claim in
Remark~\ref{rem:gapexp}, we study the information
available to any estimator based on the invariant matrices.
We consider model \eqref{eq:spherical} in a fixed
dimension $m\ge2$, with unit noise scales $s_b=1$
for all blocks.  Let $Q_{B,\rho,\mathbf v}$ denote the law
of $(\mathbf M_1,\ldots,\mathbf M_B)$, and let $Q_{B,0}$
denote its null law, which does not depend on $\mathbf v$.
The estimators considered below use only these matrices
and an auxiliary variable $V$, which is constant or
independent with a parameter-free distribution.

\label{proof:info-likelihood-section}

We first derive the likelihood ratio of
$Q_{B,\rho,\mathbf v}$ relative to $Q_{B,0}$ by averaging
the full Gaussian likelihood over each block's uniform
null phase. This calculation introduces the integral
\begin{equation}\label{proof:info-bessel}
  I_0(t)
  =\frac1{2\pi}\int_0^{2\pi}e^{t\cos u}\,du
  =\sum_{j=0}^{\infty}\frac{(t^2/4)^j}{(j!)^2}.
\end{equation}
Since $(j!)^2\ge j!$, comparison with the exponential series gives
\begin{equation}\label{proof:info-bessel-bound}
  1\le I_0(t)
  \le\sum_{j=0}^{\infty}\frac{(t^2/4)^j}{j!}
  =e^{t^2/4},
  \qquad t\in\mathbb R.
\end{equation}

\begin{proposition}\label{proof:info-likelihood}
The likelihood ratio of the complete invariant is
\begin{equation}\label{proof:info-product-likelihood}
 \mathcal L_{B,\rho,\mathbf v}
 =\frac{dQ_{B,\rho,\mathbf v}}{dQ_{B,0}}
 =\prod_{b=1}^B e^{-\rho^2/2}
       I_0\left(\rho\sqrt{\mathbf v^{\mathsf T}\mathbf C_b\mathbf v}\right).
\end{equation}
For unit vectors $\mathbf v,\mathbf w$,
\begin{equation}\label{proof:info-cross-moment}
 \E_0\{\mathcal L_{B,\rho,\mathbf v}
        \mathcal L_{B,\rho,\mathbf w}\}
 =I_0(\rho^2\mathbf v^{\mathsf T}\mathbf w)^B.
\end{equation}
In particular,
\begin{equation}\label{proof:info-tv}
 d_{\mathrm{TV}}(Q_{B,\rho,\mathbf v},Q_{B,0})
 \le\delta_B,
 \qquad
 \delta_B=\frac12\sqrt{I_0(\rho^2)^B-1}
 \le\frac12\sqrt{e^{B\rho^4/4}-1}.
\end{equation}
The convention is $d_{\mathrm{TV}}(P,Q)=\sup_A|P(A)-Q(A)|$.
\end{proposition}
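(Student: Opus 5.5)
The likelihood ratio of the invariant is the alternative density of the full data, averaged over the unobserved common phase of each block. Concretely, by \eqref{proof:cal-orbit-kernel}, under the null the conditional law of $\mathbf G_b$ given $\mathbf M_b$ is uniform on the orbit $\{e^{it}\mathbf h(\mathbf M_b)\}$. For any bounded invariant functional $F(\mathbf M_1,\ldots,\mathbf M_B)$ we have
\[
\E_{\rho,\mathbf v}F=\E_0\Bigl\{F\prod_b p_b(\mathbf G_b)\Bigr\},
\]
where $p_b$ is the Gaussian density ratio of block $b$. Conditioning on $\mathcal I_B$ and using independence of the null phases across blocks (the product version \eqref{proof:cal-product-haar}), we get
\[
dQ_{B,\rho,\mathbf v}/dQ_{B,0}=\prod_b \frac1{2\pi}\int_0^{2\pi} p_b\bigl(e^{it}\mathbf h(\mathbf M_b)\bigr)\,dt .
\]

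With unit scales, the single-block ratio is $p_b=\exp(\rho\mathbf v^\top\mathbf X_b-\rho^2/2)$. Write $\mathbf h=\mathbf h(\mathbf M_b)=\mathbf y_0+i\mathbf x_0$. The imaginary part of $e^{it}\mathbf h$ is $\mathbf x_0\cos t+\mathbf y_0\sin t$, so
\[
\mathbf v^\top\mathbf X_b=\operatorname{Im}(e^{it}\mathbf v^\top\mathbf h)=|\mathbf v^\top\mathbf h|\sin(t+\psi).
\]
Averaging over $t$ gives $I_0(\rho|\mathbf v^\top\mathbf h|)$ by \eqref{proof:info-bessel}. Since $|\mathbf v^\top\mathbf G_b|^2=\mathbf v^\top\operatorname{Re}(\mathbf M_b)\mathbf v=\mathbf v^\top\mathbf C_b\mathbf v$, this yields \eqref{proof:info-product-likelihood}.

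For the cross moment, undo the phase averaging in one factor. The ratio $\mathcal L_{B,\rho,\mathbf w}$ is $\mathcal I_B$-measurable, so
\[
\E_0(\mathcal L_{\mathbf v}\mathcal L_{\mathbf w})=\E_0\Bigl\{\prod_b e^{\rho\mathbf v^\top\mathbf X_b-\rho^2/2}\,\mathcal L_{\mathbf w}\Bigr\}=\E_{\rho,\mathbf v}\mathcal L_{\mathbf w}.
\]
By independence across blocks it then suffices to compute, for one block with $\mathbf X\sim\mathcal N(\rho\mathbf v,\mathbf I)$ and $\mathbf Y\sim\mathcal N(\mathbf0,\mathbf I)$, the quantity $\E\{e^{-\rho^2/2}I_0(\rho|\mathbf w^\top\mathbf G|)\}$. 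Here we undo the phase average in the other direction: using the series or integral form,
\[
I_0(\rho|z|)=\frac1{2\pi}\int e^{\rho\operatorname{Im}(e^{is}z)}\,ds .
\]
Put $z=\mathbf w^\top\mathbf G$ and apply Fubini. Then $\operatorname{Im}(e^{is}\mathbf w^\top\mathbf G)=\mathbf w^\top(\mathbf X\cos s+\mathbf Y\sin s)$ is Gaussian with mean $\rho\,\mathbf w^\top\mathbf v\cos s$ and variance one. Its exponential moment is $\exp(\rho^2\mathbf w^\top\mathbf v\cos s+\rho^2/2)$. Multiplying by $e^{-\rho^2/2}$ and averaging over $s$ gives $I_0(\rho^2\mathbf v^\top\mathbf w)$ per block, proving \eqref{proof:info-cross-moment}.

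The total-variation bound is then the standard chi-square bound:
\[
2d_{\mathrm{TV}}\le\{\E_0(\mathcal L-1)^2\}^{1/2}=\{\E_0\mathcal L^2-1\}^{1/2},
\]
with $\E_0\mathcal L^2=I_0(\rho^2)^B$ from the case $\mathbf w=\mathbf v$. The final inequality follows from \eqref{proof:info-bessel-bound}, which gives $I_0(\rho^2)^B\le e^{B\rho^4/4}$.

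The main obstacle is making the first step rigorous. We must justify that the invariant law under the alternative is absolutely continuous with respect to the null invariant law, with density equal to the conditional null expectation $\E_0(\prod_bp_b\mid\mathcal I_B)$. This follows from the change of measure for the full data together with the kernel identity \eqref{proof:cal-orbit-kernel}, extended to nonnegative integrands by the monotone class argument already given. Some care is also needed with the measurable choice of the phase shift $\psi$, which is harmless because it is absorbed by translation invariance on the circle.
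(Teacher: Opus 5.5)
Your proposal is correct and follows essentially the paper's route. You condition the full Gaussian likelihood ratio on the invariant via the uniform orbit kernel, compute the cross moment from the phase-integral representation of $I_0$ using Tonelli and the Gaussian exponential moment, and finish with the Cauchy--Schwarz bound $d_{\mathrm{TV}}\le\frac12\{\E_0\mathcal L_{B,\rho,\mathbf v}^2-1\}^{1/2}$. The only difference is in the cross moment: you use $\mathcal I_B$-measurability of $\mathcal L_{B,\rho,\mathbf w}$ to write $\E_0(\mathcal L_{B,\rho,\mathbf v}\mathcal L_{B,\rho,\mathbf w})=\E_{\rho,\mathbf v}\mathcal L_{B,\rho,\mathbf w}$ and then need a single phase integral under the alternative, whereas the paper expands both ratios as phase integrals and evaluates a double integral under the null using the covariance $\mathbf v^{\mathsf T}\mathbf w\cos(u-z)$.
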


\begin{proof}
For one block the full Gaussian likelihood ratio is
\[
 \mathcal L^{\mathrm{full}}_{\rho,\mathbf v}
 =\exp\{\rho\mathbf v^{\mathsf T}\mathbf X-\rho^2/2\}.
\]
For every bounded measurable function $f$ of the entire complex
matrix $\mathbf M$,
\begin{align*}
 \E_{\rho,\mathbf v}f(\mathbf M)
 &=\E_0\{f(\mathbf M)\mathcal L^{\mathrm{full}}_{\rho,\mathbf v}\}\\
 &=\E_0\left[f(\mathbf M)
    \E_0\{\mathcal L^{\mathrm{full}}_{\rho,\mathbf v}\mid\mathbf M\}\right].
\end{align*}
Applying
\eqref{proof:cal-orbit-kernel} under the null to
$\mathcal L^{\mathrm{full}}_{\rho,\mathbf v}$ gives
\begin{align}
 \E_0\{\mathcal L^{\mathrm{full}}_{\rho,\mathbf v}\mid\mathbf M\}
 &=\frac{e^{-\rho^2/2}}{2\pi}
   \int_0^{2\pi}
   e^{\rho\sqrt{\mathbf v^{\mathsf T}\mathbf C\mathbf v}\sin u}\,du\notag\\
 &=e^{-\rho^2/2}I_0\left(\rho\sqrt{\mathbf v^{\mathsf T}\mathbf C\mathbf v}\right).
 \label{proof:info-one-likelihood}
\end{align}
Thus \eqref{proof:info-one-likelihood} is a likelihood ratio on the
space of complete invariant matrices, whether or not that space has
a Lebesgue density. Independence across blocks gives
\eqref{proof:info-product-likelihood}.

Set $F_{\mathbf v}(u)=\mathbf v^{\mathsf T}
(\mathbf X\cos u+\mathbf Y\sin u)$ under the null. Joint Gaussianity gives
\[
 \E_0F_{\mathbf v}(u)^2=1,
 \qquad
 \E_0\{F_{\mathbf v}(u)F_{\mathbf w}(z)\}
 =\mathbf v^{\mathsf T}\mathbf w\cos(u-z).
\]
Using the phase integral for each likelihood ratio, Tonelli's theorem
and the Gaussian exponential integral yields
\begin{align}
 \E_0\{\mathcal L_{1,\rho,\mathbf v}\mathcal L_{1,\rho,\mathbf w}\}
 &=\frac{e^{-\rho^2}}{(2\pi)^2}
   \int_0^{2\pi}\int_0^{2\pi}
   \E_0 e^{\rho F_{\mathbf v}(u)+\rho F_{\mathbf w}(z)}\,du\,dz\notag\\
 &=\frac1{(2\pi)^2}\int_0^{2\pi}\int_0^{2\pi}
   e^{\rho^2\mathbf v^{\mathsf T}\mathbf w\cos(u-z)}\,du\,dz\notag\\
 &=I_0(\rho^2\mathbf v^{\mathsf T}\mathbf w).
 \label{proof:info-one-cross}
\end{align}
Taking products proves \eqref{proof:info-cross-moment}.
Since $\E_0\mathcal L_{B,\rho,\mathbf v}=1$, Cauchy--Schwarz gives
\[
 d_{\mathrm{TV}}(Q_{B,\rho,\mathbf v},Q_{B,0})
 =\frac12\E_0|\mathcal L_{B,\rho,\mathbf v}-1|
 \le\frac12\sqrt{\E_0\mathcal L_{B,\rho,\mathbf v}^2-1}.
\]
Combining \eqref{proof:info-cross-moment} with
\eqref{proof:info-bessel-bound} proves \eqref{proof:info-tv}.
Appending independent $V$ leaves each likelihood ratio and these
bounds unchanged.
\end{proof}

\label{proof:info-risk-section}

We now use Proposition~\ref{proof:info-likelihood} to
derive lower bounds on the recovery risk of any estimator
based on the invariant information. Define
\[
  \mathcal R_B^*(\rho)
  =
  \inf_{\widetilde{\mathbf a}_B}
  \sup_{\|\mathbf v\|=1}
  \mathbb E_{\rho,\mathbf v}
  \left\{1-
    (\widetilde{\mathbf a}_B^{\mathsf T}\mathbf v)^2
  \right\},
\]
where the infimum is over all $\mathcal I_B$-measurable
unit-vector estimators, allowing independent randomization.
Thus $\mathcal R_B^*(\rho)$ is the smallest attainable
worst-case risk using the invariant information.
The benchmark $1-1/m$ is the risk of an independent
uniformly distributed direction.
Recall $\delta_B$ from \eqref{proof:info-tv}, and write
$[x]_+=\max(x,0)$.
\begin{theorem}\label{proof:info-risk}
For every $B$ and $\rho\ge0$,
\begin{equation}\label{proof:info-risk-null}
 \left[1-\frac1m-\delta_B\right]_+
 \le\mathcal R_B^*(\rho)\le1-\frac1m.
\end{equation}
For $0\le\rho\le1$, also
\begin{equation}\label{proof:info-risk-two-point}
 \mathcal R_B^*(\rho)
 \ge\frac14\exp\{-c_0B\rho^4\},
 \qquad c_0=\frac12e^{3/4}.
\end{equation}
When $B\rho_B^4\to0$, the first bound implies
$\mathcal R_B^*(\rho_B)\to1-1/m$, so no invariant estimator
can asymptotically improve on the random-direction
benchmark in worst-case risk.
The second bound shows that uniformly consistent recovery
requires $B\rho_B^4\to\infty$.
Together with Theorem~\ref{thm:recovery}, this condition
is therefore necessary and sufficient when $m$ is fixed,
$s_b=1$ and $\rho_B\to0$.
\end{theorem}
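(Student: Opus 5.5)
The upper bound in \eqref{proof:info-risk-null} needs only one estimator. I will take $\widetilde{\mathbf a}_B$ uniform on the unit sphere, generated from the independent randomization $V$. For every unit $\mathbf v$ this gives $\E(\widetilde{\mathbf a}_B^{\mathsf T}\mathbf v)^2=1/m$, so the worst-case risk equals $1-1/m$.

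For the lower bound in \eqref{proof:info-risk-null}, fix any $\mathcal I_B$-measurable estimator and write its loss as $f(\mathbf M_1,\ldots,\mathbf M_B,V;\mathbf v)\in[0,1]$. Since $V$ is independent and parameter-free, Proposition~\ref{proof:info-likelihood} gives
\[
 \E_{\rho,\mathbf v}f\ge\E_0f-\delta_B.
\]
Here $d_{\mathrm{TV}}$ is unchanged by appending $V$, and $f$ is bounded in $[0,1]$. Next I average over $\mathbf v$ uniform on the sphere, using the fact that a supremum is at least an average. Under $Q_{B,0}$ the estimator's law does not depend on $\mathbf v$, so Fubini gives
\[
 \int\E_0\{1-(\widetilde{\mathbf a}_B^{\mathsf T}\mathbf v)^2\}\,d\sigma(\mathbf v)
 =\E_0\{1-\widetilde{\mathbf a}_B^{\mathsf T}(\mathbf I_m/m)\widetilde{\mathbf a}_B\}
 =1-\frac1m.
\]
Hence $\sup_{\mathbf v}\E_{\rho,\mathbf v}\ell\ge1-1/m-\delta_B$. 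Taking the infimum over estimators and the positive part proves \eqref{proof:info-risk-null}. The limit claim then follows from $\delta_B\le\frac12\{e^{B\rho^4/4}-1\}^{1/2}\to0$.

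For \eqref{proof:info-risk-two-point} I plan a Le Cam two-point argument. Take $\mathbf v=\mathbf e_1$ and $\mathbf w=\mathbf e_2$, which are orthogonal and available because $m\ge2$. For any unit $\mathbf a$,
\[
 \ell(\mathbf a,\mathbf v)+\ell(\mathbf a,\mathbf w)=2-a_1^2-a_2^2\ge1.
\]
So for any estimator,
\[
 \max\{\E_{\mathbf v}\ell,\E_{\mathbf w}\ell\}
 \ge\frac12\int\{\ell_{\mathbf v}+\ell_{\mathbf w}\}\min(dQ_{\mathbf v},dQ_{\mathbf w})
 \ge\frac12\int\min(dQ_{\mathbf v},dQ_{\mathbf w}).
\]
I then apply the standard bound
\[
 \int\min(dP,dQ)\ge\frac12\Bigl(\int\sqrt{dP\,dQ}\Bigr)^2\ge\frac12\exp\{-\mathrm{KL}\}
\]
or a $\chi^2$ variant. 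This is the main obstacle: one needs a divergence between $Q_{B,\rho,\mathbf e_1}$ and $Q_{B,\rho,\mathbf e_2}$ of order $B\rho^4$ with the stated constant.

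My first attempt is the Bhattacharyya route. Proposition~\ref{proof:info-likelihood} gives the product structure and \eqref{proof:info-cross-moment} gives $\E_0\{\mathcal L_{\mathbf v}\mathcal L_{\mathbf w}\}=I_0(0)^B=1$ for orthogonal directions. That value is not itself an affinity, so I will instead bound the per-block $\chi^2$ divergence, using
\[
 \chi^2(Q_{\mathbf v}\|Q_{\mathbf w})\le\E_0\{(\mathcal L_{\mathbf v}-\mathcal L_{\mathbf w})^2/\mathcal L_{\mathbf w}\},
 \qquad
 \mathcal L_{\mathbf w}\ge e^{-\rho^2/2}
\]
by $I_0\ge1$. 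Expanding with \eqref{proof:info-cross-moment} gives
\[
 \chi^2_{\text{block}}\le e^{\rho^2/2}\{2I_0(\rho^2)-2\}\le e^{\rho^2/2}\cdot2(e^{\rho^4/4}-1)\le e^{3/4}\rho^4/2
\]
for $\rho\le1$, since $e^{x}-1\le xe^{x}$ with $x\le1/4$. Then $\mathrm{KL}\le\log(1+\chi^2)\le\chi^2$ per block, and tensorization gives $\mathrm{KL}\le\frac12e^{3/4}B\rho^4=c_0B\rho^4$. Combining this with the two-point bound yields $\mathcal R_B^*\ge\frac14e^{-c_0B\rho^4}$, which I expect matches the stated constant. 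Consistency therefore forces $B\rho_B^4\to\infty$, and together with Remark~\ref{rem:gapexp} this gives the necessity and sufficiency conclusion.
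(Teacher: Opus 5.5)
Your proposal is correct and follows essentially the same route as the paper. The ingredients match: a total-variation comparison with the null plus an averaging argument for the lower bound in \eqref{proof:info-risk-null}, an independent uniform direction for the upper bound, and a Le Cam two-point bound for \eqref{proof:info-risk-two-point} using the per-block estimate $\mathrm{KL}\le\chi^2\le e^{\rho^2/2}\{2I_0(\rho^2)-2\}\le c_0\rho^4$, tensorized over blocks. The only cosmetic difference is that you average over the uniform sphere, whereas the paper averages over $\mathbf e_1,\ldots,\mathbf e_m$ using $\sum_j(\widetilde{\mathbf a}_B^{\mathsf T}\mathbf e_j)^2=1$; both work because the averaging design has second moment $\mathbf I_m/m$.
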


\begin{proof}
For $0\le f\le1$, Fubini's theorem gives
\[
 |\E_Pf-\E_Qf|
 =\left|\int_0^1\{P(f>t)-Q(f>t)\}\,dt\right|
 \le d_{\mathrm{TV}}(P,Q).
\]
Apply this to
$f_j=(\widetilde{\mathbf a}_B^{\mathsf T}\mathbf e_j)^2$ at
$\mathbf v=\mathbf e_j$. Since $\sum_{j=1}^mf_j=1$ pointwise,
\begin{align}
 \sup_{\|\mathbf v\|=1}\E_{\rho,\mathbf v}
       \ell(\widetilde{\mathbf a}_B,\mathbf v)
 &\ge\frac1m\sum_{j=1}^m\E_{\rho,\mathbf e_j}(1-f_j)\notag\\
 &\ge1-\frac1m\sum_{j=1}^m\E_0 f_j-\delta_B
 =1-\frac1m-\delta_B.
 \label{proof:info-coordinate-risk}
\end{align}
Since the risk is nonnegative, \eqref{proof:info-coordinate-risk}
yields the lower bound in \eqref{proof:info-risk-null}.
An independent vector
$\mathbf U=\mathbf Z/\|\mathbf Z\|$, with
$\mathbf Z\sim\mathcal N_m(\mathbf0,\mathbf I_m)$, satisfies
$\E\mathbf U\mathbf U^{\mathsf T}=\mathbf I_m/m$ and has risk
$1-1/m$ at every $\mathbf v$. This proves both bounds in
\eqref{proof:info-risk-null}.

For \eqref{proof:info-risk-two-point}, take orthogonal unit vectors
$\mathbf v,\mathbf w$. Write $p$ and $q$ for their one-block invariant
likelihoods relative to $Q_{1,0}$. By
\eqref{proof:info-one-likelihood}, $q\ge e^{-\rho^2/2}$. The
inequality $\log x\le x-1$ and \eqref{proof:info-one-cross} give
\begin{align*}
 K(Q_{1,\rho,\mathbf v},Q_{1,\rho,\mathbf w})
 &=\E_0\{p\log(p/q)\}\\
 &\le\E_0\{(p-q)^2/q\}\\
 &\le e^{\rho^2/2}\E_0(p-q)^2
 =2e^{\rho^2/2}\{I_0(\rho^2)-1\}\\
 &\le\frac12e^{3/4}\rho^4,
 \qquad 0\le\rho\le1,
\end{align*}
where the final step uses $e^x-1\le xe^x$ for $x\ge0$ and
\eqref{proof:info-bessel-bound}. Here $K(P,Q)=\E_P\log(dP/dQ)$.
The logarithm of a product likelihood is a sum, so
\[
 K(Q_{B,\rho,\mathbf v},Q_{B,\rho,\mathbf w})\le c_0B\rho^4.
\]
Let $P,Q$ be these two product laws, with the common independent
randomization adjoined when used. Let $p_B,q_B$ be their densities
relative to a common dominating measure. Their affinity satisfies
\[
 A=\int\sqrt{p_Bq_B}
   =\E_P\exp\left\{-\frac12\log\frac{p_B}{q_B}\right\}
   \ge e^{-K(P,Q)/2}
\]
by Jensen's inequality. Cauchy--Schwarz gives
\begin{align*}
 d_{\mathrm{TV}}(P,Q)
 &=\frac12\int|\sqrt{p_B}-\sqrt{q_B}|
                    (\sqrt{p_B}+\sqrt{q_B})\\
 &\le\sqrt{1-A^2}.
\end{align*}
Thus
$1-d_{\mathrm{TV}}(P,Q)\ge A^2/2\ge e^{-K(P,Q)}/2$.
For every unit $\mathbf a$,
$\ell(\mathbf a,\mathbf v)+\ell(\mathbf a,\mathbf w)\ge1$.
It follows that
\begin{align*}
 \E_P\ell(\widetilde{\mathbf a}_B,\mathbf v)
 +\E_Q\ell(\widetilde{\mathbf a}_B,\mathbf w)
 &\ge\int\min(p_B,q_B)
   \{\ell(\widetilde{\mathbf a}_B,\mathbf v)
     +\ell(\widetilde{\mathbf a}_B,\mathbf w)\}\\
 &\ge1-d_{\mathrm{TV}}(P,Q)
 \ge\frac12e^{-c_0B\rho^4}.
\end{align*}
The maximum of the two risks is at least half their sum. Taking the
infimum over estimators proves \eqref{proof:info-risk-two-point}.

If $B\rho_B^4\to0$, \eqref{proof:info-tv} makes $\delta_B\to0$,
and \eqref{proof:info-risk-null} gives the claimed limit. If
$B\rho_B^4$ fails to diverge, it has a bounded subsequence on which
\eqref{proof:info-risk-two-point} is bounded away from zero.
Uniform consistency is then impossible. Conversely, \eqref{proof:rec-weak-risk} supplies an invariant estimator with
uniformly vanishing risk when $B\rho_B^4\to\infty$.
\end{proof}

\paragraph{Conditional phase moments.}
\proofanchor{proof:pow-section}

We next derive conditional moment bounds under the alternative
for the signed-normal scores $Z_b^{\mathrm{loc}}(\mathbf a)$
defined in \eqref{eq:angular}. Lemma~\ref{proof:pow-moments}
provides the bounds used in Theorems~\ref{thm:bridge}
and~\ref{thm:blendpower}.
Under \eqref{eq:spherical}, allow deterministic
positive scales $s_b$, write $\rho=\rho_B$, and let
$\mathbf a$ be any $\mathcal I_B$-measurable unit vector.
Define
\begin{equation*}
  r_b^2=\mathbf a^\top\mathbf C_b\mathbf a,\qquad
  u_b^2=\mathbf v^\top\mathbf C_b\mathbf v,\qquad
  \gamma_b=\rho/s_b^2,\qquad
  \omega_b=\gamma_bu_b.
\end{equation*}
Here $r_b$ and $u_b$ are the paired radii along the selected
direction $\mathbf a$ and the signal direction $\mathbf v$,
respectively, with $r_b,u_b\le\sqrt{E_b}$.
The quantity $\omega_b$ bounds the phase-dependent term
$\gamma_b\mathbf v^\top\mathbf X_b$ in the log likelihood
and controls the remainders in Lemma~\ref{proof:pow-moments}.
For $r_b>0$, write $Z_b=Z_b^{\mathrm{loc}}(\mathbf a)$.
Zero projected pairs are omitted.

\begin{lemma}\label{proof:pow-moments}
Conditional on $\mathcal I_B$, the scores $Z_b$
from blocks with $r_b>0$ are independent.
For an absolute constant $K$ and
$c_j=\mathbb E|W|^j$, $W\sim\mathcal N(0,1)$, they satisfy
\begin{align}
 \left|\mathbb E_\rho(Z_b\mid\mathcal I_B)
  -\beta\gamma_b\frac{\mathbf a^\top\mathbf C_b\mathbf v}{r_b}\right|
 &\le K\omega_b^3e^{\omega_b},
 \label{proof:pow-conditional-mean-bound}\\
 |\operatorname{Var}_\rho(Z_b\mid\mathcal I_B)-1|
 &\le K\omega_b^2e^{2\omega_b},
 \label{proof:pow-conditional-variance-bound}\\
 \mathbb E_\rho\bigl(
 |Z_b-\mathbb E_\rho(Z_b\mid\mathcal I_B)|^3
 \mid\mathcal I_B\bigr)
 &\le8c_3e^{\omega_b}.
 \label{proof:pow-conditional-third-bound}
\end{align}
The bounds hold simultaneously for every selected unit direction.
\end{lemma}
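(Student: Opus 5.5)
The approach is to write the alternative law of each block as a reweighting of its null law, so that conditional moments given $\mathcal I_B$ become phase integrals against a tilted density.

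\textbf{Step 1 (conditional phase density).} Under \eqref{eq:spherical} the full likelihood ratio of block $b$ relative to the null with the same scale $s_b$ is $\exp\{\gamma_b\mathbf v^\top\mathbf X_b-\rho^2/(2s_b^2)\}$. Since $\mathbf a$ is $\mathcal I_B$-measurable, the canonical representation gives
\[
\xi_b=r_b\sin(\Theta_b+\psi_b),\qquad \mathbf v^\top\mathbf X_b=u_b\sin(\Theta_b+\chi_b),
\]
with phases $\psi_b,\chi_b$ that are $\mathcal I_B$-measurable. Applying \eqref{proof:cal-orbit-kernel} blockwise, with \eqref{proof:cal-product-haar} for the product structure, shows the following. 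Conditional on $\mathcal I_B$ and under the alternative, the $\Theta_b$ are independent, and $\Theta_b$ has density proportional to $\exp\{\omega_b\sin(t+\chi_b)\}$, normalized by $2\pi I_0(\omega_b)$. Since $Z_b$ depends only on $\Theta_b$ once we condition on $\mathcal I_B$, conditional independence follows immediately.

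\textbf{Step 2 (mean expansion).} We write $Z_b=Z(\Theta_b+\psi_b)$, with $Z$ the angular score from \eqref{proof:pow-beta}. Then
\[
\mathbb E_\rho(Z_b\mid\mathcal I_B)=\frac{\mathbb E\{Z(\Theta+\psi_b)e^{\omega_b\sin(\Theta+\chi_b)}\}}{I_0(\omega_b)}
\]
with $\Theta$ uniform. Expanding the exponential as $1+\omega\sin(\cdot)+R_2$ with $|R_2|\le \tfrac12\omega^2\sin^2(\cdot)e^{\omega}$, the zeroth-order term vanishes. The first-order term is $\omega_b\,\mathbb E\{Z(\Theta)\sin(\Theta+\chi_b-\psi_b)\}$. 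Expanding the sine and using \eqref{proof:pow-beta} and \eqref{proof:pow-angular-moments}, this equals $\beta\omega_b\cos(\chi_b-\psi_b)$.

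Next we identify this with the target. We have $r_bu_b\cos(\chi_b-\psi_b)=\operatorname{Re}\{(\mathbf a^\top\mathbf G_b)\overline{(\mathbf v^\top\mathbf G_b)}\}=\mathbf a^\top\mathbf C_b\mathbf v$. Hence $\beta\omega_b\cos(\chi_b-\psi_b)=\beta\gamma_b\mathbf a^\top\mathbf C_b\mathbf v/r_b$.

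The second-order term needs care, since it is a priori $O(\omega^2)$ while the claimed remainder is $O(\omega^3)$. The half-turn antisymmetry $Z(\theta+\pi)=-Z(\theta)$ from \eqref{proof:pow-angular-moments} makes every even-order term of the expansion vanish. Concretely, the even part $\cosh(\omega\sin)$ integrates against the odd-under-half-turn function $Z$ to zero. So I will split $e^{\omega s}=\cosh(\omega s)+\sinh(\omega s)$. The $\cosh$ part contributes zero. For the $\sinh$ part, $|\sinh x-x|\le |x|^3e^{|x|}/6$, combined with Cauchy--Schwarz and $\mathbb EZ^2=1$, leaves $K\omega^3e^{\omega}$. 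Dividing by $I_0(\omega_b)\ge1$, and noting $I_0(\omega)-1=O(\omega^2)$, only multiplies the first-order term by $1+O(\omega^2)$. This gives \eqref{proof:pow-conditional-mean-bound}.

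\textbf{Step 3 (variance and third moment).} For the variance, consider $\mathbb E(Z_b^2\mid\mathcal I_B)$. Here $Z^2$ is half-turn invariant, so the $\sinh$ part vanishes. The $\cosh$ part satisfies $|\cosh x-1|\le x^2e^{|x|}/2$, which gives $|\mathbb E(Z_b^2\mid\mathcal I_B)-1|\le K\omega^2e^{\omega}$. The squared mean is $O(\omega^2e^{2\omega})$ by Step 2, since $|\beta\gamma\mathbf a^\top\mathbf C\mathbf v/r|\le\beta\omega$ by Cauchy--Schwarz. Together these yield \eqref{proof:pow-conditional-variance-bound}.

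For the third absolute central moment, bound the tilted density by $e^{\omega}$ and use $I_0\ge1$. Then $\mathbb E|Z_b-\mu|^3\le4(\mathbb E|Z_b|^3+|\mu|^3)$. The first term is at most $4c_3e^{\omega}$. For the second, Jensen's inequality under the tilted law gives $|\mu|^3\le\mathbb E|Z_b|^3\le c_3e^{\omega}$, which yields the factor $8c_3e^{\omega_b}$.

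\textbf{Uniformity and the main obstacle.} All bounds depend on $\mathbf a$ only through $\mathcal I_B$-measurable phases and never use its specific form. The constants are absolute, so the bounds hold simultaneously for every selected direction. The main obstacle is the cubic remainder in the mean. It needs the half-turn cancellation to kill the $\omega^2$ term, together with a careful treatment of the normalization $I_0(\omega_b)$.
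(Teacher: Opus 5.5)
Your proposal is correct and is essentially the paper's argument: you tilt the conditionally uniform null phase by the likelihood ratio to get the density $e^{\gamma_bf_b}/I_0(\omega_b)$, kill the even-order terms with the half-turn antisymmetry of $Z$, read off $\beta$ from \eqref{proof:pow-beta}, and bound the third moment by $|x-y|^3\le4(|x|^3+|y|^3)$ plus Jensen; the paper only differs in choosing the representative $\mathbf H_b=\mathbf M_b\mathbf a/r_b$, which makes your $\psi_b=0$.

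One point needs tightening. The lemma requires an absolute $K$ valid for all $\omega_b\ge0$, and $\omega_b$ is unbounded in the applications. So write $I_0(\omega)-1\le\omega^2e^{\omega}/2$ rather than $O(\omega^2)$. Also bound the mean directly as $|\bar z_b|\le c_1\omega_be^{\omega_b}$, using $\langle Z\rangle_0=0$ and $|e^x-1|\le|x|e^{|x|}$. Squaring your Step~2 bound $\beta\omega_b+K\omega_b^3e^{\omega_b}$ gives order $\omega_b^6e^{2\omega_b}$, not $\omega_b^2e^{2\omega_b}$, when $\omega_b$ is large.
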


\begin{proof}
Fix $\mathbf a$ and put $J(\mathbf a)=\{b\mid r_b>0\}$.
For $b\in J(\mathbf a)$, define
\[
 \mathbf H_b=\frac{\mathbf M_b\mathbf a}{r_b}
            =\mathbf p_b+i\mathbf q_b.
\]
Since $|\mathbf G_b^*\mathbf a|=r_b$, direct multiplication gives
\begin{equation}\label{proof:pow-orbit-coordinates}
 \begin{gathered}
 \mathbf H_b\mathbf H_b^*=\mathbf M_b,\qquad
 \mathbf a^\top\mathbf H_b=r_b,\qquad
 \mathbf C_b\mathbf a=r_b\mathbf p_b,\\
 \mathbf X_b(\theta)=\mathbf p_b\sin\theta+
                         \mathbf q_b\cos\theta,\qquad
 \mathbf a^\top\mathbf X_b(\theta)=r_b\sin\theta.
 \end{gathered}
\end{equation}
Equation~\eqref{eq:angular} therefore becomes
\[
 Z_b(\theta)=Z(\theta)
 =\Phi^{-1}\{1/2+\pi^{-1}\arcsin(\sin\theta)\}.
\]
With $f_b(\theta)=\mathbf v^\top\mathbf X_b(\theta)$,
\[
 u_b^2=(\mathbf v^\top\mathbf p_b)^2+
       (\mathbf v^\top\mathbf q_b)^2,
 \qquad |\gamma_bf_b(\theta)|\le\gamma_bu_b=\omega_b.
\]
All representatives $\mathbf H_b$ and the set $J(\mathbf a)$
are fixed conditional on $\mathcal I_B$.

The full Gaussian likelihood ratio and its conditional null mean are
\[
 \Lambda_B=\prod_{b=1}^B
   e^{\gamma_b\mathbf v^\top\mathbf X_b-\rho^2/(2s_b^2)},
 \qquad
 \mathbb E_0(\Lambda_B\mid\mathcal I_B)
 =\prod_{b=1}^B e^{-\rho^2/(2s_b^2)}I_0(\omega_b),
\]
by \eqref{proof:cal-product-haar} and \eqref{proof:info-bessel}.
For bounded measurable $F$,
\[
 \mathbb E_\rho(F\mid\mathcal I_B)
 =\frac{\mathbb E_0(F\Lambda_B\mid\mathcal I_B)}
        {\mathbb E_0(\Lambda_B\mid\mathcal I_B)}.
\]
After integrating out phases from blocks outside $J(\mathbf a)$,
\begin{equation}\label{proof:pow-conditional-density}
 \mathbb P_\rho\{(\Theta_b)_{b\in J(\mathbf a)}
                      \in d\boldsymbol\theta\mid\mathcal I_B\}
 =\prod_{b\in J(\mathbf a)}
   \frac{e^{\gamma_b f_b(\theta_b)}}{I_0(\omega_b)}
   \frac{d\theta_b}{2\pi}.
\end{equation}
The product in \eqref{proof:pow-conditional-density} gives conditional
independence of $Z_b=Z(\Theta_b)$, $b\in J(\mathbf a)$.

Write $\langle g\rangle_0=(2\pi)^{-1}\int_0^{2\pi}g(\theta)\,d\theta$.
Equations \eqref{proof:pow-beta}, \eqref{proof:pow-angular-moments}
and \eqref{proof:pow-orbit-coordinates} imply
\begin{equation}\label{proof:pow-score-signal}
 \langle r_bZ_b\mathbf X_b\rangle_0=\beta\mathbf C_b\mathbf a,
 \qquad
 \langle Z_b f_b\rangle_0=
 \beta\frac{\mathbf a^\top\mathbf C_b\mathbf v}{r_b}.
\end{equation}
Moreover, $\langle|Z_b|^j\rangle_0=c_j$, $c_2=1$, and
\[
 \begin{gathered}
 Z_b(\theta+\pi)=-Z_b(\theta),\qquad
 f_b(\theta+\pi)=-f_b(\theta),\\
 \langle Z_b\rangle_0=\langle Z_bf_b^2\rangle_0
 =\langle Z_b^2f_b\rangle_0=\langle f_b\rangle_0=0.
 \end{gathered}
\]
Using
$|e^x-1-x|\le x^2e^{|x|}/2$ and
$|e^x-1-x-x^2/2|\le |x|^3e^{|x|}/6$ gives
\begin{align*}
 |\langle Z_be^{\gamma_bf_b}\rangle_0
      -\gamma_b\langle Z_bf_b\rangle_0|
 &\le c_1\omega_b^3e^{\omega_b}/6,
 \\
 0\le I_0(\omega_b)-1
 &=\langle e^{\gamma_bf_b}-1-\gamma_bf_b\rangle_0
 \le \omega_b^2e^{\omega_b}/2,
 \\
 |\langle Z_b^2e^{\gamma_bf_b}\rangle_0-1|
 &\le \omega_b^2e^{\omega_b}/2.
\end{align*}
By \eqref{proof:pow-conditional-density}, for $j=1,2,3$,
\begin{equation*}
 \mathbb E_\rho(Z_b^j\mid\mathcal I_B)
 =\frac{\langle Z_b^je^{\gamma_bf_b}\rangle_0}{I_0(\omega_b)},
 \qquad
 \mathbb E_\rho(|Z_b|^j\mid\mathcal I_B)\le c_je^{\omega_b}.
\end{equation*}
In particular, with $\bar z_b=\mathbb E_\rho(Z_b\mid\mathcal I_B)$,
\begin{align}
 |\bar z_b-\gamma_b\langle Z_bf_b\rangle_0|
 &\le c_1\omega_b^3e^{\omega_b}/6
       +c_1\omega_b\{I_0(\omega_b)-1\}
 \le K\omega_b^3e^{\omega_b},
 \label{proof:pow-tilted-mean-error}\\
 |\bar z_b|
 &=\frac{|\langle Z_b(e^{\gamma_bf_b}-1)\rangle_0|}{I_0(\omega_b)}
 \le c_1\omega_be^{\omega_b},\notag\\
 |\operatorname{Var}_\rho(Z_b\mid\mathcal I_B)-1|
 &\le \omega_b^2e^{\omega_b}+\bar z_b^2
 \le (1+c_1^2)\omega_b^2e^{2\omega_b}.
 \label{proof:pow-tilted-variance-error}
\end{align}
Equations \eqref{proof:pow-score-signal} and
\eqref{proof:pow-tilted-mean-error} prove
\eqref{proof:pow-conditional-mean-bound}, while
\eqref{proof:pow-tilted-variance-error} proves
\eqref{proof:pow-conditional-variance-bound}.
Finally, $|x-y|^3\le4(|x|^3+|y|^3)$ and conditional Jensen's inequality give
\[
 \mathbb E_\rho(|Z_b-\bar z_b|^3\mid\mathcal I_B)
 \le8\mathbb E_\rho(|Z_b|^3\mid\mathcal I_B)
 \le8c_3e^{\omega_b},
\]
which proves \eqref{proof:pow-conditional-third-bound}.
The constant $K$ is independent of $\mathbf a$ and the invariant matrices.
\end{proof}

We now apply Lemma~\ref{proof:pow-moments} with $s_b=1$
and coefficients $r_b$ to prove the Radius approximation in
Theorem~\ref{thm:bridge}.

\proofanchor{proof:pow-bridge}
\begin{proof}[Proof of Theorem~\ref{thm:bridge}]
Here $m\ge2$ is fixed, $s_b=1$ for every block, and
$0\le\rho=\rho_B\to0$.
For an $\mathcal I_B$-measurable unit direction $\mathbf a$, define
\begin{equation*}
 \mathbf Q_B=\frac1B\sum_b\mathbf C_b,
 \qquad D_B=\sum_b r_b^2
          =B\mathbf a^\top\mathbf Q_B\mathbf a.
\end{equation*}
For $D_B>0$, write the Radius statistic and its conditional moments as
\[
\begin{gathered}
 N_B^{\mathrm{rad}}(\mathbf a)
 =\frac{\sum_{b\mid r_b>0}r_bZ_b}{\sqrt{D_B}},\\
 \mu_B(\mathbf a)=\mathbb E_\rho
       \{N_B^{\mathrm{rad}}(\mathbf a)\mid\mathcal I_B\},
 \qquad
 \sigma_B^2(\mathbf a)=\operatorname{Var}_\rho
       \{N_B^{\mathrm{rad}}(\mathbf a)\mid\mathcal I_B\}.
\end{gathered}
\]
For $\mu_j=\rho v_j$, the Gaussian coordinate moments give
\[
 \operatorname{Var}(X_{bj}X_{bk})=
 \begin{cases}
  1+\mu_j^2+\mu_k^2,&j\ne k,\\
  2+4\mu_j^2,&j=k,
 \end{cases}
 \qquad
 \operatorname{Var}(Y_{bj}Y_{bk})=1+\mathbf1_{\{j=k\}}.
\]
Independence of the blocks and of $\mathbf X_b$ and $\mathbf Y_b$
therefore yields
\begin{equation*}
\begin{aligned}
 \mathbb E_\rho\mathbf Q_B
 &=2\mathbf I_m+\rho^2\mathbf v\mathbf v^\top,\\
 \mathbb E_\rho\|\mathbf Q_B-\mathbb E_\rho\mathbf Q_B\|_F^2
 &=\frac1B\sum_{j,k}
   \{\operatorname{Var}(X_{bj}X_{bk})+
     \operatorname{Var}(Y_{bj}Y_{bk})\}\\
 &=\frac{2(m+1)(m+\rho^2)}{B}.
\end{aligned}
\end{equation*}
Thus
\[
 \|\mathbf Q_B-2\mathbf I_m\|_{\mathrm{op}}
 \le\|\mathbf Q_B-\mathbb E_\rho\mathbf Q_B\|_F+\rho^2
 =O_P(B^{-1/2}+\rho^2).
\]
Define
\[
 \mathcal E_B=
 \{\|\mathbf Q_B-2\mathbf I_m\|_{\mathrm{op}}\le1\}.
\]
Then $\mathbb P_\rho(\mathcal E_B)\to1$, and on $\mathcal E_B$,
$1\le\mathbf a^\top\mathbf Q_B\mathbf a\le3$ and $D_B\ge B$
for every unit direction. In addition, the Gaussian matrix
\[
 \mathbf F_B=[\mathbf X_1,\mathbf Y_1,\ldots,
              \mathbf X_B,\mathbf Y_B]\in\mathbb R^{m\times2B}
\]
has rank $m$ almost surely when $2B\ge m$. Since
$B\mathbf Q_B=\mathbf F_B\mathbf F_B^\top$,
\begin{equation}\label{proof:pow-positive-denominator}
 \mathbb P_\rho\left\{
   D_B=B\mathbf a^\top\mathbf Q_B\mathbf a
       =\|\mathbf F_B^\top\mathbf a\|^2>0
   \text{ for all }\|\mathbf a\|=1\right\}=1
 \qquad(2B\ge m).
\end{equation}
To control the remainder terms in Lemma~\ref{proof:pow-moments}
uniformly over $\mathbf a$, use $r_b,u_b\le\sqrt{E_b}$
and define, for fixed $p,q\ge0$,
\begin{equation}\label{proof:pow-energy-envelope}
 K_{p,q,B}=\frac1B\sum_b E_b^p e^{q\rho\sqrt{E_b}}.
\end{equation}
Let $W_b=(\|\boldsymbol\varepsilon_b\|^2+
\|\boldsymbol\zeta_b\|^2)^{1/2}$. Then $W_b^2\sim\chi^2_{2m}$
and $\sqrt{E_b}\le W_b+\rho\le W_b+1$ for $0\le\rho\le1$.
Consequently,
\[
 \mathbb E_\rho K_{p,q,B}
 \le\mathbb E\{(W_b+1)^{2p}e^{q(W_b+1)}\}<\infty.
\]
The expectation is finite because the density of $W_b$ is
proportional to $w^{2m-1}e^{-w^2/2}$ for $w>0$.
The bound does not depend on $B$ or $\rho\in[0,1]$,
so Markov's inequality gives $K_{p,q,B}=O_P(1)$.

Since $r_b$ and $D_B$ are $\mathcal I_B$-measurable,
\[
 \mu_B(\mathbf a)=\frac1{\sqrt{D_B}}
   \sum_{b\mid r_b>0}r_b\mathbb E_\rho(Z_b\mid\mathcal I_B).
\]
Here $\gamma_b=\rho$ and $\omega_b=\rho u_b$.
For $r_b=0$, $\mathbf C_b\mathbf a=\mathbf0$, so the matrix
sum in the leading mean may include every block.
Applying \eqref{proof:pow-conditional-mean-bound},
$\sum_b\mathbf C_b=B\mathbf Q_B$ and $r_b,u_b\le\sqrt{E_b}$
gives, on $\mathcal E_B$,
\begin{equation}\label{proof:pow-leading-mean}
\begin{aligned}
 &\left|\mu_B(\mathbf a)-\beta\sqrt B\rho\,
   \frac{\mathbf a^\top\mathbf Q_B\mathbf v}
        {\sqrt{\mathbf a^\top\mathbf Q_B\mathbf a}}\right|\\
 &\qquad\le\frac{K\rho^3}{\sqrt{D_B}}
       \sum_b r_bu_b^3e^{\rho u_b}\\
 &\qquad\le\frac{K\rho^3}{\sqrt B}
       \sum_b E_b^2e^{\rho\sqrt{E_b}}
       =K\sqrt B\rho^3K_{2,1,B}.
\end{aligned}
\end{equation}
Conditional independence in Lemma~\ref{proof:pow-moments} gives
\[
 \sigma_B^2(\mathbf a)
 =\frac1{D_B}\sum_{b\mid r_b>0}r_b^2
       \operatorname{Var}_\rho(Z_b\mid\mathcal I_B).
\]
Using $\sum_b r_b^2=D_B$ and
\eqref{proof:pow-conditional-variance-bound}, on $\mathcal E_B$,
\begin{equation}\label{proof:pow-aggregate-variance}
\begin{aligned}
 |\sigma_B^2(\mathbf a)-1|
 &\le\frac{K\rho^2}{D_B}
        \sum_b r_b^2u_b^2e^{2\rho u_b}\\
 &\le\frac{K\rho^2}{B}
        \sum_b E_b^2e^{2\rho\sqrt{E_b}}
   =K\rho^2K_{2,2,B}=O_P(\rho^2).
\end{aligned}
\end{equation}
Equation~\eqref{proof:pow-aggregate-variance} proves
\eqref{eq:bridgevar}.

To simplify the leading mean, set
$d=\mathbf a^\top\mathbf Q_B\mathbf a$ and
$c=\mathbf a^\top\mathbf v$.
On $\mathcal E_B$, $d\ge1$, $|c|\le1$ and
$|d-2|\le\|\mathbf Q_B-2\mathbf I_m\|_{\mathrm{op}}$.
Therefore
\begin{equation}\label{proof:pow-alignment-ratio}
\begin{aligned}
 \left|\frac{\mathbf a^\top\mathbf Q_B\mathbf v}{\sqrt d}
                    -\sqrt2c\right|
 &\le\frac{|\mathbf a^\top(\mathbf Q_B-2\mathbf I_m)
                         \mathbf v|}{\sqrt d}
       +|c|\left|\frac2{\sqrt d}-\sqrt2\right|\\
 &\le2\|\mathbf Q_B-2\mathbf I_m\|_{\mathrm{op}}.
\end{aligned}
\end{equation}
Combining \eqref{proof:pow-alignment-ratio} with
\eqref{proof:pow-leading-mean} and $\kappa_R=\sqrt2\beta$ yields
\begin{equation}\label{proof:pow-final-mean}
\begin{aligned}
 &|\mu_B(\mathbf a)-\kappa_R\sqrt B\rho\,
                       \mathbf a^\top\mathbf v|\\
 &\qquad\le2\beta\sqrt B\rho
                 \|\mathbf Q_B-2\mathbf I_m\|_{\mathrm{op}}
            +K\sqrt B\rho^3K_{2,1,B}\\
 &\qquad=O_P(\sqrt B\rho^3+\rho),
\end{aligned}
\end{equation}
which proves \eqref{eq:bridgemean}.
The bounds in \eqref{proof:pow-final-mean} and
\eqref{proof:pow-aggregate-variance} depend only on
$\mathbf Q_B$ and the energies $E_b$, so they hold
simultaneously for all selected unit directions.

For the conditional normal approximation, define
\[
 \mathcal G_B=\mathcal E_B\cap
       \{K\rho^2K_{2,2,B}\le1/2\},
\]
where $K$ is the constant in \eqref{proof:pow-aggregate-variance}.
Then $\mathbb P_\rho(\mathcal G_B)\to1$, and
$1/2\le\sigma_B(\mathbf a)\le2$ for every unit direction
on $\mathcal G_B$.
Write $\bar z_b=\mathbb E_\rho(Z_b\mid\mathcal I_B)$.
Apply the Berry--Esseen inequality
\citep[Theorem~0.1]{supp-Vershynin2026} to the conditionally
independent centered variables
$r_b(Z_b-\bar z_b)/\sqrt{D_B}$, $r_b>0$.
Their total conditional variance is $\sigma_B^2(\mathbf a)$.
By \eqref{proof:pow-conditional-third-bound}, on $\mathcal G_B$,
\begin{equation}\label{proof:pow-berry-esseen}
\begin{aligned}
 &\sup_x\left|
   \mathbb P_\rho\{N_B^{\mathrm{rad}}(\mathbf a)\le x
                   \mid\mathcal I_B\}
   -\Phi\left(\frac{x-\mu_B(\mathbf a)}{\sigma_B(\mathbf a)}\right)
   \right|\\
 &\qquad\le
   \frac{C}{D_B^{3/2}\sigma_B(\mathbf a)^3}
   \sum_{b\mid r_b>0}r_b^3
       \mathbb E_\rho(|Z_b-\bar z_b|^3\mid\mathcal I_B)\\
 &\qquad\le
   \frac{8Cc_3}{D_B^{3/2}\sigma_B(\mathbf a)^3}
       \sum_b E_b^{3/2}e^{\rho\sqrt{E_b}}
   \le C'B^{-1/2}K_{3/2,1,B}.
\end{aligned}
\end{equation}
For $1/2\le\sigma\le2$, differentiation gives
$\partial_\sigma\Phi(y/\sigma)=-(y/\sigma^2)\varphi(y/\sigma)$,
where $\varphi$ is the standard normal density. Since
$\sup_z|z|\varphi(z)<\infty$, for every $\mu\in\mathbb R$,
\begin{equation}\label{proof:pow-normal-scale}
 \sup_x\left|
   \Phi\left(\frac{x-\mu}{\sigma}\right)-\Phi(x-\mu)\right|
 \le C|\sigma-1|\le C|\sigma^2-1|.
\end{equation}
Define the conditional cdf error by
\[
 d_B(\mathbf a)
 =\sup_x\left|
   \mathbb P_\rho\{N_B^{\mathrm{rad}}(\mathbf a)\le x
                   \mid\mathcal I_B\}
    -\Phi\{x-\mu_B(\mathbf a)\}\right|.
\]
Combining \eqref{proof:pow-berry-esseen},
\eqref{proof:pow-normal-scale} and
\eqref{proof:pow-aggregate-variance}, on $\mathcal G_B$,
\begin{equation}\label{proof:pow-final-cdf}
 d_B(\mathbf a)
 \le C\{B^{-1/2}K_{3/2,1,B}+\rho^2K_{2,2,B}\}.
\end{equation}
Because $\mathbb P_\rho(\mathcal G_B)\to1$ and
$K_{3/2,1,B},K_{2,2,B}=O_P(1)$, it follows that
$d_B(\mathbf a)=O_P(B^{-1/2}+\rho^2)$ uniformly over
the direction rules, as asserted in \eqref{eq:bridgecdf}.

Since $0\le d_B(\mathbf a)\le1$,
\eqref{proof:pow-final-cdf} also implies
$\mathbb E_\rho d_B(\mathbf a)=o(1)$ uniformly over the
direction rules. Applying the conditional cdf bound to the
two rejection tails, using continuity of $\Phi$, gives
\begin{equation}\label{proof:pow-tail-approximation}
 \left|\mathbb P_\rho\{|N_B^{\mathrm{rad}}(\mathbf a)|>z_\alpha\}
       -\mathbb E_\rho H_\alpha\{\mu_B(\mathbf a)\}\right|
 \le2\mathbb E_\rho d_B(\mathbf a)=o(1),
\end{equation}
where $H_\alpha(t)=1-\Phi(z_\alpha-t)+\Phi(-z_\alpha-t)$
is the normal rejection probability used in \eqref{eq:powerbridge}.
If $\sqrt B\rho_B^3\to0$, then \eqref{proof:pow-final-mean}
makes the mean approximation error $o_P(1)$.
Since $0\le H_\alpha\le1$ and
$|H_\alpha'(t)|\le2/\sqrt{2\pi}$,
\begin{equation}\label{proof:pow-power-shift}
\begin{aligned}
 &\left|\mathbb E_\rho H_\alpha\{\mu_B(\mathbf a)\}
   -\mathbb E_\rho H_\alpha
       (\kappa_R\sqrt B\rho_B\mathbf a^\top\mathbf v)\right|\\
 &\qquad\le\mathbb E_\rho\min\left\{1,
    \frac2{\sqrt{2\pi}}
    |\mu_B(\mathbf a)-\kappa_R\sqrt B\rho_B
                         \mathbf a^\top\mathbf v|\right\}
   =o(1).
\end{aligned}
\end{equation}
The bounds in \eqref{proof:pow-tail-approximation} and
\eqref{proof:pow-power-shift} are uniform over the direction
rules. Taking $\mathbf a=\widehat{\mathbf a}_B$ gives
$|\mathbf a^\top\mathbf v|=\sqrt{1-L_B}$.
The identity $H_\alpha(t)=H_\alpha(-t)$ then yields
\eqref{eq:powerbridge}.
\end{proof}

\paragraph{Detection and direction recovery.}
\proofanchor{proof:det-section}

Equation~\eqref{eq:powerbridge} expresses power through
the random alignment $1-L_B$.
For the normalized-scatter direction, the following lemma
identifies its null distribution.
This distribution is then transferred to local alternatives
to derive the explicit power limits in
Corollary~\ref{cor:localpower} and
Remark~\ref{rem:blendlocalpower}.

\begin{lemma}\label{proof:det-null-line}
Under \eqref{eq:spherical} with deterministic positive scales
and $B\ge m$, $\widehat{\mathbf A}_B$ has distinct eigenvalues
almost surely. Under the null, for every fixed unit vector $\mathbf v$,
\begin{equation}\label{proof:det-null-beta}
 (\widehat{\mathbf a}_B^\top\mathbf v)^2
 \sim\operatorname{Beta}\left(\frac12,\frac{m-1}{2}\right).
\end{equation}
\end{lemma}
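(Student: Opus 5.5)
Two claims need proof: the eigenvalues of $\widehat{\mathbf A}_B$ are almost surely distinct, and the leading eigenvector is Haar-distributed under the null, which gives the Beta law. We work under \eqref{eq:spherical}, with or without signal. The joint law of the Gaussian vectors $\mathbf X_b,\mathbf Y_b$ is absolutely continuous on $\mathbb R^{2mB}$. We view $\widehat{\mathbf A}_B$ as a real-analytic function of the data on the full-measure set where every $E_b>0$. The discriminant of its characteristic polynomial is then a real-analytic function on a connected open set. A real-analytic function that is not identically zero vanishes only on a Lebesgue-null set. So it suffices to exhibit one data configuration with distinct eigenvalues.

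Because $B\ge m$, one exists. For $b\le m$, take $\mathbf X_b=c_b\mathbf e_b$ and $\mathbf Y_b=0$, so that $\mathbf S_b=\mathbf e_b\mathbf e_b^\top$. Then make the remaining blocks $b>m$ also rank-one, but with unequal multiplicities along the axes. For example, place all extra blocks on $\mathbf e_1$ and perturb slightly so that everything stays within the open set. The average $\widehat{\mathbf A}_B$ is then diagonal with entries proportional to the counts $1+(B-m),1,\ldots,1$. These entries are not all distinct, so this choice needs refinement. Instead, use $\mathbf X_b=\cos\alpha_b\,\mathbf e_b+\sin\alpha_b\,\mathbf e_{b+1}$ with generic angles. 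Alternatively, note that the set of configurations with distinct eigenvalues is open, so perturbing any starting point suffices once a single such point exists.

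The cleanest concrete witness is as follows. Take $\mathbf X_b=\mathbf e_b$ and $\mathbf Y_b=\epsilon_b\mathbf e_b$ with distinct small $\epsilon_b\neq0$ for $b\le m$. Then each $\mathbf S_b=\mathbf e_b\mathbf e_b^\top$. Adding the blocks $b>m$, each a unit vector along a fixed generic direction $\mathbf d$ with distinct coordinates, gives $\widehat{\mathbf A}_B=B^{-1}\{\mathbf I_m+(B-m)\mathbf d\mathbf d^\top\}$. Its eigenvalues are simple only in the direction of $\mathbf d$, so this witness fails when $m\ge3$.

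I expect that choosing an explicit witness is the main fiddly step. The fallback is to put the blocks $b\le m$ along $\mathbf e_b$ with weights $\epsilon_b$, using a two-dimensional $\mathbf G_b$ for each. Let $\mathbf X_b=\mathbf e_b$ and $\mathbf Y_b=\mathbf e_{b'}$ with a small coefficient, so that $\mathbf S_b=(\mathbf e_b\mathbf e_b^\top+\delta_b^2\mathbf e_{b'}\mathbf e_{b'}^\top)/(1+\delta_b^2)$. Choosing the $\delta_b$ distinct and small yields diagonal entries that are pairwise distinct perturbations of $1/B$. The extra blocks are put on $\mathbf e_1$ with tiny companions. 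Distinct diagonal entries of a diagonal matrix give distinct eigenvalues. Analyticity then gives the almost-sure statement under every model \eqref{eq:spherical}, because all of these models have a positive Gaussian density.

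For the Beta law, we use the null of \eqref{eq:spherical}. For any orthogonal $\mathbf O$, replacing $(\mathbf X_b,\mathbf Y_b)$ by $(\mathbf O\mathbf X_b,\mathbf O\mathbf Y_b)$ preserves the joint law, since the noise is spherical and $\rho=0$. This map sends $\widehat{\mathbf A}_B$ to $\mathbf O\widehat{\mathbf A}_B\mathbf O^\top$. Almost surely the leading eigenvalue is simple, so the leading eigen-line $\mathbf a\mathbf a^\top$ is well defined independently of the sign convention. It transforms as $\mathbf O\mathbf a\mathbf a^\top\mathbf O^\top$. The law of the random line is therefore an $O(m)$-invariant probability measure on projective space. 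By uniqueness of the invariant measure, this is the law of $\mathbf U\mathbf U^\top$ with $\mathbf U=\mathbf Z/\|\mathbf Z\|$ and $\mathbf Z\sim\mathcal N_m(\mathbf0,\mathbf I_m)$. Hence $(\widehat{\mathbf a}_B^\top\mathbf v)^2$ has the law of $Z_1^2/\|\mathbf Z\|^2$. Since $Z_1^2\sim\chi^2_1$ is independent of $\sum_{j\ge2}Z_j^2\sim\chi^2_{m-1}$, this ratio is $\operatorname{Beta}(1/2,(m-1)/2)$, which proves \eqref{proof:det-null-beta}.
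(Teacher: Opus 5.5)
Your two-step plan is the paper's. For the first claim, you show the discriminant of the characteristic polynomial is not identically zero: you use real analyticity on the connected open set $\{E_b>0\text{ for all }b\}$, while the paper clears denominators to obtain the polynomial matrix $\sum_b\mathbf C_b\prod_{c\ne b}E_c$. For the second, you combine null $O(m)$-invariance with a.s.\ simplicity of the top eigenvalue; you invoke uniqueness of the invariant probability on projective space, whereas the paper proves it by averaging over an independent Haar $\mathbf O$. Both of those steps are sound.

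The gap is the one concrete fact the first claim rests on: you never produce a verified data point at which $\widehat{\mathbf A}_B$ has simple spectrum.
\begin{itemize}
\item Two of your witnesses fail, as you note yourself.
\item The ``generic angles'' choice is never checked.
\item The openness remark presupposes a witness, so it is circular.
\item The final fallback leaves $b'$ unspecified and asserts that distinct small $\delta_b$ give pairwise distinct diagonal entries. This is false in general. Take $b'=b+1$ cyclically and $g_b=\delta_b^2/(1+\delta_b^2)$. Coordinate $j\ge2$ then receives $(1-g_j+g_{j-1})/B$. For $m=B=3$ and $(g_1,g_2,g_3)=(a,2a,3a)$, coordinates $2$ and $3$ both equal $(1-a)/3$, which is a repeated eigenvalue despite distinct $\delta_b$.
\item When $B>m$, the first entry is near $(B-m+1)/B$, not $1/B$, contrary to what you wrote.
\end{itemize}

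The repair is to let one axis absorb most of the mass. The paper takes $0<\epsilon_2<\cdots<\epsilon_m<1/4$ and sets
\begin{itemize}
\item $\mathbf X_b=\mathbf e_1$, $\mathbf Y_b=\mathbf0$ for $b=1$ and $b>m$;
\item $\mathbf X_j=\sqrt{1-\epsilon_j}\,\mathbf e_1$, $\mathbf Y_j=\sqrt{\epsilon_j}\,\mathbf e_j$ for $2\le j\le m$.
\end{itemize}
Then every $E_b=1$, so the point lies in your open set, and $B\widehat{\mathbf A}_B=\operatorname{diag}(B-\sum_j\epsilon_j,\epsilon_2,\ldots,\epsilon_m)$. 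Its entries are distinct because $B-\sum_j\epsilon_j>B-(m-1)/4>1/4$.

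Your own scheme also works with a suitable choice of $b'$:
\begin{itemize}
\item $\mathbf X_1=\mathbf e_1$, $\mathbf Y_1=\mathbf0$;
\item $\mathbf X_b=\mathbf e_1$, $\mathbf Y_b=\mathbf0$ for $b>m$;
\item $\mathbf X_b=\mathbf e_b$, $\mathbf Y_b=\delta_b\mathbf e_1$ for $2\le b\le m$.
\end{itemize}
Coordinates $j\ge2$ then receive $(1-g_j)/B<1/B$, while coordinate $1$ receives at least $1/B$. Hence distinct nonzero $|\delta_j|$ suffice. Either witness uses $B\ge m$, and once one is supplied your argument is complete.
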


\begin{proof}
We first show that $\widehat{\mathbf A}_B$ has distinct
eigenvalues almost surely. To remove the denominators
in \eqref{eq:directionaverage}, define
\[
  \mathbf H_B
  =B\left(\prod_{b=1}^B E_b\right)\widehat{\mathbf A}_B
  =\sum_{b=1}^B\mathbf C_b\prod_{c\ne b}E_c.
\]
Since $E_b>0$ almost surely, $\mathbf H_B$ has a repeated
eigenvalue exactly when $\widehat{\mathbf A}_B$ does.
Moreover, every entry of $\mathbf H_B$ is a polynomial
in the observation coordinates.
Let $D(\mathbf H_B)$ be the discriminant of
$\det(x\mathbf I_m-\mathbf H_B)$. The quantity $D(\mathbf H_B)$
is a polynomial in the $2mB$ observation coordinates, and
\[
 \{\widehat{\mathbf A}_B\text{ has a repeated eigenvalue}\}
 =\{D(\mathbf H_B)=0\}
 \quad\text{on }\{E_b>0\text{ for all }b\}.
\]
For $0<\epsilon_2<\cdots<\epsilon_m<1/4$, choose
\[
 \begin{aligned}
 \mathbf X_b&=\mathbf e_1,\quad \mathbf Y_b=\mathbf0,
 &&b=1\text{ or }b>m,\\
 \mathbf X_j&=\sqrt{1-\epsilon_j}\,\mathbf e_1,\quad
 \mathbf Y_j=\sqrt{\epsilon_j}\,\mathbf e_j,
 &&2\le j\le m.
 \end{aligned}
\]
At these coordinates, $E_b=1$ for all $b$ and
\[
 \mathbf H_B=\operatorname{diag}
 \left(B-\sum_{j=2}^m\epsilon_j,\epsilon_2,\ldots,\epsilon_m\right),
 \qquad D(\mathbf H_B)\ne0.
\]
Thus $D(\mathbf H_B)$ is a nonzero polynomial, whose zero
set has Lebesgue measure zero. The joint Gaussian law has a density,
so $\mathbb P_{\rho,\mathbf v}\{D(\mathbf H_B)=0\}=0$.

Under the null, $s_b^2$ cancels from $\mathbf S_b$ in
\eqref{eq:direction}. For every orthogonal matrix $\mathbf O$,
\(
 \widehat{\mathbf A}_B\overset d=
 \mathbf O\widehat{\mathbf A}_B\mathbf O^\top.
\)
The leading eigenvalue is simple almost surely, so the leading
projection matrix $\widehat{\mathbf P}_B=
\widehat{\mathbf a}_B\widehat{\mathbf a}_B^\top$ satisfies
\(
 \widehat{\mathbf P}_B\overset d=
 \mathbf O\widehat{\mathbf P}_B\mathbf O^\top.
\)
Take $\mathbf O$ uniform on the orthogonal group and independent of
$\widehat{\mathbf P}_B$. For each fixed unit vector $\mathbf u$,
$\mathbf O\mathbf u\overset d=\mathbf Z/\|\mathbf Z\|$, where
$\mathbf Z\sim\mathcal N_m(\mathbf0,\mathbf I_m)$. Hence
\[
 \widehat{\mathbf P}_B\overset d=
 \frac{\mathbf Z\mathbf Z^\top}{\|\mathbf Z\|^2},
 \qquad
 (\widehat{\mathbf a}_B^\top\mathbf v)^2
 \overset d=\frac{Z_1^2}{Z_1^2+\sum_{j=2}^mZ_j^2}.
\]
Since $Z_1^2\sim\chi^2_1$ is independent of
$\sum_{j=2}^mZ_j^2\sim\chi^2_{m-1}$, the change of variables
used to obtain \eqref{proof:rec-beta-ratio}, with shape parameters
$1/2$ and $(m-1)/2$, gives \eqref{proof:det-null-beta}.
\end{proof}

We combine \eqref{proof:det-null-beta} with the invariant likelihood
bound \eqref{proof:info-tv} and the conditional mean and variance
in Theorem~\ref{thm:bridge} to obtain the three detection regimes.

\proofanchor{proof:det-corollary}
\label{proof:det-local}\label{proof:det-consistency}
\begin{proof}[Proof of Corollary~\ref{cor:localpower}]
Note that $s_b=1$ and  $m\ge2$ is fixed. Let $\rho_B\to0$.
Write $\mathcal A_B=\{|N_B^{\mathrm{rad}}|>z_\alpha\}$.

\emph{(i)} Suppose $B\rho_B^2\to0$. The full-data likelihood satisfies
\begin{equation*}
 \Lambda_B^{\mathrm{full}}
 =\exp\left\{\rho_B\sum_b\mathbf v^\top\mathbf X_b-B\rho_B^2/2\right\},
 \qquad
 \mathbb E_0\{(\Lambda_B^{\mathrm{full}})^2\}=e^{B\rho_B^2}.
\end{equation*}
For $B\ge m$, \eqref{proof:pow-positive-denominator} gives
$D_B>0$ almost surely.
By Theorem~\ref{thm:joint}, $\mathbb P_0(\mathcal A_B)=\alpha$.
Therefore
\[
 \left|\mathbb P_{\rho_B,\mathbf v}(\mathcal A_B)-\alpha\right|
 \le\tfrac12\mathbb E_0|\Lambda_B^{\mathrm{full}}-1|
 \le\tfrac12\sqrt{e^{B\rho_B^2}-1}\longrightarrow0.
\]

\emph{(ii)} Suppose $\rho_B=h/\sqrt B$. Equation~\eqref{proof:info-tv} gives
\[
 d_{\mathrm{TV}}(Q_{B,\rho_B,\mathbf v},Q_{B,0})
 \le\tfrac12\sqrt{e^{h^4/(4B)}-1}\longrightarrow0.
\]
Because $V$ is independent and parameter-free, the total variation
distance is unchanged upon adjoining $V$. Applying the measurable
map $(\mathbf M_1,\ldots,\mathbf M_B,V)\mapsto1-L_B$ and
\eqref{proof:det-null-beta} yields
\begin{equation}\label{proof:det-alignment-transfer}
 d_{\mathrm{TV}}\left\{
  \mathcal L_{\rho_B,\mathbf v}(1-L_B),
  \operatorname{Beta}\left(\tfrac12,\tfrac{m-1}{2}\right)\right\}
 \le\delta_B\longrightarrow0.
\end{equation}
Since $\sqrt B\rho_B^3=h^3/B\to0$, \eqref{eq:powerbridge}
and \eqref{proof:det-alignment-transfer} give
\[
 \mathbb P_{\rho_B,\mathbf v}(\mathcal A_B)
 =\mathbb E_{\rho_B,\mathbf v}
   H_\alpha\{\kappa_Rh\sqrt{1-L_B}\}+o(1)
 \longrightarrow\mathbb E H_\alpha(\kappa_Rh\sqrt U),
 \qquad U\sim\operatorname{Beta}\left(\tfrac12,\tfrac{m-1}{2}\right).
\]

\emph{(iii)} Suppose $\nu_B=\sqrt B\rho_B\to\infty$. Define
\[
 \mu_B=\mathbb E_{\rho_B,\mathbf v}(N_B^{\mathrm{rad}}\mid\mathcal I_B),
 \qquad
 \sigma_B^2=\operatorname{Var}_{\rho_B,\mathbf v}
 (N_B^{\mathrm{rad}}\mid\mathcal I_B).
\]
Equation~\eqref{eq:bridgevar} gives $\sigma_B^2=1+o_P(1)$.
For $K>0$, conditional Chebyshev's inequality gives
\begin{equation}\label{proof:det-tight}
 \mathbb P_{\rho_B,\mathbf v}(|N_B^{\mathrm{rad}}-\mu_B|>K)
 \le\mathbb P_{\rho_B,\mathbf v}(\sigma_B^2>2)+2/K^2,
 \qquad N_B^{\mathrm{rad}}-\mu_B=O_P(1).
\end{equation}

If $B\rho_B^4\to\infty$, \eqref{eq:directionrisk} and
\eqref{eq:gapexp} imply
$|\widehat{\mathbf a}_B^\top\mathbf v|\xrightarrow{P}1$.
Dividing \eqref{eq:bridgemean} by $\nu_B$ gives
\begin{equation}\label{proof:det-relative-mean}
 \frac{\mu_B}{\nu_B}
 =\kappa_R\widehat{\mathbf a}_B^\top\mathbf v
       +O_P(\rho_B^2+B^{-1/2}),
 \qquad \frac{|\mu_B|}{\nu_B}\xrightarrow{P}\kappa_R>0.
\end{equation}

If $\sup_B B\rho_B^4=M<\infty$, then
\eqref{proof:info-cross-moment} and \eqref{proof:info-bessel-bound} give
$\mathbb E_0\mathcal L_{B,\rho_B,\mathbf v}^2\le e^{M/4}$.
For $K>0$, the event
$E_{B,K}=\{\nu_B|\widehat{\mathbf a}_B^\top\mathbf v|\le K\}$
belongs to $\mathcal I_B$. By Cauchy--Schwarz and
\eqref{proof:det-null-beta},
\begin{equation}\label{proof:det-small-alignment}
\begin{aligned}
 \mathbb P_{\rho_B,\mathbf v}(E_{B,K})
 &=\mathbb E_0(\mathcal L_{B,\rho_B,\mathbf v}\mathbf1_{E_{B,K}})\\
 &\le e^{M/8}
    \left\{\mathbb P\left(U\le K^2/\nu_B^2\right)\right\}^{1/2}
 \longrightarrow0.
\end{aligned}
\end{equation}
Here $\mathbb P(U=0)=0$ and
$\sqrt B\rho_B^3\le\sqrt M\rho_B\to0$.
Consequently, \eqref{eq:bridgemean} gives
\[
 \mu_B=\kappa_R\nu_B\widehat{\mathbf a}_B^\top\mathbf v+o_P(1),
 \qquad |\mu_B|\xrightarrow{P}\infty.
\]
Together, \eqref{proof:det-relative-mean} and
\eqref{proof:det-small-alignment} imply $|\mu_B|\xrightarrow{P}\infty$
for any sequence with $\nu_B\to\infty$. Indeed, every subsequence
has a further subsequence on which $B\rho_B^4$ is bounded or tends
to infinity. For each fixed $K>0$, \eqref{proof:det-tight} now gives
\[
 \mathbb P_{\rho_B,\mathbf v}(\mathcal A_B^c)
 \le\mathbb P_{\rho_B,\mathbf v}(|\mu_B|\le z_\alpha+K)
     +\mathbb P_{\rho_B,\mathbf v}(|N_B^{\mathrm{rad}}-\mu_B|>K),
 \qquad
 \limsup_{B\to\infty}\mathbb P_{\rho_B,\mathbf v}(\mathcal A_B^c)
 \le2/K^2.
\]
Letting $K\to\infty$ proves $\mathbb P_{\rho_B,\mathbf v}(\mathcal A_B)\to1$.

If also $B\rho_B^4\to0$, then $0\le L_B\le1$,
\eqref{proof:info-tv}, and $\mathbb E_0L_B=1-1/m$ from
\eqref{proof:det-null-beta} yield
\begin{equation*}
 \left|\mathbb E_{\rho_B,\mathbf v}L_B-
       \left(1-\frac1m\right)\right|
 \le\delta_B\longrightarrow0.
\end{equation*}
\end{proof}

\paragraph{Unequal scales and Blend.}
\proofanchor{proof:blend-section}

Theorem~\ref{thm:blendpower} extends the Radius calculation to
unequal scales. We first analyze coefficients
$r_b(p+q\widehat\pi_b)$, then optimize their leading power coefficient
and substitute the estimates in \eqref{eq:blendweights}.
Throughout the proof, $m\ge3$ is fixed,
$0<s_-\le s_b\le s_+<\infty$ and $\rho=\rho_B\to0$.
Write
\[
 d=2m,\qquad \tau=(d-2)/d,\qquad
 \widehat\pi_b=(d-2)/E_b,\qquad
 \epsilon_B=B^{-1/2}+\rho^2.
\]
The quantities $J_{+,B}$, $J_{-,B}$ and $\mathcal H_B$ are defined in
\eqref{eq:blendpopulation}.

\begin{proof}[Proof of Theorem~\ref{thm:blendpower}]
\label{proof:blend-matrices}
Recall $E_b$ and $\mathbf C_b$ from
\eqref{eq:pairedenergy} and \eqref{eq:direction}.
Under the null in \eqref{eq:spherical}, spherical
symmetry implies that
\[
  \mathbf U_b
  =\frac{1}{\sqrt{E_b}}
    \begin{pmatrix}\mathbf X_b\\ \mathbf Y_b\end{pmatrix}
\]
is uniform on the unit sphere in $\mathbb R^d$ and
independent of $E_b$.
Since $\mathbb E_0(\mathbf U_b\mathbf U_b^\top)
=\mathbf I_d/d$, we obtain
\begin{equation*}
 \begin{gathered}
 E_b/s_b^2\sim\chi^2_d,\qquad
 \mathbb E_0(\mathbf C_b\mid E_b)=\frac{2E_b}{d}\mathbf I_m,\\
 \mathbb E_0 E_b^{-q}
 =\frac{\int_0^\infty x^{d/2-q-1}e^{-x/(2s_b^2)}\,dx}
        {(2s_b^2)^{d/2}\Gamma(d/2)}
 =(2s_b^2)^{-q}\frac{\Gamma(d/2-q)}{\Gamma(d/2)},
 \qquad q<d/2.
 \end{gathered}
\end{equation*}
In particular,
\begin{align}
 \mathbb E_0\widehat\pi_b&=s_b^{-2},
 &\mathbb E_0\mathbf C_b&=2s_b^2\mathbf I_m,
 \label{proof:blend-nullmoments}\\
 \mathbb E_0(\widehat\pi_b\mathbf C_b)&=2\tau\mathbf I_m,
 &\mathbb E_0(\widehat\pi_b^2\mathbf C_b)
 &=\frac{2(d-2)^2}{d}\mathbb E_0(E_b^{-1})\mathbf I_m
 =2\tau s_b^{-2}\mathbf I_m.
 \notag
\end{align}
The second moments needed for averaging are finite because $d>4$,
\[
 \mathbb E_0\widehat\pi_b^2
 =s_b^{-4}\frac{d-2}{d-4},\qquad
 \mathbb E_0\|\widehat\pi_b^2\mathbf C_b\|_F^2
 \le(d-2)^4\mathbb E_0E_b^{-2}<\infty.
\]

For coefficients $r_b(p+q\widehat\pi_b)$, the conditional
mean expansion in \eqref{proof:pow-conditional-mean-bound}
leads to weighted sums of $\mathbf C_b/s_b^2$.
The squared normalizing denominator,
$\sum_b r_b^2(p+q\widehat\pi_b)^2$, involves weighted sums
of $\mathbf C_b$, since $r_b^2=\mathbf a^\top\mathbf C_b\mathbf a$.
We therefore define the following five matrix averages.
\begin{align}
 \mathbf A_{0B}&=\frac1B\sum_b\mathbf C_b/s_b^2,
 &\mathbf A_{1B}&=\frac1B\sum_b\widehat\pi_b\mathbf C_b/s_b^2,
 \label{proof:blend-five-matrices}\\
 \mathbf D_{0B}&=\frac1B\sum_b\mathbf C_b,
 &\mathbf D_{1B}&=\frac1B\sum_b\widehat\pi_b\mathbf C_b,
 &\mathbf D_{2B}&=\frac1B\sum_b\widehat\pi_b^2\mathbf C_b.
 \notag
\end{align}
We establish, in operator norm,
\begin{align}
 \mathbf A_{0B}&=2\mathbf I_m+O_P(\epsilon_B),
 &\mathbf A_{1B}&=2\tau J_{-,B}\mathbf I_m+O_P(\epsilon_B),
 \label{proof:blend-A-control}\\
 \mathbf D_{0B}&=2J_{+,B}\mathbf I_m+O_P(\epsilon_B),
 &\mathbf D_{1B}&=2\tau\mathbf I_m+O_P(\epsilon_B),
 \notag\\
 \mathbf D_{2B}&=2\tau J_{-,B}\mathbf I_m+O_P(\epsilon_B).
 \label{proof:blend-D-control}
\end{align}
For block $b$, use an orthonormal basis of $\mathbb R^m$
whose first vector is $\mathbf v$.
Let $\mathbf W$ collect the coordinates of $\mathbf X_b/s_b$
followed by those of $\mathbf Y_b/s_b$ in this basis.
Under the null,
\[
  \mathbf W\sim\mathcal N_d(\mathbf0,\mathbf I_d),
  \qquad W_1=\mathbf v^\top\mathbf X_b/s_b.
\]
Under \eqref{eq:spherical}, the likelihood ratio
relative to the null is
\[
  \exp\{t_bW_1-t_b^2/2\},
  \qquad t_b=\rho/s_b.
\]
Each matrix summand $\mathbf g_b$ in
\eqref{proof:blend-five-matrices}, as well as
$(E_b/d)\mathbf I_m$ and $\widehat\pi_b\mathbf I_m$,
satisfies $\mathbf g_b(-\mathbf W)=\mathbf g_b(\mathbf W)$,
because both $\mathbf C_b$ and $E_b$ are unchanged
when $\mathbf W$ is replaced by $-\mathbf W$.
Since $\mathbf W\overset d=-\mathbf W$ under the null,
\[
\begin{aligned}
  \mathbb E_\rho\mathbf g_b
  &=e^{-t_b^2/2}\mathbb E_0
      \{\mathbf g_b e^{t_bW_1}\}\\
  &=\frac{e^{-t_b^2/2}}{2}\mathbb E_0
      \{\mathbf g_b(e^{t_bW_1}+e^{-t_bW_1})\}\\
  &=\mathbb E_0
      \{\mathbf g_b e^{-t_b^2/2}\cosh(t_bW_1)\}.
\end{aligned}
\]
Subtracting $\mathbb E_0\mathbf g_b$ gives
\begin{equation}\label{proof:blend-even-tilt}
 \mathbb E_\rho \mathbf g_b-\mathbb E_0\mathbf g_b
 =\mathbb E_0[\mathbf g_b\{e^{-t_b^2/2}\cosh(t_bW_1)-1\}].
\end{equation}
For $|t_b|\le1$, Taylor's formula gives
\[
 |e^{-t_b^2/2}\cosh(t_bW_1)-1|
 \le Ct_b^2(1+W_1^2)e^{|W_1|},
 \qquad
 \|\mathbf g_b\|_F\le C(1+\|\mathbf W\|^2+\|\mathbf W\|^{-2}).
\]
In radial coordinates,
\[
\begin{aligned}
 \mathbb E_0[\|\mathbf g_b\|_F(1+W_1^2)e^{|W_1|}]
 &\le C\int_0^\infty(1+w^2+w^{-2})(1+w^2)
                  w^{d-1}e^{-w^2/2+w}\,dw<\infty,\\
 \mathbb E_\rho\|\mathbf g_b\|_F^2
 &\le C\int_0^\infty(1+w^4+w^{-4})
                  w^{d-1}e^{-w^2/2+w}\,dw<\infty.
\end{aligned}
\]
The constants are uniform over $b$ and sufficiently small $\rho$.
Equation~\eqref{proof:blend-even-tilt} and block independence yield
\begin{equation}\label{proof:blend-average-bound}
 \|\mathbb E_\rho \mathbf g_b-\mathbb E_0\mathbf g_b\|_F\le C\rho^2,
 \qquad
 \mathbb E_\rho\left\|
   \frac1B\sum_b(\mathbf g_b-\mathbb E_\rho \mathbf g_b)\right\|_F^2\le C/B.
\end{equation}
Combining \eqref{proof:blend-average-bound} with
\eqref{proof:blend-nullmoments} and
$\|\cdot\|_{\mathrm{op}}\le\|\cdot\|_F$ proves
\eqref{proof:blend-A-control}--\eqref{proof:blend-D-control}.

The factor $\widehat\pi_b=(d-2)/E_b$ in the coefficients
introduces inverse powers of $E_b$ into the weighted
score moments.
To apply the conditional moment bounds in
Lemma~\ref{proof:pow-moments}, we therefore extend
\eqref{proof:pow-energy-envelope} to
$p>-m$ and $q\ge0$.
Under the null representation, $E_b=s_b^2\|\mathbf W\|^2$ and
the alternative likelihood is at most $e^{\|\mathbf W\|}$ for
$0\le\rho\le\min(1,s_-)$. Consequently,
\begin{equation}\label{proof:blend-energy-bound}
 \sup_b\mathbb E_\rho\{E_b^p e^{q\rho\sqrt{E_b}}\}
 \le C\int_0^\infty w^{d-1+2p}
            e^{-w^2/2+(1+qs_+)w}\,dw<\infty,
 \qquad K_{p,q,B}=O_P(1).
\end{equation}

Let $\mathbf a$ be an $\mathcal I_B$-measurable unit vector.
For $r_b>0$, Lemma~\ref{proof:pow-moments} and
$\gamma_b\le\rho/s_-^2$ give
\begin{align}
 \bar z_b=\mathbb E_\rho(Z_b\mid\mathcal I_B)
 &=\beta\frac{\rho}{s_b^2}
       \frac{\mathbf a^\top\mathbf C_b\mathbf v}{r_b}+R_b,
 & |R_b|&\le C\rho^3u_b^3e^{C\rho u_b},
 \label{proof:blend-phase-mean}\\
 |\operatorname{Var}_\rho(Z_b\mid\mathcal I_B)-1|
 &\le C\rho^2u_b^2e^{C\rho u_b},
 \label{proof:blend-phase-var}\\
 \mathbb E_\rho(|Z_b-\bar z_b|^3\mid\mathcal I_B)
 &\le8c_3e^{C\rho u_b}.
 \label{proof:blend-phase-third}
\end{align}
If $r_b=0$, then $\mathbf C_b\mathbf a=\mathbf0$ and the block
contributes zero to all weighted sums below.

\label{proof:blend-uniform}
For $p,q\ge0$ with $p+q>0$, define
\begin{equation*}
 \lambda_b(p,q)=r_b(p+q\widehat\pi_b),\qquad
 F_B(p,q)=\frac{p+q\tau J_{-,B}}
 {\{p^2J_{+,B}+2pq\tau+q^2\tau J_{-,B}\}^{1/2}}.
\end{equation*}
Write
\[
\begin{gathered}
 D_B(p,q)=\sum_b\lambda_b(p,q)^2,\qquad
 N_B(p,q)=D_B(p,q)^{-1/2}\sum_{b\mid r_b>0}\lambda_b(p,q)Z_b,\\
 \mu_B(p,q)=\mathbb E_\rho\{N_B(p,q)\mid\mathcal I_B\},\qquad
 \sigma_B^2(p,q)=\operatorname{Var}_\rho\{N_B(p,q)\mid\mathcal I_B\}.
\end{gathered}
\]
Set $N_B(p,q)=0$ when $D_B(p,q)=0$.
Both $\mathbf a$ and $(p,q)$ may be $\mathcal I_B$-measurable.
Fix a compact set $\mathcal K\subset[0,\infty)^2$ excluding the origin.
All bounds below are uniform over $\|\mathbf a\|=1$ and
$(p,q)\in\mathcal K$.

Define
\[
 \mathbf A_B(p,q)=p\mathbf A_{0B}+q\mathbf A_{1B},\qquad
 \mathbf D_B(p,q)=p^2\mathbf D_{0B}+2pq\mathbf D_{1B}+q^2\mathbf D_{2B}.
\]
Then $D_B(p,q)=B\mathbf a^\top\mathbf D_B(p,q)\mathbf a$.
For $\mathbf u=(p,q)^\top$, write
\begin{equation}\label{proof:blend-G}
 \mathbf V_B=\begin{pmatrix}J_{+,B}&\tau\\\tau&\tau J_{-,B}\end{pmatrix},
 \qquad v_B(p,q)=\mathbf u^\top\mathbf V_B\mathbf u,
 \qquad \det\mathbf V_B=\tau(\mathcal H_B-\tau)\ge\tau(1-\tau)>0.
\end{equation}
The scale bounds give $\operatorname{tr}\mathbf V_B\le C$ and
$\lambda_{\min}(\mathbf V_B)\ge\det(\mathbf V_B)/
\operatorname{tr}(\mathbf V_B)\ge c>0$.
Since $\inf_{\mathcal K}(p^2+q^2)>0$,
\eqref{proof:blend-D-control} implies, with probability tending to one,
\begin{equation}\label{proof:blend-denominator}
 cB\le D_B(p,q)\le CB
 \qquad\text{for all }\|\mathbf a\|=1,\ (p,q)\in\mathcal K.
\end{equation}

By \eqref{proof:blend-phase-mean}, the leading conditional mean is
\begin{equation*}
\begin{aligned}
 \mu_B^{(0)}(p,q)
 &=\frac{\beta\rho}{\sqrt{D_B(p,q)}}
   \sum_b\frac{p+q\widehat\pi_b}{s_b^2}
                 \mathbf a^\top\mathbf C_b\mathbf v\\
 &=\beta\rho\sqrt B\,
   \frac{\mathbf a^\top\mathbf A_B(p,q)\mathbf v}
        {\{\mathbf a^\top\mathbf D_B(p,q)\mathbf a\}^{1/2}}.
\end{aligned}
\end{equation*}
For $(p,q)\in\mathcal K$,
$\lambda_b(p,q)\le C(E_b^{1/2}+E_b^{-1/2})$ and $u_b\le\sqrt{E_b}$.
On the event in \eqref{proof:blend-denominator},
\[
\begin{aligned}
 |\mu_B(p,q)-\mu_B^{(0)}(p,q)|
 &\le\frac{C\rho^3}{\sqrt{D_B(p,q)}}
             \sum_b\lambda_b(p,q)u_b^3e^{C\rho u_b}\\
 &\le C\sqrt B\rho^3\{K_{2,C,B}+K_{1,C,B}\}
 =O_P(\sqrt B\rho^3).
\end{aligned}
\]
Writing $n_B(p,q)=p+q\tau J_{-,B}$,
\eqref{proof:blend-A-control}--\eqref{proof:blend-D-control} give
\[
 \mathbf a^\top\mathbf A_B(p,q)\mathbf v
 =2n_B(p,q)\mathbf a^\top\mathbf v+O_P(\epsilon_B),\qquad
 \mathbf a^\top\mathbf D_B(p,q)\mathbf a
 =2v_B(p,q)+O_P(\epsilon_B).
\]
Because $v_B(p,q)\ge c>0$ and $n_B(p,q)\le C$ on $\mathcal K$,
\begin{equation}\label{proof:blend-general-mean}
 \mu_B(p,q)=\kappa_R\sqrt B\rho F_B(p,q)\mathbf a^\top\mathbf v
                 +O_P(\sqrt B\rho^3+\rho).
\end{equation}

Conditional independence from Lemma~\ref{proof:pow-moments} and
\eqref{proof:blend-phase-var} yield
\begin{equation}\label{proof:blend-general-var}
\begin{aligned}
 |\sigma_B^2(p,q)-1|
 &\le\frac{C\rho^2}{D_B(p,q)}
     \sum_b\lambda_b(p,q)^2u_b^2e^{C\rho u_b}\\
 &\le C\rho^2\{K_{2,C,B}+K_{1,C,B}+K_{0,C,B}\}
 =O_P(\rho^2).
\end{aligned}
\end{equation}
For the conditional normal approximation, apply the Berry--Esseen
inequality to $\lambda_b(p,q)(Z_b-\bar z_b)/\sqrt{D_B(p,q)}$.
Equations \eqref{proof:blend-phase-third}, \eqref{proof:blend-denominator}
and \eqref{proof:blend-general-var} give, on events of probability tending to one,
\[
\begin{aligned}
 &\sup_x\left|\mathbb P_\rho\{N_B(p,q)\le x\mid\mathcal I_B\}
       -\Phi\left(\frac{x-\mu_B(p,q)}{\sigma_B(p,q)}\right)\right|\\
 &\quad\le\frac{C\sum_b\lambda_b(p,q)^3e^{C\rho u_b}}
                    {D_B(p,q)^{3/2}\sigma_B(p,q)^3}
 \le CB^{-1/2}\{K_{3/2,C,B}+K_{-3/2,C,B}\}.
\end{aligned}
\]
Equation~\eqref{proof:blend-energy-bound} gives
$K_{3/2,C,B},K_{-3/2,C,B}=O_P(1)$.
Using the normal scale bound \eqref{proof:pow-normal-scale},
\begin{equation}\label{proof:blend-general-cdf}
 d_B(p,q)=\sup_x\left|
 \mathbb P_\rho\{N_B(p,q)\le x\mid\mathcal I_B\}
       -\Phi\{x-\mu_B(p,q)\}\right|
 =O_P(B^{-1/2}+\rho^2).
\end{equation}
For $\sqrt B\rho^3\to0$, let
$\eta_B(p,q)=\kappa_R\sqrt B\rho F_B(p,q)\mathbf a^\top\mathbf v$.
Equations \eqref{proof:blend-general-mean} and
\eqref{proof:blend-general-cdf} imply
\[
\begin{aligned}
 &\left|\mathbb P_\rho\{|N_B(p,q)|>z_\alpha\}
           -\mathbb E_\rho H_\alpha\{\eta_B(p,q)\}\right|\\
 &\quad\le2\mathbb E_\rho d_B(p,q)
   +\mathbb E_\rho\min\left\{1,
       \frac2{\sqrt{2\pi}}|\mu_B(p,q)-\eta_B(p,q)|\right\}
 =o(1).
\end{aligned}
\]
The expectation bound uses $0\le d_B\le1$, $0\le H_\alpha\le1$
and $\sup_t|H_\alpha'(t)|\le2/\sqrt{2\pi}$.
Since $H_\alpha$ is even,
\begin{equation}\label{proof:blend-family-power}
 \mathbb P_\rho\{|N_B(p,q)|>z_\alpha\}
 =\mathbb E_\rho H_\alpha\left(
  \kappa_R\sqrt B\rho F_B(p,q)|\mathbf a^\top\mathbf v|\right)+o(1).
\end{equation}

\proofanchor{proof:blend-optimization}
For $t\ge0$, $H_\alpha'(t)=\varphi(z_\alpha-t)-\varphi(z_\alpha+t)\ge0$.
Thus, for a fixed direction rule, maximizing $F_B(p,q)$ maximizes
the leading power in \eqref{proof:blend-family-power}.
With $\mathbf b_B=(1,\tau J_{-,B})^\top$, Cauchy--Schwarz gives
\[
 F_B(p,q)^2
 =\frac{(\mathbf u^\top\mathbf b_B)^2}{\mathbf u^\top\mathbf V_B\mathbf u}
 \le\mathbf b_B^\top\mathbf V_B^{-1}\mathbf b_B,
 \qquad
 \mathbf V_B^{-1}\mathbf b_B
 =\frac1{\mathcal H_B-\tau}
   \begin{pmatrix}(1-\tau)J_{-,B}\\\mathcal H_B-1\end{pmatrix}.
\]
Equality holds when $\mathbf u$ is a positive multiple of
$\mathbf V_B^{-1}\mathbf b_B$, whose entries are nonnegative.
Set $P_B=J_{+,B}^{-1}$, $p=(1-t)P_B$ and $q=t$. Matching the component
ratios gives
\begin{equation}\label{proof:blend-mixing-optimum}
 \frac{tJ_{+,B}}{1-t}
 =\frac{\mathcal H_B-1}{(1-\tau)J_{-,B}},
 \qquad
 t_B^*=\frac{\mathcal H_B-1}{(2-\tau)\mathcal H_B-1}.
\end{equation}
Moreover,
\begin{equation*}
\begin{aligned}
 F_B\{(1-t_B^*)P_B,t_B^*\}^2
 &=\mathbf b_B^\top\mathbf V_B^{-1}\mathbf b_B\\
 &=J_{-,B}\frac{\tau\mathcal H_B+1-2\tau}{\mathcal H_B-\tau}
 =J_{-,B}\left\{\tau+\frac{(1-\tau)^2}{\mathcal H_B-\tau}\right\}
 =J_B^{\mathrm{blend}}.
\end{aligned}
\end{equation*}

It remains to replace $(P_B,t_B^*)$ by the estimates in
\eqref{eq:blendestimates}--\eqref{eq:blendmixing}.
Apply \eqref{proof:blend-average-bound} to
$(E_b/d)\mathbf I_m$ and $\widehat\pi_b\mathbf I_m$.
Since $E_b>0$ almost surely for every $b$,
\begin{equation}\label{proof:blend-estimation-rate}
 \widehat J_{+,B}-J_{+,B}=O_P(\epsilon_B),\qquad
 \widehat J_{-,B}-J_{-,B}=O_P(\epsilon_B).
\end{equation}
The inequalities
\[
 \begin{gathered}
 s_-^2\le J_{+,B}\le s_+^2,\qquad
 s_+^{-2}\le J_{-,B}\le s_-^{-2},\\
 |\max(1,x)-\max(1,y)|\le|x-y|,\\
 \frac{d}{dH}\frac{H-1}{(2-\tau)H-1}
 =\frac{1-\tau}{\{(2-\tau)H-1\}^2}\le\frac1{1-\tau}
 \quad(H\ge1)
 \end{gathered}
\]
imply
\begin{equation*}
 \widehat P_B-P_B=O_P(\epsilon_B),\qquad
 \widehat{\mathcal H}_B-\mathcal H_B=O_P(\epsilon_B),\qquad
 \widehat t_B-t_B^*=O_P(\epsilon_B).
\end{equation*}
Define $\widehat p_B=(1-\widehat t_B)\widehat P_B$,
$\widehat q_B=\widehat t_B$ and the event
$\mathcal F_B=\{s_-^2/2\le\widehat J_{+,B}\le2s_+^2\}$.
By \eqref{proof:blend-estimation-rate}, $\mathbb P_\rho(\mathcal F_B)\to1$.
Because $0\le\widehat t_B<1/(2-\tau)$, on $\mathcal F_B$,
\begin{equation*}
 (\widehat p_B,\widehat q_B)\in\mathcal K
 =\left[\frac{1-\tau}{2(2-\tau)s_+^2},\frac2{s_-^2}\right]
    \mathbin{\times}\left[0,\frac1{2-\tau}\right].
\end{equation*}
The population pair $((1-t_B^*)P_B,t_B^*)$ also belongs to
$\mathcal K$. Equation~\eqref{proof:blend-G} bounds the denominator
of $F_B$ away from zero on $\mathcal K$, so
\[
 \sup_{(p,q)\in\mathcal K}\|\nabla F_B(p,q)\|\le C,
 \qquad
 F_B(\widehat p_B,\widehat q_B)
 =\sqrt{J_B^{\mathrm{blend}}}+O_P(\epsilon_B).
\]
Apply the uniform bounds \eqref{proof:blend-general-mean}--\eqref{proof:blend-general-cdf}
on $\mathcal F_B$, whose complement has probability $o(1)$. We obtain
\begin{equation*}
 \mathbb E_\rho(N_B^{\mathrm{blend}}\mid\mathcal I_B)
 =\kappa_R\sqrt B\rho\sqrt{J_B^{\mathrm{blend}}}\,
                     \mathbf a^\top\mathbf v
       +O_P(\sqrt B\rho^3+\rho).
\end{equation*}
For $\sqrt B\rho^3\to0$, the change in the expected power function is at most
\[
 \mathbb E_\rho\min\left\{1,
 C\sqrt B\rho\left|F_B(\widehat p_B,\widehat q_B)
                         -\sqrt{J_B^{\mathrm{blend}}}\right|\right\}
 =o(1),
\]
since $\sqrt B\rho\epsilon_B=\rho+\sqrt B\rho^3\to0$.
\eqref{proof:blend-family-power} and
$|\mathbf a^\top\mathbf v|=\sqrt{1-L_B}$ now prove
\eqref{eq:blendpower}.

For Radius, $F_B(1,0)^2=J_{+,B}^{-1}=J_B^{\mathrm{rad}}$
in \eqref{proof:blend-family-power}. For the known-scale coefficients
$\lambda_b=r_b/s_b^2$, define
\[
 \mathbf K_B=\frac1B\sum_b\frac{\mathbf C_b}{s_b^4},\qquad
 D_B^{\mathrm{known}}=B\mathbf a^\top\mathbf K_B\mathbf a.
\]
Let $N_B^{\mathrm{known}}=(D_B^{\mathrm{known}})^{-1/2}
\sum_{b\mid r_b>0}(r_b/s_b^2)Z_b$, and let
$\mu_B^{\mathrm{known}}$ and $(\sigma_B^{\mathrm{known}})^2$
be its conditional mean and variance given $\mathcal I_B$.
Gaussian moments and block independence give
\[
 \mathbb E_\rho\mathbf K_B
 =2J_{-,B}\mathbf I_m+
   \rho^2\left(\frac1B\sum_bs_b^{-4}\right)\mathbf v\mathbf v^\top,
 \qquad
 \mathbb E_\rho\|\mathbf K_B-\mathbb E_\rho\mathbf K_B\|_F^2\le C/B.
\]
Thus $\mathbf K_B=2J_{-,B}\mathbf I_m+O_P(\epsilon_B)$ and
$cB\le D_B^{\mathrm{known}}\le CB$ with probability tending to one.
Equation~\eqref{proof:blend-phase-mean} yields
\[
\begin{aligned}
 \mu_B^{\mathrm{known}}
 &=\beta\sqrt B\rho
   \frac{\mathbf a^\top\mathbf K_B\mathbf v}
        {\sqrt{\mathbf a^\top\mathbf K_B\mathbf a}}
       +O_P(\sqrt B\rho^3K_{2,C,B})\\
 &=\kappa_R\sqrt B\rho\sqrt{J_{-,B}}\,
     \mathbf a^\top\mathbf v+O_P(\sqrt B\rho^3+\rho).
\end{aligned}
\]
Equations \eqref{proof:blend-phase-var}--\eqref{proof:blend-phase-third}
likewise give the explicit variance and third-moment bounds
\[
\begin{aligned}
 |(\sigma_B^{\mathrm{known}})^2-1|
 &\le\frac{C\rho^2}{B}\sum_bE_b^2e^{C\rho\sqrt{E_b}}
 =O_P(\rho^2),\\
 \frac{\sum_{b\mid r_b>0}(r_b/s_b^2)^3
          \mathbb E_\rho(|Z_b-\bar z_b|^3\mid\mathcal I_B)}
 {(D_B^{\mathrm{known}})^{3/2}(\sigma_B^{\mathrm{known}})^3}
 &\le\frac C{B^{3/2}}\sum_bE_b^{3/2}e^{C\rho\sqrt{E_b}}
 =O_P(B^{-1/2}).
\end{aligned}
\]
By the Berry--Esseen inequality and \eqref{proof:pow-normal-scale},
$d_B^{\mathrm{known}}=O_P(B^{-1/2}+\rho^2)$, where
$d_B^{\mathrm{known}}=\sup_x|\mathbb P_\rho(N_B^{\mathrm{known}}
\le x\mid\mathcal I_B)-\Phi(x-\mu_B^{\mathrm{known}})|$.
For $\eta_B^{\mathrm{known}}=
\kappa_R\sqrt B\rho\sqrt{J_{-,B}}\mathbf a^\top\mathbf v$,
\[
 \left|\mathbb P_\rho(|N_B^{\mathrm{known}}|>z_\alpha)
     -\mathbb E_\rho H_\alpha(\eta_B^{\mathrm{known}})\right|
 \le2\mathbb E_\rho d_B^{\mathrm{known}}
   +\mathbb E_\rho\min\{1,C|\mu_B^{\mathrm{known}}-
                               \eta_B^{\mathrm{known}}|\}=o(1).
\]
Finally, direct subtraction gives
\begin{align*}
 J_B^{\mathrm{blend}}-J_B^{\mathrm{rad}}
 &=J_{-,B}\frac{\tau(\mathcal H_B-1)^2}
                  {\mathcal H_B(\mathcal H_B-\tau)}\ge0,
 \\
 J_{-,B}-J_B^{\mathrm{blend}}
 &=J_{-,B}\frac{(1-\tau)(\mathcal H_B-1)}
                  {\mathcal H_B-\tau}\ge0,
\end{align*}
which proves \eqref{eq:blendcomparison} and the known-scale assertion.
\end{proof}

Theorem~\ref{thm:blendpower} allows any invariant direction.
For the normalized-scatter choice, the following proof evaluates
its alignment distribution when $\rho_B=h/\sqrt B$ and obtains
the explicit limits in Remark~\ref{rem:blendlocalpower}.

\proofanchor{proof:blend-local}
\begin{proof}[Proof of Remark~\ref{rem:blendlocalpower}]
Let $\rho_B=h/\sqrt B$. Apply \eqref{proof:info-one-likelihood} and
\eqref{proof:info-one-cross} to
$(\mathbf X_b/s_b,\mathbf Y_b/s_b)$ with signal $\rho_B/s_b$.
The map $\mathbf M_b\mapsto s_b^{-2}\mathbf M_b$ is invertible, so
the likelihood ratio of the original invariant matrices is
\[
 \mathcal L_B^{(s)}
 =\prod_b e^{-\rho_B^2/(2s_b^2)}
    I_0\left(\frac{\rho_B}{s_b^2}
        \sqrt{\mathbf v^\top\mathbf C_b\mathbf v}\right),
 \qquad
 \mathbb E_0\{(\mathcal L_B^{(s)})^2\}
 =\prod_b I_0(\rho_B^2/s_b^2).
\]
Write $Q_{B,\rho_B,\mathbf v}^{(s)}$ for the invariant law.
Cauchy--Schwarz and \eqref{proof:info-bessel-bound} imply
\begin{equation*}
\begin{aligned}
 \delta_B^{(s)}
 &=d_{\mathrm{TV}}(Q_{B,\rho_B,\mathbf v}^{(s)},Q_{B,0}^{(s)})\\
 &\le\frac12\sqrt{
    \exp\left(\frac{\rho_B^4}{4}\sum_b s_b^{-4}\right)-1}
 \le\frac12\sqrt{\exp\{h^4/(4Bs_-^4)\}-1}\longrightarrow0.
\end{aligned}
\end{equation*}
The distance is unchanged upon adjoining independent parameter-free $V$.
By the measurable-map inequality for total variation and
\eqref{proof:det-null-beta},
\[
 d_{\mathrm{TV}}\left\{
 \mathcal L_{\rho_B,\mathbf v}(1-L_B),
 \operatorname{Beta}\left(\tfrac12,\tfrac{m-1}{2}\right)\right\}
 \le\delta_B^{(s)}.
\]
Let $U\sim\operatorname{Beta}\{1/2,(m-1)/2\}$ and
$k\in\{\mathrm{rad},\mathrm{blend}\}$. Since $J_B^k\to j_k>0$,
\[
\begin{aligned}
 &\left|\mathbb E_{\rho_B,\mathbf v}
 H_\alpha(\kappa_Rh\sqrt{J_B^k}\sqrt{1-L_B})
   -\mathbb E H_\alpha(\kappa_Rh\sqrt{j_kU})\right|\\
 &\quad\le\delta_B^{(s)}
   +\frac{2\kappa_Rh}{\sqrt{2\pi}}
                |\sqrt{J_B^k}-\sqrt{j_k}|\longrightarrow0.
\end{aligned}
\]
Equation~\eqref{eq:blendpower} for Blend and
\eqref{proof:blend-family-power} with $(p,q)=(1,0)$ for Radius give
\[
 \mathbb P_{\rho_B,\mathbf v}(|N_B^k|>z_\alpha)
 \longrightarrow\mathbb E H_\alpha(\kappa_Rh\sqrt{j_kU}).
\]
Finally, for $0<\alpha<1$ and $t>0$,
\[
 H_\alpha'(t)=\varphi(z_\alpha-t)-\varphi(z_\alpha+t)>0.
\]
Thus \eqref{eq:blendcomparison} and $\mathbb P(0<U<1)=1$
give the limiting power ordering, with strict inequality when
$h>0$ and $j_{\mathrm{blend}}>j_{\mathrm{rad}}$.
\end{proof}

\section{Application to heterogeneous functional data}
\label{proof:cal-functional}

The proof of Proposition~\ref{prop:functionalcalibration} requires two steps.
We verify the matched Gaussian law in \eqref{eq:functionalmatching},
then derive product-rotation symmetry and apply
Theorem~\ref{thm:joint}.

\begin{proof}[Proof of Proposition~\ref{prop:functionalcalibration}]
As in Section~\ref{sec:functional}, all sums omit terms
with $q_{sh}=0$. For each remaining $(s,h,b)$, define the feature summaries
\[
 \mathbf A_{shb}=\frac1{n_{shb}}\sum_i
                      \boldsymbol\Phi\boldsymbol\varepsilon_{shbi},
 \qquad
 \mathbf B_{shb}=\frac1{\sqrt{n_{shb}}}\sum_i
                   u_{shbi}\boldsymbol\Phi\boldsymbol\varepsilon_{shbi}.
\]
Because $\boldsymbol\Phi$ is fixed, bounded and linear,
\eqref{eq:functionalerror} makes all feature summaries jointly
Gaussian conditional on $\Lambda$. Fix $(s,h,b)$ and abbreviate
$n=n_{shb}$, $u_i=u_{shbi}$ and
$\mathbf K^\Phi=\boldsymbol\Phi\mathbf K_{shb}(\Lambda)\boldsymbol\Phi^*$.
Conditional independence of the errors and
\eqref{eq:residualcoeff} give
\begin{equation}\label{proof:cal-functional-covariances}
 \begin{aligned}
 \operatorname{Cov}(\mathbf A_{shb}\mid\Lambda)&=\mathbf K^\Phi/n,\\
 \operatorname{Cov}(\mathbf B_{shb}\mid\Lambda)
   &=\frac{\mathbf K^\Phi}{n}\sum_i u_i^2=\mathbf K^\Phi/n,\\
 \operatorname{Cov}(\mathbf A_{shb},\mathbf B_{shb}\mid\Lambda)
   &=\frac{\mathbf K^\Phi}{n\sqrt n}\sum_i u_i=\mathbf0.
 \end{aligned}
\end{equation}
The zero-sum condition in \eqref{eq:residualcoeff} gives
\[
 \boldsymbol\Phi\overline{\mathbf Z}_{shb}=\boldsymbol\Phi\boldsymbol\mu_{sh}+\mathbf A_{shb},
 \qquad
 \frac1{\sqrt{n_{shb}}}\sum_i u_{shbi}\boldsymbol\Phi\mathbf Z_{shbi}
 =\mathbf B_{shb}.
\]
Substituting into \eqref{eq:functionalsummaries} and
\eqref{eq:functionalpairs} yields
\[
 \mathbf X_b=\boldsymbol\delta+
                 \sum_{s,h}c_sq_{sh}\mathbf A_{shb},
 \qquad
 \mathbf Y_b=\sum_{s,h}c_sq_{sh}\mathbf B_{shb}.
\]
Conditional independence across $(s,h,b)$ and
\eqref{proof:cal-functional-covariances} therefore imply
\begin{equation}\label{proof:cal-functional-joint-law}
 \begin{pmatrix}\mathbf X_b\\\mathbf Y_b\end{pmatrix}\Bigm|\Lambda
 \sim\mathcal N_{2m}\left\{
   \begin{pmatrix}\boldsymbol\delta\\\mathbf0\end{pmatrix},
   \begin{pmatrix}\boldsymbol\Sigma_b(\Lambda)&\mathbf0\\
                   \mathbf0&\boldsymbol\Sigma_b(\Lambda)\end{pmatrix}
 \right\},
 \qquad
 \boldsymbol\Sigma_b(\Lambda)=\boldsymbol\Phi
  \left\{\sum_{s,h}\frac{c_s^2q_{sh}^2}{n_{shb}}
                  \mathbf K_{shb}(\Lambda)\right\}\boldsymbol\Phi^*.
\end{equation}
Equation~\eqref{proof:cal-functional-joint-law} proves
\eqref{eq:functionalmatching}--\eqref{eq:functionalcov}.
The stacked pairs are conditionally independent across $b$.

Under $H_0^{(\Phi)}$, $\boldsymbol\delta=\mathbf0$.
For $\mathbf t_b,\mathbf s_b\in\mathbb R^m$,
\begin{equation}\label{proof:cal-functional-characteristic}
 \mathbb E_0\left[\exp\left\{i\sum_b
 (\mathbf t_b^\top\mathbf X_b+\mathbf s_b^\top\mathbf Y_b)\right\}
 \middle|\Lambda\right]
 =\exp\left[-\frac12\sum_b
 \{\mathbf t_b^\top\boldsymbol\Sigma_b\mathbf t_b+
   \mathbf s_b^\top\boldsymbol\Sigma_b\mathbf s_b\}\right],
\end{equation}
where $\boldsymbol\Sigma_b=\boldsymbol\Sigma_b(\Lambda)$.
A rotation $\mathbf G_b\mapsto e^{i\phi_b}\mathbf G_b$ replaces
$(\mathbf t_b,\mathbf s_b)$ by
\[
 \mathbf t'_b=\mathbf t_b\cos\phi_b-\mathbf s_b\sin\phi_b,
 \qquad
 \mathbf s'_b=\mathbf t_b\sin\phi_b+\mathbf s_b\cos\phi_b.
\]
For every $b$,
\[
 (\mathbf t'_b)^\top\boldsymbol\Sigma_b\mathbf t'_b+
 (\mathbf s'_b)^\top\boldsymbol\Sigma_b\mathbf s'_b
 =\mathbf t_b^\top\boldsymbol\Sigma_b\mathbf t_b+
  \mathbf s_b^\top\boldsymbol\Sigma_b\mathbf s_b.
\]
Hence the conditional characteristic function
\eqref{proof:cal-functional-characteristic} is unchanged.
Independence of $V$ from the observations and $\Lambda$ gives
\begin{equation}\label{proof:cal-functional-rotation}
 \mathcal L_0(e^{i\phi_1}\mathbf G_1,\ldots,e^{i\phi_B}\mathbf G_B,V
                 \mid\Lambda)
 =\mathcal L_0(\mathbf G_1,\ldots,\mathbf G_B,V\mid\Lambda).
\end{equation}
Integrating \eqref{proof:cal-functional-rotation} over $\Lambda$
proves \eqref{eq:product} unconditionally.

Set $D=\sum_{b=1}^B\lambda_b^2$, with $\lambda_b=0$ for omitted
blocks. The coefficients and $D$ are $\mathcal I_B$-measurable.
Applying Theorem~\ref{thm:joint} to
\eqref{proof:cal-functional-rotation}, with $\Lambda$ included in
the conditioning information, gives on $\{D>0\}$
\[
 \mathcal L_0(N_B\mid\mathcal I_B,\Lambda)=\mathcal N(0,1).
\]
With $p_{\mathrm{comb}}=1$ on $\{D=0\}$, for $0<u<1$,
\[
\begin{aligned}
 \mathbb P_0(p_{\mathrm{comb}}\le u\mid\mathcal I_B,\Lambda)
 &=u\mathbf1_{\{D>0\}},\\
 \mathbb P_0(p_{\mathrm{comb}}\le u\mid\mathcal I_B)
 &=\mathbb E_0\{u\mathbf1_{\{D>0\}}\mid\mathcal I_B\}
 =u\mathbf1_{\{D>0\}}\le u.
\end{aligned}
\]
Corollary~\ref{cor:representative-calibration} verifies the
required measurability for Radius and Blend.
\end{proof}

\section{Supplementary simulation results}
\label{supp:factorial-study}

The simulation settings are specified in
Section~\ref{subsec:functionalsimulation}.
For the split procedures, define the training and testing sets by
\[
  \mathcal T=\{1,\ldots,B/2\},
  \qquad \mathcal E=\{B/2+1,\ldots,B\},
\]
and let
\[
  \mathcal J_B
  =\sigma\{(\mathbf G_b)_{b\in\mathcal T},
           (\mathbf M_b)_{b\in\mathcal E},V\}.
\]
Under the null, rotations of the testing blocks preserve
the joint law while leaving the training observations unchanged.
The selected direction and testing coefficients are
$\mathcal J_B$-measurable.
Theorem~\ref{thm:joint} therefore gives conditionally
independent standard normal scores for the nonzero projected
pairs. Their normalized combination is also standard normal
conditional on $\mathcal J_B$, whenever its denominator is positive.

Table~\ref{supp:tab-factorial-null} summarizes null rejection
frequencies across the factorial settings. All four procedures
remain close to the nominal level of $.05$ across the noise
distributions, covariance structures and block scales considered.
Figures~\ref{supp:fig-factorial-gaussian-identity-heterogeneous}--
\ref{supp:fig-factorial-contaminated-band-heterogeneous}
complete the power comparisons in the main paper.
Rows correspond to $m$ and columns to $B$, as in the main paper,
with pointwise
95\% Wilson intervals.
The full-data procedures generally outperform their splits,
particularly with few blocks, while Blend benefits from
block-scale heterogeneity. Splitting can be more powerful
in some settings with larger $B$.

In order to examine how power depends on the position of the signal
relative to the noise covariance, we also consider the case of $\mathbf v=\mathbf e_1$ or $\mathbf e_8$, placing the
signal in a coordinate with smaller or larger noise variance,
respectively. Here $\mathbf e_j$ is the $j$th coordinate unit
vector. Block variances are either all one or alternate between
$1/4$ and $7/4$.  For simplicity, we set $m=8$, $B=16$ and
\(
  \boldsymbol\Sigma
  =\operatorname{diag}\left(\tfrac12\mathbf I_4,
                            \tfrac32\mathbf I_4\right).
\)
Independently within and across blocks, generate
\[
  \mathbf X_b\sim\mathcal N_8
  \left(\frac{h}{\sqrt B}\mathbf v,s_b^2\boldsymbol\Sigma\right),
  \qquad
  \mathbf Y_b\sim\mathcal N_8
  \left(\mathbf0,s_b^2\boldsymbol\Sigma\right).
\]
We use the signal grid and replication count in Section~\ref{subsec:functionalsimulation}.

Figure~\ref{supp:fig-factorial-direction} shows that signal
location can reverse the power ordering. With homogeneous
blocks, splitting is more powerful for the low-variance
signal, whereas the full-data procedures are more powerful
for the high-variance signal. Blend also improves on Radius
under heterogeneous block scales. Moving the signal changes
both its relation to the learned direction and its noise
variance, so these comparisons reflect their combined effects.

\begin{table}[!htbp]
\centering
\caption{Ranges of null rejection frequencies across all dimensions
and block counts within each factorial configuration. Each range
contains twelve settings, each with 2,000 repetitions. H and V denote
homogeneous and heterogeneous block scales.}
\label{supp:tab-factorial-null}
\scriptsize
\setlength{\tabcolsep}{2.6pt}
\begin{tabular}{lllrrrr}
\toprule
Noise & Covariance & Scales & Radius & Blend & Split-Radius & Split-Blend\\
\midrule
Gaussian & identity & H & 0.0415--0.0585 & 0.0410--0.0590 & 0.0380--0.0585 & 0.0395--0.0585\\
Gaussian & identity & V & 0.0420--0.0630 & 0.0455--0.0585 & 0.0420--0.0605 & 0.0425--0.0635\\
Gaussian & diagonal & H & 0.0450--0.0610 & 0.0440--0.0590 & 0.0435--0.0590 & 0.0435--0.0585\\
Gaussian & diagonal & V & 0.0480--0.0595 & 0.0430--0.0610 & 0.0435--0.0595 & 0.0445--0.0560\\
Gaussian & band & H & 0.0435--0.0575 & 0.0435--0.0560 & 0.0430--0.0590 & 0.0410--0.0580\\
Gaussian & band & V & 0.0435--0.0570 & 0.0445--0.0570 & 0.0415--0.0540 & 0.0440--0.0585\\
Student-$t$ & identity & H & 0.0420--0.0600 & 0.0395--0.0585 & 0.0440--0.0620 & 0.0470--0.0575\\
Student-$t$ & identity & V & 0.0460--0.0595 & 0.0410--0.0530 & 0.0440--0.0565 & 0.0410--0.0540\\
Student-$t$ & diagonal & H & 0.0425--0.0570 & 0.0465--0.0565 & 0.0440--0.0570 & 0.0415--0.0580\\
Student-$t$ & diagonal & V & 0.0420--0.0550 & 0.0465--0.0545 & 0.0430--0.0510 & 0.0380--0.0550\\
Student-$t$ & band & H & 0.0445--0.0590 & 0.0355--0.0585 & 0.0425--0.0565 & 0.0430--0.0570\\
Student-$t$ & band & V & 0.0415--0.0565 & 0.0420--0.0540 & 0.0475--0.0535 & 0.0390--0.0550\\
Contaminated Gaussian & identity & H & 0.0425--0.0585 & 0.0420--0.0550 & 0.0425--0.0590 & 0.0415--0.0585\\
Contaminated Gaussian & identity & V & 0.0445--0.0565 & 0.0435--0.0585 & 0.0420--0.0605 & 0.0370--0.0650\\
Contaminated Gaussian & diagonal & H & 0.0470--0.0600 & 0.0445--0.0585 & 0.0470--0.0580 & 0.0420--0.0550\\
Contaminated Gaussian & diagonal & V & 0.0435--0.0635 & 0.0415--0.0620 & 0.0390--0.0590 & 0.0365--0.0555\\
Contaminated Gaussian & band & H & 0.0435--0.0585 & 0.0410--0.0550 & 0.0415--0.0575 & 0.0410--0.0545\\
Contaminated Gaussian & band & V & 0.0380--0.0535 & 0.0395--0.0580 & 0.0470--0.0625 & 0.0385--0.0595\\
\bottomrule
\end{tabular}
\end{table}

\begin{figure}[p]
\centering
\includegraphics[width=\linewidth,height=.70\textheight,keepaspectratio]{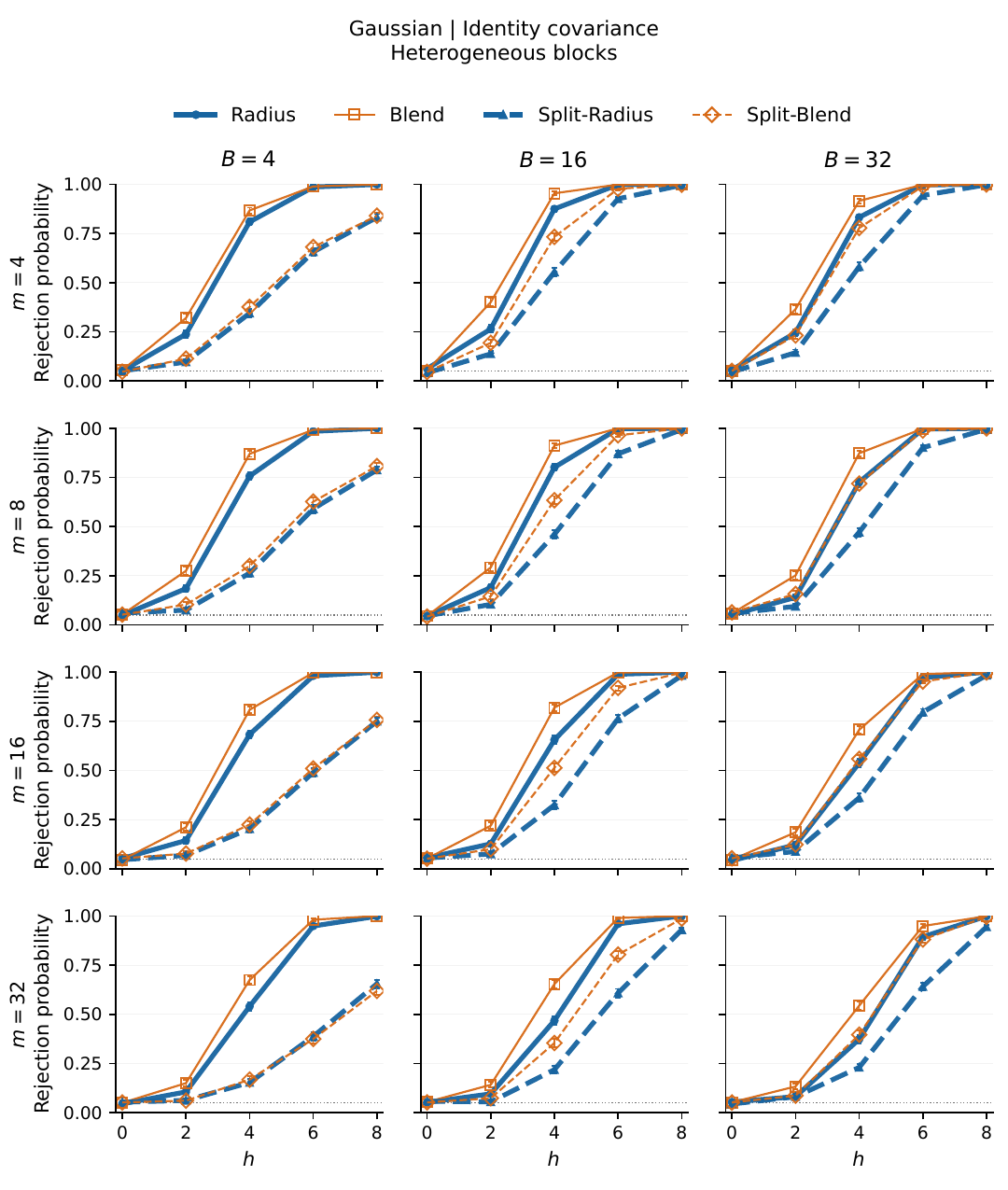}
\caption{Gaussian noise, identity covariance and heterogeneous block scales.}
\label{supp:fig-factorial-gaussian-identity-heterogeneous}
\end{figure}

\begin{figure}[p]
\centering
\includegraphics[width=\linewidth,height=.70\textheight,keepaspectratio]{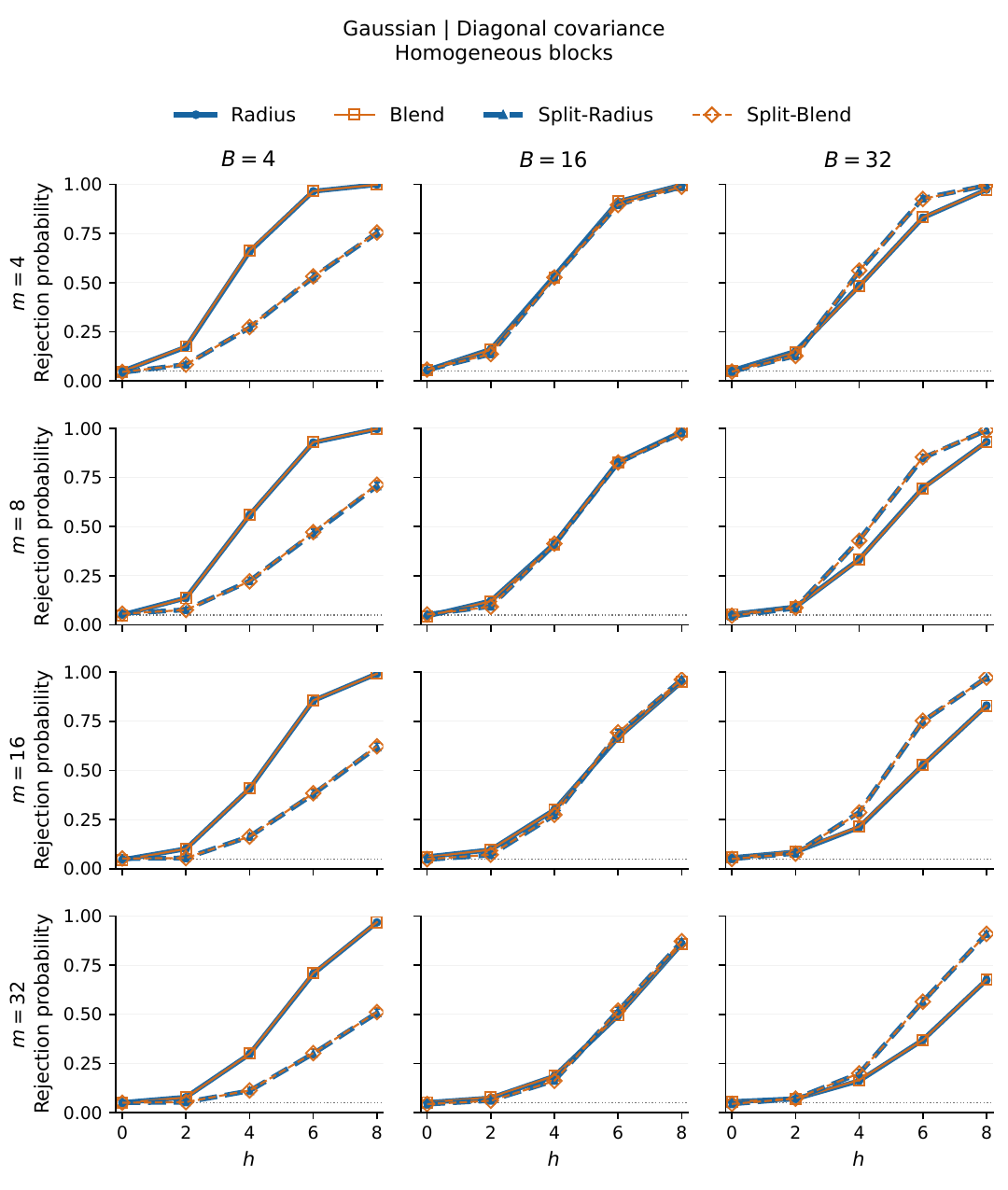}
\caption{Gaussian noise, diagonal covariance and homogeneous block scales.}
\label{supp:fig-factorial-gaussian-diagonal-homogeneous}
\end{figure}

\begin{figure}[p]
\centering
\includegraphics[width=\linewidth,height=.70\textheight,keepaspectratio]{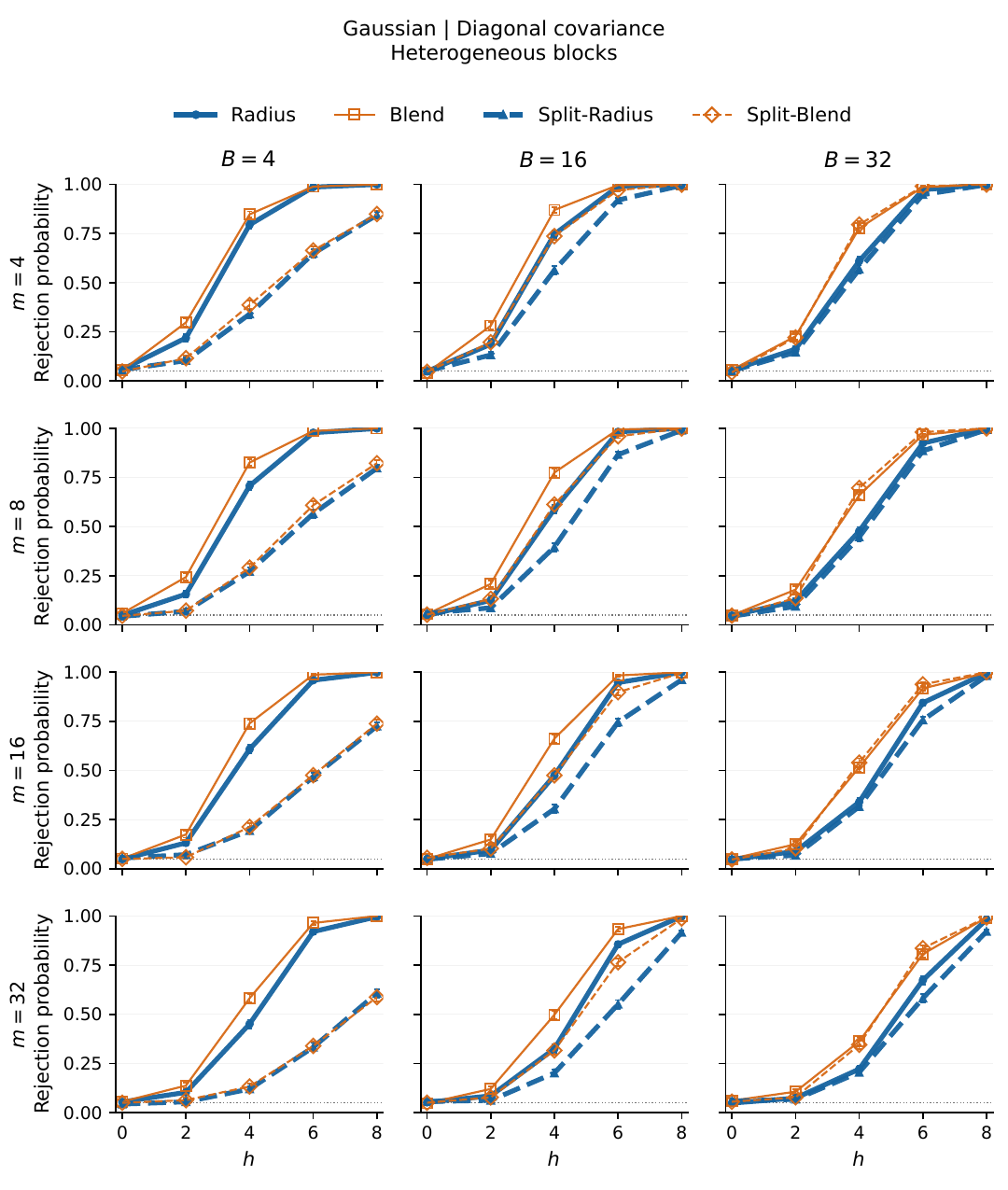}
\caption{Gaussian noise, diagonal covariance and heterogeneous block scales.}
\label{supp:fig-factorial-gaussian-diagonal-heterogeneous}
\end{figure}

\begin{figure}[p]
\centering
\includegraphics[width=\linewidth,height=.70\textheight,keepaspectratio]{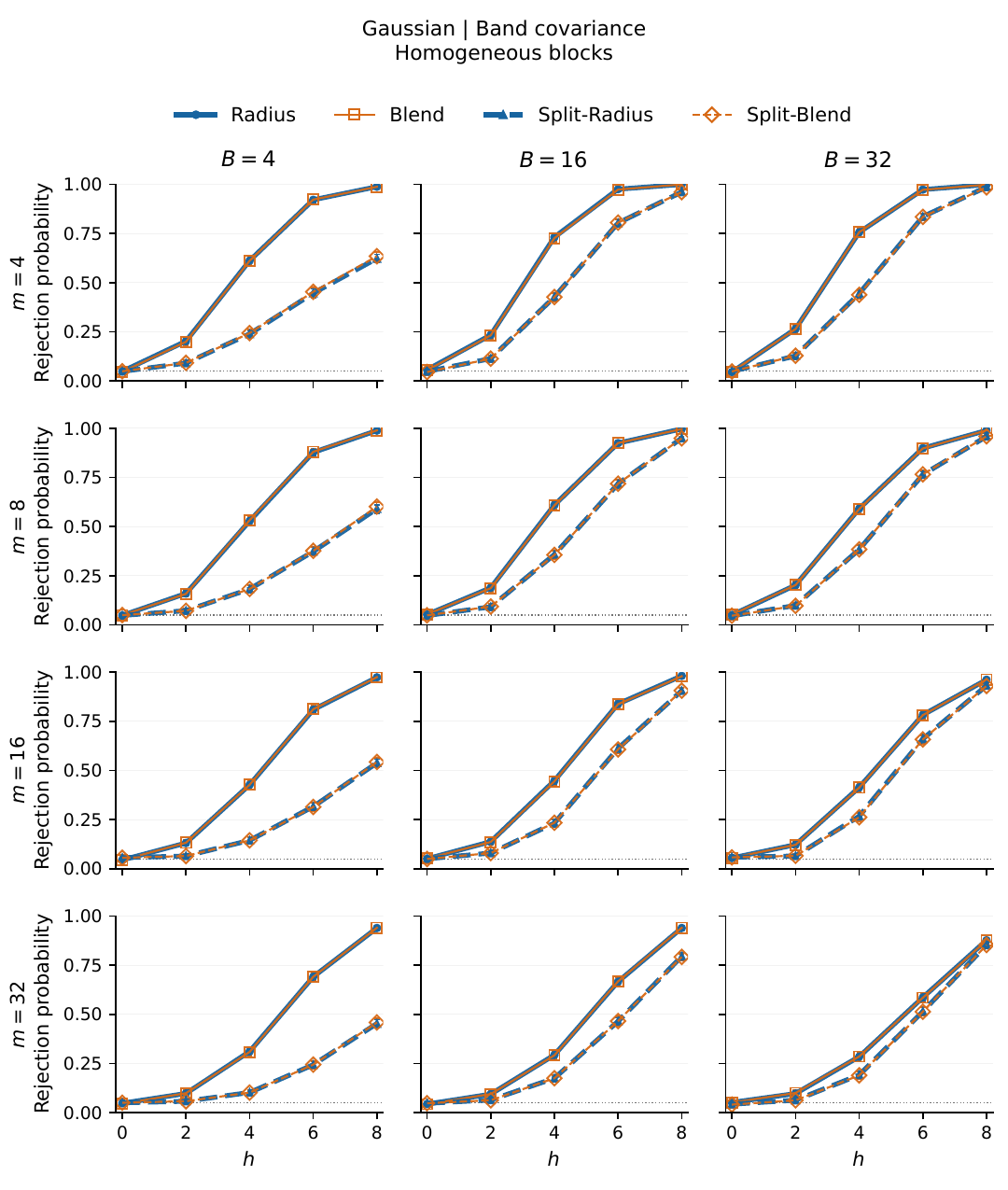}
\caption{Gaussian noise, band covariance and homogeneous block scales.}
\label{supp:fig-factorial-gaussian-band-homogeneous}
\end{figure}

\begin{figure}[p]
\centering
\includegraphics[width=\linewidth,height=.70\textheight,keepaspectratio]{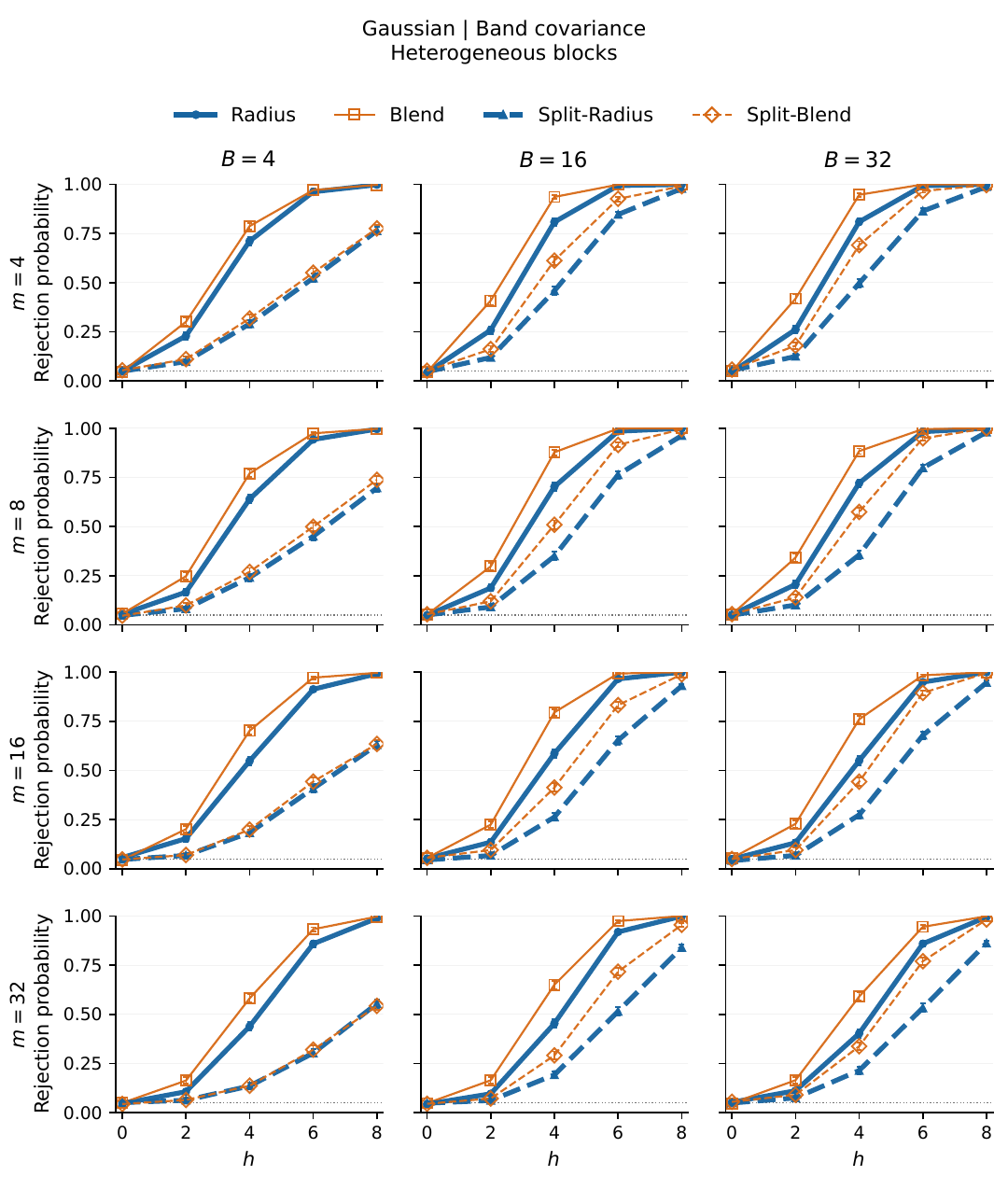}
\caption{Gaussian noise, band covariance and heterogeneous block scales.}
\label{supp:fig-factorial-gaussian-band-heterogeneous}
\end{figure}

\begin{figure}[p]
\centering
\includegraphics[width=\linewidth,height=.70\textheight,keepaspectratio]{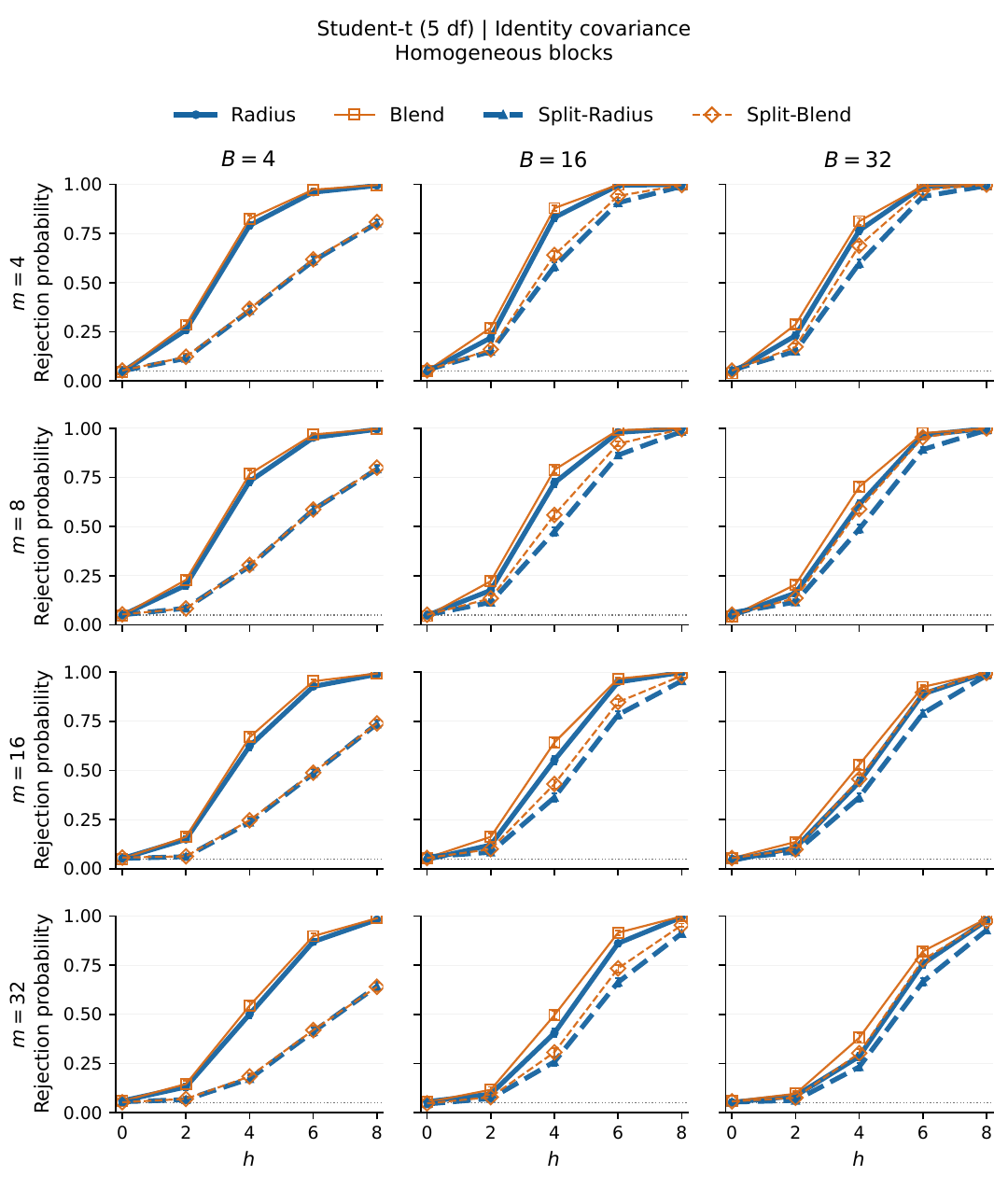}
\caption{Student-$t$ noise, identity covariance and homogeneous block scales.}
\label{supp:fig-factorial-student5-identity-homogeneous}
\end{figure}

\begin{figure}[p]
\centering
\includegraphics[width=\linewidth,height=.70\textheight,keepaspectratio]{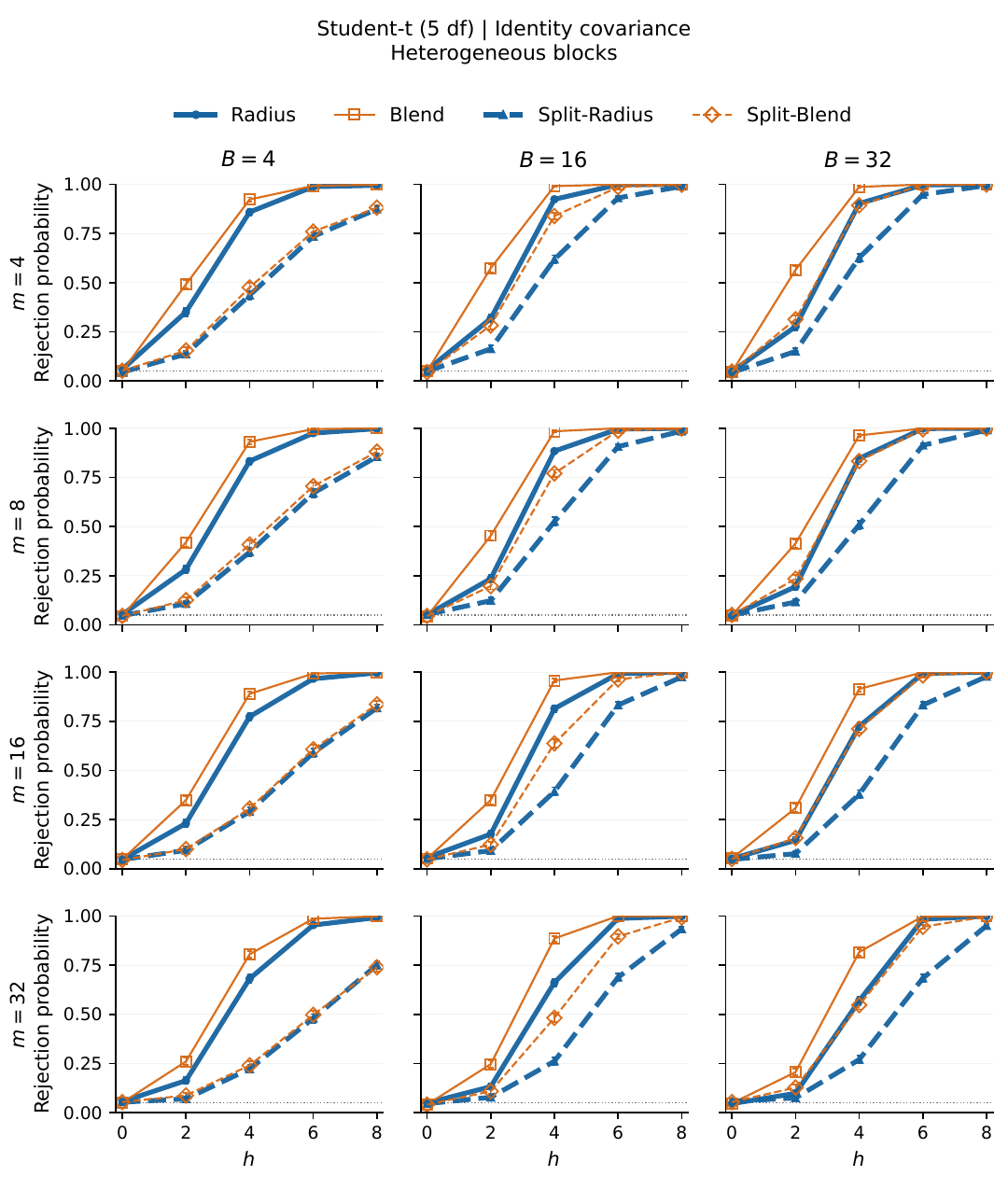}
\caption{Student-$t$ noise, identity covariance and heterogeneous block scales.}
\label{supp:fig-factorial-student5-identity-heterogeneous}
\end{figure}

\begin{figure}[p]
\centering
\includegraphics[width=\linewidth,height=.70\textheight,keepaspectratio]{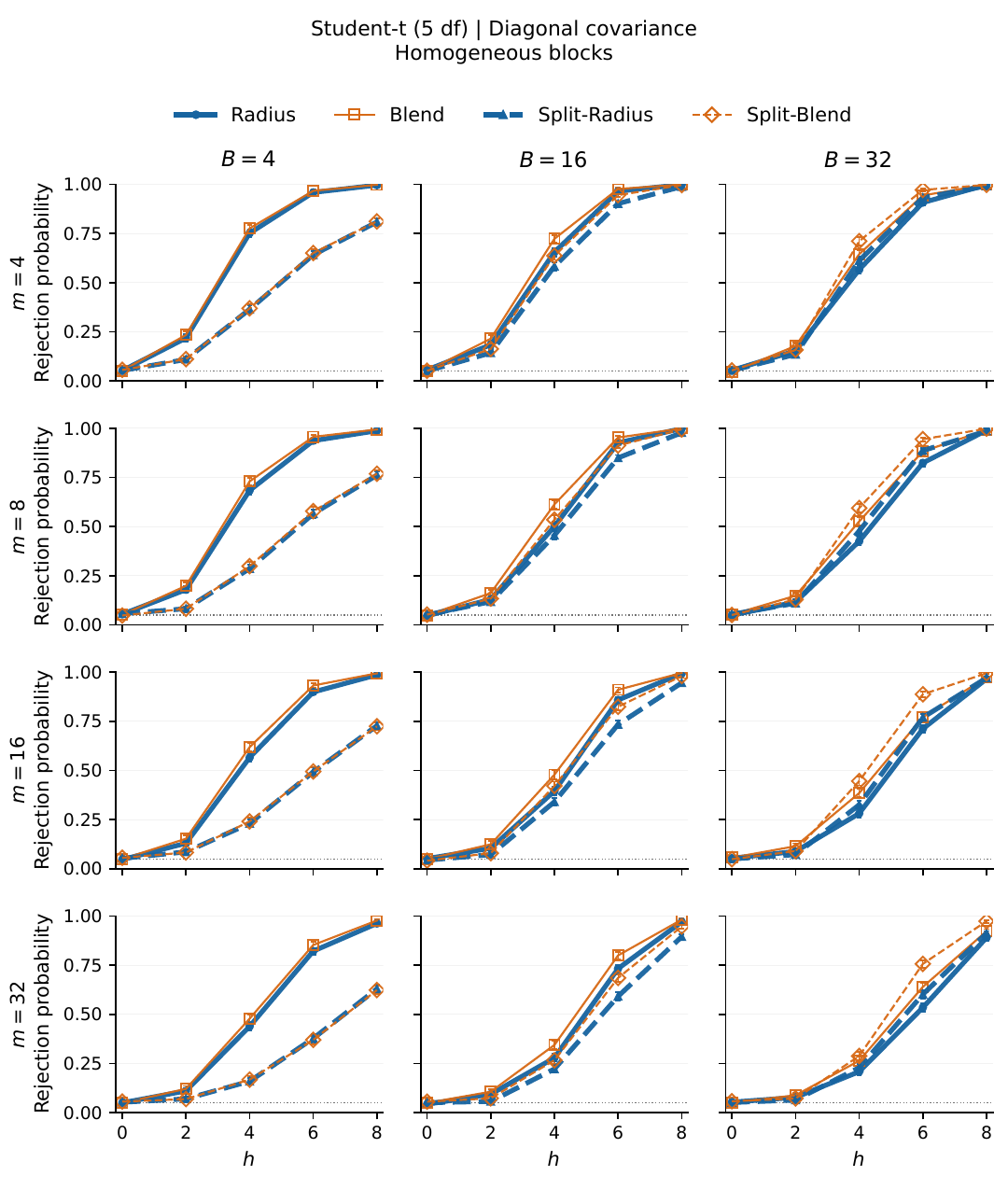}
\caption{Student-$t$ noise, diagonal covariance and homogeneous block scales.}
\label{supp:fig-factorial-student5-diagonal-homogeneous}
\end{figure}

\begin{figure}[p]
\centering
\includegraphics[width=\linewidth,height=.70\textheight,keepaspectratio]{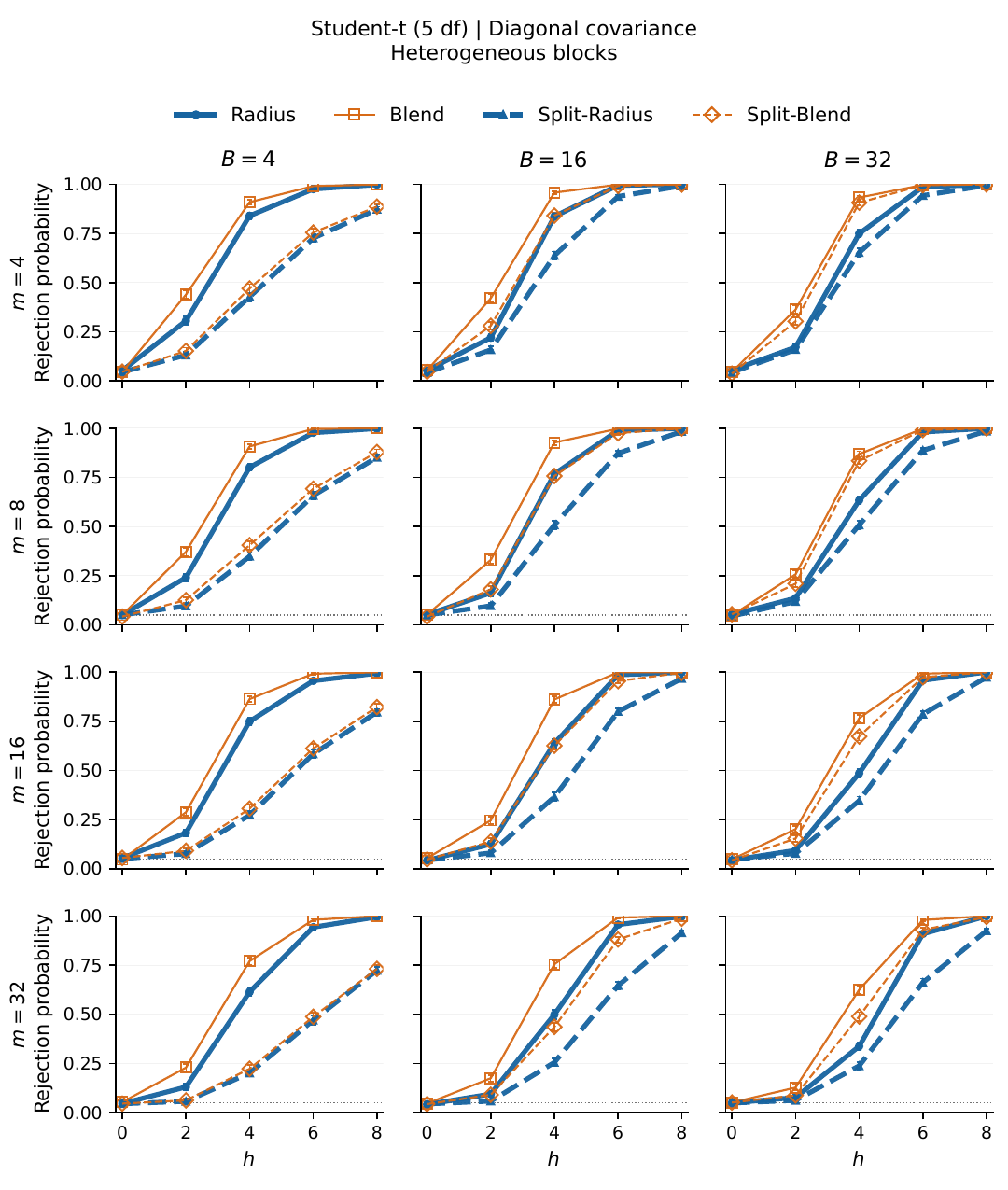}
\caption{Student-$t$ noise, diagonal covariance and heterogeneous block scales.}
\label{supp:fig-factorial-student5-diagonal-heterogeneous}
\end{figure}

\begin{figure}[p]
\centering
\includegraphics[width=\linewidth,height=.70\textheight,keepaspectratio]{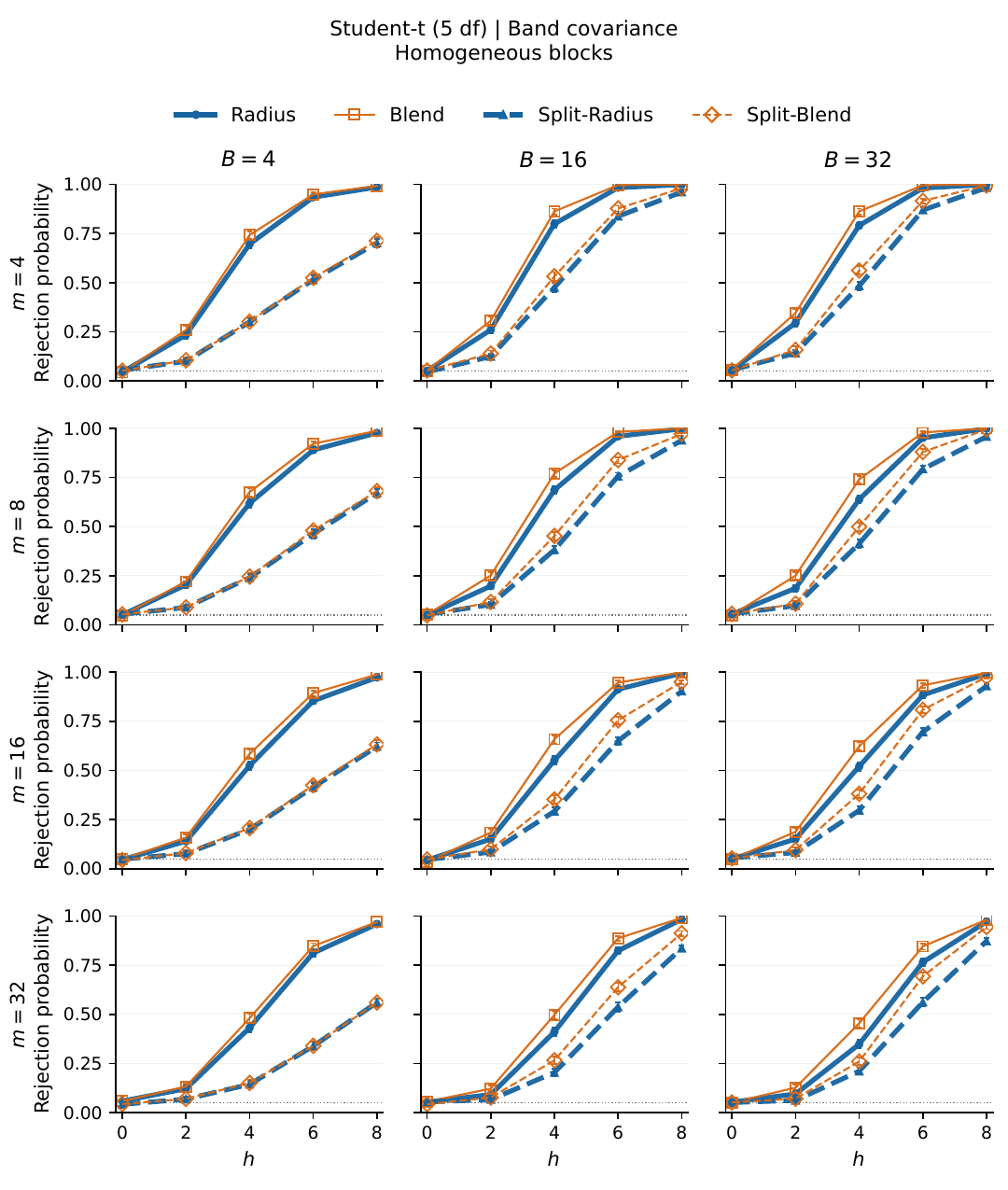}
\caption{Student-$t$ noise, band covariance and homogeneous block scales.}
\label{supp:fig-factorial-student5-band-homogeneous}
\end{figure}

\begin{figure}[p]
\centering
\includegraphics[width=\linewidth,height=.70\textheight,keepaspectratio]{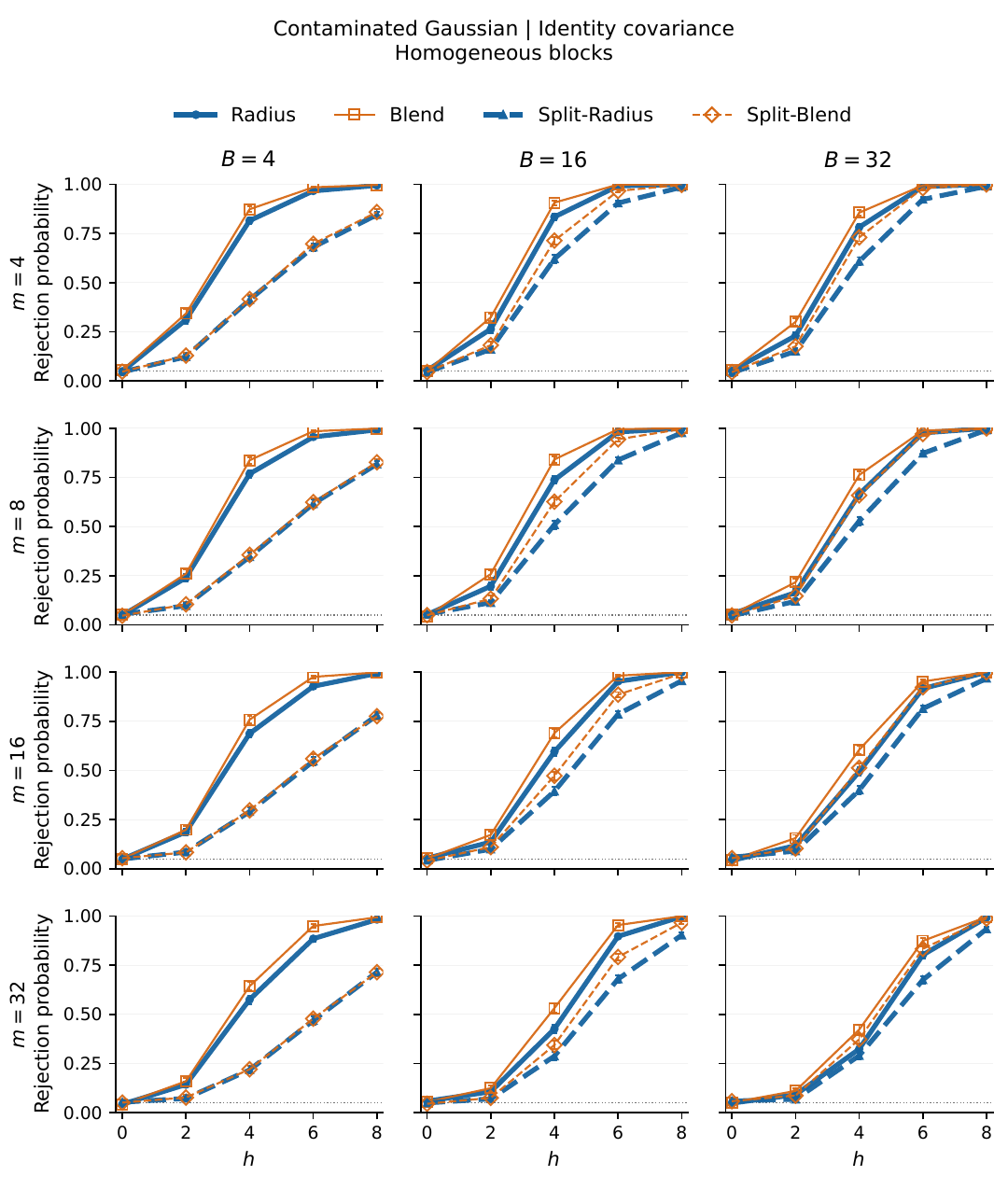}
\caption{Contaminated Gaussian noise, identity covariance and homogeneous block scales.}
\label{supp:fig-factorial-contaminated-identity-homogeneous}
\end{figure}

\begin{figure}[p]
\centering
\includegraphics[width=\linewidth,height=.70\textheight,keepaspectratio]{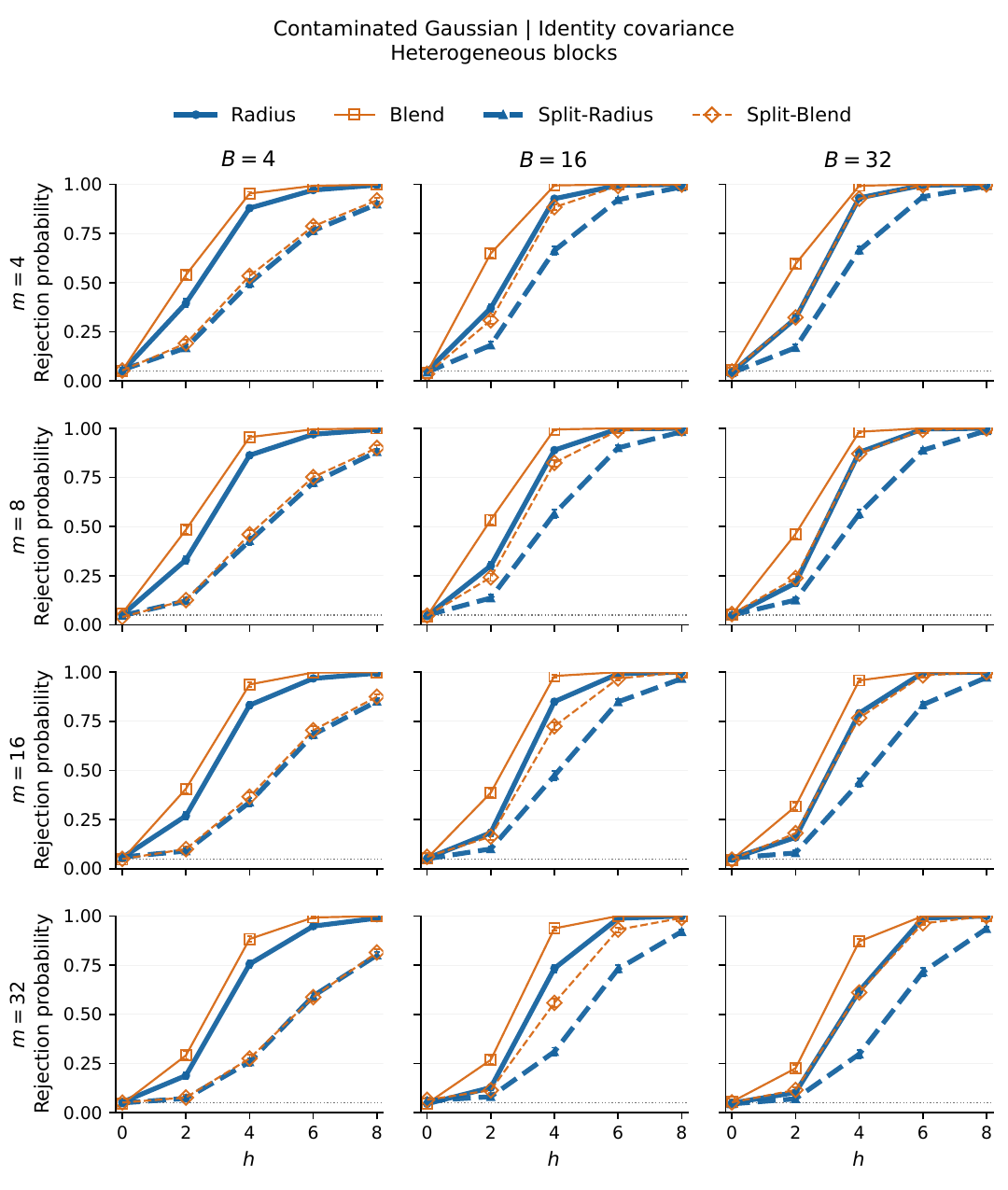}
\caption{Contaminated Gaussian noise, identity covariance and heterogeneous block scales.}
\label{supp:fig-factorial-contaminated-identity-heterogeneous}
\end{figure}

\begin{figure}[p]
\centering
\includegraphics[width=\linewidth,height=.70\textheight,keepaspectratio]{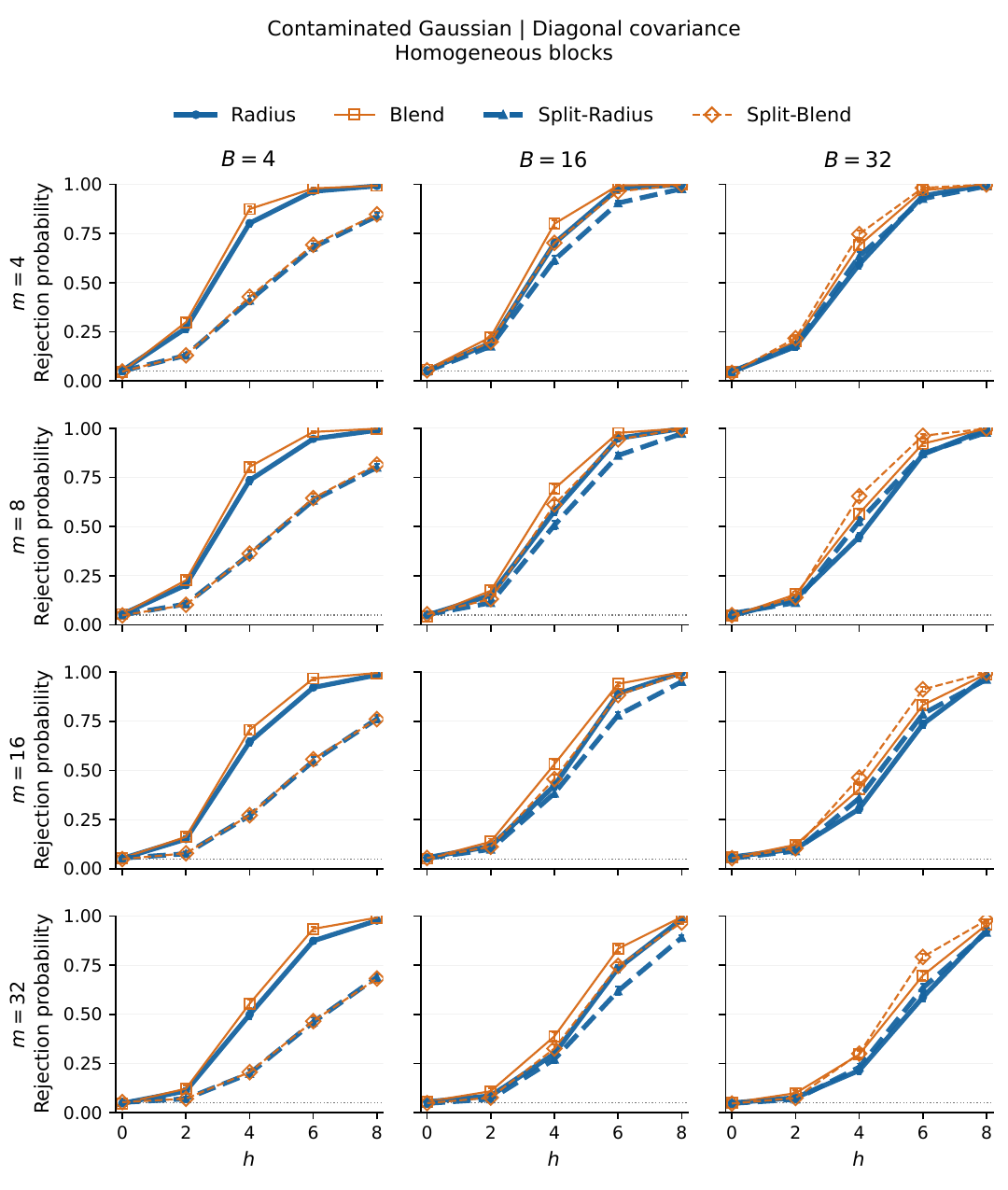}
\caption{Contaminated Gaussian noise, diagonal covariance and homogeneous block scales.}
\label{supp:fig-factorial-contaminated-diagonal-homogeneous}
\end{figure}

\begin{figure}[p]
\centering
\includegraphics[width=\linewidth,height=.70\textheight,keepaspectratio]{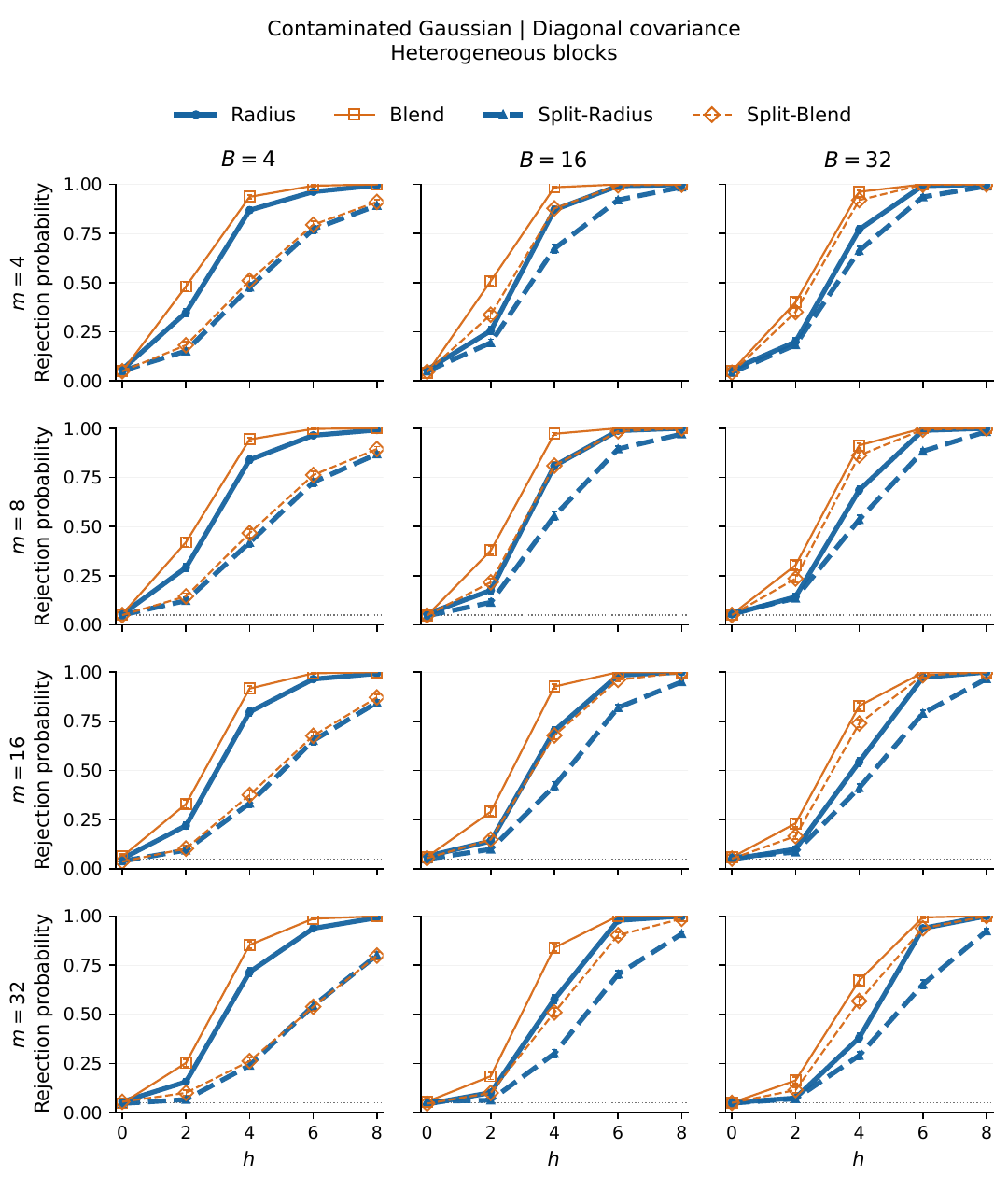}
\caption{Contaminated Gaussian noise, diagonal covariance and heterogeneous block scales.}
\label{supp:fig-factorial-contaminated-diagonal-heterogeneous}
\end{figure}

\begin{figure}[p]
\centering
\includegraphics[width=\linewidth,height=.70\textheight,keepaspectratio]{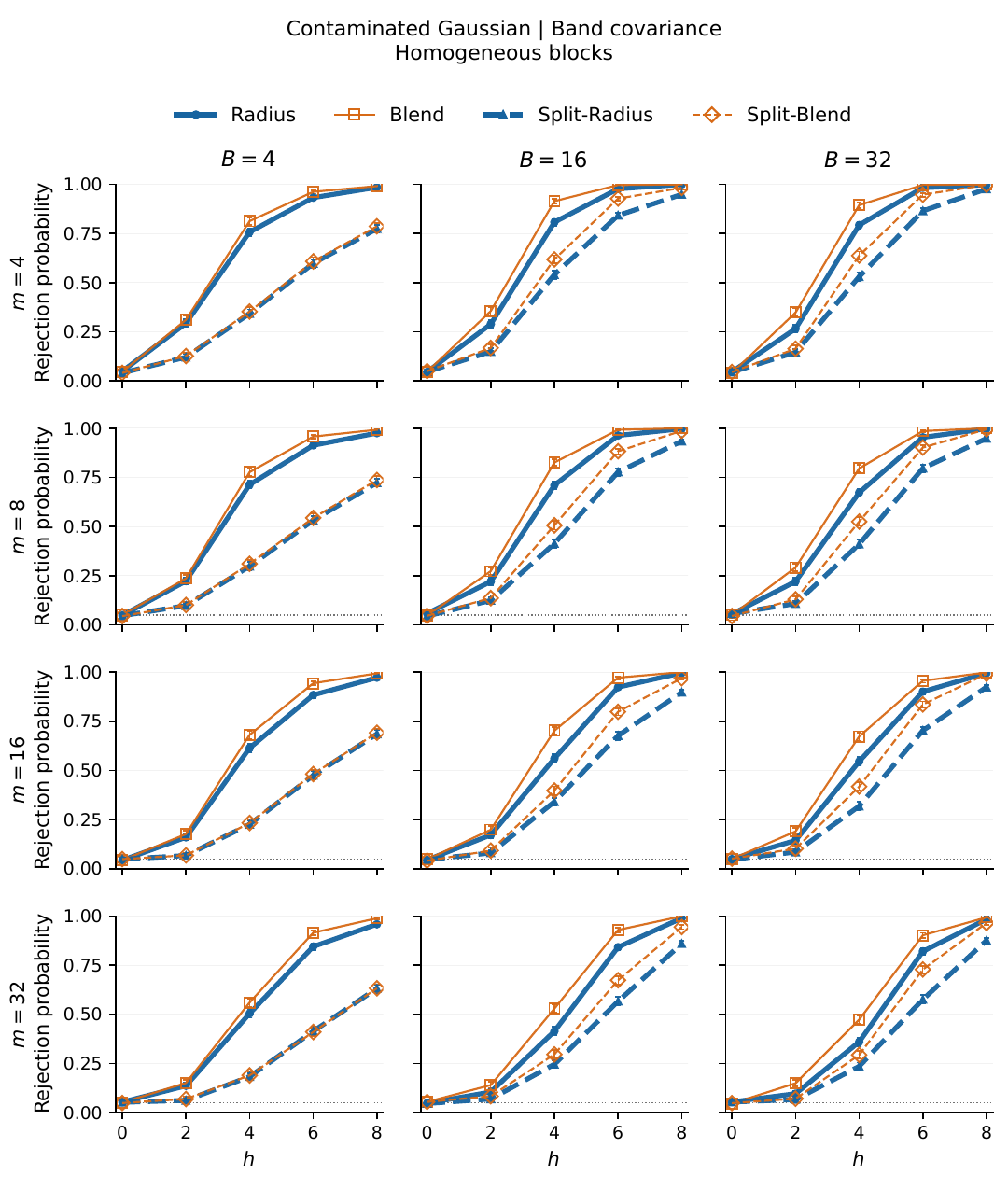}
\caption{Contaminated Gaussian noise, band covariance and homogeneous block scales.}
\label{supp:fig-factorial-contaminated-band-homogeneous}
\end{figure}

\begin{figure}[p]
\centering
\includegraphics[width=\linewidth,height=.70\textheight,keepaspectratio]{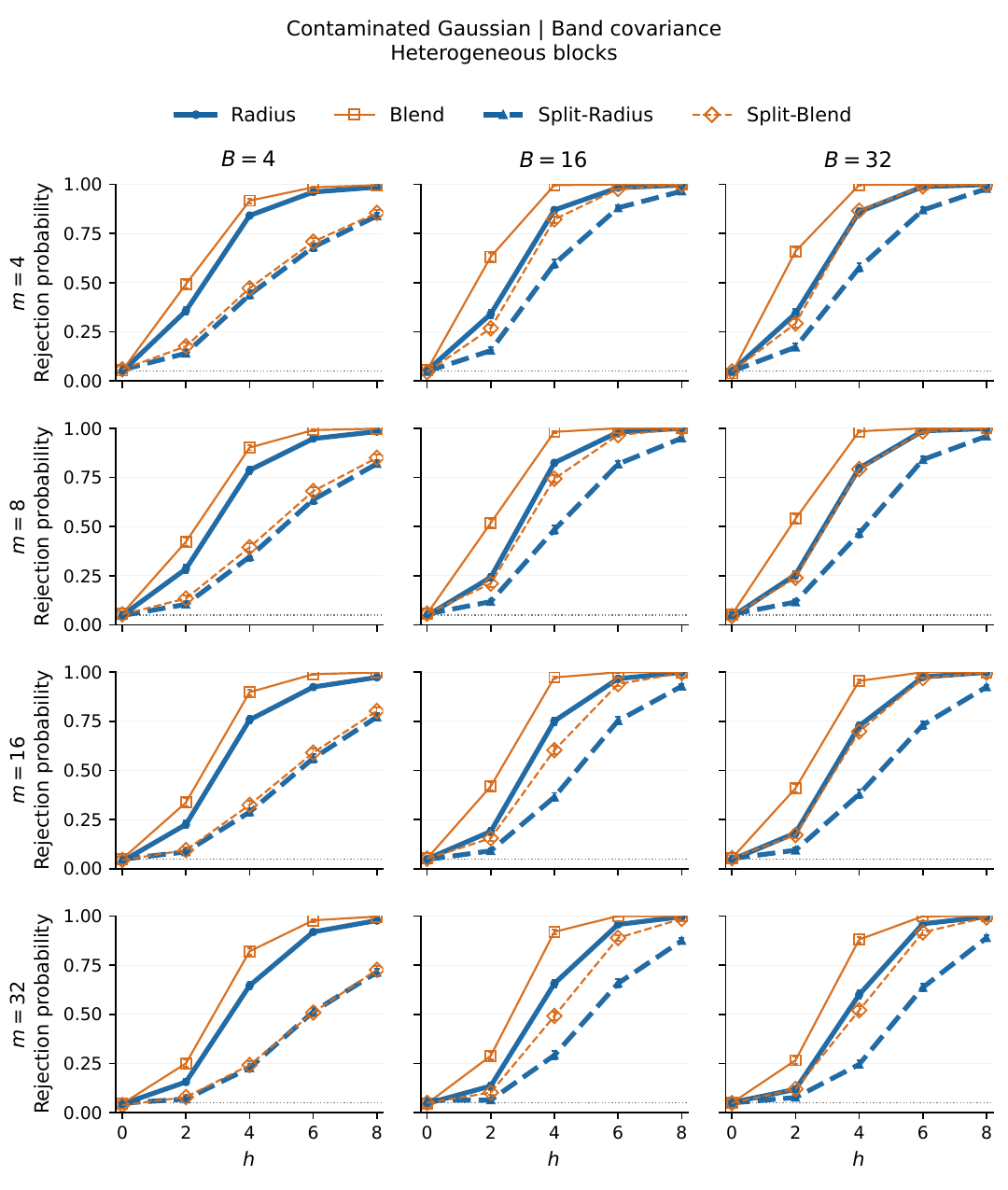}
\caption{Contaminated Gaussian noise, band covariance and heterogeneous block scales.}
\label{supp:fig-factorial-contaminated-band-heterogeneous}
\end{figure}

\begin{figure}[!htbp]
\centering
\includegraphics[width=\linewidth,height=.70\textheight,keepaspectratio]{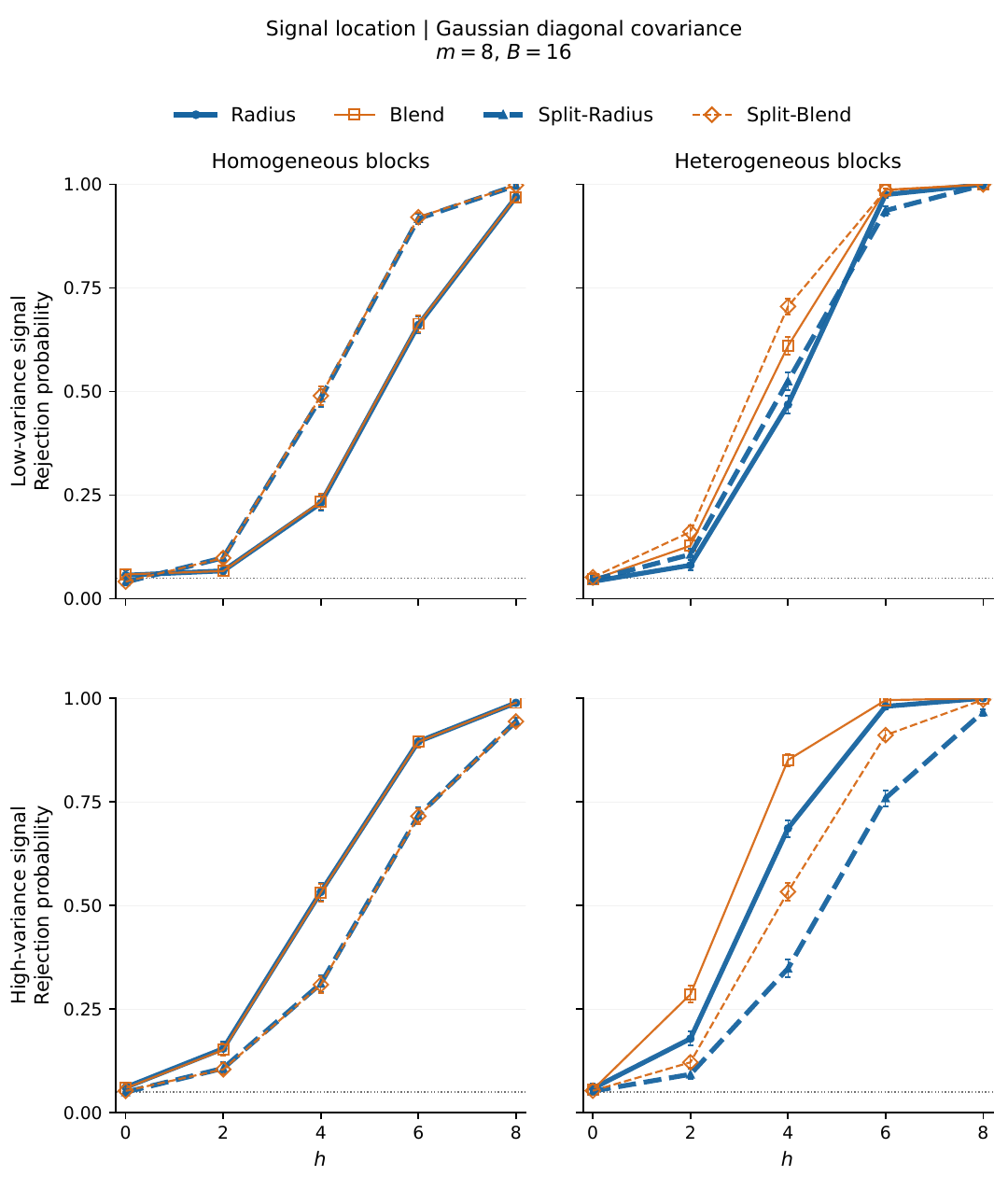}
\caption{Signal in a low-variance or high-variance coordinate, with Gaussian diagonal covariance, $m=8$ and $B=16$. Rows distinguish signals in low-variance and high-variance coordinates, and columns distinguish homogeneous and heterogeneous block scales. Every point uses 2,000 repetitions with pointwise 95\% Wilson intervals.}
\label{supp:fig-factorial-direction}
\end{figure}

\FloatBarrier
\section{Additional details for the ECG analysis}
\label{supp:ecg}

We use one 10-second recording per patient from PTB-XL and
Chapman--Shaoxing \citep{supp-Wagner2020,supp-Zheng2020,supp-Reyna2021}, with the
diagnostic groups defined in Section~\ref{subsec:ecg}.
The complete-feature pools contain 191 AF and 598 NSR patients
from PTB-XL, and 397 AF and 798 NSR patients from Chapman--Shaoxing.

Each patient has eight features. Four summarize the time intervals
between consecutive heartbeats, called RR intervals and measured
in seconds, using their
median, standard deviation, root mean square of successive
differences and coefficient of variation. The other four measure
the fractions of signal power in the 3--8 Hz and 8--20 Hz bands,
relative to the 0.5--40 Hz band, in two recording channels called
leads II and V1. We log-transform the median RR interval directly,
add $.002$ before logging each of the other three RR summaries,
and add $10^{-6}$ before logging each spectral ratio.
The transformed features are centred by their medians and scaled
by their interquartile ranges divided by 1.349, using 200 separate
PTB-XL NSR patients. This transformation is fixed for both sources.

Each group of 160 patients is randomly divided into eight blocks
of $k=20$. Write $\mathbf Z_{gbi}\in\mathbb R^8$ for the transformed
features of patient $i$ in block $b$, with $g=0$ denoting NSR
controls and $g=1$ the comparison group. We form
\begin{equation*}
 \mathbf X_b=\frac1k\sum_{i=1}^k
      (\mathbf Z_{1bi}-\mathbf Z_{0bi}),
 \qquad
 \mathbf Y_b=\frac1k\sum_{i=1}^k(-1)^{i-1}
      (\mathbf Z_{1bi}-\mathbf Z_{0bi}).
\end{equation*}
The first summary measures the group mean difference. The
alternating signs in the second remove the group means, giving
the residual summary used in Section~\ref{sec:functional}.

The null check uses 5,000 assignments of two disjoint NSR groups.
The dilution comparison uses 2,000 assignments per
$q\in\{.10,.20,.40,1\}$, with $160q$ AF patients and $160(1-q)$
NSR patients in the comparison group. All four methods share the
assignments and use $p<.05$. Patients appear only once within an
assignment but may reappear across assignments. The intervals in
Figure~\ref{fig:ecg-primary} describe Monte Carlo uncertainty
conditional on these fixed pools.

\end{document}